
\documentclass[10pt]{article}

\usepackage{color}
\definecolor{dred}{rgb}{.8,0.2,.2}
\definecolor{ddred}{rgb}{.8,0.5,.5}
\definecolor{dblue}{rgb}{.2,0.2,.5}



\definecolor{myurlcolor}{rgb}{0.6,0,0}
\definecolor{mycitecolor}{rgb}{0,0,0.8}
\definecolor{myrefcolor}{rgb}{0,0,0.8}
\usepackage[pagebackref]{hyperref}
\hypersetup{colorlinks,
linkcolor=myrefcolor,
citecolor=mycitecolor,
urlcolor=myurlcolor}

\usepackage{cite,amsthm,amsfonts,amssymb,amsmath,exscale,bbm}
\usepackage{graphicx}
\usepackage[utf8]{inputenc}
\usepackage{verbatim}
\usepackage{enumerate}

\usepackage{fancyhdr}
\pagestyle{fancy}

\usepackage{makeidx}
\makeindex

\rhead{\leftmark\vspace{\baselineskip}}

\graphicspath{{pics/}}

\usepackage[all]{xy}

\theoremstyle{plain}
\newtheorem{theorem}{Theorem}

\newtheorem{conjecture}[theorem]{Conjecture}
\newtheorem*{thesis}{Thesis}
\newtheorem{lemma}[theorem]{Lemma}

\newtheorem{definition}[theorem]{Definition}

\theoremstyle{definition}
\newtheorem{problem}{Problem}

\newtheorem*{answer}{Answer}


\newcommand{\C}{{\mathbb C}}  

\newcommand{\be}{\begin{equation}}
\newcommand{\ee}{\end{equation}}

\begin{document}

\centerline{\Large \bf Quantum Techniques for Stochastic Mechanics}

\bigskip

\centerline{\large John C.\ Baez$^{1,2}$ and Jacob D.\ Biamonte$^{3}$}

\bigskip \bigskip
\centerline{${}^1$ Department of Mathematics}
\centerline{University of California}
\centerline{Riverside, CA 92521, USA}
\bigskip
\centerline{${}^2$ Centre for Quantum Technologies}
\centerline{National University of Singapore}
\centerline{Singapore 117543}
\bigskip
\centerline{${}^3$ Quantum Complexity Science Initiative}
\centerline{Department of Physics}
\centerline{University of Malta}
\centerline{MSD 2080, Malta}

\bigskip 
\centerline{\small email: baez@math.ucr.edu, \; jacob.biamonte@qubit.org} 

\bigskip \bigskip

\centerline{ \bf Abstract} 

\bigskip 

\noindent 
{\small 
Some ideas from quantum theory are just beginning to percolate back to classical probability theory. For example, there is a widely used and successful theory of `chemical reaction networks', which describes the interaction of molecules 
in a stochastic rather than quantum way.  Computer scientists use a different
but equivalent formalism called `stochastic Petri nets' to describe collections
of randomly interacting entities.  These two equivalent formalisms also underlie many models in population biology and epidemiology.  Surprisingly, the mathematics underlying these formalisms is very much like that used in the \emph{quantum} theory of interacting particles---but modified to use probabilities instead of complex amplitudes.

In this text, we explain this fact as part of a detailed analogy between quantum mechanics and the theory of random processes.  To heighten the analogy, we call the latter `stochastic mechanics'.  

We use this analogy to explain two major results in the theory of chemical reaction networks.  First, the `deficiency zero theorem' gives conditions for the existence of equilibria in the approximation where the number of molecules of each kind is treated as varying continuously in a deterministic way.  Our proof uses tools borrowed from quantum mechanics, including a stochastic analogue of Noether's theorem relating symmetries and conservation laws.  Second, the `Anderson--Craciun--Kurtz theorem' gives conditions under which these equilibria continue to exist when we treat the number of molecules of each kind as discrete, varying in a random way.   We prove this using another tool borrowed from quantum mechanics: coherent states.

We also investigate the overlap of quantum and stochastic mechanics.  Some Hamiltonians can describe either quantum-mechanical or stochastic processes.  
These are called `Dirichlet operators', and they have an intriguing connection to
the theory of electrical circuits.

In a section on further directions for research, we describe how the stochastic Noether theorem simplifies for Hamiltonians that are Dirichlet operators, and explain some connections between stochastic Petri nets and computation.

}


\index{conductance|see{electrical circuits}} 
\index{resistance|see{electrical circuits}} 
\index{mass-action kinetics|see{chemical reaction network}}
\index{chemical reaction network|see{reaction network}} 
\index{classical random walk|see{random walk}} 
\index{quantum walk|see{random walk}} 
\index{open quantum system|see{quantum mechanics}} 
\index{linkage class|see{connected component}} 
\index{generating function|see{power series}} 
\index{intensity matrix|see{infinitesimal stochastic operator}} 
\index{transition rate matrix|see{infinitesimal stochastic operator}} 
\index{multigraph|see{directed multigraph}} 
\index{graph Laplacian|see{Laplacian}} 
\index{conserved quantity|see{Noether's theorem}} 


\index{Markov process|seealso{random walk, stochastic mechanics, master equation}} 
\index{category theory|seealso{diagrammatic language}}
\index{vector|seealso{vector space}} 
\index{linear operator|see{operator}} 
\index{Hilbert space|seealso{vector space}}  
\index{quantum field theory|seealso{Feynman diagram}} 


\vspace{\fill}


\section*{Foreword}
\label{foreword}

This book is about a curious relation between two ways of describing situations that change randomly with the passage of time.  The old way is \emph{probability theory} and the new way is \emph{quantum theory}.  

Quantum theory is based, not on probabilities, but on amplitudes.  We can use amplitudes to compute probabilities.\index{nonlinearity!quantum probabilities}  However, the relation between them is nonlinear: we take the absolute value of an amplitude and square it to get a probability.  It thus seems odd to treat amplitudes as directly analogous to probabilities.  Nonetheless, if we do this, some good things happen.  In particular, we can take techniques devised in quantum theory and apply them to probability theory.  This gives new insights into old problems.   

There is, in fact, a subject eager to be born, which is mathematically very much like quantum mechanics, but which features probabilities in the same equations where quantum mechanics features amplitudes.  We call this subject {\bf stochastic mechanics}.

\subsubsection*{Plan of the book}

In Section \ref{sec:1} we introduce the basic object of study here: a `stochastic Petri net'.  A stochastic Petri net describes in a very general way how collections of things of different kinds can randomly interact and turn into other things.  If we consider large numbers of things, we obtain a simplified deterministic model called the `rate equation', discussed in Section \ref{sec:2}.  More fundamental, however, is the `master equation', introduced in Section \ref{sec:3}.  This describes how the probability of having various numbers of things of various kinds changes with time.  

In Section \ref{sec:4} we consider a very simple stochastic Petri net and notice that in this case, we can solve the master equation using techniques taken from quantum mechanics.   In Section \ref{sec:5} we sketch how to generalize this: for any stochastic Petri net, we can write down an operator called a `Hamiltonian' built from `creation and annihilation operators', which describes the rate of change of the probability of having various numbers of things.  In Section \ref{sec:6} we illustrate this with an example taken from population biology.  In this example the rate equation is just the logistic equation, one of the simplest models in population biology.  The master equation describes reproduction and competition of organisms in a stochastic way.

In Section \ref{sec:7} we sketch how time evolution as described by the master equation can be written as a sum over Feynman diagrams.  We do not develop this in detail, but illustrate it with a predator--prey model from population biology.  In the process, we give a slicker way of writing down the Hamiltonian for any stochastic Petri net.

In Section \ref{sec:8} we enter into a main theme of this course: the study of \emph{equilibrium solutions} of the master and rate equations.  We present the Anderson--Craciun--Kurtz theorem, which shows how to get equilibrium solutions of the master equation from equilibrium solutions of the rate equation, at least if a certain technical condition holds.  Brendan Fong\index{Fong, Brendan} has translated Anderson, Craciun and Kurtz's original proof into the language of annihilation and creation operators, and we give Fong's proof here.  In this language, it turns out that the equilibrium solutions are mathematically just like `coherent states' in quantum mechanics.   

In Section \ref{sec:9} we give an example of the Anderson--Craciun--Kurtz theorem coming from a simple reversible reaction in chemistry.  This example leads to a puzzle that is resolved by discovering that the presence of `conserved quantities'---quantitites that do not change with time---let us construct many equilibrium solutions of the rate equation other than those given by the Anderson--Craciun--Kurtz theorem.   

Conserved quantities are very important in quantum mechanics, and they are related to symmetries by a result called Noether's theorem.  In Section \ref{sec:10} we describe a version of Noether's theorem for stochastic mechanics, which was proved with the help of Brendan Fong\index{Fong, Brendan}.   This applies, not just to systems described by stochastic Petri nets, but a much more general class of processes called `Markov processes'.  In the analogy to quantum mechanics, Markov processes are analogous to arbitrary quantum systems whose time evolution is given by a Hamiltonian.   Stochastic Petri nets are analogous to a special case of these: the case where the Hamiltonian is built from annihilation and creation operators.  

In Section \ref{sec:11} we state the analogy between quantum mechanics and stochastic mechanics more precisely, and with more attention to mathematical rigor.  This allows us to set the quantum and stochastic versions of Noether's theorem side by side and compare them in Section \ref{sec:12}.

In Section \ref{sec:13} we take a break from the heavy abstractions and look at a fun example from chemistry, in which a highly symmetrical molecule randomly hops between states.  These states can be seen as vertices of a graph, with the transitions as edges.  In this particular example we get a famous graph with 20 vertices and 30 edges, called the `Desargues graph'.  

In Section \ref{sec:14} we note that the Hamiltonian in this example is a `graph Laplacian', and, following a computation done by Greg Egan, we work out the eigenvectors and eigenvalues of this Hamiltonian explicitly.  One reason graph Laplacians are interesting is that we can use them as Hamiltonians to describe time evolution in \emph{both} stochastic \emph{and} quantum mechanics.  Operators with this special property are called `Dirichlet operators', and we discuss them in Section \ref{sec:15}.   As we explain, they also describe electrical circuits made of resistors.    Thus, in a peculiar way, the intersection of quantum mechanics and stochastic mechanics is the study of electrical circuits made of resistors!

In Section \ref{sec:16}, we study the eigenvectors and eigenvalues of an arbitrary Dirichlet operator.  We introduce a famous result called the Perron--Frobenius theorem for this purpose.   However, we also see that the Perron--Frobenius theorem is important for understanding the equilibria of Markov processes.   This becomes important later when we prove the `deficiency zero theorem'. 

We introduce the deficiency zero theorem in Section \ref{sec:17}.  This result, proved by the chemists Feinberg, Horn and Jackson, gives equilibrium solutions for the rate equation for a large class of stochastic Petri nets.  Moreover, these equilibria obey the extra condition that lets us apply the Anderson--Craciun--Kurtz theorem and obtain equilibrium solutions of the master equations as well.  However, the deficiency zero theorem is best stated, not in terms of stochastic Petri nets, but in terms of another, equivalent, formalism: `chemical reaction networks'.  So, we explain chemical reaction networks here, and use them heavily throughout the rest of the course.  However, because they are applicable to such a large range of problems, we call them simply `reaction networks'.    Like stochastic Petri nets, they describe  how collections of things of different kinds randomly interact and turn into other things.

In Section \ref{sec:18} we consider a simple example of the deficiency zero theorem taken from chemistry: a diatomic gas.   In Section \ref{sec:19} we apply the Anderson--Craciun--Kurtz theorem to the same example.

In Section \ref{sec:20} we begin working toward a proof of the deficiency zero
theorem, or at least a part of this theorem.  In this section we discuss the concept 
of `deficiency', which had been introduced before, but not really explained: the
definition that makes the deficiency easy to compute is not the one that says what this
concept really means.  In Section \ref{sec:21} we show how to rewrite the rate
equation of a stochastic Petri net---or equivalently, of a reaction network---in terms of a Markov process.  This is surprising because the rate equation is nonlinear,\index{nonlinearity!rate equation} 
while the equation describing a Markov process is linear in the probabilities involved.  The trick is to use a nonlinear operation called `matrix exponentiation'.\index{nonlinearity!matrix exponentiation}\index{matrix!exponentiation} 
In Section \ref{sec:22} we study equilibria for Markov processes.  In Section \ref{sec:23}, we use these equilibria to obtain equilibrium solutions of the rate equation, completing our treatment of the deficiency zero theorem.

We conclude the book by sketching some optional further topics:
Noether's theorem for Markov processes generated by Dirichlet operators
in Section \ref{sec:24}, and connections between computation
and Petri nets in Section \ref{sec:25}.

\subsubsection*{Acknowledgements}

These course notes are based on a series of articles on the Azimuth blog\index{Azimuth Project!blog}.  The original articles are available via this webpage:

\begin{itemize}
\item \href{http://math.ucr.edu/home/baez/networks/}{Network Theory, 
http:/$\!$/math.ucr.edu/home/baez/networks/}.\index{Azimuth Project!network theory series} 
\end{itemize}

\noindent
On the blog you can read discussion of these articles, and also make your own comments or ask questions.

We thank the readers of Azimuth for many helpful online discussions, including David Corfield, Manoj Gopalkrishnan, Greg Egan, Arjun Jain and Blake Stacey, but also many others we apologize for not listing here.  We especially thank Brendan Fong\index{Fong, Brendan} for his invaluable help in finding a `quantum proof' of the Anderson--Craciun--Kurtz theorem and proving the stochastic version of Noether's theorem.  

We thank David Doty, David Soloveichik, Erik Winfree, and others at the 
for inviting the first author to \textit{Programming with Chemical Reaction Networks: Mathematical Foundations}, a workshop at the Banff International Research Station held in June 2014, and for educating him about chemical reaction networks and their 
computational power.  In particular, Section \ref{sec:25_3} owes its existence to a talk
David Soloveichik gave in this workshop \cite{Sol14}.  \index{Soloveichik, David}

We thank Jim Stuttard of the Azimuth Project \index{Azimuth Project} \index{Stuttard, Jim} for patiently sifting through a lot of Petri net software.  This made Section 
\ref{sec:25_5} possible, though we used only a fraction of his work.

{We thank Federica Ferraris for drawing the `Baez-magician', the `rabbit archer' and the `rabbit melee' as well as helping format several of the other pictures appearing in this book.}\index{Ferraris, Federica}  We also thank Wikimedia Commons for the use of many pictures. 

Finally, we thank the Centre for Quantum Technologies for its hospitality and support, not only for ourselves but also for Brendan Fong.


\newpage

\tableofcontents


\setcounter{section}{0}

\newpage
\section[Stochastic Petri nets]{Stochastic Petri nets} 
\index{Petri net|(}
\label{sec:1}

Stochastic Petri nets are one of many different diagrammatic languages\index{diagrammatic language} people have evolved to study complex systems.  We'll see how they're used in chemistry, molecular biology, population biology and queuing theory, which is roughly the science of waiting in line.  Here's an example of a Petri net taken from chemistry:

\begin{center}
 \includegraphics[width=119.0625mm]{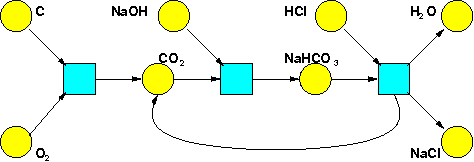}
\end{center}

\noindent It shows some chemicals and some reactions involving these chemicals.   To make it into a stochastic Petri net, we'd just label each reaction by a positive real number: the reaction rate constant, or {\bf Petri net}\index{Petri net!rate constant} for short.  
 
Chemists often call different kinds of chemicals `species'.  In general, a Petri net will have a set of {\bf species}\index{Petri net!species}, which we'll draw as yellow circles, and a set of {\bf transitions}\index{Petri net!transition}, which we'll draw as blue rectangles. Here's a Petri net from population biology: \index{species} \index{transition}
\begin{center}
 \includegraphics[width=114mm]{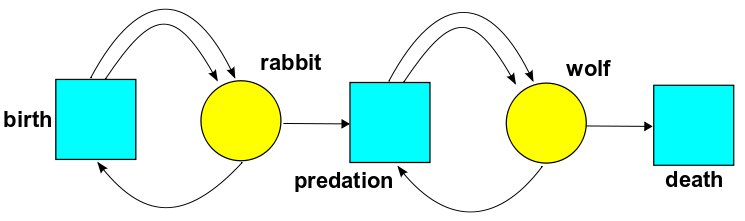}
\end{center}

\noindent Now, instead of different chemicals, the species really are different species of animals!  And instead of chemical reactions, the transitions are processes involving these species.  This Petri net has two species: {\bf rabbit} and {\bf wolf}.  It has three transitions:

\begin{itemize}
  \item In {\bf birth}, one rabbit comes in and two go out.  This is a caricature of reality: these bunnies reproduce asexually, splitting in two like amoebas.

\item In {\bf predation}, one wolf and one rabbit come in and two wolves go out.  This is a caricature of how predators need to eat prey to reproduce.    Biologists might use `biomass' to make this sort of idea more precise: a certain amount of mass will go from being rabbit to being wolf.

\item In {\bf death}, one wolf comes in and nothing goes out.   Note that we're pretending rabbits don't die unless they're eaten by wolves.  
\end{itemize}

If we labelled each transition with a number called a rate constant, we'd have a `stochastic' Petri net.  

To make this Petri net more realistic, we'd have to make it more complicated.  We're trying to explain general ideas here, not realistic models of specific situations.  Nonetheless, this Petri net already leads to an interesting model of population dynamics: a special case of the so-called `Lotka--Volterra predator-prey model\index{Lotka--Volterra!model}'.  We'll see the details soon.  

More to the point, this Petri net illustrates some possibilities that our previous example neglected.  Every transition has some `input' species and some `output' species.  But a species can show up more than once as the output (or input) of some transition.  And as we see in `death', we can have a transition with no outputs (or inputs) at all.

But let's stop beating around the bush, and give you the formal definitions.  They're simple enough:

\begin{definition}
A {\bf Petri net}\index{Petri net!definition of} consists of a set $S$ of {\bf species} and a set $T$ of {\bf transitions}, together with a function 
$$ i \colon S \times T \to \mathbb{N} $$
saying how many copies of each species shows up as {\bf input} for each transition, and a function
$$ o \colon S \times T \to \mathbb{N} $$
saying how many times it shows up as {\bf output}.
\end{definition} 

\begin{definition}
A {\bf stochastic Petri net}\index{Petri net!stochastic} is a Petri net together with a function 
$$ r \colon T \to (0,\infty) $$
giving a {\bf rate constant} for each transition.
\end{definition}

Starting from any stochastic Petri net, we can get two things.  First:

\begin{itemize}
\item The \href{http://en.wikipedia.org/wiki/Master_equation}{master equation}.\index{Petri net!master equation}   This says how the \emph{probability that we have a given number of things of each species} changes with time.     
\end{itemize}
\index{Petri net|)}

Since stochastic means `random', the master equation is what gives stochastic Petri nets their name.  The master equation is the main thing we'll be talking about in future chapters.  But not right away!  

Why not?

In chemistry, we typically have a huge number of things of each species.  For example, a gram of water contains about $ 3 \times 10^{22}$ water molecules, and a smaller but still enormous number of hydroxide ions (OH$^-$), hydronium ions (H$_3$O$^+$), and other scarier things.  These things blunder around randomly, bump into each other, and sometimes react and turn into other things.   There's a stochastic Petri net describing all this, as we'll eventually see.  But in this situation, we don't usually want to know the probability that there are, say, exactly $ 31,849,578,476,264$ hydronium ions.   That would be too much information!  We'd be quite happy knowing the \emph{expected value}\index{expected value} of the number of hydronium ions, so we'd be delighted to have a differential equation that says how this changes with time.   \index{chemistry!water}

And luckily, such an equation exists; and it's much simpler than the master equation.  So, in this section we'll talk about:

\begin{itemize}
\item The \href{http://en.wikipedia.org/wiki/Rate_equation}{rate equation}.\index{rate equation!definition of}  This says how the \emph{expected number of things of each species} changes with time. 
\end{itemize}

\noindent But first, we hope you get the overall idea.  The master equation is stochastic: at each time the number of things of each species is a random variable taking values in $ \mathbb{N}$, the set of natural numbers.  The rate equation is deterministic: at each time the expected number of things of each species is a non-random variable taking values in $ [0,\infty)$, the set of nonnegative real numbers.  If the master equation is the true story, the rate equation is only approximately true; but the approximation becomes good in some limit where the expected value\index{expected value!standard deviation} of the number of things of each species is large, and the standard deviation is comparatively small.  

If you've studied physics, this should remind you of other things.  The master equation should remind you of the quantum harmonic oscillator, where energy levels are discrete, and probabilities are involved.   The rate equation should remind you of the classical harmonic oscillator, where energy levels are continuous, and everything is deterministic.  

When we get to the `original research' part of our story, we'll see this analogy is fairly precise!  We'll take a bunch of ideas from quantum mechanics and quantum field theory, and tweak them a bit, and show how we can use them to describe the master equation for a stochastic Petri net.  

Indeed, the random processes that the master equation describes can be drawn as pictures:

\begin{center}
 \includegraphics[width=87mm]{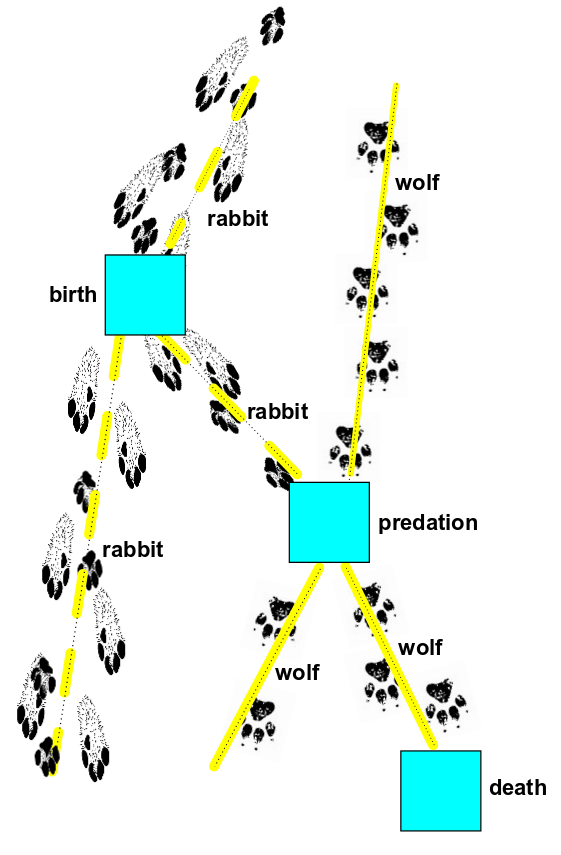}
\end{center}

\noindent
This looks like a Feynman diagram, with animals instead of particles!   It's pretty funny, but the resemblance is no joke: the math will back it up.\index{Feynman diagrams!with animals} 

We're dying to explain all the details.  But just as classical field theory is easier than quantum field theory, the rate equation is simpler than the master equation.\index{classical field theory}  So we should start there.

\subsection{The rate equation}\index{rate equation!definition of!stochastic Petri net}

If you handed over a stochastic Petri net, we can write down its rate equation.   Instead of telling you the general rule, which sounds rather complicated at first, let's do an example.  Take the Petri net we were just looking at:

\begin{center}
 \includegraphics[width=114mm]{wolf-rabbit.png}
\end{center}

\noindent
We can make it into a stochastic Petri net by choosing a number for each transition:

\begin{itemize} 
\item the birth rate constant $ \beta$
\item the predation rate constant $ \gamma$
\item the death rate constant $ \delta$
\end{itemize} 

Let $ x(t)$ be the number of rabbits and let $ y(t)$ be the number of wolves at time $ t$.  Then the rate equation looks like this:
$$ \frac{d x}{d t} = \beta x - \gamma x y $$
$$ \frac{d y}{d t} = \gamma x y - \delta y$$
It's really a system of equations, but we'll call the whole thing `the rate equation' because later we may get smart and write it as a single equation.

See how it works?  

\begin{itemize}
\item We get a term $ \beta x$ in the equation for rabbits, because rabbits are born at a rate equal to the number of rabbits times the birth rate constant $ \beta$.  
\item We get a term $ - \delta y$ in the equation for wolves, because wolves die at a rate equal to the number of wolves times the death rate constant $ \delta$.
\item We get a term $ - \gamma x y$ in the equation for rabbits, because rabbits die at a rate equal to the number of rabbits times the number of wolves times the predation rate constant $ \gamma$.  
\item We also get a term $ \gamma x y$ in the equation for wolves, because wolves are born at a rate equal to the number of rabbits times the number of wolves times $ \gamma$.  
\end{itemize}

Of course we're \emph{not} claiming that this rate equation makes any sense biologically!  For example, think about predation.  The $ \gamma x y$ terms in the above equation would make sense if rabbits and wolves roamed around randomly, and whenever a wolf and a rabbit came within a certain distance, the wolf had a certain probability of eating the rabbit and giving birth to another wolf.  At least it would make sense in the limit of large numbers of rabbits and wolves, where we can treat $ x$ and $ y$ as varying continuously rather than discretely.  That's a reasonable approximation to make sometimes.  Unfortunately, rabbits and wolves \emph{don't} roam around randomly, and a wolf \emph{doesn't} spit out a new wolf each time it eats a rabbit.

Despite that, the equations 
$$ \frac{d x}{d t} = \beta x - \gamma x y $$
$$ \frac{d y}{d t} = \gamma x y - \delta y$$
are actually studied in population biology.  As we said, they're a special case of the \href{http://en.wikipedia.org/wiki/Lotka\%E2\%80\%93Volterra\_equation}{\bf{Lotka--Volterra predator-prey model}}, which looks like this:
$$ \frac{d x}{d t} = \beta x - \gamma x y $$
$$ \frac{d y}{d t} = \epsilon x y - \delta y$$
The point is that while these models are hideously oversimplified and thus quantitatively inaccurate, they exhibit interesting \emph{qualititative} behavior that's fairly robust.  Depending on the rate constants, these equations can show either a stable equilibrium or stable periodic behavior.  And we go from one regime to another, we see a kind of catastrophe called a `Hopf bifurcation'.  You can read about this in 
\href{http://johncarlosbaez.wordpress.com/2010/12/24/this-weeks-finds-week-308}{week308} and 
\href{http://johncarlosbaez.wordpress.com/2011/02/17/this-weeks-finds-week-309/}{week309} of \textsl{This Week's Finds}.  Those consider some \emph{other} equations, not the Lotka--Volterra equations.  But their qualitative behavior is the same!

If you want stochastic Petri nets that give \emph{quantitatively} accurate models, it's better to retreat to chemistry.  Compared to animals, molecules come a lot closer to roaming around randomly and having a chance of reacting when they come within a certain distance.  So in chemistry, rate equations can be used to make accurate predictions.  

But we're digressing.  We should be explaining the \emph{general recipe for getting a rate equation from a stochastic Petri net!}  You might not be able to guess it from just one example.  In the next section, we'll do more examples, and maybe even write down a general formula.  But if you're feeling ambitious, you can try this now:

\begin{problem}\label{prob:1}
Can you write down a stochastic Petri net whose rate equation is the Lotka--Volterra predator-prey model:
$$ \frac{d x}{d t} = \beta x - \gamma x y $$
$$ \frac{d y}{d t} = \epsilon x y - \delta y$$
for arbitrary $ \beta, \gamma, \delta, \epsilon > 0$?  If not, for which values of these rate constants can you do it?
\end{problem} 

\subsection{References}

Here is free online introduction to stochastic Petri nets and their rate equations:

\begin{enumerate} 
\item[\cite{GP98}] Peter J. E. Goss and Jean Peccoud, \href{http://cmbi.bjmu.edu.cn/cmbidata/vcell/computer/petri-net.pdf}{Quantitative modeling of stochastic systems in molecular biology by using stochastic Petri nets}, {\sl Proc. Natl. Acad. Sci. USA} {\bf 95} (1988), 6750--6755. 
\end{enumerate} 

\noindent
We should admit that Petri net people say {\bf place} where we're saying {\bf species}!  The term `species' is used in the literature on chemical reaction networks, which we discuss starting in Section \ref{sec:17}.

Here are some other introductions to the subject:

\begin{enumerate} 
\item[\cite{Haa02}] Peter J. Haas, {\sl Stochastic Petri Nets: Modelling, Stability, Simulation}, Springer, Berlin, 2002. 

\item[\cite{K10}] Ina Koch, Petri nets---a mathematical formalism to analyze chemical reaction networks, {\sl Molecular Informatics} {\bf 29}, 838--843, 2010. 
\item[\cite{Wil06}] Darren James Wilkinson, \emph{Stochastic Modelling for Systems Biology}, Taylor and Francis, New York, 2006.

\end{enumerate} 

\subsection{Answers}

Here is the answer to the problem:

\vskip 1em \noindent {\bf Problem 1.}  Can you write down a stochastic Petri net whose rate equation is the Lotka--Volterra predator-prey model\index{Lotka--Volterra!model}:
$$ \frac{d x}{d t} = \beta x - \gamma x y$$
$$ \frac{d y}{d t} = \epsilon x y - \delta y$$
for arbitrary $\beta, \gamma, \delta, \epsilon > 0$? If not, for which values of these rate constants can you do it?

\vskip 1em

\begin{answer}
 We can find a stochastic Petri net that does the job for any $\beta, \gamma, \delta, \epsilon > 0$.  In fact we can find one that does the job for any possible value of $\beta, \gamma, \delta, \epsilon$.  But to keep things simple, let's just solve the original problem.

We'll consider a stochastic Petri net with two species, {\bf rabbit}  and {\bf wolf}, and four transitions:

\begin{itemize}
\item {\bf birth} (1 rabbit in, 2 rabbits out), with rate constant $\beta$
\item {\bf death} (1 wolf in, 0 wolves out), with rate constant $\delta$
\item {\bf jousting} (1 wolf and 1 rabbit in, $R$ rabbits and $W$ wolves out, where $R,W$ are arbitrary natural numbers), with rate constant $\kappa$
\item {\bf archery} (1 wolf and 1 rabbit in, $R'$ rabbits and $W'$ wolves out, where $R',W'$ are arbitrary natural numbers) with rate constant $\kappa'$.
\end{itemize} 
\noindent
All these rate constants are positive.

This gives the rate equation:
$$ \frac{dx}{dt} = \beta x + (R-1) \kappa x y + (R' - 1)\kappa' x y$$
$$ \frac{d y}{d t} = (W - 1) \kappa x y + (W' -1)\kappa' x y- \delta y $$
This is flexible enough to do the job.  

For example, let's assume that when they joust, the massive, powerful wolf always kills the rabbit, and then eats the rabbit and has one offspring ($R= 0$ and $W =2$).  And let's assume that in an archery battle, fighting with bow and arrow, the quick, crafty rabbit always kills the wolf, but does not reproduce afterward ($R' = 1$, $W' = 0$).  

\begin{center}
 \includegraphics[width=90mm]{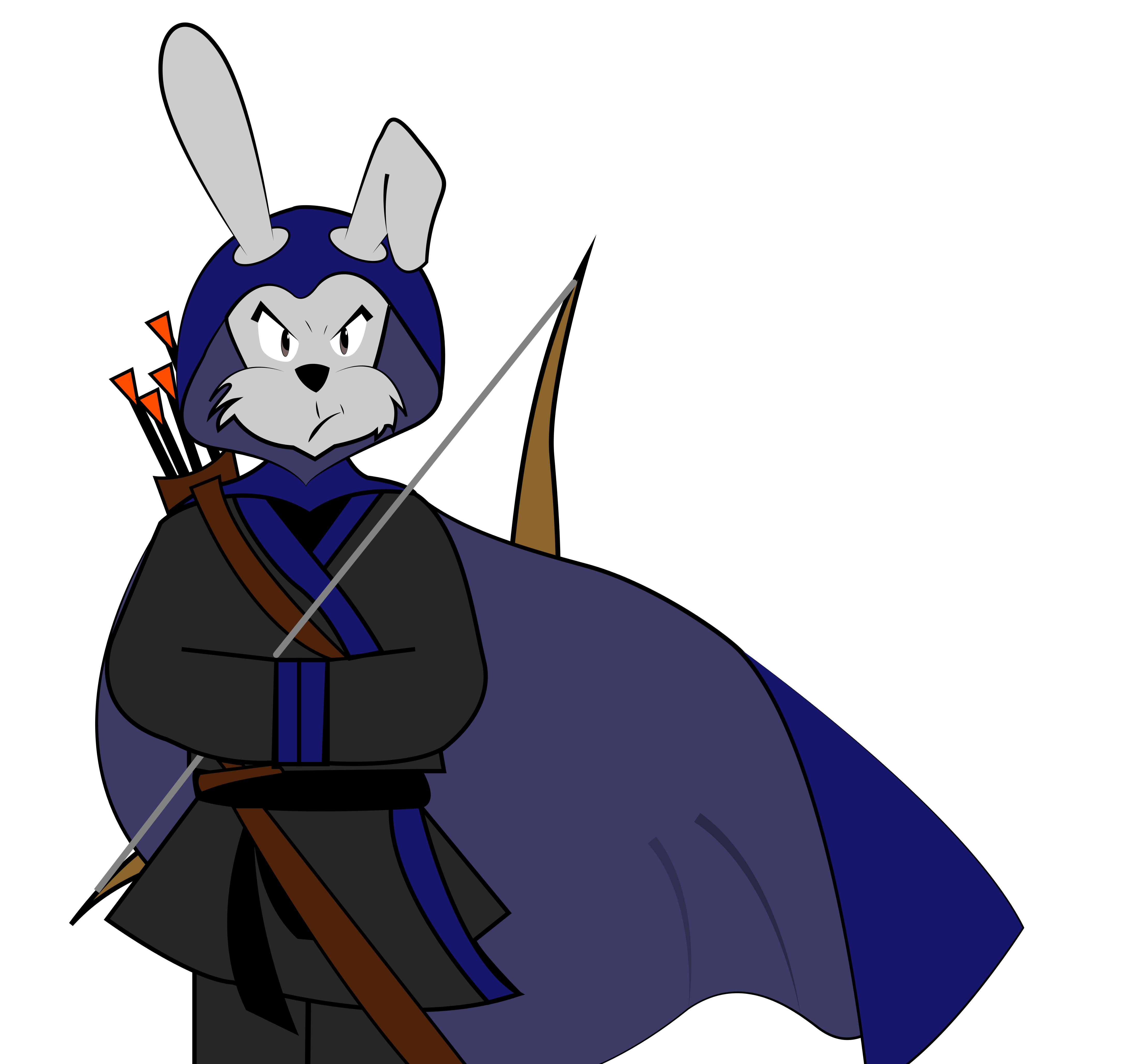}
\end{center}

Then we get
$$ \frac{dx}{dt} = \beta x - \kappa x y$$
$$ \frac{d y}{d t} = (\kappa - \kappa') x y- \delta y $$
This handles the equations 
$$ \frac{d x}{d t} = \beta x - \gamma x y$$
$$ \frac{d y}{d t} = \epsilon x y - \delta y$$
where $\beta,\gamma,\delta,\epsilon > 0$ and $\gamma > \epsilon$.  In other words, the cases where more rabbits die due to combat than wolves get born!  

We'll let you handle the cases where fewer rabbits die than wolves get born.

If we also include a death process for rabbits and birth process for wolves, we can get the fully general Lotka--Volterra equations\index{Lotka--Volterra!equations}:
$$ \frac{d x}{d t} = \beta x - \gamma x y$$
$$ \frac{d y}{d t} = \epsilon x y - \delta y$$
It's worth noting that biologists like to study these equations with different choices of sign for the constants involved: the 
\href{http://en.wikipedia.org/wiki/Lotka\%E2\%80\%93Volterra_equation}{predator-prey Lotka--Volterra equations} and the \href{http://en.wikipedia.org/wiki/Competitive_Lotka\%E2\%80\%93Volterra_equations}{competitive Lotka--Volterra equations}. 
\end{answer}


\newpage
\section[The rate equation]{The rate equation} 
\label{sec:2}

As we saw previously in Section \ref{sec:1}, a Petri net is a picture that shows different kinds of things and processes that turn bunches of things into other bunches of things, like this:

\begin{center}
 \includegraphics[width=114mm]{wolf-rabbit.png}
\end{center}

\noindent
The kinds of things are called {\bf species} and the processes are called {\bf transitions}.   We see such transitions in chemistry:
\begin{center}
H + OH $\rightarrow$ H$_2$O
\end{center}
\noindent and population biology:
\begin{center}
amoeba $\rightarrow$ amoeba + amoeba
\end{center}
\noindent and the study of infectious diseases:
\begin{center}
infected + susceptible $\rightarrow$ infected + infected
\end{center}
\noindent and many other situations.  

A `stochastic' Petri net says the rate at which each transition
occurs.  We can think of these transitions as occurring randomly at a
certain rate---and then we get a stochastic process described by
something called the `master equation'.  But for starters, we've been
thinking about the limit where there are very many things of each
species.  Then the randomness washes out, and the expected number of
things of each species changes deterministically in a manner described
by the `rate equation'.
 
It's time to explain the general recipe for getting this rate
equation!  It looks complicated at first glance, so we'll briefly
state it, then illustrate it with tons of examples, and then state it
again.

One nice thing about stochastic Petri nets is that they let you dabble
in many sciences.  Last time we got a tiny taste of how they show up
in population biology.  This time we'll look at chemistry and models
of infectious diseases.  We won't dig very deep, but trust us: you can
do a lot with stochastic Petri nets in these subjects!  We'll give
some references in case you want to learn more.

\subsection{Rate equations: the general recipe}

Here's the recipe, really quickly:

A stochastic Petri net has a set of {\bf species} and a set of {\bf transitions}.   Let's concentrate our attention on a particular transition.  Then the $i$th species will appear $m_i$ times as the {\bf input} to that transition, and $n_i$ times as the {\bf output}.  Our transition also has a {\bf reaction rate} $0 < r < \infty$.  

The rate equation answers this question: 
$$ \frac{d x_i}{d t} = ???  $$
where $x_i(t)$ is the number of {\bf things} of the $i$th species at time $t$.   The answer is a sum of terms, one for each transition.  Each term works the same way.  For the transition we're looking at, it's
$$ r (n_i - m_i) x_1^{m_1} \cdots x_k^{m_k} $$
The factor of $(n_i - m_i)$ shows up because our transition destroys $m_i$ things of the $i$th species and creates $n_i$ of them.  The big product over all species, $x_1^{m_1} \cdots x_k^{m_k} $, shows up because our transition occurs at a rate proportional to the product of the numbers of things it takes as inputs.  The constant of proportionality is the reaction rate $r$.

\subsection{The formation of water (1)}

But let's do an example.  Here's a naive model for the formation of water from atomic hydrogen and oxygen:

\begin{center}
\includegraphics[width=61mm]{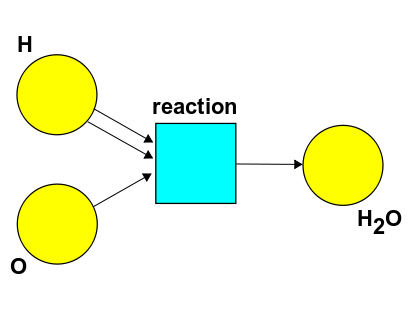}
\end{center}

\noindent
This Petri net has just one transition: two hydrogen atoms and an oxygen atom collide simultaneously and form a molecule of water.  That's not really how it goes... but what if it were?  Let's use $[\mathrm{H}]$ for the number of hydrogen atoms, and so on, and let the reaction rate be $\alpha$.  Then we get this rate equation:
$$ \begin{array}{ccl}
\displaystyle{\frac{d [\mathrm{H}]}{d t} }&=& - 2 \alpha [\mathrm{H}]^2 [\mathrm{O}] 
\\
\\
\displaystyle{\frac{d [\mathrm{O}]}{d t} }&=& - \alpha [\mathrm{H}]^2 [\mathrm{O}] 
\\
\\
\displaystyle{\frac{d [\mathrm{H}_2\mathrm{O}]}{d t} }&=& \alpha [\mathrm{H}]^2 [\mathrm{O}] \end{array}
$$
See how it works?  The reaction occurs at a rate proportional to the product of the numbers of things that appear as \emph{inputs}: two H's and one O.  The constant of proportionality is the rate constant $\alpha$.  So, the reaction occurs at a rate equal to $\alpha [\mathrm{H}]^2 [\mathrm{O}] $.   Then:

\begin{itemize}
 \item Since two hydrogen atoms get used up in this reaction, we get a factor of $-2$ in the first equation.  
 \item Since one oxygen atom gets used up, we get a factor of $ -1$ in the second equation.
 \item Since one water molecule is formed, we get a factor of $+1$ in the third equation. 
\end{itemize} 

\subsection{The formation of water (2)}

Let's do another example.  Chemical reactions rarely proceed by having \emph{three} things collide simultaneously---it's too unlikely.  So, for the formation of water from atomic hydrogen and oxygen, there will typically be an intermediate step.  Maybe something like this:

\begin{center}
 \includegraphics[width=114mm]{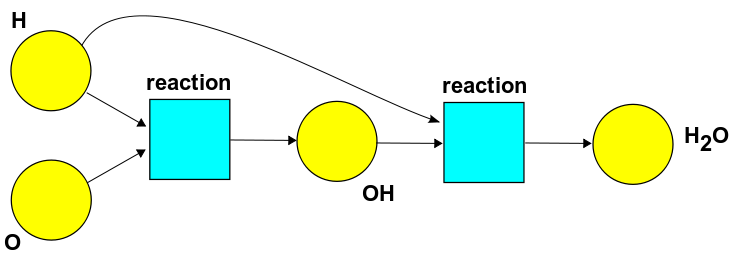}
\end{center}

\noindent
Here OH is called a `hydroxyl radical'.  We're not sure this is the \emph{most} likely pathway, but never mind---it's a good excuse to work out another rate equation.  If the first reaction has rate constant $\alpha$ and the second has rate constant $\beta$, here's what we get:
$$ \begin{array}{ccl}
\displaystyle{\frac{d [\mathrm{H}]}{d t}} &=& - \alpha [\mathrm{H}] [\mathrm{O}] - \beta [\mathrm{H}] [\mathrm{OH}] 
\\
\\
\displaystyle{\frac{d [\mathrm{OH}]}{d t}} &=&  \alpha [\mathrm{H}] [\mathrm{O}] -  \beta [\mathrm{H}] [\mathrm{OH}] 
\\
\\
\displaystyle{\frac{d [\mathrm{O}]}{d t}} &=& - \alpha [\mathrm{H}] [\mathrm{O}] 
\\
\\
\displaystyle{\frac{d [\mathrm{H}_2\mathrm{O}]}{d t}} &=& 
\beta [\mathrm{H}] [\mathrm{OH}]
\end{array} 
$$
See how it works?  Each reaction occurs at a rate proportional to the product of the numbers of things that appear as inputs.  We get minus signs when a reaction destroys one thing of a given kind, and plus signs when it creates one.  We don't get factors of 2 as we did last time, because now no reaction creates or destroys \emph{two} of anything.

\subsection{The dissociation of water (1)}

In chemistry every reaction comes with a reverse reaction.  So, if hydrogen and oxygen atoms can combine to form water, a water molecule can also `dissociate' into hydrogen and oxygen atoms.  The rate constants for the reverse reaction can be different than for the original reaction... and all these rate constants depend on the temperature.  At room temperature, the rate constant for hydrogen and oxygen to form water is a lot higher than the rate constant for the reverse reaction.  That's why we see a lot of water, and not many lone hydrogen or oxygen atoms.  But at sufficiently high temperatures, the rate constants change, and water molecules become more eager to dissociate.

Calculating these rate constants is a big subject.  We're just starting to read this book, which looked like the easiest one on the library shelf:

\begin{itemize}
\item[\cite{Log96}]  S. R. Logan, \emph{Chemical Reaction Kinetics}, Longman, Essex, (1996).
\end{itemize}
\noindent But let's not delve into these mysteries yet.  Let's just take our naive Petri net for the formation of water and turn around all the arrows, to get the reverse reaction:

\begin{center}
 \includegraphics[width=64mm]{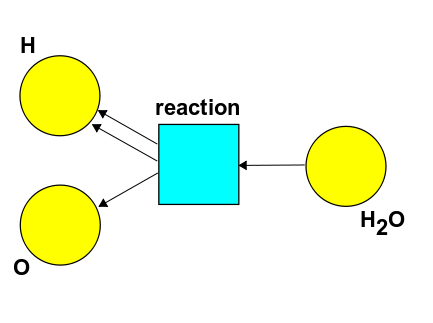}
\end{center}

If the reaction rate is $\alpha$, here's the rate equation:
$$ \begin{array}{ccl}
\displaystyle{\frac{d [\mathrm{H}]}{d t}} &=& 2 \alpha [\mathrm{H}^2\mathrm{O}] 
\\
\\
\displaystyle{\frac{d [\mathrm{O}]}{d t}} &=& \alpha [\mathrm{H}^2 \mathrm{O}] 
\\
\\
\displaystyle{\frac{d [\mathrm{H}_2\mathrm{O}]}{d t}} &=& - \alpha [\mathrm{H}^2 \mathrm{O}]   \end{array}
$$
See how it works?  The reaction occurs at a rate proportional to $[\mathrm{H}^2\mathrm{O}]$, since it has just a single water molecule as input.  That's where the $\alpha [\mathrm{H}^2\mathrm{O}]$ comes from.  Then:

\begin{itemize}
 \item Since two hydrogen atoms get formed in this reaction, we get a factor of +2 in the first equation.  
 \item Since one oxygen atom gets formed, we get a factor of +1 in the second equation.
 \item Since one water molecule gets used up, we get a factor of +1 in the third equation. 
\end{itemize} 

\subsection{The dissociation of water (2)}

Of course, we can also look at the reverse of the more realistic reaction involving a hydroxyl radical as an intermediate.  Again, we just turn around the arrows in the Petri net we had:

\begin{center}
 \includegraphics[width=114mm]{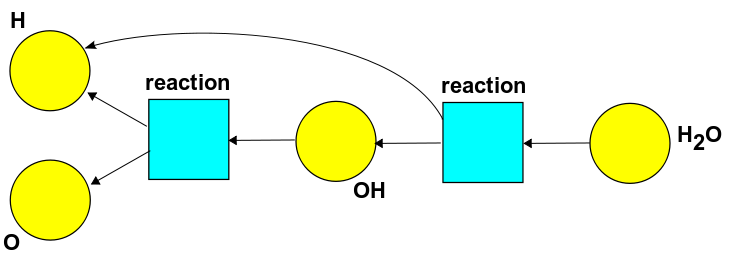}
\end{center}

Now the rate equation looks like this:
$$ \begin{array}{ccl}
\displaystyle{\frac{d [\mathrm{H}]}{d t}} &=& + \alpha [\mathrm{OH}] + \beta [\mathrm{H}_2\mathrm{O}] 
\\
\\
\displaystyle{\frac{d [\mathrm{OH}]}{d t}} &=& - \alpha [\mathrm{OH}] + \beta [\mathrm{H}_2 \mathrm{O}] 
\\
\\
\displaystyle{\frac{d [\mathrm{O}]}{d t}} &=& + \alpha [\mathrm{OH}] 
\\
\\
\displaystyle{\frac{d [\mathrm{H}_2\mathrm{O}]}{d t}} &=& - \beta [\mathrm{H}_2\mathrm{O}]
\end{array} 
$$
Do you see why?  Test your understanding of the general recipe.

By the way: if you're a category theorist, when we said ``turn around all the arrows" you probably thought \href{http://ncatlab.org/nlab/show/opposite\%20category}{``opposite category"}.\index{category theory!opposite category} \index{opposite category} And you'd be right!  A Petri net is just a way of presenting of a strict \href{http://ncatlab.org/nlab/show/symmetric+monoidal+category}{symmetric monoidal category}\index{category theory!symmetric monoidal category} \index{symmetric monoidal category} that's freely generated by some objects (the species) and some morphisms (the transitions).  When we turn around all the arrows in our Petri net, we're getting a presentation of the \emph{opposite} symmetric monoidal category.  

If you don't know what this means, don't worry.  We won't use category theory
in this book, even though it's lurking right beneath the surface.  Only at the end,
in Section \ref{sec:25_2}, will we say a bit more about Petri nets
and symmetric monoidal categories.

\subsection{The SI model}

The {\bf SI model} is an extremely simple model of an infectious disease.  We can describe it using this Petri net:  \index{SI model}

\begin{center}
 \includegraphics[width=61mm]{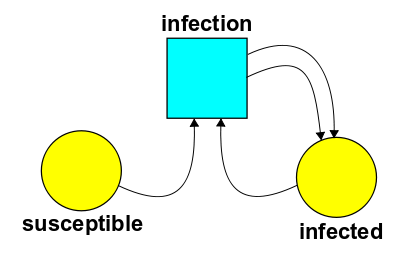}
\end{center}

There are two species: {\bf susceptible} and {\bf infected}.  And there's a transition called {\bf infection}, where an infected person meets a susceptible person and infects them.  

Suppose $S$ is the number of susceptible people and $I$ the number of infected ones.  If the rate constant for infection is $\beta$, the rate equation is
$$ \begin{array}{ccl}
\displaystyle{\frac{d S}{d t}} &=& - \beta S I \\ \\
\displaystyle{\frac{d I}{d t}} &=&  \beta S I 
\end{array}
$$
Do you see why?

By the way, it's easy to solve these equations exactly.  The total number of people doesn't change, so $S + I$ is a conserved quantity.  Use this to get rid of one of the variables.  You'll get a version of the famous \href{http://en.wikipedia.org/wiki/Logistic_function\#In_ecology:_modeling_population\_growth}{logistic equation}, so the fraction of people infected must grow sort of like this:

\begin{center}
 \includegraphics[width=80mm]{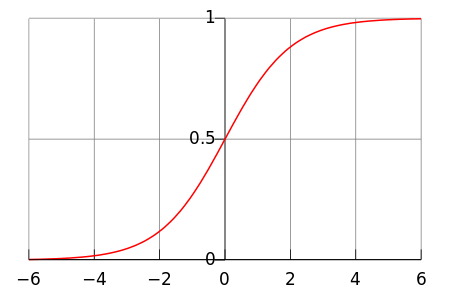}
\end{center}

\begin{problem}\label{prob:2} 
Is there a stochastic Petri net with just one species whose rate equation is the logistic equation:
$$ \frac{d P}{d t} = \alpha P - \beta P^2  ?$$
\end{problem} 

\subsection{The SIR model}
\label{sec:2_SIR}
\index{SIR model|(}

The SI model is just a warmup for the more interesting {\bf SIR model}, which was invented by Kermack and McKendrick in 1927:
\begin{enumerate} 
 \item[\cite{KM27}]  W. O. Kermack and A. G. McKendrick, A contribution to the mathematical theory of epidemics, {\sl Proc. Roy. Soc. Lond. A} {\bf 115}, 700-721, (1927).
\end{enumerate} 
\noindent
This is the only mathematical model we know to have been \emph{knighted}: Sir Model.

This model has an extra species, called {\bf resistant}, and an extra transition, called {\bf recovery}, where an infected person gets better and develops resistance to the disease:

\begin{center}
 \includegraphics[width=101mm]{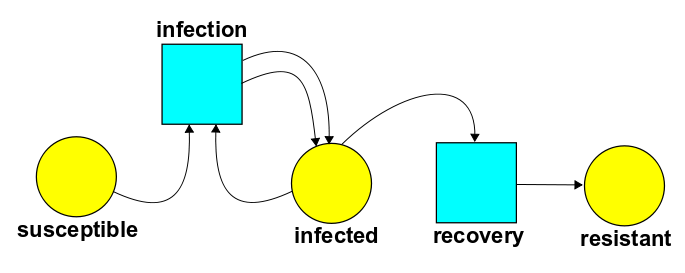}
\end{center}

If the rate constant for infection is $\beta$ and the rate constant for recovery is $\alpha$, the rate equation for this stochastic Petri net is:
$$ 
\begin{array}{ccl}
\displaystyle{\frac{d S}{d t}} &=& - \beta S I \\  \\
\displaystyle{\frac{d I}{d t}} &=&  \beta S I - \alpha I \\   \\
\displaystyle{\frac{d R}{d t}} &=&  \alpha I
\end{array}
$$
See why?

We don't know a `closed-form' solution to these equations.   But Kermack and McKendrick found an approximate solution in their original paper.  They used this to model the death rate from bubonic plague\index{bubonic plague} during an outbreak in Bombay, and got pretty good agreement.  Nowadays, of course, we can solve these equations numerically on the computer.

\index{SIR model|)}

\subsection{The SIRS model}
\index{SIRS model|(}

There's an even more interesting model of infectious disease called the {\bf SIRS model}.  This has one more transition, called {\bf losing resistance}, where a resistant person can go back to being susceptible.  Here's the Petri net:

\begin{center}
 \includegraphics[width=80mm]{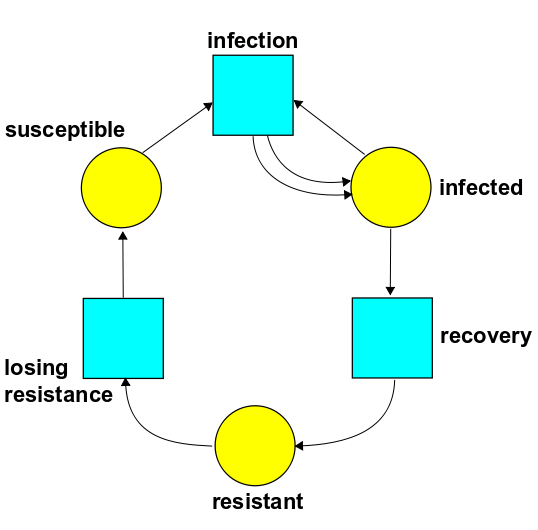}
\end{center}

\begin{problem}\label{prob:3}
 If the rate constants for recovery, infection and loss of resistance are $\alpha, \beta,$ and $\gamma$, write down the rate equations for this stochastic Petri net. 
\end{problem}

In the SIRS model we see something new: cyclic behavior!  Say you start with a few infected people and a lot of susceptible ones.  Then lots of people get infected... then lots get resistant... and then, much later, if you set the rate constants right, they lose their resistance and they're ready to get sick all over again!  You can sort of see it from the Petri net, which looks like a cycle.

You can learn about the SI, SIR and SIRS models here:

\begin{enumerate}
\item[\cite{Man06}]  Marc Mangel, \textsl{The Theoretical Biologist's Toolbox: Quantitative Methods for Ecology and Evolutionary Biology}, Cambridge U. Press, Cambridge, (2006). 
\end{enumerate} 

\noindent
For more models of this type, see:

\begin{itemize} 
 \item \href{http://en.wikipedia.org/wiki/Compartmental_models_in_epidemiology}{Compartmental models in epidemiology}, Wikipedia.
\end{itemize} 

\noindent
A `compartmental model' is closely related to a stochastic Petri net, but beware: the pictures in this article are not really Petri nets!\index{compartmental model}

\index{SIRS model|)}

\subsection{The general recipe revisited}

Now we'll remind you of the general recipe and polish it up a bit.  So, suppose we have a stochastic Petri net with $k$ species.  Let $x_i$ be the number of things of the $i$th species.   Then the rate equation looks like:
$$ \frac{d x_i}{d t} = ???  $$
It's really a bunch of equations, one for each $ 1 \le i \le k$.  But what is the right-hand side?

The right-hand side is a sum of terms, one for each transition in our Petri net.  So, let's assume our Petri net has just one transition!   (If there are more, consider one at a time, and add up the results.)  

Suppose the $i$th species appears as input to this transition $m_i$ times, and as output $n_i$ times.  Then the rate equation is
$$ \frac{d x_i}{d t} = r (n_i - m_i) x_1^{m_1} \cdots x_k^{m_k} $$
where $r$ is the rate constant for this transition.  

That's really all there is to it!  But subscripts make you eyes hurt more and as you get older---this is the real reason for using index-free notation, despite any sophisticated rationales you may have heard---so let's define a vector\index{index free vector notation}  
$$ x = (x_1, \dots , x_k) $$
that keeps track of how many things there are in each species.  Similarly let's make up an {\bf input vector}:
$$ m = (m_1, \dots, m_k) $$
and an {\bf output vector}:
$$ n = (n_1, \dots, n_k) $$
for our transition.  And a bit more unconventionally, let's define
$$ x^m = x_1^{m_1} \cdots x_k^{m_k} $$
Then we can write the rate equation for a single transition as
$$ \frac{d x}{d t} = r (n-m) x^m $$
This looks a lot nicer!  

Indeed, this emboldens us to consider a general stochastic Petri net with lots of transitions, each with their own rate constant.  Let's write $T$ for the set of transitions and $r(\tau)$ for the rate constant of the transition $\tau \in T$.  Let $n(\tau)$ and $m(\tau)$ be the input and output vectors of the transition $\tau$.  Then the rate equation for our stochastic Petri net is
$$ \frac{d x}{d t} = \sum_{\tau \in T} r(\tau) (n(\tau) - m(\tau)) x^{m(\tau)} $$
That's the fully general recipe in a nutshell.  We're not sure yet how helpful this notation will be, but it's here whenever we want it.

In Section \ref{sec:2} we'll get to the really interesting part, where ideas from quantum theory enter the game!  We'll see how things of different species randomly transform into each other via the transitions in our Petri net.  And someday we'll check that the \emph{expected} number of things in each state evolves according to the rate equation we just wrote down... at least in the limit where there are lots of things in each state.  

\subsection{Answers}

Here are the answers to the problems:

\vskip 1em \noindent {\bf Problem 2.} Is there a stochastic Petri net with just one state whose rate equation is the logistic equation:
$$ \frac{d P}{d t} = \alpha P - \beta P^2 ?$$
\vskip 1em

\begin{answer}
Yes.  Use the Petri net with one species, say {\bf amoeba}, and two transitions:
\begin{itemize} 
 \item {\bf fission}, with one amoeba as input and two as output, with rate constant $\alpha$.
 \item {\bf competition}, with two amoebas as input and one as output, with rate constant $\beta$.
\end{itemize} 
\end{answer} 
\noindent
The idea of `competition' is that when two amoebas are competing for limited resources, one may die.

\vskip 1em \noindent {\bf Problem 3.} If the rate constants for recovery, infection and loss of resistance are $\alpha, \beta$, and $\gamma$, write down the rate equations for this stochastic Petri net:

\begin{center}
 \includegraphics[width=80mm]{SIRS.png}
\end{center}

\vskip 1em

\begin{answer}
The rate equation is:
$$ 
\begin{array}{ccl} 
\displaystyle{\frac{d S}{d t}} &=& - \beta S I + \gamma R \\ \\
\displaystyle{ \frac{d I}{d t}} &=& \beta S I - \alpha I  \\ \\ 
\displaystyle{\frac{d R}{d t}} &=& \alpha I - \gamma R
\end{array} 
$$
\end{answer}


\newpage
\section[The master equation]{The master equation} 
\label{sec:3}

In Section \ref{sec:2} we explained the rate equation of a stochastic Petri net.  But now let's get serious: let's see what's \emph{stochastic}---that is, random--- about a stochastic Petri net.  For this we need to forget the rate equation (temporarily) and learn about the `master equation'.  This is where ideas from quantum field theory start showing up!

A Petri net has a bunch of {\bf species} and a bunch of {\bf transitions}.  Here's an example we've already seen, from chemistry:

\begin{center}
 \includegraphics[width=119.0625mm]{chemistryNetBasicA.png}
\end{center}

\noindent
The species are in yellow, the transitions in blue.  A {\bf labelling} \index{labelling}
\index{Petri net!labelling} of our Petri net is a way of having some number of things of each species.   We can draw these things as little black dots, often called {\bf tokens}: \index{token} \index{Petri net!token}

\begin{center}
 \includegraphics[width=119.0625mm]{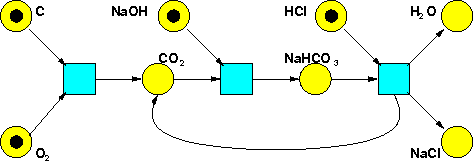}
\end{center}
\noindent
In this example there are only 0 or 1 things of each species: we've got one atom of carbon, one molecule of oxygen, one molecule of sodium hydroxide, one molecule of hydrochloric acid, and nothing else.  But in general, we can have any natural number of things of each species.

In a stochastic Petri net, the transitions occur randomly as time passes.  For example, as time passes we might see a sequence of transitions like this:

\begin{center}
 \includegraphics[width=80mm]{chemistryNetDot1A.png}
\end{center}

\begin{center}
 \includegraphics[width=80mm]{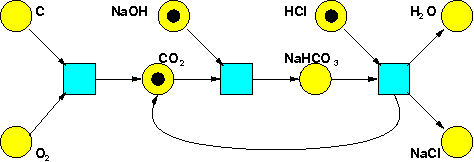}
\end{center}

\begin{center}
 \includegraphics[width=80mm]{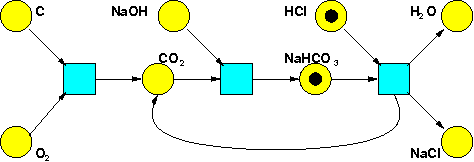}
\end{center}

\begin{center}
 \includegraphics[width=80mm]{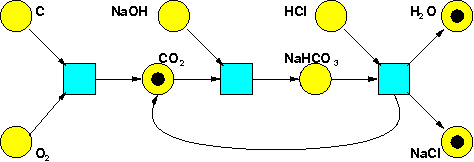}
\end{center}

Each time a transition occurs, the number of things of each species changes in an obvious way.

\subsection{The master equation} 

Now, we said the transitions occur `randomly', but that dosen't mean there's no rhyme or reason to them!  The miracle of probability theory is that it lets us state precise laws about random events.   The law governing the random behavior of a stochastic Petri net is called the `master equation'.  

In a stochastic Petri net, each transition has a {\bf rate constant}, a positive real number.  Roughly speaking, this determines the probability of that transition.

A bit more precisely: suppose we have a Petri net that is labelled in some way at some moment.  Then the probability that a given transition occurs in a short time $\Delta t$ is approximately:

\begin{itemize} 
 \item the rate constant for that transition, times
 \item the time $\Delta t$, times
 \item the number of ways the transition can occur.
\end{itemize} 

More precisely still: this formula is correct up to terms of order $(\Delta t)^2$.  So, taking the limit as $\Delta t \to 0$, we get a differential equation describing precisely how the probability of the Petri net having a given labelling changes with time!  And this is the {\bf master equation}.\index{master equation}

Now, you might be impatient to actually \emph{see} the master equation, but that would be rash.    The true master doesn't need to see the master equation.   It sounds like a Zen proverb, but it's true. The raw beginner in mathematics wants to see the solutions of an equation.  The more advanced student is content to prove that the solution exists.  But the master is content to prove that the \emph{equation} exists.  

A bit more seriously: what matters is understanding the rules that inevitably lead to some equation: actually \emph{writing it down} is then straightforward.  
And you see, there's something we haven't explained yet: ``the number of ways the transition can occur".  This involves a bit of counting.  Consider, for example, this Petri net:

\begin{center}
 \includegraphics[width=113mm]{wolf-rabbit.png}
\end{center}

\noindent
Suppose there are 10 rabbits and 5 wolves.  

\begin{itemize} 
\item How many ways can the {\bf birth} transition occur?  Since birth takes one rabbit as input, it can occur in 10 ways.
\item How many ways can {\bf predation} occur?  Since predation takes one rabbit and one wolf as inputs, it can occur in 10 $\times$ 5 = 50 ways.
\item How many ways can {\bf death} occur?  Since death takes one wolf as input, it can occur in 5 ways.  
\end{itemize} 

Or consider this one:

\begin{center}
 \includegraphics[width=63mm]{H2O-II.png}
\end{center} 

\noindent
Suppose there are 10 hydrogen atoms and 5 oxygen atoms.  How many ways can they form a water molecule?  There are 10 ways to pick the first hydrogen, 9 ways to pick the second hydrogen, and 5 ways to pick the oxygen.  So, there are
$$ 10 \times 9 \times 5 = 450 $$
ways.

Note that we're treating the hydrogen atoms as \emph{distinguishable}, so there are $10 \times 9$ ways to pick them, not $\frac{10 \times 9}{2} = {10 \choose 2}$.  In general, the number of ways to choose $M$ distinguishable things from a collection of $L$ is the \href{http://www.stanford.edu/~dgleich/publications/finite-calculus.pdf\#page=6}{{\bf falling power}} 
$$ L^{\underline{M}} = L \cdot (L - 1) 
\cdots (L - M + 1) $$\index{falling power} 
where there are $M$ factors in the product, but each is 1 less than the preceding one---hence the term `falling'.

Okay, now we've given you all the raw ingredients to work out the master equation for any stochastic Petri net.  The previous paragraph was a big fat hint.  One more nudge and you're on your own:

\begin{problem}\label{prob:4}
Suppose we have a stochastic Petri net with $k$ species and one transition with rate constant $r$.  Suppose the $i$th species appears $m_i$ times as the input of this transition and $n_i$ times as the output.   A labelling of this stochastic Petri net is a $k$-tuple of natural numbers $\ell = (\ell_1, \dots, \ell_k)$ saying how many things are in each species  Let $\psi_\ell(t)$ be the probability that the labelling is $\ell$ at time $t$.  Then the {\bf master equation} looks like this:\index{master equation}
$$ \frac{d}{d t} \psi_{\ell'}(t)  = \sum_{\ell} H_{\ell' \ell} \psi_{\ell}(t) $$
for some matrix of real numbers $H_{\ell' \ell}.$  What is this matrix?
\end{problem} 

You can write down a formula for this matrix using what we've told you.    And then, if you have a stochastic Petri net with more transitions, you can just compute the matrix for each transition using this formula, and add them all up.

There's a straightforward way to solve this problem, but we want to get the solution by a strange route: we want to \emph{guess} the master equation using ideas from \emph{quantum field theory!}

Why?  Well, if we think about a stochastic Petri net whose labelling undergoes random transitions as we've described, you'll see that any possible `history' for the labelling can be drawn in a way that looks like a \href{http://en.wikipedia.org/wiki/Feynman\_diagram}{Feynman diagram}.  In quantum field theory, Feynman diagrams show how things interact and turn into other things.  But that's what stochastic Petri nets do, too!\index{Feynman diagrams!and stochastic Petri nets}  

For example, if our Petri net looks like this:

\begin{center}
 \includegraphics[width=114mm]{wolf-rabbit.png}
\end{center}

\noindent
then a typical history can be drawn like this:

\begin{center}
 \includegraphics[width=80mm]{rabbit_wolf_feynman_diagram.png}
\end{center}

\noindent
Some rabbits and wolves come in on top.  They undergo some transitions as time passes, and go out on the bottom. The vertical coordinate is time, while the horizontal coordinate doesn't really mean anything: it just makes the diagram easier to draw.  

If we ignore all the artistry that makes it cute, this Feynman diagram is just a graph with species as edges and transitions as vertices.\index{Feynman diagrams!and probabilities}    Each transition occurs at a specific time.

We can use these Feynman diagrams to compute the probability that if we start it off with some labelling at time $t_1$, our stochastic Petri net will wind up with some other labelling at time $t_2$.   To do this, we just take a sum over Feynman diagrams that start and end with the given labellings.  For each Feynman diagram, we integrate over all possible times at which the transitions occur.  And what do we integrate?  Just the product of the rate constants for those transitions!

That was a bit of a mouthful, and it doesn't really matter if you followed it in detail.  What matters is that it \emph{sounds a lot like stuff you learn when you study quantum field theory!}   

That's one clue that something cool is going on here.  Another is the master equation itself:
$$ \frac{d}{d t} \psi_{\ell'}(t) = \sum_{\ell} H_{\ell' \ell} \psi_{\ell}(t) $$
This looks a lot like \href{http://en.wikipedia.org/wiki/Schr\%C3\%B6dinger_equation}{Schr\"{o}dinger's equation}, the basic equation describing how a quantum system changes with the passage of time.  

We can make it look even more like Schr\"{o}dinger's equation\index{quantum mechanics!Schr\"{o}dinger's equation} if we create a vector space with the labellings $\ell$ as a basis.  The numbers $\psi_\ell(t)$ will be the components of some vector $\psi(t)$ in this vector space.   The numbers $H_{\ell' \ell}$ will be the matrix entries of some operator $H$ on that vector space.  And the master equation becomes:
$$ \frac{d}{d t} \psi(t) = H \psi(t) $$
Compare this with Schr\"{o}dinger's equation\index{quantum mechanics!Schr\"{o}dinger's equation!definition of}:
$$ i \frac{d}{d t} \psi(t) = H \psi(t) $$
The only visible difference is that factor of $i$!  

But of course this is linked to another big difference: in the master equation $\psi$ describes probabilities, so it's a vector in a \emph{real} vector space.   In quantum theory $\psi$ describes \href{http://en.wikipedia.org/wiki/Probability_amplitude}{amplitudes}, so it's a vector in a \emph{complex} Hilbert space.\index{Hilbert space!complex}

Apart from this huge difference, everything is a lot like quantum field theory.  In particular, our vector space is a lot like the \href{http://en.wikipedia.org/wiki/Fock\_space}{Fock space} one sees in quantum field theory.   Suppose we have a quantum particle that can be in $k$ different states.  Then its \index{quantum field theory!Fock space} Fock space is the Hilbert space\index{Hilbert space!Fock space} we use to describe an arbitrary collection of such particles.  It has an orthonormal basis denoted
$$ | \ell_1 \cdots \ell_k \rangle $$
where $\ell_1, \dots, \ell_k$ are natural numbers saying how many particles there are in each state.  So, any vector in Fock space looks like this:
$$ \psi = \sum_{\ell_1, \dots, \ell_k}  \psi_{\ell_1 , \dots, \ell_k} \,  | \ell_1 \cdots \ell_k \rangle $$
But if write the whole list $\ell_1, \dots, \ell_k$ simply as $\ell$, this becomes
$$ \psi = \sum_{\ell} \psi_{\ell}   | \ell \rangle $$
This is almost like what we've been doing with Petri nets!---except we hadn't gotten around to giving names to the basis vectors.

In quantum field theory class, you would learn lots of interesting operators on Fock space: annihilation and creation operators, number operators, and so on.  So, when considering this master equation
$$ \frac{d}{d t} \psi(t) = H \psi(t) $$
it seemed natural to take the operator $H$ and write it in terms of these.  There was an obvious first guess, which didn't quite work... but thinking a bit harder eventually led to the right answer.   Later, it turned out people had already thought about similar things. So, we want to explain this.   

When we first started working on this stuff, we focused on the difference between collections of \emph{indistinguishable} things, like bosons or fermions, and collections of \emph{distinguishable} things, like rabbits or wolves.  But with the benefit of hindsight, it's even more important to think about the difference between quantum theory, which is all about \emph{probability amplitudes}, and the game we're playing now, which is all about \emph{probabilities}.  So, in the next Section, we'll explain how we need to modify quantum theory so that it's about probabilities.  This will make it easier to guess a nice formula for $H$.

\subsection{Answers}

Here is the answer to the problem:

\vskip 1em \noindent {\bf Problem 4.}  Suppose we have a stochastic Petri net with $k$ species and just one transition, whose rate constant is $r$. Suppose the $i$th species appears $m_i$ times as the input of this transition and $n_i$ times as the output. A labelling of this stochastic Petri net is a $k$-tuple of natural numbers $\ell = (\ell_1, \dots, \ell_k)$ saying how many things there are of each species. Let $\psi_\ell(t)$ be the probability that the labelling is $\ell$ at time $t$. Then the master equation looks like this:
$$\frac{d}{d t} \psi_{\ell'}(t) = \sum_{\ell} H_{\ell' \ell} \psi_{\ell}(t)$$
for some matrix of real numbers $ H_{\ell' \ell}$. What is this matrix?

\begin{answer}
To compute $H_{\ell' \ell}$ it's enough to start the Petri net in a definite labelling $\ell$ and see how fast the probability of being in some labelling $\ell'$ changes.  In other words, if at some time $ t$ we have
$$ \psi_{\ell}(t) = 1$$
then
$$ \frac{d}{dt} \psi_{\ell'}(t) = H_{\ell' \ell}$$
at this time.

Now, suppose we have a Petri net that is labelled in some way at some moment.  Then the probability that the transition occurs in a short time $ \Delta t$ is approximately:

\begin{itemize} 
 \item the rate constant $ r$, times
 \item the time $ \Delta t$, times
 \item the number of ways the transition can occur, which is the product of falling powers $ \ell_1^{\underline{m_1}} \cdots \ell_k^{\underline{m_k}}$.  Let's call this product $ \ell^{\underline{m}}$ for short.
\end{itemize} 

Multiplying these 3 things we get
$$ r \ell^{\underline{m}} \Delta t$$
So, the \emph{rate} at which the transition occurs is just:
$$ r \ell^{\underline{m}}$$
And when the transition occurs, it eats up $m_i$ things of the \emph{i}th species, and produces $n_i$ things of that species.  So, it carries our system from the original labelling $ \ell$ to the new labelling 
$$ \ell' = \ell  + n - m$$
So, \emph{in this case} we have
$$  \frac{d}{dt} \psi_{\ell'}(t)  = r \ell^{\underline{m}}$$
and thus
$$ H_{\ell' \ell} = r \ell^{\underline{m}}$$
However, that's not all: there's another case to consider!  Since the probability of the Petri net being in this new labelling $ \ell'$ is going up, the probability of it staying in the original labelling $ \ell$ must be going down by the same amount.  So we must also have
$$ H_{\ell \ell} = - r \ell^{\underline{m}}$$
We can combine both cases into one formula like this:
$$ H_{\ell' \ell} = r \ell^{\underline{m}} \left(\delta_{\ell', \ell + n - m} - \delta_{\ell', \ell}\right)$$
Here the first term tells us how fast the probability of being in the new labelling is going up.  The second term tells us how fast the probability of staying in the original labelling is going down.  

Note: each column in the matrix $ H_{\ell' \ell}$ sums to zero, and all the off-diagonal entries\index{matrix!off-diagonal entries} are nonnegative.  That's good: in the next section we'll show that this matrix must be `infinitesimal stochastic', meaning precisely that it has these properties!
\end{answer}


\newpage
\section[Probabilities vs amplitudes]{Probabilities vs amplitudes} 
\label{sec:4}

In Section \ref{sec:3} we saw clues that stochastic Petri nets are a lot like quantum field theory, but with probabilities replacing amplitudes.  There's a powerful analogy at work here, which can help us a lot.  It's time to make that analogy precise. But first, let us quickly sketch why it could be worthwhile.

\subsection{A Poisson process}

Consider this stochastic Petri net with rate constant $r$:

\begin{center}
 \includegraphics[width=62mm]{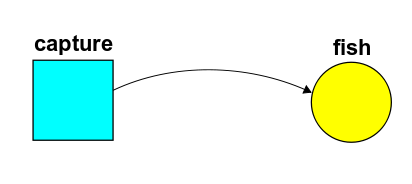}
\end{center}

\noindent
It describes an inexhaustible supply of fish swimming down a river, and getting caught when they run into a fisherman's net.  In any short time $\Delta t$ there's a chance of about $r \Delta t$ of a fish getting caught.  There's also a chance of two or more fish getting caught, but this becomes negligible by comparison as $\Delta t \to 0$.  Moreover, the chance of a fish getting caught during this interval of time is independent of what happens before or afterwards.  This sort of process is called a \href{http://en.wikipedia.org/wiki/Poisson\_process}{{\bf Poisson process}}.

\begin{problem}\label{prob:5}
Suppose we start out knowing for sure there are no fish in the fisherman's net.  What's the probability that he has caught $n$ fish at time $t$?
\end{problem}

\begin{answer}\index{power series!probability distribution} 
At any time there will be some probability of having caught $n$ fish; let's call this probability $\psi(n,t)$.  We can summarize all these probabilities in a single power series, called a \href{http://en.wikipedia.org/wiki/Probability-generating_function}{{\bf generating function}}:
$$\Psi(t) = \sum_{n=0}^\infty \psi(n,t) \, z^n $$\index{formal variable}
Here $z$ is a formal variable---don't ask what it means, for now it's just a trick.   In quantum theory we use this trick when talking about collections of photons rather than fish, but then the numbers $\psi(n,t)$ are complex `amplitudes'. Now they are real probabilities, but we can still copy what the physicists do, and use this trick to rewrite the master equation as follows:
$$\frac{d}{d t} \Psi(t) = H \Psi(t) $$
This describes how the probability of having caught any given number of fish changes with time.

What's the operator $H$?   Well, in quantum theory we describe the creation of photons using a certain operator on power series called the {\bf creation operator}:\index{quantum field theory!creation operator} \index{creation operator}
$$a^\dagger \Psi = z \Psi $$
We can try to apply this to our fish.  If at some time we're 100\% sure we have $n$ fish, we have 
$$\Psi = z^n$$
so applying the creation operator gives
$$a^\dagger \Psi = z^{n+1}$$
One more fish!  That's good.  So, an obvious wild guess is 
$$H = r a^\dagger$$
where $r$ is the rate at which we're catching fish.  Let's see how well this guess works. 

If you know how to exponentiate operators, you know to solve this equation:
$$\frac{d}{d t} \Psi(t) = H \Psi(t) $$
It's easy:
$$\Psi(t) = \mathrm{exp}(t H) \Psi(0)$$
Since we start out knowing there are no fish in the net, we have
$$\Psi(0) = 1$$
so with our guess for $H$ we get
$$\Psi(t) = \mathrm{exp}(r t a^\dagger) 1$$
But $a^\dagger$ is the operator of multiplication by $z$, so $\mathrm{exp}(r t a^\dagger)$ is multiplication by $e^{r t z}$, and
$$\Psi(t) = e^{r t z} = \sum_{n = 0}^\infty \frac{(r t)^n}{n!} \, z^n$$
So, if our guess is right, the probability of having caught $n$ fish at time $t$ is 
$$\frac{(r t)^n}{n!}$$
Unfortunately, this can't be right, because these probabilities don't sum to 1!  Instead their sum is
$$\sum_{n=0}^\infty \frac{(r t)^n}{n!} = e^{r t}$$
We can try to wriggle out of the mess we're in by dividing our answer by this fudge factor.  It sounds like a desperate measure, but we've got to try something!  

This amounts to guessing that the probability of having caught $n$ fish by time $t$ is
$$\frac{(r t)^n}{n!} \, e^{-r t}$$
And this guess is \emph{right!}  This is called the \href{http://en.wikipedia.org/wiki/Poisson\_distribution}{{\bf Poisson distribution}}: it's famous for being precisely the answer to the problem we're facing.\index{Poisson distribution}   

So on the one hand our wild guess about $H$ was wrong, but on the other hand it was not so far off.  We can fix it as follows:
$$H = r (a^\dagger - 1)$$
The extra $-1$ gives us the fudge factor we need.
\end{answer}

So, a wild guess corrected by an ad hoc procedure seems to have worked!  But what's really going on?  

What's really going on is that $a^\dagger$, or any multiple of this, is not a legitimate Hamiltonian for a master equation: if we define a time evolution operator $\exp(t H)$ using a Hamiltonian like this, probabilities won't sum to 1!  But $a^\dagger - 1$ is okay.  So, we need to think about which Hamiltonians are okay.

In quantum theory, self-adjoint Hamiltonians are okay.  But in probability theory, we need some other kind of Hamiltonian.  Let's figure it out.

\subsection{Probability theory vs quantum theory}

Suppose we have a system of any kind: physical, chemical, biological, economic, whatever.   The system can be in different states.  In the simplest sort of model, we say there's some set $X$ of states, and say that at any moment in time the system is definitely in one of these states.  But we want to compare two other options:

\begin{itemize} 

\item In a {\bf probabilistic} model, we may instead say that the system has a {\bf probability} $\psi(x)$ of being in any state $x \in X$.  These probabilities are nonnegative real numbers with 
$$\sum_{x \in X} \psi(x) = 1$$
\item In a {\bf quantum} model, we may instead say that the system has an {\bf amplitude} $\psi(x)$ of being in any state $x \in X$.  These amplitudes are complex numbers with 
$$\sum_{x \in X} | \psi(x) |^2 = 1$$
\end{itemize} 

Probabilities and amplitudes are similar yet strangely different.  Of course given an amplitude we can get a probability by taking its absolute value and squaring it.  This is a vital bridge from quantum theory to probability theory.  In the present section, however, we don't want to focus on the bridges, but rather the \emph{parallels} between these theories.

We often want to replace the sums above by integrals.  For that we need to replace our set $X$ by a \href{http://en.wikipedia.org/wiki/Measure\_\%28mathematics\%29}{measure space},\index{measure space} which is a set equipped with enough structure that you can integrate  real or complex functions defined on it.  Well, at least you can integrate so-called `integrable' functions---but we'll neglect all issues of analytical rigor here.  Then:\index{integration}

\begin{itemize} 
\item In a {\bf probabilistic} model, the system has a {\bf probability distribution} $\psi \colon X \to \mathbb{R}$, which obeys $\psi \ge 0$ and
$$\int_X \psi(x) \, d x = 1$$
\item In a {\bf quantum} model, the system has a {\bf wavefunction} $\psi \colon X \to \mathbb{C}$, which obeys 
$$\int_X | \psi(x) |^2 \, d x= 1$$
\end{itemize} 

In probability theory, we integrate $\psi$ over a set $S \subset X$ to find out the probability that our systems state is in this set.  In quantum theory we integrate $|\psi|^2$ over the set to answer the same question.

We don't need to think about sums over sets and integrals over measure spaces separately: there's a way to make any set $X$ into a measure space\index{measure space} such that by definition,
$$\int_X \psi(x) \, dx = \sum_{x \in X} \psi(x)$$
In short, integrals are more general than sums!  So, we'll mainly talk about integrals, until the very end.\index{integration!vs sums}

In probability theory, we want our probability distributions to be vectors in some vector space.  Ditto for wave functions in quantum theory!  So, we make up some vector spaces:

\begin{itemize} 
\item In probability theory, the probability distribution $\psi$ is a vector in the space
$$L^1(X) = \{ \psi \colon X \to \mathbb{C} : \int_X |\psi(x)| \, d x < \infty \} $$
\item In quantum theory, the wavefunction $\psi$ is a vector in the space
$$L^2(X) = \{ \psi \colon X \to \mathbb{C}  :  \int_X |\psi(x)|^2 \, d x < \infty \} $$
\index{$L^1$} \index{$L^2$}
\end{itemize}  

You may wonder why we defined $L^1(X)$ to consist of \emph{complex} functions when probability distributions are real.  We're just struggling to make the analogy seem as strong as possible.  In fact probability distributions are not just real but nonnegative.  We need to say this somewhere... but we can, if we like, start by saying they're complex-valued functions, but then whisper that they must in fact be nonnegative (and thus real).  It's not the most elegant solution, but that's what we'll do for now.  Now:

\begin{itemize}
\item The main thing we can do with elements of $L^1(X)$, besides what we can do with vectors in any vector space, is integrate one.  This gives a linear map:
$$\int \colon L^1(X) \to \mathbb{C} $$
\item The main thing we can with elements of $L^2(X)$, besides the things we can do with vectors in any vector space, is take the inner product of two:
$$\langle \psi, \phi \rangle = \int_X \overline{\psi}(x) \phi(x) \, d x $$
This gives a map that's linear in one slot and conjugate-linear in the other:
$$\langle - , - \rangle \colon L^2(X) \times L^2(X) \to \mathbb{C} $$
\index{$L^1$} \index{$L^2$}
\end{itemize}

First came probability theory with $L^1(X)$; then came quantum theory with $L^2(X)$.  Naive extrapolation would say it's about time for someone to invent an even more bizarre theory of reality based on $L^3(X).$ In this, you'd have to integrate the product of \emph{three} wavefunctions to get a number!  The math of \href{http://en.wikipedia.org/wiki/Lp_space}{\emph{$L^p$ spaces}} is already well-developed, so give it a try if you want.  We'll stick to $L^1$ and $L^2$.
\index{$L^1$} \index{$L^2$}

\subsection[Stochastic vs unitary operators]{Stochastic versus unitary operators}

Now let's think about time evolution:

\begin{itemize} 
\item In probability theory, the passage of time is described by a map sending probability distributions to probability distributions.  This is described using a \href{http://en.wikipedia.org/wiki/Stochastic\_matrix}{{\bf stochastic operator}}
\index{stochastic operator} \index{operator!stochastic}
\index{stochastic operator!definition of} \index{operator!stochastic!definition of}
$$U \colon L^1(X) \to L^1(X)$$
meaning a linear operator such that
$$\int U \psi = \int \psi $$
and
$$\psi \ge 0 \quad \Rightarrow \quad U \psi \ge 0$$
\item In quantum theory the passage of time is described by a map sending wavefunction to wavefunctions.  This is described using an \href{http://en.wikipedia.org/wiki/Unitary_operator}{{\bf isometry}}
$$U \colon L^2(X) \to L^2(X)$$\index{isometry}
meaning a linear operator such that
$$\langle U \psi , U \phi \rangle = \langle \psi , \phi \rangle $$
In quantum theory we usually want time evolution to be reversible, so we focus on isometries that have inverses: these are called \href{http://en.wikipedia.org/wiki/Unitary_operator}{ {\bf unitary}} operators.  In probability theory we often consider stochastic operators that are \emph{not} invertible.
\end{itemize} 

\subsection{Infinitesimal stochastic versus self-adjoint operators}
\label{sec:4_hamiltonians}

Sometimes it's nice to think of time coming in discrete steps.  But in theories where we treat time as continuous, to describe time evolution we usually need to solve a differential equation.  This is true in both probability theory and quantum theory.

In probability theory we often describe time evolution using a differential equation called the \href{http://en.wikipedia.org/wiki/Master_equation}{ {\bf master equation}}:\index{master equation}
$$\frac{d}{d t} \psi(t) = H \psi(t) $$
whose solution is 
$$\psi(t) = \exp(t H)\psi(0) $$
In quantum theory we often describe time evolution using a differential equation called \href{http://en.wikipedia.org/wiki/Schr\%C3\%B6dinger_equation}{{\bf Schr\"odinger's equation}}:
$$i \frac{d}{d t} \psi(t) = H \psi(t) $$
whose solution is 
$$\psi(t) = \exp(-i t H)\psi(0) $$
In both cases, we call the operator $H$ the {\bf Hamiltonian}.  In fact the appearance of $i$ in the quantum case is purely conventional; we could drop it to make the analogy better, but then we'd have to work with `skew-adjoint' operators instead of self-adjoint ones in what follows.  

Let's guess what properties an operator $H$ should have to make $\exp(-i t H)$ unitary for all $t$.  We start by assuming it's an isometry:\index{quantum mechanics!isometry}
$$\langle \exp(-i t H) \psi, \exp(-i t H) \phi \rangle = \langle \psi, \phi \rangle$$
Then we differentiate this with respect to $t$ and set $t = 0$, getting
$$\langle -i H \psi, \phi \rangle +  \langle \psi, -i H \phi \rangle = 0$$
or in other words:
$$\langle H \psi, \phi \rangle = \langle \psi, H \phi \rangle $$
Physicists call an operator obeying this condition \href{http://en.wikipedia.org/wiki/Self-adjoint\_operator}{self-adjoint}.   Mathematicians know there's \href{http://en.wikipedia.org/wiki/Self-adjoint\_operator\#Self-adjoint\_operators}{more to it}, but now is not the time to discuss such subtleties, intriguing though they be.  All that matters now is that there is, indeed, a correspondence between self-adjoint operators and well-behaved `1-parameter unitary groups' $\exp(-i t H)$.  This is called \href{http://en.wikipedia.org/wiki/Stone\%27s_theorem_on\_one-parameter_unitary_groups}{Stone's Theorem}.  \index{one-parameter unitary group}

But now let's copy this argument to guess what properties an operator $H$ must have to make $\exp(t H)$ stochastic.   We start by assuming $\exp(t H)$ is stochastic, so 
$$\int \exp(t H) \psi = \int \psi $$
and 
$$\psi \ge 0 \quad \Rightarrow \quad \exp(t H) \psi \ge 0$$
We can differentiate the first equation with respect to $t$ and set $t = 0$, getting
$$\int H \psi = 0$$
for all $\psi$.

But what about the second condition, 
$$\psi \ge 0 \quad \Rightarrow \quad \exp(t H) \psi \ge 0? $$
It seems easier to deal with this in the special case when integrals over $X$ reduce to sums.  So let's suppose that happens... and let's start by seeing what the \emph{first} condition says in this case.  

In this case, $L^1(X)$ has a basis of  `Kronecker delta functions':  The Kronecker delta function $\delta_i$ vanishes everywhere except at one point $i \in X$, where it equals 1.  Using this basis, we can write any operator on $L^1(X)$ as a matrix.

As a warmup, let's see what it means for an operator 
$$U \colon L^1(X) \to L^1(X)$$
to be stochastic in this case.  We'll take the conditions 
$$\int U \psi = \int \psi $$
and
$$\psi \ge 0 \quad \Rightarrow \quad U \psi \ge 0$$
and rewrite them using matrices.  For both, it's enough to consider the case where $\psi$ is a Kronecker delta, say $\delta_j$.  

In these terms, the first condition says 
$$\sum_{i \in X} U_{i j} = 1 $$
for each column $j$.  The second says
$$ U_{i j} \ge 0$$
for all $i, j$.  So in this case, a stochastic operator is just a square matrix where each column sums to 1 and all the entries are nonnegative.  (Such matrices are often called \href{http://en.wikipedia.org/wiki/Stochastic\_matrix}{left stochastic}.) \index{stochastic operator} \index{operator!stochastic}

Next, let's see what we need for an operator $H$ to have the property that $\exp(t H)$is stochastic for all $t \ge 0$.  It's enough to assume $t$ is very small, which lets us use the approximation
$$\exp(t H) = 1 + t H + \cdots $$
and work to first order in $t$.  Saying that each column of this matrix sums to 1 then amounts to
$$\sum_{i \in X} \delta_{i j} + t H_{i j} + \cdots  = 1$$
which requires
$$ \sum_{i \in X} H_{i j} = 0$$
Saying that each entry is nonnegative amounts to
$$\delta_{i j} + t H_{i j} + \cdots  \ge 0$$
When $i = j$ this will be automatic when $t$ is small enough, so the meat of this condition is
$$H_{i j} \ge 0~\mathrm{if}~i \ne j$$
So, let's say $H$ is {\bf infinitesimal stochastic} if its columns sum to zero and its off-diagonal entries are nonnegative. 
\index{infinitesimal stochastic operator!definition of} \index{operator!infinitesimal stochastic!definition of}
The idea is that any infinitesimal stochastic operator should be the \href{http://en.wikipedia.org/wiki/Infinitesimal_generator\_\%28stochastic_processes\%29}{infinitesimal generator} of a stochastic process.  The adjective `infinitesimal stochastic' doesn't roll off the tongue, but we don't know one we like better.  Some people call an infinitesimal stochastic operator an {\bf intensity matrix}\index{intensity matrix},
while others call it a {\bf stochastic Hamiltonian}, but we really want an adjective.

Anyway, when we get the details straightened out, any 1-parameter family of stochastic operators
$$ U(t) \colon L^1(X) \to L^1(X) \qquad   t \ge 0 $$
obeying 
$$ U(0) = I $$
$$ U(t) U(s) = U(t+s) $$
and continuity:
$$ t_i \to t \quad \Rightarrow \quad U(t_i) \psi \to U(t)\psi $$
should be of the form
$$ U(t) = \exp(t H)$$
for a unique `infinitesimal stochastic operator' $H$.

When $X$ is a finite set, this is true---and an infinitesimal stochastic operator is just a square matrix whose columns sum to zero and whose off-diagonal entries are nonnegative.\index{matrix!off-diagonal entries}   But we don't know a really 
good theorem characterizing infinite stochastic operators for general measure
spaces $X$.  We shall say what we know about this in Section \ref{sec:11_hille-yosida}.

Luckily, for our work on stochastic Petri nets we only need to understand the case where $X$ is a countable set and our integrals are really just sums.  This should be almost like the case where $X$ is a finite set---but we'll need to take care that all our sums converge.

\subsection{The moral}

Now we can see why a Hamiltonian like $a^\dagger $ is no good, while $ a^\dagger - 1$ is good.  (We'll ignore the rate constant $r$ since it's irrelevant here.)  The first one is not infinitesimal stochastic, while the second one is!

In this example, our set of states is the natural numbers:
$$X = \mathbb{N}$$
The probability distribution 
$$\psi \colon \mathbb{N} \to \mathbb{C}$$
tells us the probability of having caught any specific number of fish. 

The creation operator is not infinitesimal stochastic: in fact, it's stochastic!  Why?  Well, when we apply the creation operator, what was the probability of having $n$ fish now becomes the probability of having $n+1$ fish.  So, the probabilities remain nonnegative, and their sum over all $n$ is unchanged.  Those two conditions are all we need for a stochastic operator.

Using our fancy abstract notation, these conditions say:
$$\int a^\dagger \psi = \int \psi $$
and
$$ \psi \ge 0 \; \Rightarrow \; a^\dagger \psi \ge 0 $$
So, precisely by virtue of being stochastic, the creation operator fails to be infinitesimal stochastic:
$$ \int a^\dagger \psi \ne 0 $$
Thus it's a bad Hamiltonian for our stochastic Petri net.  

On the other hand, $a^\dagger - 1$ \emph{is} infinitesimal stochastic.  Its off-diagonal entries are the same as those of $a^\dagger$, so they're nonnegative.\index{matrix!off-diagonal entries}   Moreover:
$$ \int (a^\dagger - 1) \psi = 0 $$
precisely because
$$\int a^\dagger \psi = \int \psi $$
You may be thinking: all this fancy math just to understand a single stochastic Petri net, the simplest one of all!  

\begin{center}
 \includegraphics[width=62mm]{fisher.png}
\end{center}

But next we'll explain a general recipe which will let you write down the Hamiltonian for \emph{any} stochastic Petri net.  The lessons we've just learned will make this much easier. And pondering the analogy between probability theory and quantum theory will also be good for our bigger project of unifying the applications of network diagrams to dozens of different subjects.  


\newpage
\section{Annihilation and creation operators}
\label{sec:5}

Now comes the fun part.   Let's see how tricks from quantum theory can be used to describe random processes.  We'll try to make this section completely self-contained, except at the very end.  So, even if you skipped a bunch of the previous ones, this should make sense.

You'll need to know a bit of math: calculus, a tiny bit probability theory, and linear operators on vector spaces.  You don't need to know quantum theory, though you'll have more fun if you do.  What we're doing here is very similar, but also strangely different---for reasons explained in Section \ref{sec:4}.

\subsection[Rabbits and quantum mechanics]{Rabbits and quantum mechanics}

Suppose we have a population of rabbits in a cage and we'd like to describe its growth in a stochastic way, using probability theory.  Let $\psi_n$ be the probability of having $n$ rabbits.  We can borrow a trick from quantum theory, and summarize all these probabilities in a  \href{http://en.wikipedia.org/wiki/Formal_power_series}{formal power series} like this:\index{power series!formal}\index{power series!convergence of} 
$$ \Psi = \sum_{n = 0}^\infty \psi_n z^n $$
The variable $z$ doesn't mean anything in particular, and we don't care if the power series converges.  See, in math `formal' means ``it's only symbols on the page, just follow the rules".  It's like if someone says a party is `formal', so need to wear a white tie: you're not supposed to ask what the tie \emph{means}.

However, there's a good reason for this trick.  We can define two operators on formal power series, called the {\bf annihilation operator}:\index{annihilation operator}\index{quantum field theory!annihilation operator}
$$ a \Psi = \frac{d}{d z} \Psi $$
and the {\bf creation operator}:\index{creation operator}\index{quantum field 
theory!creation operator}
$$ a^\dagger \Psi = z \Psi $$
They're just differentiation and multiplication by $z$, respectively.   So, for example, suppose we start out being 100\% sure we have $n$ rabbits for some particular number $n$.  Then $\psi_n = 1$, while all the other probabilities are 0, so: 
$$ \Psi = z^n $$
If we then apply the creation operator, we obtain
$$ a^\dagger \Psi = z^{n+1} $$
Voil\`{a}!  One more rabbit!    

\begin{center}
\includegraphics[width=60mm]{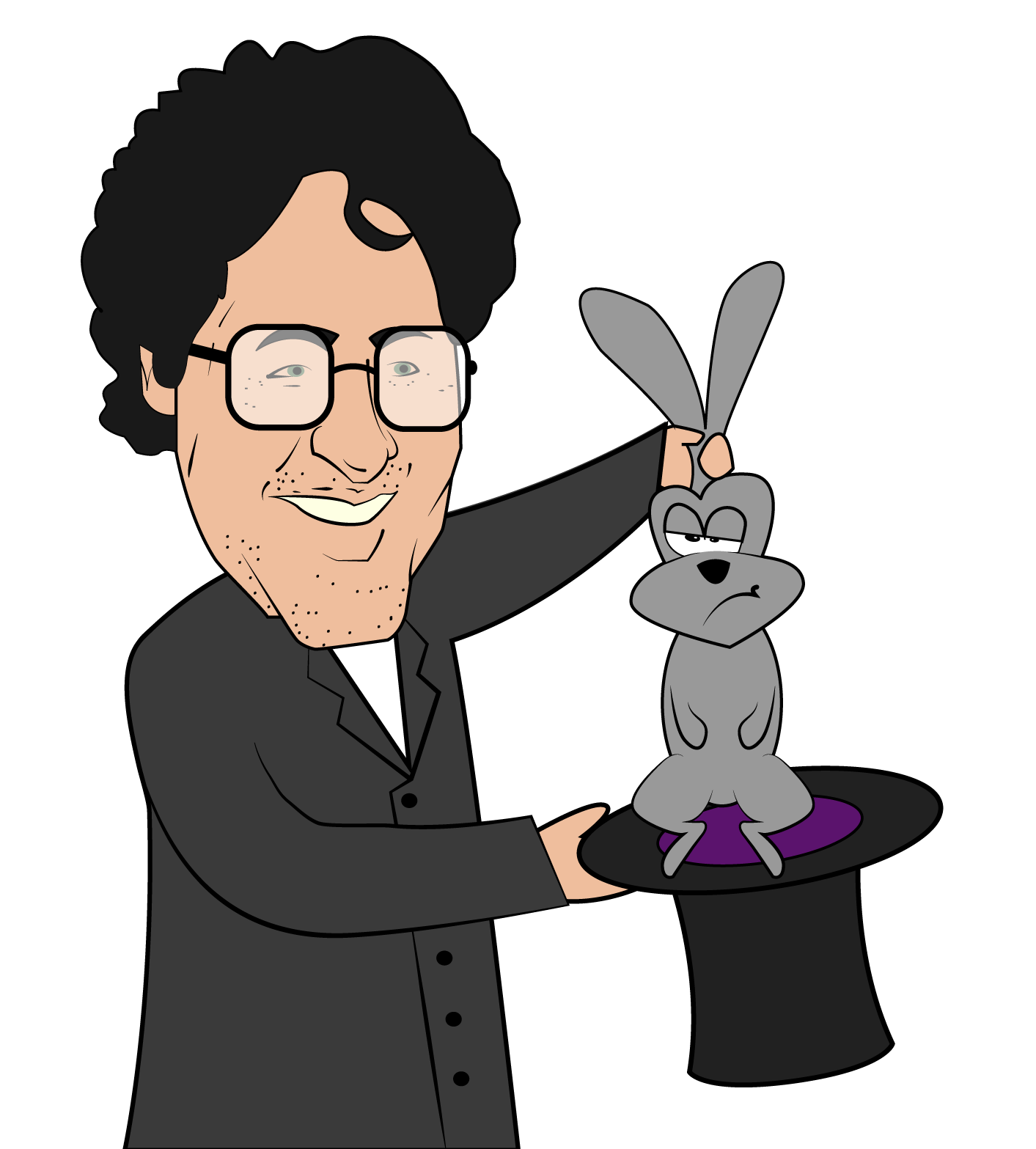}
\end{center}
%

The annihilation operator is more subtle.  If we start out with $n$ rabbits:
$$ \Psi = z^n $$
and then apply the annihilation operator, we obtain
$$ a \Psi = n z^{n-1} $$
What does this mean?  The $z^{n-1}$ means we have one fewer rabbit than before.  But what about the factor of $n$?  It means there were $n$ different ways we could pick a rabbit and make it disappear!   This should seem a bit mysterious, for various reasons... but we'll see how it works soon enough.

The creation and annihilation operators don't commute:
$$(a a^\dagger - a^\dagger a) \Psi = \frac{d}{d z} (z \Psi) - z \frac{d}{d z} \Psi = \Psi 
$$
so for short we say:
$$ a a^\dagger - a^\dagger a = 1 $$
or even shorter:
$$ [a, a^\dagger] = 1 $$
where the \href{http://en.wikipedia.org/wiki/Commutator\#Ring_theory}{{\bf commutator}} of two operators is $ [S,T] = S T - T S $.  

The noncommutativity of operators is often claimed to be a special feature of \emph{quantum} physics, and the \href{http://en.wikipedia.org/wiki/Creation\_and_annihilation\_operators}{creation and annihilation operators} are fundamental to understanding the \href{http://en.wikipedia.org/wiki/Quantum_harmonic_oscillator}{quantum harmonic oscillator}.  There, instead of rabbits, we're studying \href{http://en.wikipedia.org/wiki/Quantum}{quanta of energy}, which are peculiarly abstract entities obeying rather counterintuitive laws.  So, it's cool that the same math applies to purely classical entities, like rabbits!

In particular, the equation $[a, a^\dagger] = 1$ just says that there's one more way to put a rabbit in a cage of rabbits, and then take one out, than to take one out and then put one in.

But how do we actually \emph{use} this setup?  We want to describe how the probabilities $\psi_n$ change with time, so we write
$$ \Psi(t) = \sum_{n = 0}^\infty \psi_n(t) z^n $$
Then, we write down an equation describing the rate of change of $\Psi$:
$$ \frac{d}{d t} \Psi(t) = H \Psi(t) $$
Here $H$ is an operator called the {\bf Hamiltonian}, and the equation is called the {\bf master equation}.\index{master equation!vs Hamiltonian}  The details of the Hamiltonian depend on our problem!  But we can often write it down using creation and annihilation operators.  Let's do some examples, and then we'll tell you the general rule.

\subsection{Catching rabbits}

\begin{center}
 \includegraphics[width=18.5mm]{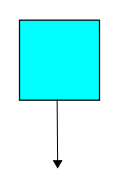}
\end{center}

In Section \ref{sec:4} we told you what happens when we stand in a river and catch fish as they randomly swim past.  Let us remind you of how that works.  But now let's use rabbits.

So, suppose an inexhaustible supply of rabbits are randomly roaming around a huge field, and each time a rabbit enters a certain area, we catch it and add it to our population of caged rabbits.   Suppose that on average we catch one rabbit per unit time.  Suppose the chance of catching a rabbit during any interval of time is independent of what happens before or afterwards.   What is the Hamiltonian describing the probability distribution of caged rabbits, as a function of time?

There's an obvious dumb guess: the creation operator!  However, we saw last time that this doesn't work, and we saw how to fix it.  The right answer is
$$H = a^\dagger - 1$$
To see why, suppose for example that at some time $t$ we have $n$ rabbits, so:
$$\Psi(t) = z^n$$
Then the master equation says that at this moment,
$$ \frac{d}{d t} \Psi(t) = (a^\dagger - 1) \Psi(t) =  z^{n+1} - z^n$$  
Since $\Psi = \sum_{n = 0}^\infty \psi_n(t) z^n$, this implies that the coefficients of our formal power series are changing like this:\index{power series!derivative of}
$$\frac{d}{d t} \psi_{n+1}(t) = 1 $$
$$\frac{d}{d t} \psi_{n}(t) = -1$$
while all the rest have zero derivative at this moment.  And that's exactly right!   See, $\psi_{n+1}(t)$ is the probability of having one more rabbit, and this is going up at rate 1.  Meanwhile, $\psi_n(t)$ is the probability of having $n$ rabbits,  and this is going down at the same rate.

\begin{problem}\label{prob:6} 
Show that with this Hamiltonian and any initial conditions, the master equation predicts that the expected number of rabbits grows linearly.
\end{problem} 

\subsection{Dying rabbits}

\begin{center}
 \includegraphics[width=21mm]{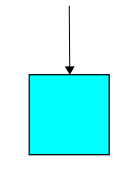}
\end{center}

Don't worry: no rabbits are actually injured in the research that we're doing here at the Centre for Quantum Technologies.   This is just a thought experiment.

Suppose a mean nasty guy had a population of rabbits in a cage and didn't feed them at all.  Suppose that each rabbit has a unit probability of dying per unit time.   And as always, suppose the probability of this happening in any interval of time is independent of what happens before or after that time.  

What is the Hamiltonian?   Again there's a dumb guess: the annihilation operator!  And again this guess is wrong, but it's not far off.   As before, the right answer includes a `correction term':
$$ H = a - N $$
This time the correction term is famous in its own right.  It's called the \href{http://en.wikipedia.org/wiki/Particle\_number_operator}{{\bf number operator}}:
$$N = a^\dagger a $$
\index{number operator} \index{quantum field theory!number operator}
The reason is that if we start with $n$ rabbits, and apply this operator, it amounts to multiplication by $n$:
$$ N z^n = z \frac{d}{d z} z^n = n z^n $$
Let's see why this guess is right.    Again, suppose that at some particular time $t$ we have $n$ rabbits, so
$$\Psi(t) = z^n$$
Then the master equation says that at this time
$$ \frac{d}{d t} \Psi(t) = (a - N) \Psi(t) = n z^{n-1} - n z^n$$ 
So, our probabilities are changing like this:
$$\frac{d}{d t} \psi_{n-1}(t) = n $$
$$\frac{d}{d t} \psi_n(t) = -n$$
while the rest have zero derivative.   And this is good!   We're starting with $n$ rabbits, and each has a unit probability per unit time of dying. So, the chance of having one less should be going up at rate $n$.   And the chance of having the same number we started with should be going \emph{down} at the same rate.

\begin{problem}\label{prob:7} 
Show that with this Hamiltonian and any initial conditions, the master equation predicts that the expected number of rabbits decays exponentially.
\end{problem} 

\subsection{Breeding rabbits}
\begin{center}
 \includegraphics[width=18.8mm]{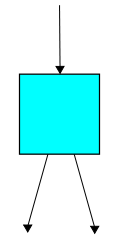}
\end{center}

Suppose we have a strange breed of rabbits that reproduce asexually.  Suppose that each rabbit has a unit probability per unit time of having a baby rabbit, thus effectively duplicating itself.

As you can see from the cryptic picture above, this `duplication' process takes one rabbit as input and has two rabbits as output.  So, if you've been paying attention, you should be ready with a dumb guess for the Hamiltonian: $a^\dagger a^\dagger a$.  This operator annihilates one rabbit and then creates two!

But you should also suspect that this dumb guess will need a `correction term'.   And you're right!  As always, the correction terms makes the probability of things staying the same \emph{go down} at exactly the rate that the probability of things changing \emph{goes up}.

You should guess the correction term... but we'll just tell you:
$$ H = a^\dagger a^\dagger a - N $$
We can check this in the usual way, by seeing what it does when we have $n$ rabbits:
$$ H z^n =  z^2 \frac{d}{d z} z^n - n z^n = n z^{n+1} - n z^n $$
That's good: since there are $n$ rabbits, the rate of rabbit duplication is $n$.  This is the rate at which the probability of having one more rabbit goes up... and also the rate at which the probability of having $n$ rabbits goes down.

\begin{problem}\label{prob:8} 
Show that with this Hamiltonian and any initial conditions, the master equation predicts that the expected number of rabbits grows exponentially.
\end{problem} 

\subsection{Dueling rabbits}

Let's do some stranger examples, just so you can see the general pattern.

\begin{center}
 \includegraphics[width=19.8mm]{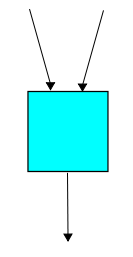}
\end{center}

\noindent
Here each pair of rabbits has a unit probability per unit time of fighting a duel with only one survivor.  You might guess the Hamiltonian $a^\dagger a a $, but in fact:
$$  H = a^\dagger a a - N(N-1) $$
Let's see why this is right!  Let's see what it does when we have $n$ rabbits:
$$ H z^n = z \frac{d^2}{d z^2} z^n - n(n-1)z^n = n(n-1) z^{n-1} - n(n-1)z^n $$
That's good: since there are $n(n-1)$ ordered pairs of rabbits, the rate at which duels take place is $n(n-1)$.   This is the rate at which the probability of having one less rabbit goes up... and also the rate at which the probability of having $n$ rabbits goes down.

(If you prefer \emph{unordered} pairs of rabbits, just divide the Hamiltonian by 2.  We should talk about this more, but not now.)

\subsection{Brawling rabbits}
\begin{center}
 \includegraphics[width=20.4mm]{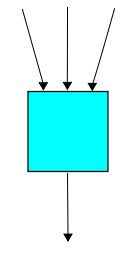}
\end{center}

\noindent 
Now each \emph{triple} of rabbits has a unit probability per unit time of getting into a fight with only one survivor!  We don't know the technical term for a three-way fight, but perhaps it counts as a small `brawl' or `melee'.    

\begin{center}
\includegraphics[width=110mm]{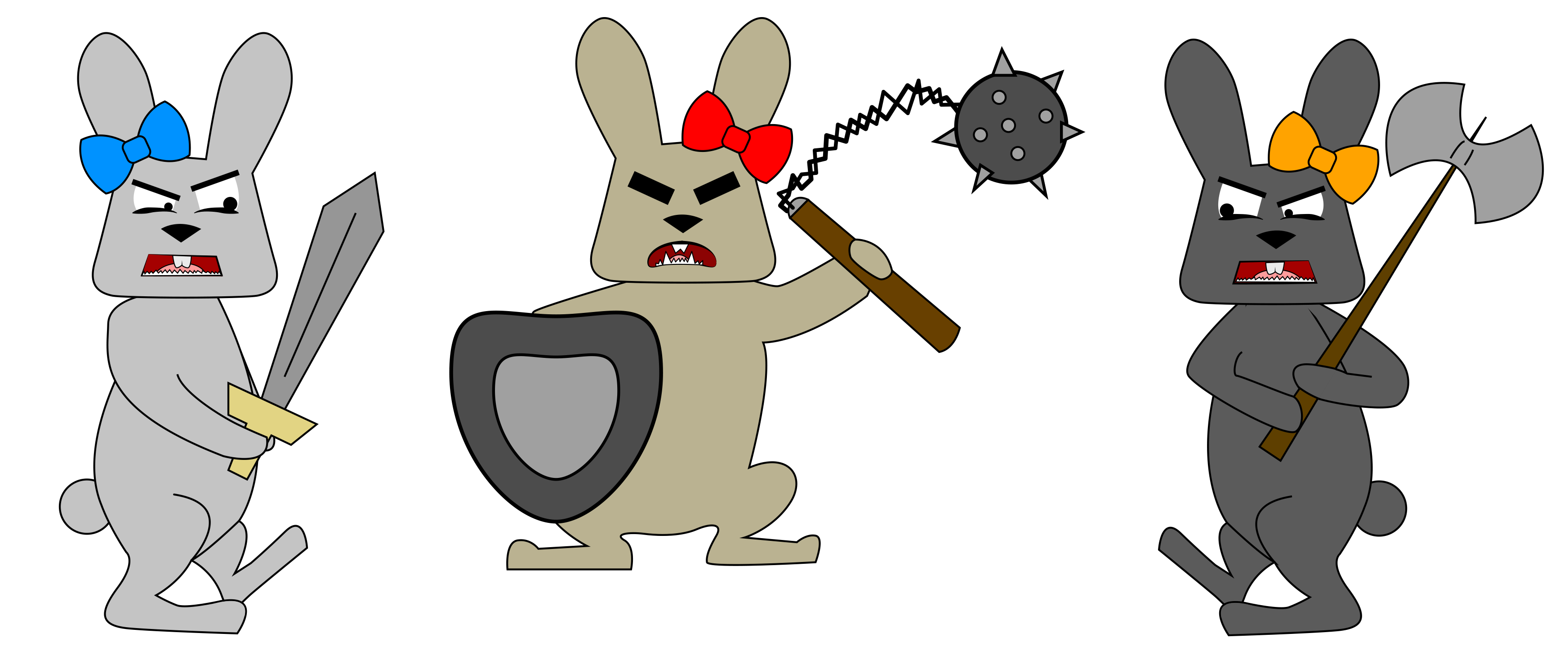}
\end{center}

\noindent
Now the Hamiltonian is:
$$ H = a^\dagger a^3 - N(N-1)(N-2) $$
You can check that:
$$ H z^n = n(n-1)(n-2) z^{n-2} - n(n-1)(n-2) z^n $$
and this is good, because $n(n-1)(n-2)$ is the number of ordered triples of rabbits.  You can see how this number shows up from the math, too:
$$ a^3 z^n = \frac{d^3}{d z^3} z^n = n(n-1)(n-2) z^{n-3} $$

\subsection{The general rule}
\label{sec:3_general_rule}

Suppose we have a process taking $k$ rabbits as input and having $j$ rabbits as output:

\begin{center}
 \includegraphics[width=30mm]{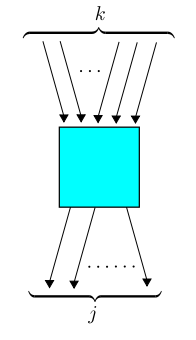}
\end{center}

\noindent
By now you can probably guess the Hamiltonian we'll use for this:
$$  H = {a^{\dagger}}^j a^k - N(N-1) \cdots (N-k+1) $$
This works because
$$ a^k z^n = \frac{d^k}{d z^k} z^n = n(n-1) \cdots (n-k+1) z^{n-k} $$
so that if we apply our Hamiltonian to $n$ rabbits, we get
$$ H z^n =  n(n-1) \cdots (n-k+1) (z^{n+j-k} - z^n) $$
See?  As the probability of having $n+j-k$ rabbits goes up, the probability of having $n$ rabbits goes down, at an equal rate.  This sort of balance is necessary for $H$ to be a sensible Hamiltonian in this sort of stochastic theory `infinitesimal stochastic operator', to be precise).  And the rate is exactly the number of ordered $k$-tuples taken from a collection of $n$ rabbits.  This is called the $k$th \href{http://www.stanford.edu/~dgleich/publications/finite-calculus.pdf\#page=6}{{\bf falling power}} of $n$, and written as follows:
$$ n^{\underline{k}} = n(n-1) \cdots (n-k+1) $$
Since we can apply functions to operators as well as numbers, we can write our Hamiltonian as: 
$$  H = {a^{\dagger}}^j a^k - N^{\underline{k}} $$

\subsection{Kissing rabbits}
\begin{center}
 \includegraphics[width=20mm]{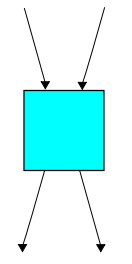}
\end{center}

Let's do one more example just to test our understanding.  This time each pair of rabbits has a unit probability per unit time of bumping into each other, exchanging a friendly kiss and walking off.  This shouldn't affect the rabbit population at all!  But let's follow the rules and see what they say.

According to our rules, the Hamiltonian should be:
$$  H = {a^{\dagger}}^2 a^2 - N(N-1) $$
However, 
$$ {a^{\dagger}}^2 a^2 z^n = z^2 \frac{d^2}{dz^2} z^n = n(n-1) z^n = N(N-1) z^n $$
and since $z^n$ form a `basis' for the formal power series, we see that:\index{power series!basis of}
$$ {a^{\dagger}}^2 a^2 = N(N-1)$$
so in fact:
$$ H = 0 $$
That's good: if the Hamiltonian is zero, the master equation will say 
$$ \frac{d}{d t} \Psi(t) = 0 $$
so the population, or more precisely the probability of having any given number of rabbits, will be constant.

There's another nice little lesson here.  Copying the calculation we just did, it's easy to see that:
$$  {a^{\dagger}}^k a^k = N^{\underline{k}}$$
This is a cute formula for falling powers of the number operator in terms of annihilation and creation operators.  It means that for the general transition we saw before:
\begin{center}
 \includegraphics[width=29mm]{k-in-j-out-labels.png}
\end{center}

\noindent
we can write the Hamiltonian in two equivalent ways:
$$  H = {a^{\dagger}}^j a^k - N^{\underline{k}} =  {a^{\dagger}}^j a^k - {a^{\dagger}}^k a^k  $$
Okay, that's it for now!  We can, and will, generalize all this stuff to stochastic Petri nets where there are things of many different kinds---not just rabbits.  And we'll see that the master equation we get matches the answer to the problem in Section \ref{sec:3}.  That's pretty easy.  

\subsection{References}

For a general introduction to stochastic processes, try this:\index{diffusion} 

\begin{enumerate}
\item[\cite{Van07}]
Nico Van Kampen, \textsl{Stochastic Processes in Physics and Chemistry},
North-Holland, New York, 2007.
\end{enumerate}

There has been a lot of work on using annihilation and creation operators for stochastic systems.  Here are some papers on the foundations:

\begin{enumerate}
\item[\cite{Doi76a}]
M.\ Doi,  Second-quantization representation for classical many-particle systems, \textsl{Jour.\  Phys.\ A} \textbf{9} 1465--1477, 1976.
\item[\cite{Doi76b}]
M.\ Doi, Stochastic theory of diffusion-controlled reactions. \textsl{Jour.\ Phys.\ A} \textbf{9} 1479--1495, 1976.
\item[\cite{Pel85}]
L.\ Peliti, Path integral approach to birth-death processes on a lattice, \textsl{Jour.\ Physique} \textbf{46} 1469-83, 1985. 
\item[\cite{Car96}]\index{renormalization group} 
J.\ Cardy, Renormalization group approach to reaction-diffusion problems.  Available as \href{http://arxiv.org/abs/cond-mat/9607163v2}{arXiv:cond-mat/9607163}. 
\item[\cite{MG98}]
D.\ C.\ Mattis and M.\ L.\ Glasser, The uses of quantum field theory in diffusion-limited reactions, \textsl{Rev.\ Mod.\ Phys.\ } \textbf{70}, 979--1001, 1998.
\end{enumerate}

Here are some papers on applications:\index{diffusion} 

\begin{enumerate}
\item[\cite{Ta05}]\index{renormalization group} 
Uwe T\"auber, Martin Howard and Benjamin P. Vollmayr-Lee,  Applications of field-theoretic renormalization group methods to reaction-diffusion problems, \textsl{Jour.\ Phys.\ A} \textbf{38} (2005), R79.  Also available as \href{http://arxiv.org/abs/cond-mat/0501678}{arXiv:cond-mat/0501678}. 
\item[\cite{BC07}]
M.\ A. Buice, and J.\ D. Cowan, \href{http://www.gatsby.ucl.ac.uk/~beierh/neuro_jc/BuiceCowan07_FieldTheoreticFluctuation.pdf}{Field-theoretic approach to fluctuation effects in neural networks}, \textsl{Phys.\ Rev.\ E} \textbf{75} (2007), 051919.  
\item[\cite{BC09}]
M.\ A.\ Buice and J.\ D.\ Cowan, Statistical mechanics of the neocortex, \textsl{Prog.\ Biophysics \& Mol.\ Bio.\ } \textbf{99}, 53--86, 2009.
\item[\cite{DF09}]
Peter J.\ Dodd and Neil M.\ Ferguson, A many-body field theory approach to stochastic models in population biology, \href{http://dx.plos.org/10.1371/journal.pone.0006855}{\sl PLoS ONE} \textbf{4}, e6855, 2009.
\end{enumerate}

\subsection{Answers}
\label{sec:5_answers}

Here are the answers to the problems:

\vskip 1em \noindent {\bf Problem 6.}  
Show that with the Hamiltonian 
$$H = a^\dagger - 1$$
and any initial conditions, the master equation predicts that the expected number of rabbits grows linearly.

\begin{answer}
Here is one answer, thanks to David Corfield\index{Corfield, David} on Azimuth.  If at some time $t$ we have $n$ rabbits, so that $\Psi(t) = z^n$, we have seen that the probability of having any number of rabbits changes as follows:
$$\frac{d}{d t} \psi_{n+1}(t) = 1, \qquad  \frac{d}{d t} \psi_{n}(t) = -1, \qquad \frac{d}{d t} \psi_m(t) = 0 \mathrm{  ~otherwise} $$
Thus the rate of increase of the expected number of rabbits is $(n +
1) - n = 1$.  But any probability distribution is a linear combination
of these basis vector $z^n$, so the rate of increase of the expected
number of rabbits is always 
$$\sum_n \psi_{n}(t) = 1$$
so the expected number grows linearly.
\end{answer}

Here is a second solution, using some more machinery.  This machinery is overkill here, but it will be useful for solving the next two problems and also many other problems.  

\begin{answer}
In the general formalism described in Section \ref{sec:4}, we used $\int \psi$ to mean the integral of a function over some measure space\index{measure space}, so that probability distributions are the functions obeying 
$$\int \psi = 1$$
and 
$$\psi \ge 0$$  

In the examples we've been looking at, this integral is really a sum over $n = 0,1,2, \dots$, and it might be confusing to use integral notation since we're using derivatives for a completely different purpose.  So let us define a sum notation as follows:
$$\sum \Psi = \sum_{n = 0}^\infty \psi_n $$
This may be annoying, since after all we really have
$$\Psi|_{z = 1} =  \sum_{n = 0}^\infty \psi_n $$
but please humor us.  

To work with this sum notation, two rules are very handy.  

\begin{enumerate}\index{power series!properties of|(} 
\item[{\bf Rule 1:}] For any formal power series $\Phi$,
$$  \sum a^\dagger \Phi = \sum \Phi $$
We mentioned this rule in Section~\ref{sec:4}: it's part of the creation operator being a stochastic operator.  It's easy to check: 
$$ \sum a^\dagger \Phi = z \Phi|_{z = 1} = \Phi|_{z=1} = \sum \Phi $$
\item[{\bf Rule 2:}] For any formal power series $\Phi$,
$$       \sum a \Phi = \sum N \Phi   $$
Again this is easy to check:
$$\sum N \Phi = \sum a^\dagger a \Phi = \sum a \Phi $$
\end{enumerate} 
These rules can be used together with the commutation relation
$$ [a , a^\dagger] = 1 $$
and its consequences 
$$ [a, N] = a, \qquad [a^\dagger, N] = - a^\dagger $$
to do many interesting things.

Let's see how!  Suppose we have some observable $O$ that we can write as an operator on formal power series: for example, the number operator, or any power of that.  The expected value\index{expected value!time derivative of} of this observable in the probability distribution $\Psi$ is
$$\sum O \Psi $$
So, if we're trying to work out the time derivative of the expected value of $O$, we can start by using the master equation:
$$ \frac{d}{dt} \sum O \Psi(t) = \sum O \frac{d}{dt} \Psi(t) = \sum O H \Psi(t) $$
Then we can write $O$ and $H$ using annihilation and creation operators and use our rules.

For example, in the problem at hand, we have 
$$H = a^\dagger - 1 $$
and the observable we're interested in is the number of rabbits
$$O = N$$
so we want to compute 
$$\sum O H \Psi(t) = \sum N(a^\dagger - 1) \Psi(t) $$
There are many ways to use our rules to evaluate this.  For example, Rule 1 implies that
$$\sum N(a^\dagger - 1) \Psi(t) = \sum a(a^\dagger - 1)\Psi(t) $$
but the commutation relations say $a a^\dagger = a^\dagger a + 1 = N+1$, so
$$  \sum a(a^\dagger - 1)\Psi(t) = \sum (N + 1 - a) \Psi(t) $$
and using Rule 1 again we see this equals 
$$ \sum \Psi(t) = 1 $$
Thus we have
$$ \frac{d}{dt} \sum N \Psi(t) = 1 $$
It follows that the expected number of rabbits grows linearly:
$$ \sum N \Psi(t) = t + c $$
\end{answer}

\index{power series!properties of|)} 

\vskip 1em \noindent {\bf Problem 7.} Show that with the Hamiltonian 
$$ H = a - N $$
and any initial conditions, the master equation predicts that the expected number of rabbits decays exponentially.

\begin{answer}
We use the machinery developed in our answer to Problem \ref{prob:6}.  We want to compute the time derivative of the expected number of rabbits:
$$ \frac{d}{dt} \sum N \Psi = \sum N H \Psi = \sum N (a - N) \Psi $$
The commutation relation $[a, N] = a$ implies that $Na = aN - a$.   So:
$$ \sum N(a-N) \Psi = \sum (a N - N - N^2) \Psi$$
but now Rule 2 says:
$$ \sum (a N - N - N^2)\Psi = \sum (N^2 - N - N^2) \Psi = - \sum N \Psi $$
so we see
$$\frac{d}{dt} \sum N \Psi = - \sum N \Psi$$
It follows that the expected number of rabbits decreases exponentially:
$$ \sum N \Psi(t) = c e^{-t} $$
\end{answer} 

\vskip 1em \noindent {\bf Problem 8.} Show that with the Hamiltonian 
$$H = {a^\dagger}^2 a - N $$
and any initial conditions, the master equation predicts that the expected number of rabbits grows exponentially.
\vskip 1em

\begin{answer}
We use the same technique to compute
$$ \frac{d}{dt} \sum N \Psi(t) = \sum N H \Psi(t) = \sum (N {a^\dagger}^2 a - N) \Psi(t) $$
First use the commutation relations to note:
$$ N({a^\dagger}^2 a - N) = N a^\dagger N - N^2 = a^\dagger (N+1) N - N^2$$
Then:
$$ \sum (a^\dagger (N+1) N - N^2) \Psi(t) = \sum ((N+1)N -N^2) \Psi(t) = \sum N \Psi(t) $$
So, we have
$$ \frac{d}{dt} \sum N \Psi(t) = \sum N \Psi(t) $$
It follows that the expected number of rabbits grows exponentially:
$$  \sum N \Psi(t) = c e^t $$\end{answer}


\newpage
\section[Population biology]{An example from population biology} 
\label{sec:6}

Stochastic Petri nets can be used to model everything from vending machines to chemical reactions.   Chemists have proven some powerful theorems about when these systems have equilibrium states.  We'd like to fit these results into our framework.   We'll do this soon.  But first, let's look at an example from population biology.

\subsection{Amoeba fission and competition}

Here's a stochastic Petri net:

\begin{center}
 \includegraphics[width=85mm]{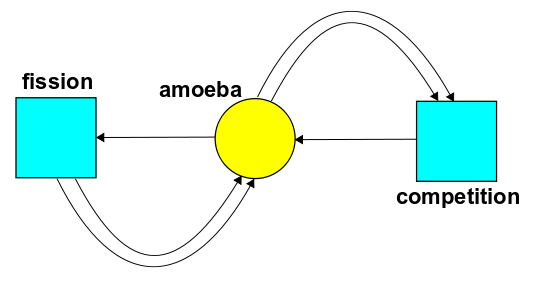}
\end{center}

\noindent
It shows a world with one species, {\bf amoeba}, and two transitions:
\begin{itemize} 
\item {\bf fission}, where one amoeba turns into two.  Let's call the rate constant for this transition $\alpha$.
\item {\bf competition}, where two amoebas battle for resources and only one survives. Let's call the rate constant for this transition $\beta$.
\end{itemize} 
We are going to analyse this example in several ways.  First we'll study the \emph{deterministic}  dynamics it describes: we'll look at its rate equation, which turns out to be the \href{http://en.wikipedia.org/wiki/Logistic\_function\#Logistic\_differential\_equation}{logistic equation}, familiar in population biology.    Then we'll study the \emph{stochastic}  dynamics, meaning its master equation. That's where the ideas from quantum field theory come in. 

\subsection{The rate equation}

If $P(t)$ is the population of amoebas at time $t$, we can follow the rules explained in Section~\ref{sec:2} and crank out this {\bf rate equation}:
$$ \frac{d P}{d t} = \alpha P - \beta P^2 $$
We can rewrite this as
$$ \frac{d P}{d t}= k P(1-\frac{P}{Q}) $$ 

where 
$$ Q = \frac{\alpha}{\beta}, \qquad k = \alpha $$
What's the meaning of $Q$ and $k$?

\begin{itemize} 
\item $Q$ is the {\bf carrying capacity}, that is, the maximum sustainable population the environment can support. 
\item $k$ is the {\bf growth rate} describing the approximately exponential growth of population when $P(t)$ is small.
\end{itemize} 

It's a rare treat to find such an important differential equation that can be solved by analytical methods.  Let's enjoy solving it.  We start by separating variables and integrating both sides:
$$\int \frac{d P}{P (1-P/Q)} = \int k d t$$
We need to use partial fractions on the left side above, resulting in\index{partial fractions}\index{integration!partial fractions}   
$$\int \frac{d P}{P} + \int \frac{d P}{Q-P} = \int k d t$$
and so we pick up a constant $C$, let $A=\pm e^{-C}$, and rearrange things as  
$$ \frac{Q-P}{P}=A e^{-k t} $$
so the population as a function of time becomes 
$$ P(t) = \frac{Q}{1+A e^{-k t}} $$
At $t=0$ we can determine $A$ uniquely.  We write $P_0 := P(0)$ and $A$ becomes 
$$ A = \frac{Q-P_0}{P_0} $$
The model now becomes very intuitive.  Let's set $Q = k=1$ and make a plot for various values of $A$:

\begin{center}
 \includegraphics[width=60mm]{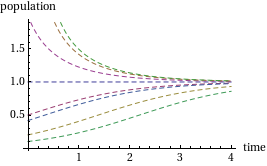}
\end{center}

We arrive at three distinct cases:

\begin{itemize} 
\item  {\bf equilibrium} ($A=0$). The horizontal blue line corresponds to the case where the initial population $P_0$ exactly equals the carrying capacity.  In this case the population is constant. 

\item  {\bf dieoff} ($A < 0$). The three decaying curves above the horizontal blue line correspond to cases where initial population is higher than the carrying capacity.  The population dies off over time and approaches the carrying capacity.

\item  {\bf growth} ($A > 0$). The four increasing curves below the horizontal blue line represent cases where the initial population is lower than the carrying capacity.  Now the population grows over time and approaches the carrying capacity.
\end{itemize} 

\subsection{The master equation}
\label{sec:6_master}

Next, let us follow the rules explained in Section \ref{sec:5} to write down the master equation for our example.  Remember, now we write:
$$ \Psi(t) = \sum_{n = 0}^\infty \psi_n(t) z^n $$
where $\psi_n(t)$ is the probability of having $n$ amoebas at time $t$, and $z$ is a formal variable.\index{formal variable}  The {\bf master equation} says:
$$ \frac{d}{d t} \Psi(t) = H \Psi(t)$$
where $H$ is an operator on formal power series called the {\bf Hamiltonian}.\index{power series!and the Hamiltonian}   To get the Hamiltonian we take each transition in our Petri net and build an operator from creation and annihilation operators, as follows.  Fission works like this:

\begin{center}
 \includegraphics[width=39mm]{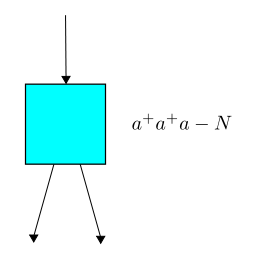}
\end{center}

\noindent
while competition works like this:

\begin{center}
 \includegraphics[width=49mm]{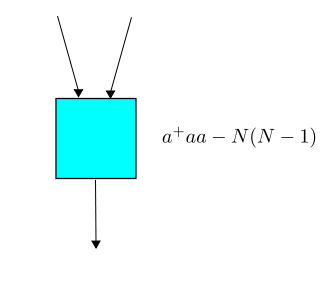}
\end{center}

\noindent
Here $a$ is the annihilation operator, $a^\dagger$ is the creation operator and $N = a^\dagger a$ is the number operator.  In Section \ref{sec:5} we explained precisely how the $N$'s arise.  So the theory is already in place, and we arrive at this Hamiltonian:
$$ H = \alpha (a^\dagger a^\dagger a - N)  +  \beta(a^\dagger a a - N(N-1))$$
Remember, $\alpha$ is the rate constant for fission, while $\beta$ is the rate constant for competition.

The master equation can be solved: it's equivalent to $\frac{d}{d t}(e^{-t H}\Psi(t))=0$ so that $e^{-t H}\Psi(t)$ is constant, and so
$$ \Psi(t) = e^{t H}\Psi(0) $$
and that's it!  We can calculate the time evolution starting from any initial probability distribution of populations.  Maybe everyone is already used to this, but we find it rather remarkable.  

Here's how it works.  We pick a population, say $n$ amoebas at $t=0$.  This would mean $\Psi(0) = z^n$.  We then evolve this state using $e^{t H}$.  We expand this operator as   
$$ \begin{array}{ccl} e^{t H} &=& \sum_{n=0}^\infty \frac{1}{n!} t^n H^n 
\\
\\
&=& 1 + t H  + \frac{1}{2}t^2 H^2 + \cdots \end{array} $$
This operator contains the full information for the evolution of the system.  It contains the \emph{histories}  of all possible amoeba populations---an amoeba mosaic if you will.  From this, we can construct amoeba Feynman diagrams.
\index{Feynman diagrams!and logistic growth|(}  

To do this, we work out each of the $H^n$ terms in the expansion above. The first-order terms correspond to the Hamiltonian acting once.  These are proportional to either $\alpha$ or $\beta$.  The second-order terms correspond to the Hamiltonian acting twice.  These are proportional to either $\alpha^2$, $\alpha\beta$ or $\beta^2$.   And so on.

This is where things start to get interesting!  To illustrate how it works, we will consider two possibilities for the second-order terms: 

\begin{enumerate}
\item We start with a lone amoeba, so $\Psi(0) = z$.  It reproduces and splits into two.  In the battle of the century, the resulting amoebas compete and one dies.  At the end we have:
$$ \frac{\alpha \beta}{2}  (a^\dagger a a)(a^\dagger a^\dagger a) z$$  
We can draw this as a Feynman diagram: 
 
\begin{center}
 \includegraphics[width=67.5mm]{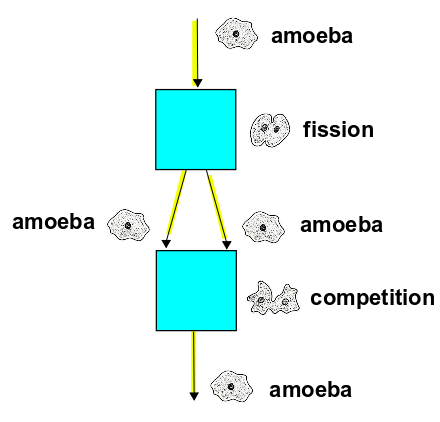}
\end{center}

\noindent
You might find this tale grim, and you may not like the odds either.  It's true, the odds could be better, but people are worse off than amoebas!  The great Japanese swordsman \href{http://en.wikipedia.org/wiki/Miyamoto\_Musashi}{Miyamoto Musashi} quoted the survival odds of fair sword duels as 1/3, seeing that 1/3 of the time both participants die.  A remedy is to cheat, but these amoeba are competing \emph{honestly}. 

\item  We start with two amoebas, so the initial state is $\Psi(0) = z^2$.  One of these amoebas splits into two.  One of these then gets into an argument about Petri nets with the original amoeba.  The amoeba who solved all the problems in this book survives.  At the end we have
$$ \frac{\alpha \beta}{2} (a^\dagger a a)(a^\dagger a^\dagger a) z^2 $$
with corresponding Feynman diagram:

\begin{center}
 \includegraphics[width=75mm]{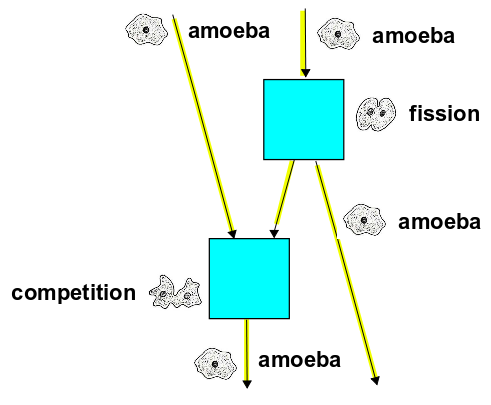}
\end{center}
\end{enumerate}

This should give an idea of how this all works.  The exponential of the Hamiltonian gives all possible histories, and each of these can be translated into a Feynman diagram.  

\index{Feynman diagrams!and logistic growth|)}

\subsection{An equilibrium state}

We've seen the equilibrium solution for the rate equation; now let's look for equilibrium solutions of the master equation.  This paper:

\begin{enumerate} 
\item[\cite{ACK08}] David F.\ Anderson, Georghe Craciun and Thomas G.\ Kurtz,
Product-form stationary distributions for deficiency zero chemical reaction networks. 
Bulletin of Mathematical Biology, Volume 72, Issue 8, pages 1947-1970 (2010). 
Also available as \href{http://www.arxiv.org/abs/0803.3042}{arXiv:0803.3042}.  
\index{Anderson--Craciun--Kurtz theorem}
\end{enumerate} 

\noindent
proves that for a large class of stochastic Petri nets, there exists an equilibrium solution of the master equation where the number of things of each species is distributed according to a \href{http://en.wikipedia.org/wiki/Poisson\_distribution}{Poisson distribution}.  Even more remarkably, these probability distributions are  \href{http://en.wikipedia.org/wiki/Independence\_\%28probability\_theory\%29}{\emph{independent}}, so knowing how many things are in one species tells you nothing about how many are in another!\index{Poisson distribution!and the deficiency zero theorem}  

Here's a nice quote from this paper:  \index{Deficiency Zero Theorem!history|(} 

\begin{quote}
``The surprising aspect of the deficiency zero theorem is that the assumptions of the theorem are completely related to the network of the system whereas the conclusions of the theorem are related to the dynamical properties of the system.''
\end{quote}

\noindent
The `deficiency zero theorem' is a result of Feinberg, Horn and Jackson, which says that for a large class of stochastic Petri nets, the rate equation has a unique equilibrium solution.  Anderson, Craciun and Kurtz showed how to use this fact to get equilibrium solutions of the master equation!  

Though they did not put it this way, their result relies crucially on the stochastic analogue of `\href{http://en.wikipedia.org/wiki/Coherent\_state}{coherent states}'.  
Coherent states\index{quantum field theory!coherent state} are important in quantum theory. Legend (or at least Wikipedia) has it that Erwin Schr\"{o}dinger\index{Schr\"odinger, Erwin} himself discovered them when he was looking for states of a quantum system that look `as classical as possible'.  Suppose you have a quantum harmonic oscillator.\index{quantum mechanics!harmonic oscillator}  Then the \href{http://en.wikipedia.org/wiki/Uncertainty\_principle}{uncertainty principle} says that
$$ \Delta p \Delta q \ge \hbar/2 $$
where $\Delta p$ is the uncertainty in the momentum and $\Delta q$ is the uncertainty in position.  Suppose we want to make $\Delta p \Delta q$ as small as possible, and suppose we also want $\Delta p = \Delta q$.  Then we need our particle to be in a `coherent state'.  That's the definition.  For the quantum harmonic oscillator, there's a way to write quantum states as formal power series
$$ \Psi = \sum_{n = 0}^\infty \psi_n z^n $$\index{power series!quantum harmonic oscillator} 
where $\psi_n$ is the amplitude for having $n$ quanta of energy.  A coherent state then looks like this:
$$ \Psi = e^{c z} = \sum_{n = 0}^\infty \frac{c^n}{n!} z^n $$
where $c$ can be any complex number.  Here we have omitted a constant factor necessary to normalize the state.  
\index{Deficiency Zero Theorem!history|)}

We can also use coherent states in classical stochastic systems like collections of amoebas!  Now the coefficient of $z^n$ tells us the probability of having $n$ amoebas, so $c$ had better be real.  And probabilities should sum to 1, so we really should normalize $\Psi$ as follows:
$$ \Psi = \frac{e^{c z}}{e^c} = e^{-c} \sum_{n = 0}^\infty \frac{c^n}{n!} z^n  $$
Now, the probability distribution
$$ \psi_n = e^{-c}  \frac{c^n}{n!} $$
is called a \href{http://en.wikipedia.org/wiki/Poisson\_distribution}{{\bf Poisson distribution}}.  So, for starters you can think of a `coherent state' as an over-educated way of talking about a Poisson distribution.\index{coherent state}\index{Poisson distribution!and coherent states}\index{quantum field theory!coherent state}   

Let's work out the expected number of amoebas in this Poisson distribution. In Section~\ref{sec:5_answers}, we started using this abbreviation:
$$ \sum \Psi = \sum_{n = 0}^\infty \psi_n $$
We also saw that the expected number of amoebas in the probability distribution $\Psi$ is
$$ \sum N \Psi $$
What does this equal?  Remember that $N = a^\dagger a$.  The annihilation operator $a$ is just $\frac{d}{d z}$, so
$$ a \Psi = c \Psi$$
and we get
$$ \sum N \Psi = \sum a^\dagger a \Psi = c \sum a^\dagger \Psi $$
But we saw in Section~\ref{sec:4} that $a^\dagger$ is stochastic, 
meaning 
$$\sum a^\dagger \Psi = \sum \Psi$$
for any $\Psi$.  Furthermore, our $\Psi$ here has
$$\sum \Psi = 1$$ 
since it's a probability distribution.  So:
$$ \sum N \Psi = c \sum a^\dagger \Psi = c \sum \Psi = c $$
The expected number of amoebas is just $c$.

\begin{problem} \label{prob:9} 
This calculation must be wrong if $c$ is negative: there can't be a negative number of amoebas.  What goes wrong then?
\end{problem} 

\begin{problem}\label{prob:10}\index{standard deviation!Poisson distribution}\index{Poisson distribution!standard deviation}  
Use the same tricks to calculate the standard deviation of the number of amoebas in the Poisson distribution $\Psi$.  
\end{problem}  

Now let's return to our problem and consider the initial amoeba state
$$ \Psi = e^{c z} $$ 
Here we aren't bothering to normalize it, because we're going to look for equilibrium solutions to the master equation, meaning solutions where $\Psi(t)$ doesn't change with time.  So, we want to solve
$$ H \Psi = 0 $$
Since this equation is linear, the normalization of $\Psi$ doesn't really matter.  

Remember, 
$$ H\Psi = \alpha (a^\dagger a^\dagger a - N)\Psi + \beta(a^\dagger a a - N(N-1)) \Psi $$
Let's work this out.  First consider the two $\alpha$ terms:
$$ a^\dagger a^\dagger a \Psi = c z^2 \Psi $$
and 
$$ -N \Psi = -a^\dagger a\Psi = -c z \Psi $$
Likewise for the $\beta$ terms we find 
$$a^\dagger a a\Psi=c^2 z \Psi$$
and
$$-N(N-1)\psi = -a^\dagger a^\dagger a a \Psi = -c^2 z^2\Psi $$
Here we're using a result from Section~\ref{sec:5}: the product ${a^\dagger}^2 a^2$ equals the `falling power' $N(N-1)$.  

The sum of all four terms must vanish.  This happens whenever 
$$\alpha(c z^2 - c z)+\beta(c^2 z-c^2 z^2) = 0$$
which is satisfied for 
$$c= \frac{\alpha}{\beta} $$
Yipee!  We've found an equilibrium solution, since we found a value for $c$ that makes $H \Psi = 0$.    Even better, we've seen that the expected number of amoebas in this equilibrium state is 
$$ \frac{\alpha}{\beta} $$ 
This is just the same as the equilibrium population we saw in the \emph{rate equation} ---that is, the logistic equation!  That's pretty cool, but it's no coincidence: as we shall see, Anderson, Craciun and Kurtz proved it works like this for lots of stochastic Petri nets.

\subsection{Answers}

Here are the answers to the problems, provided by David Corfield: \index{Corfield, David}

\vskip 1em \noindent {\bf Problem 9.} We calculated that the expected number of amoebas in the Poisson distribution with parameter $c$ is equal to $c$.  But this can't be true if $c$ is negative: there can't be a negative number of amoebas.  What goes wrong then?
\vskip 1em

\begin{answer}
If the probability of having $n$ amoebas is given by the Poisson distribution
$$ \psi_n = e^{-c}  \frac{c^n}{n!} $$
then $c$ had better be nonnegative for the probability to be nonnegative when $n = 1$.
\end{answer} 

\vskip 1em \noindent {\bf Problem 10.} Calculate the standard deviation of the number of amoebas in the Poisson distribution.\index{standard deviation!Poisson distribution}\index{Poisson distribution!standard deviation}     
\vskip 1em

\begin{answer}
The standard deviation is the square root of the variance, which is\index{standard deviation!and variance}  
$$ \sum N^2 \Psi - (\sum N \Psi)^2 $$
We have seen that for the Poisson distribution,\index{Poisson distribution!standard deviation}
$$ \sum N \Psi = c $$
and using the same tricks we see
$$\begin{array}{ccl} \sum N^2 \Psi &=& \sum a^{\dagger} a a^{\dagger} a \Psi
 \\  &=& c \sum a^{\dagger} a a^{\dagger} \Psi \\
  &=& c \sum a a^{\dagger} \Psi \\
  &=& c \sum (a^{\dagger} a + 1) \Psi \\
  &=& c(c + 1)\\
\end{array}$$
So, the variance is $c(c+1) - c^2 = c$ and the standard deviation is $\sqrt{c}$.
\end{answer}


\newpage
\section[Feynman diagrams]{Feynman diagrams}\index{Feynman diagrams|(}
\label{sec:7}

We've already begun to see how Feynman diagrams show up in the study of stochastic Petri nets: in Section \ref{sec:6_master} we saw an example from population biology, involving amoebas.  Now let's dig a bit deeper, and look at an example involving more than one species.  For this we need to generalize some of our notation a bit.

\subsection{Stochastic Petri nets revisited}

\index{Petri net!stochastic!definition of} \index{Petri net!definition of}
First, recall that a {\bf Petri net} consists of a set $S$ of {\bf species} and a set $T$ of {\bf transitions}, together with a function  \index{Petri net!species} \index{species}
$$ i \colon S \times T \to \mathbb{N} $$
saying how many things of each species appear in the {\bf input} for each transition, and a function \index{input} \index{transition} \index{Petri net!transition} 
$$ o \colon S \times T \to \mathbb{N}$$
saying how many things of each species appear in the {\bf output}.  \index{output}
A {\bf stochastic Petri net} is a Petri net together with a function 
$$ r \colon T \to (0,\infty) $$
giving a {\bf rate constant} for each transition.  \index{Petri net!rate constant} \index{rate constant}

In this section, we'll consider a Petri net with two species and three transitions:

\begin{center}
 \includegraphics[width=113.5mm]{wolf-rabbit.png}
\end{center}

\noindent
It should be clear that the transition `predation' has one wolf and one rabbit as input, and two wolves as output.  

As we've seen, starting from any stochastic Petri net we can get two things.  First:
\begin{itemize} 
\item The \href{http://en.wikipedia.org/wiki/Master\_equation}{{\bf master equation}}.\index{master equation!of stochastic Petri net}  This says how the \emph{probability that we have a given number of things of each species}  changes with time.  
\item The \href{http://en.wikipedia.org/wiki/Rate\_equation}{{\bf rate equation}}.  This says how the \emph{expected number of things of each species}  changes with time.  
\end{itemize} 
The master equation is stochastic: it describes how probabilities change with time.  The rate equation is deterministic.  

The master equation is more fundamental.  It's like the equations of quantum electrodynamics, which describe the amplitudes for creating and annihilating individual photons.  The rate equation is less fundamental.  It's like the classical Maxwell equations, which describe changes in the electromagnetic field in a deterministic way.  The classical Maxwell equations are an approximation to quantum electrodynamics.  This approximation gets good \emph{in the limit}  where there are lots of photons all piling on top of each other to form nice waves. 

Similarly, the rate equation can be derived from the master equation \emph{in the limit}  where the number of things of each species become large, and the fluctuations in these numbers become negligible.  

But we won't do this derivation!  Nor will we probe more deeply into the analogy with quantum field theory, even though that's the ultimate goal.  For now we'll be content to write down the master equation in a terse notation, and say a bit about Feynman diagrams.   But to set up our notation, let's start by recalling the rate equation.

\subsection{The rate equation}

Suppose we have a stochastic Petri net with $k$ different species.  Let $x_i$ be the number of things of the $i$th species.   Then the rate equation looks like this:
$$ \frac{d x_i}{d t} = ???  $$
It's really a bunch of equations, one for each $1 \le i \le k$.  But what is the right-hand side?

The right-hand side is a sum of terms, one for each transition in our Petri net.  So, let's start by assuming our Petri net has just one transition.   

Suppose the $i$th species appears as input to this transition $m_i$ times, and as output $n_i$ times.  Then the rate equation is
$$ \frac{d x_i}{d t} = r (n_i - m_i) x_1^{m_1} \cdots x_k^{m_k} $$
where $r$ is the rate constant for this transition.  

That's really all there is to it!  But we can make it look nicer.  Let's make up a vector 
$$ x = (x_1, \dots , x_k) \in [0,\infty)^k $$ 
that says how many things  there are of each species.  Similarly let's make up an {\bf input vector}
$$ m = (m_1, \dots, m_k) \in \mathbb{N}^k $$
and an {\bf output vector}
$$ n = (n_1, \dots, n_k) \in \mathbb{N}^k $$
for our transition.  To be cute, let's also define
$$ x^m = x_1^{m_1} \cdots x_k^{m_k} $$
Then we can write the rate equation for a single transition like this:
$$ \frac{d x}{d t} = r (n-m) x^m $$
Next let's do a general stochastic Petri net, with lots of transitions.  Let's write $T$ for the set of transitions and $r(\tau)$ for the rate constant of the transition $\tau \in T$.  Let $m(\tau)$ and $n(\tau)$ be the input and output vectors of the transition $\tau$.  Then the rate equation is:
$$ \frac{d x}{d t} = \sum_{\tau \in T} r(\tau) \, (n(\tau) - m(\tau)) \, x^{m(\tau)} $$
For example, consider our rabbits and wolves:

\begin{center}
 \includegraphics[width=113.5mm]{wolf-rabbit.png}
\end{center}

Suppose:

\begin{itemize} 
\item the rate constant for `birth' is $\beta$,
\item the rate constant for `predation' is $\gamma$,
\item the rate constant for `death' is $\delta$.
\end{itemize} 

Let $x_1(t)$ be the number of rabbits and $x_2(t)$ the number of wolves at time $t$.  Then the rate equation looks like this:
$$ \frac{d x_1}{d t} = \beta x_1 - \gamma x_1 x_2 $$
$$ \frac{d x_2}{d t} = \gamma x_1 x_2 - \delta x_2 $$
If you stare at this, and think about it, it should make perfect sense.  If it doesn't, go back and read Section \ref{sec:2}.

\subsection{The master equation}
\label{sec:7_master}

Now let's do something new.  In Section \ref{sec:5} we explained how to write down the master equation for a stochastic Petri net with just \emph{one}  species.  Now let's generalize that.  Luckily, the ideas are exactly the same.

So, suppose we have a stochastic Petri net with $k$ different species.
Let $\psi_{n_1, \dots, n_k}$ be the probability that we have $n_1$ things of the first species, $n_2$ of the second species, and so on.   The master equation will say how all these probabilities change with time.

To keep the notation clean, let's introduce a vector
$$ n = (n_1, \dots, n_k) \in \mathbb{N}^k $$
and let
$$ \psi_n = \psi_{n_1, \dots, n_k} $$  

Then, let's take all these probabilities and cook up a formal power series that has them as coefficients: as we've seen, this is a powerful trick.    To do this, we'll bring in some variables $z_1, \dots, z_k$ and write
$$ z^n = z_1^{n_1} \cdots z_k^{n_k} $$
as a convenient abbreviation.  Then any formal power series in these variables looks like this:
$$ \Psi = \sum_{n \in \mathbb{N}^k} \psi_n z^n $$\index{power series!probability distribution}  

We call $\Psi$ a {\bf state} if the probabilities sum to 1 as they should:
\index{state!stochastic}
$$ \sum_n \psi_n = 1 $$
The simplest example of a state is a monomial:
$$  z^n = z_1^{n_1} \cdots z_k^{n_k} $$
This is a state where we are 100\% sure that there are $n_1$ things of the first species, $n_2$ of the second species, and so on.  We call such a state a {\bf pure state}, since physicists use this term to describe a state where we know for sure exactly what's going on.  Sometimes a general state, one that might not be pure, is called {\bf mixed}.

The master equation says how a state evolves in time.  It looks like this:
$$ \frac{d}{d t} \Psi(t) = H \Psi(t) $$
So, we just need to tell you what $H$ is!  

It's called the {\bf Hamiltonian}.  \index{Hamiltonian}
It's a linear operator built from special operators that annihilate and create things of various species.  Namely, for each state $1 \le i \le k$ we have a {\bf annihilation operator}:
\index{quantum field theory!annihilation operator} \index{annihilation operator} 
$$ a_i \Psi = \frac{d}{d z_i} \Psi $$
and a {\bf creation operator}:\index{quantum field theory!creation operator}\index{creation operator}
$$ a_i^\dagger \Psi = z_i \Psi $$
How do we build $H$ from these?   Suppose we've got a stochastic Petri net whose set of transitions is $T$.  As before, write $r(\tau)$ for the rate constant of the transition $\tau \in T$, and let $n(\tau)$ and $m(\tau)$ be the input and output vectors of this transition.  Then:
$$ H = \sum_{\tau \in T} r(\tau) \, ({a^\dagger}^{n(\tau)} - {a^\dagger}^{m(\tau)}) \, a^{m(\tau)}  $$
where as usual we've introduce some shorthand notations to keep from going insane.  For example:
$$ a^{m(\tau)} = a_1^{m_1(\tau)} \cdots  a_k^{m_k(\tau)} $$
and
$$ {a^\dagger}^{m(\tau)} = {a_1^\dagger}^{m_1(\tau)} \cdots  {a_k^\dagger}^{m_k(\tau)} $$
Now, it's not surprising that each transition $\tau$ contributes a term to $H$.  It's also not surprising that this term is proportional to the rate constant $r(\tau)$.   The only tricky thing is the expression 
$$ ({a^\dagger}^{n(\tau)} - {a^\dagger}^{m(\tau)})a^{m(\tau)} $$
How can we understand it?  The basic idea is this.   We've got two terms here.  The first term:
$$  {a^\dagger}^{n(\tau)} a^{m(\tau)} $$
describes how $m_i(\tau)$ things of the $i$th species get annihilated, and $n_i(\tau)$ things of the $i$th species get created.   Of course this happens thanks to our transition $\tau$.  The second term:
$$  - {a^\dagger}^{m(\tau)} a^{m(\tau)} $$
is a bit harder to understand, but it says how the probability that \emph{nothing}  happens---that we remain in the same pure state---\emph{decreases}  as time passes.   Again this happens due to our transition $\tau$.

In fact, the second term must take precisely the form it does to ensure `conservation of total probability'.  In other words: if the probabilities $\psi_n$ sum to 1 at time zero, we want these probabilities to still sum to 1 at any later time.  And for this, we need that second term to be what it is!  In Section \ref{sec:5} we saw this in the special case where there's only one species.  The general case works the same way.

Let's look at an example.  Consider our rabbits and wolves yet again:

\begin{center}
 \includegraphics[width=113.5mm]{wolf-rabbit.png}
\end{center}

\noindent
and again suppose the rate constants for birth, predation and death are $\beta$, $\gamma$ and $\delta$, respectively.  We have
$$ \Psi = \sum_n \psi_n z^n $$ 

where 
$$ z^n = z_1^{n_1} z_2^{n_2} $$
and $\psi_n = \psi_{n_1, n_2}$ is the probability of having $n_1$ rabbits and $n_2$ wolves.  These probabilities evolve according to the equation 
$$ \frac{d}{d t} \Psi(t) = H \Psi(t) $$
where the Hamiltonian is 
$$  H = \beta B + \gamma C + \delta D $$
and $B$, $C$ and $D$ are operators describing birth, predation and death, respectively.  ($B$ stands for birth, $D$ stands for death... and you can call predation `consumption' if you want something that starts with $C$.  Besides, `consumer' is a nice euphemism for `predator'.)  What are these operators?  Just follow the rules we described:
$$ B = {a_1^\dagger}^2 a_1 - a_1^\dagger a_1$$
$$ C = {a_2^\dagger}^2 a_1 a_2 - a_1^\dagger a_2^\dagger a_1 a_2 $$
$$ D = a_2 -  a_2^\dagger a_2 $$
In each case, the first term is easy to understand:

\begin{itemize} 
\item Birth annihilates one rabbit and creates two rabbits.
\item Predation annihilates one rabbit and one wolf and creates two wolves.
\item Death annihilates one wolf.
\end{itemize} 

The second term is trickier, but we told you how it works.  

\subsection{Feynman diagrams}  

How do we solve the master equation?  If we don't worry about mathematical rigor too much, it's easy.  The solution of 
$$ \frac{d}{d t} \Psi(t) = H \Psi(t) $$
should be
$$ \Psi(t) = e^{t H} \Psi(0) $$
and we can hope that
$$ e^{t H} = 1 + t H + \frac{(t H)^2}{2!} + \cdots $$
so that
$$ \Psi(t) = \Psi(0) + t H \Psi(0) + \frac{t^2}{2!} H^2 \Psi(0) + \cdots $$
Of course there's always the question of whether this power series converges.\index{power series!convergence of}   In many contexts it doesn't, but that's not necessarily a disaster: the series can still be \href{http://en.wikipedia.org/wiki/Asymptotic\_expansion}{asymptotic} to the right answer, or even better, \href{http://en.wikipedia.org/wiki/Borel\_summation}{Borel summable} to the right answer.  

But let's not worry about these subtleties yet!  Let's just imagine our rabbits and wolves, with Hamiltonian 
$$  H = \beta B + \gamma C + \delta D $$
Now, imagine working out
$$ \Psi(t) = \Psi(0) + t H \Psi(0) + \frac{t^2}{2!} H^2 \Psi(0) + \frac{t^3}{3!} H^3 \Psi(0) + \cdots $$
We'll get lots of terms involving products of $B$, $C$ and $D$ hitting our original state $\Psi(0)$.   And we can draw these as diagrams!  For example, suppose we start with one rabbit and one wolf.  Then
$$ \Psi(0) = z_1 z_2 $$
And suppose we want to compute 
$$ H^3 \Psi(0) = (\beta B + \gamma C + \delta D)^3 \Psi(0) $$
as part of the task of computing $\Psi(t)$.  Then we'll get lots of terms: 27, in fact, though many will turn out to be zero.  Let's take one of these terms, for example the one proportional to:
$$ D C B \Psi(0) $$
We can draw this as a sum of Feynman diagrams, including this:

\begin{center}
 \includegraphics[width=82.5mm]{rabbit_wolf_feynman_diagram.png}
\end{center}

\noindent
In this diagram, we start with one rabbit and one wolf at top.  As we read the diagram from top to bottom, first a rabbit is born ($B$), then predation occur ($C$), and finally a wolf dies ($D$).  The end result is again a rabbit and a wolf.  

This is just one of four Feynman diagrams we should draw in our sum for $D C B \Psi(0)$, since either of the two rabbits could have been eaten, and either wolf could have died.  So, the end result of computing
$$ H^3 \Psi(0) $$
will involve a lot of Feynman diagrams... and of course computing 
$$ \Psi(t) = \Psi(0) + t H \Psi(0) + \frac{t^2}{2!} H^2 \Psi(0) + \frac{t^3}{3!} H^3 \Psi(0) + \cdots $$
will involve even more, even if we get tired and give up after the first few terms.  So, this Feynman diagram business may seem quite tedious... and it may not be obvious how it helps.  But it does, sometimes!  

Now is not the time to describe `practical' benefits of Feynman diagrams.  Instead, we'll just point out one conceptual benefit.  We started with what seemed like a purely computational chore, namely computing 
$$ \Psi(t) = \Psi(0) + t H \Psi(0) + \frac{t^2}{2!} H^2 \Psi(0) + \cdots $$
But then we saw how this series has a clear meaning!  It can be written as a sum over diagrams, each of which represents a \emph{possible history}  of rabbits and wolves.  So, it's what physicists call a \href{http://en.wikipedia.org/wiki/Path\_integral\_formulation}{`sum over histories'}.  

Feynman invented the idea of a sum over histories in the context of quantum field theory.  At the time this idea seemed quite mind-blowing, for various reasons.   First, it involved elementary particles instead of everyday things like rabbits and wolves.  Second, it involved complex `amplitudes' instead of real probabilities.  Third, it actually involved integrals instead of sums.  And fourth, a lot of these integrals diverged, giving infinite answers that needed to be `cured' somehow.

Now we're seeing a sum over histories in a more down-to-earth context without all these complications.  A lot of the underlying math is analogous... but now there's nothing mind-blowing about it: it's quite easy to understand.  So, we can use this analogy to demystify quantum field theory a bit.  On the other hand, thanks to this analogy, all sorts of clever ideas invented by quantum field theorists will turn out to have applications to biology and chemistry!  So it's a double win.

\index{Feynman diagrams|)}   


\newpage
\section{The Anderson--Craciun--Kurtz theorem}
\label{sec:8}

\index{Anderson--Craciun--Kurtz theorem|(}
In Section \ref{sec:7} we reviewed the rate equation and the master equation.   Both of them describe processes where things of various kinds can react and turn into other things.  But:

\begin{itemize} 
\item In the rate equation, we assume the number of things varies continuously and is known precisely.  
\item In the master equation, we assume the number of things varies discretely and is known only probabilistically.
\end{itemize} 

This should remind you of the difference between classical mechanics and quantum mechanics. But the master equation is not \emph{quantum}, it's \emph{stochastic}: it involves probabilities, but there's no uncertainty principle going on.  
Still, a lot of the math is similar.  

Now, given an equilibrium solution to the rate equation---one that doesn't change with time---we'll try to find a solution to the master equation with the same property.  We won't \emph{always}  succeed---but  we often can!  The theorem saying how was proved here:

\begin{enumerate} 
\item[\cite{ACK08}] David F.\ Anderson, Georghe Craciun and Thomas G.\ Kurtz,
Product-form stationary distributions for deficiency zero chemical reaction networks.  Available as \href{http://arxiv.org/abs/0803.3042}{arXiv:0803.3042}.  
\end{enumerate} 

To emphasize the analogy to quantum mechanics, Brendan Fong\index{Fong, Brendan} has translated their proof into the language of annihilation and creation operators.  In particular, our equilibrium solution of the master equation is just like what people call a `\href{http://en.wikipedia.org/wiki/Coherent\_states}{coherent state}' in quantum mechanics.  

In what follows, we'll present Fong's `quantum proof' of the Anderson--Craciun--Kurtz theorem.  So if you know about quantum mechanics and coherent states, you should be happy.  But if you don't, fear not!---we're not assuming you do.

\subsection{The rate equation}

To construct our equilibrium solution of the master equation, we need a special type of solution to our rate equation.  We call this type a `complex balanced solution'.  This means that not only is the net rate of production of each species zero, but the net rate of production of each possible \emph{bunch}  of species is zero.  

Before we make this more precise, let's remind ourselves of the basic setup. 

We'll consider a stochastic Petri net with a finite set $ S$ of species and a finite set $ T$ of transitions.  For convenience let's take $ S = \{1,\dots, k\}$, so our species are numbered from 1 to $ k$.  Then each transition $ \tau$ has an input vector $ m(\tau) \in \mathbb{N}^k$ and output vector $ n(\tau) \in \mathbb{N}^k$.  These say how many things of each species go in, and how many go out.  Each transition also has rate constant $ r(\tau) \in (0,\infty)$, which says how rapidly it happens.

The rate equation concerns a vector $ x(t) \in [0,\infty)^k$ whose $ i$th component is the number of things of the $ i$th species at time $ t$.   Note: we're assuming this number of things varies continuously and is known precisely!  This should remind you of classical mechanics.  So, we'll call $ x(t)$, or indeed any vector in $ [0,\infty)^k$, a {\bf classical state}.  

The {\bf rate equation} says how the classical state $ x(t)$ changes with time:
$$ \displaystyle{  \frac{d x}{d t} = \sum_{\tau \in T} r(\tau)\, (n(\tau)-m(\tau)) \, x^{m(\tau)}} $$
You may wonder what $ x^{m(\tau)}$ means: after all, we're taking a vector to a vector power!  It's just an abbreviation, which we've seen plenty of times before.  If $ x \in \mathbb{R}^k$ is a list of numbers and $ m \in \mathbb{N}^k$ is a list of natural numbers, we define
$$  x^m = x_1^{m_1} \cdots x_k^{m_k} $$ 

We'll also use this notation when $ x$ is a list of \emph{operators}.

\subsection{Complex balance}\index{complex balanced equilibrium|(}
\index{equilibrium! complex balanced|(}

The vectors $ m(\tau)$ and $ n(\tau)$ are examples of what chemists call {\bf complexes}.  A complex is a bunch of things of each species.  For example, if the set $ S$ consists of three species, the complex $ (1,0,5)$ is a bunch consisting of one thing of the first species, none of the second species, and five of the third species.  
\index{complex}

For our Petri net, the set of complexes is the set $ \mathbb{N}^k$, and the complexes of particular interest are the {\bf input complex} $ m(\tau)$ and the {\bf output complex} $ n(\tau)$ of each transition $ \tau$. \index{input complex} \index{output complex}

We say a classical state $ c \in [0,\infty)^k$ is {\bf complex balanced} if for all complexes $ \kappa \in \mathbb{N}^k$ we have
$$ \displaystyle{ \sum_{\{\tau : m(\tau) = \kappa\}} r(\tau) c^{m(\tau)} =\sum_{\{\tau : n(\tau) = \kappa\}} r(\tau) c^{m(\tau)} } $$
The left hand side of this equation, which sums over the transitions with  \emph{input complex}  $ \kappa$, gives the rate of consumption of the complex $ \kappa .$  The right hand side, which sums over the transitions with \emph{output complex}  $ \kappa$, gives the rate of production of $ \kappa .$  So, this equation requires that the net rate of production of the complex $ \kappa$ is zero in the classical state $ c .$
\index{complex balanced equilibrium} \label{input complex} \label{output complex}

\begin{problem}\label{prob:11} 
Show that if a classical state $ c$ is complex balanced, and we set $ x(t) = c$ for all $ t,$ then $ x(t)$ is a solution of the rate equation.
\end{problem} 

Since $ x(t)$ doesn't change with time here, we call it an {\bf equilibrium solution} of the rate equation.  Since $ x(t) = c$ is complex balanced, we call it {\bf complex balanced} equilibrium solution.\index{complex balanced equilibrium|)}\index{equilibrium!complex balanced|)}

\subsection{The master equation}

We've seen that any complex balanced classical state gives an equilibrium solution of the \emph{rate}  equation.  The Anderson--Craciun--Kurtz theorem says that it also gives an equilibrium solution of the \emph{master}  equation.  
 
The master equation concerns a formal power series\index{power series!master equation}  
$$ \displaystyle{ \Psi(t) = \sum_{n \in \mathbb{N}^k} \psi_n(t) z^n} $$
where
$$  z^n = z_1^{n_1} \cdots z_k^{n_k} $$
and
$$  \psi_n(t) = \psi_{n_1, \dots,n_k}(t) $$
is the probability that at time $ t$ we have $ n_1$ things of the first species,  $ n_2$ of the second species, and so on.  

Note: now we're assuming this number of things varies discretely and is known only probabilistically!  So, we'll call $ \Psi(t)$, or indeed any formal power series where the coefficients are probabilities summing to 1, a {\bf stochastic state}.  Earlier we just called it a `state', but that would get confusing now: we've got classical states and stochastic states, and we're trying to relate them.

The {\bf master equation} says how the stochastic state $ \Psi(t)$ changes with time:
$$ \displaystyle{ \frac{d}{d t} \Psi(t) = H \Psi(t)} $$
where the {\bf Hamiltonian} $ H$ is:
$$ \displaystyle{ H = \sum_{\tau \in T} r(\tau) \left({a^\dagger}^{n(\tau)} - {a^\dagger}^{m(\tau)} \right) a^{m(\tau)} } $$
The notation here is designed to neatly summarize some big products of annihilation and creation operators.  For any vector $ n \in \mathbb{N}^k$, we have
$$ a^n = a_1^{n_1} \cdots  a_k^{n_k} $$
and
$$ \displaystyle{  {a^\dagger}^n = {a_1^\dagger}^{n_1} \cdots  {a_k^\dagger}^{n_k}} $$

\subsection{Coherent states}

Now suppose $ c \in [0,\infty)^k$ is a complex balanced equilibrium solution of the rate equation.  We want to get an equilibrium solution of the master equation.  How do we do it?

For any $ c \in [0,\infty)^k$ there is a stochastic state called a {\bf coherent state}, defined by 
$$   \displaystyle{ \Psi_c = \frac{e^{c z}}{e^c}} $$
Here we are using some very terse abbreviations.  Namely, we are defining
$$  e^{c} = e^{c_1} \cdots e^{c_k} $$
and 
$$  e^{c z} = e^{c_1 z_1} \cdots e^{c_k z_k} $$ 
Equivalently,
$$ \displaystyle{ e^{c z} = \sum_{n \in \mathbb{N}^k} \frac{c^n}{n!}z^n} $$
where $c^n$ and $z^n$ are defined as products in our usual way, and
$$  n! = n_1! \, \cdots \, n_k! $$
Either way, if you unravel the abbrevations, here's what you get:
$$ \displaystyle{  \Psi_c = e^{-(c_1 + \cdots + c_k)} \, \sum_{n \in \mathbb{N}^k} \frac{c_1^{n_1} \cdots c_k^{n_k}} {n_1! \, \cdots \, n_k!} \, z_1^{n_1} \cdots z_k^{n_k}} $$
Maybe now you see why we like the abbreviations.

The name `\href{http://en.wikipedia.org/wiki/Coherent\_states}{coherent state}' comes from quantum mechanics.  In quantum mechanics, we think of a coherent state $ \Psi_c$ as the `quantum state' that best approximates the classical state $ c$.  But we're not doing quantum mechanics now, we're doing probability theory.  $ \Psi_c$ isn't a `quantum state', it's a stochastic state.  

In probability theory, people like Poisson distributions.  In the state $ \Psi_c$, the probability of having $ n_i$ things of the $ i$th species is equal to 
$$ \displaystyle{  e^{-c_i} \, \frac{c_i^{n_i}}{n_i!}} $$
This is precisely the definition of a \href{http://en.wikipedia.org/wiki/Poisson\_distribution}{{\bf Poisson distribution}} with mean equal to $ c_i$.   We can multiply a bunch of factors like this, one for each species, to get
$$ \displaystyle{ e^{-c} \, \frac{c^n}{n!}} $$
This is the probability of having $ n_1$ things of the first species, $ n_2$ things of the second, and so on, in the state $ \Psi_c$.  So, the state $ \Psi_c$ is a product of independent Poisson distributions.  In particular, knowing how many things there are of one species  says \emph{nothing all about}  how many things there are of any other species!

It is remarkable that such a simple state can give an equilibrium solution of the master equation, even for very complicated stochastic Petri nets.  But it's true---at least if $ c$ is complex balanced.

\subsection[The proof]{Proof of the Anderson--Craciun--Kurtz theorem}

In a post on the \href{http://johncarlosbaez.wordpress.com/2011/09/13/network-theory-part-9/}{Azimuth blog},\index{Azimuth Project!blog} Brendan Fong\index{Fong, Brendan} translated Anderson, Craciun and Kurtz's proof of their theorem into the language of annihilation and creation operators.  Now we're ready to present Fong's proof:

\begin{theorem}[{\bf Anderson--Craciun--Kurtz}]\index{Anderson--Craciun--Kurtz theorem!statement of} Suppose $ c \in [0,\infty)^k$  is a complex balanced equilibrium solution of the rate equation.    Then $ H \Psi_c = 0$. 
\end{theorem}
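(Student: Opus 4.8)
The plan is to exploit the stochastic analogue of a familiar quantum fact: coherent states are joint eigenvectors of all the annihilation operators. First I would verify that $a_i \Psi_c = c_i \Psi_c$ for each $i$. This is immediate, since $a_i = \partial/\partial z_i$ and $\partial/\partial z_i \, e^{c_i z_i} = c_i e^{c_i z_i}$, while the normalizing constant $e^{-c}$ and the other factors $e^{c_j z_j}$ are untouched. Because the $a_i$ commute among themselves, applying annihilation operators repeatedly gives $a^{m(\tau)} \Psi_c = c^{m(\tau)} \Psi_c$ for every transition $\tau$, where $c^{m(\tau)} = c_1^{m_1(\tau)} \cdots c_k^{m_k(\tau)}$ is now just a nonnegative real number.

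Next I would push the creation operators through. Since $a_i^\dagger$ acts as multiplication by $z_i$, and the scalar $c^{m(\tau)}$ commutes with everything in sight, we get
$$ \left({a^\dagger}^{n(\tau)} - {a^\dagger}^{m(\tau)}\right) a^{m(\tau)} \Psi_c = c^{m(\tau)} \left(z^{n(\tau)} - z^{m(\tau)}\right) \Psi_c . $$
Summing over $\tau$ with the weights $r(\tau)$ then yields
$$ H \Psi_c = \left( \sum_{\tau \in T} r(\tau)\, c^{m(\tau)} z^{n(\tau)} \;-\; \sum_{\tau \in T} r(\tau)\, c^{m(\tau)} z^{m(\tau)} \right) \Psi_c . $$

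The final step is to show the bracketed coefficient vanishes, and this is exactly where the complex balance hypothesis enters. I would reorganize the first sum by grouping transitions according to their output complex $\kappa = n(\tau)$, and the second sum by grouping according to their input complex $\kappa = m(\tau)$, which rewrites the bracket as
$$ \sum_{\kappa \in \mathbb{N}^k} z^\kappa \left( \sum_{\{\tau : n(\tau) = \kappa\}} r(\tau)\, c^{m(\tau)} \;-\; \sum_{\{\tau : m(\tau) = \kappa\}} r(\tau)\, c^{m(\tau)} \right) . $$
By the definition of complex balance applied to $c$, the inner difference is zero for every complex $\kappa$, so the entire expression vanishes and $H \Psi_c = 0$.

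I do not expect a genuine obstacle here; the substance of the argument is really the eigenvector identity $a_i \Psi_c = c_i \Psi_c$, and the rest is bookkeeping. The only points deserving a moment's care are that the scalar $c^{m(\tau)}$ may legitimately be moved past the creation operators (trivial, since it is a number), and that regrouping the finite sum over transitions into a sum over complexes is valid (it is, because $T$ is finite, so only finitely many complexes $\kappa$ appear). If one wishes to be scrupulous about working with formal power series rather than honest functions, one notes that differentiation, multiplication by $z_i$, and linear combinations are all defined termwise on formal power series, so no convergence questions arise.
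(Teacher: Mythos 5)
Your proof is correct and follows essentially the same route as the paper's: compute $H\Psi_c = \bigl(\sum_{\tau \in T} r(\tau)\, c^{m(\tau)}(z^{n(\tau)} - z^{m(\tau)})\bigr)\Psi_c$ using that $a_i$ differentiates and $a_i^\dagger$ multiplies by $z_i$, then regroup the sum over transitions by output versus input complex and invoke complex balance. The only difference is cosmetic: the paper expands $e^{cz}$ into its power series and matches coefficients of each monomial $z^i$ after an index shift, whereas you observe directly that the polynomial prefactor vanishes identically, which is a slightly cleaner way to finish.
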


Of course, it follows that $ \Psi_c$ is an equilibrium solution\index{master equation!equilibrium} of the master equation.  In other words, if we take $ \Psi(t) = \Psi_c$ for all times $ t$, the master equation holds:
$$ \displaystyle{ \frac{d}{d t} \Psi(t) = H \Psi(t)} $$
since both sides are zero. 

\begin{proof} We just need to show that $ H \Psi_c = 0$.  Since $ \Psi_c$ is a constant times $ e^{c z}$, it suffices to show $ H e^{c z} = 0$.  Remember that
$$ \displaystyle{ H e^{c z} =  \sum_{\tau \in T} r(\tau) \left( {a^\dagger}^{n(\tau)} -{a^\dagger}^{m(\tau)} \right) \, a^{m(\tau)} \, e^{c z}} $$
Since the annihilation operator $a_i$ is given by differentiation with respect to $ z_i$, while the creation operator $ a^\dagger_i$ is just multiplying by $z_i$, we have:
$$ \displaystyle{ H e^{c z} = \sum_{\tau \in T} r(\tau) \, c^{m(\tau)} \left( z^{n(\tau)} - z^{m(\tau)} \right) e^{c z}} $$
Expanding out $ e^{c z}$ we get:
$$ \displaystyle{ H e^{c z} = \sum_{i \in \mathbb{N}^k} \sum_{\tau \in T} r(\tau)c^{m(\tau)}\left(z^{n(\tau)}\frac{c^i}{i!}z^i - z^{m(\tau)}\frac{c^i}{i!}z^i\right)} $$
Shifting indices and defining negative powers to be zero:
$$ \displaystyle{ H e^{c z}  = \sum_{i \in \mathbb{N}^k} \sum_{\tau \in T} r(\tau)c^{m(\tau)}\left(\frac{c^{i-n(\tau)}}{(i-n(\tau))!}z^i - \frac{c^{i-m(\tau)}}{(i-m(\tau))!}z^i\right)} $$
So, to show $ H e^{c z} = 0$, we need to show this:
$$ \displaystyle{ \sum_{i \in \mathbb{N}^k} \sum_{\tau \in T} r(\tau)c^{m(\tau)}\frac{c^{i-n(\tau)}}{(i-n(\tau))!} \, z^i =\sum_{i \in \mathbb{N}^k} \sum_{\tau \in T} r(\tau)c^{m(\tau)}\frac{c^{i-m(\tau)}}{(i-m(\tau))!} \,z^i} $$
We do this by splitting the sum over $T$ according to output and then input complexes, making use of the complex balanced condition:
$$ \begin{array}{ccl} \displaystyle{ \sum_{i \in \mathbb{N}^k} \sum_{\kappa \in \mathbb{N}^k} \sum_{\tau : n(\tau)=\kappa}  r(\tau)c^{m(\tau)}\frac{c^{i-n(\tau)}}{(i-n(\tau))!} \, z^i} &=& \displaystyle{\sum_{i \in \mathbb{N}^k} \sum_{\kappa \in \mathbb{N}^k} \frac{c^{i-\kappa}}{(i-\kappa)!}\, z^i \sum_{\tau : n(\tau) = \kappa}  r(\tau)c^{m(\tau)} }  \\ \\ &=& \displaystyle{ \sum_{i \in \mathbb{N}^k} \sum_{\kappa \in \mathbb{N}^k} \frac{c^{i-\kappa}}{(i-\kappa)!}\, z^i \sum_{\tau : m(\tau) = \kappa}  r(\tau)c^{m(\tau)} } \\ \\ &=& \displaystyle{ \sum_{i \in \mathbb{N}^k} \sum_{\kappa \in \mathbb{N}^k} \sum_{\tau : m(\tau) = \kappa}  r(\tau)c^{m(\tau)}\frac{c^{i-m(\tau)}}{(i-m(\tau))!}\, z^i  } \end{array} $$
This completes the proof!  \end{proof}

We hope you agree how amazing this result is.  If you know quantum mechanics and coherent states you'll understand what we mean.  A coherent state is the ``best quantum approximation''; to a classical state, but we don't expect this quantum state to be \emph{exactly}  time-independent when the corresponding classical state is, \emph{except}  in very special cases, like when the Hamiltonian is quadratic in the creation and annihilation operators.  Here we are getting a result like that much more generally... but only given the ``complex balanced'' condition. 
\index{Anderson--Craciun--Kurtz theorem|)}

\subsection{An example}

We've already seen one example of the Anderson--Craciun--Kurtz theorem back in Section~\ref{sec:6}.  We had this stochastic Petri net:

\begin{center}
 \includegraphics[width=85mm]{amoeba.png}
\end{center}

We saw that the rate equation is just the logistic equation, familiar from population biology.  The equilibrium solution is complex balanced, because pairs of amoebas are getting created at the same rate as they're getting destroyed, and \emph{single}  amoebas are  getting created at the same rate as \emph{they're}  getting destroyed.  

So, the Anderson--Craciun--Kurtz theorem guarantees that there's an equilibrium solution of the master equation where the number of amoebas is distributed according to a Poisson distribution.  And, we actually checked that this was true!

In the next section, we'll look at another example.

\subsection[Answers]{Answers}

Here is the answer to the problem, provided by David Corfield:

\vskip 1em \noindent {\bf Problem 11.} Show that if a classical state $c$ is complex balanced, and we set $x(t) = c$ for all $t,$ then $x(t)$ is a solution of the rate equation.
\vskip 1em

\begin{answer}
Assuming $c$ is complex balanced, we have:
$$ \begin{array}{ccl} \displaystyle{ \sum_{\tau \in T} r(\tau) m(\tau) c^{m(\tau)}} &=& \displaystyle{ \sum_{\kappa} \sum_{\tau : m(\tau) = \kappa} r(\tau) m(\tau) c^{m(\tau)} } \\ \\ 
&=& \displaystyle{\sum_{\kappa} \sum_{\tau : m(\tau) = \kappa} r(\tau) \kappa c^{m(\tau)}} \\ \\
&=&
\displaystyle{ \sum_{\kappa} \sum_{\tau : n(\tau) = \kappa} r(\tau) \kappa c^{m(\tau)}} \\  \\ 
&=& \displaystyle{ \sum_{\kappa} \sum_{\tau : n(\tau) = \kappa} r(\tau) n(\tau) c^{m(\tau)}} \\ \\
&=& \displaystyle{\sum_{\tau \in T} r(\tau) n(\tau) c^{m(\tau)}} \end{array} $$
So, we have
$$ \displaystyle{\sum_{\tau \in T} r(\tau) \, (n(\tau) - m(\tau)) \, c^{m(\tau)} = 0} $$
and thus if $x(t) = c$ for all $t$ then $x(t)$ is a solution of the rate
equation:
$$ \displaystyle{  \frac{d x}{d t} = 0 = \sum_{\tau \in T} r(\tau)\, (n(\tau)-m(\tau)) \, x^{m(\tau)}} $$
\end{answer}


\newpage
\section[A reversible reaction]{An example of the Anderson--Craciun--Kurtz theorem}\index{Anderson--Craciun--Kurtz theorem!example of|(}
\label{sec:9}

Let's look at an example that illustrates the Anderson--Craciun--Kurtz theorem on equilibrium states\index{master equation!equilibrium}.  This example brings up an interesting `paradox'---or at least a problem.  Resolving this will get us ready to think about a version of \emph{Noether's theorem}  relating conserved quantities and symmetries.  In Section \ref{sec:9} we'll discuss Noether's theorem in more detail.

\subsection{A reversible reaction}

In chemistry a type of atom, molecule, or ion is called a {\bf chemical species}\index{reaction network!chemical species}, or {\bf species} for short.  Since we're applying our ideas to both chemistry and biology, it's nice that `species' is also used for a type of organism in biology.  This stochastic Petri net describes the simplest reversible reaction of all, involving two species:

\begin{center}
 \includegraphics[width=90mm]{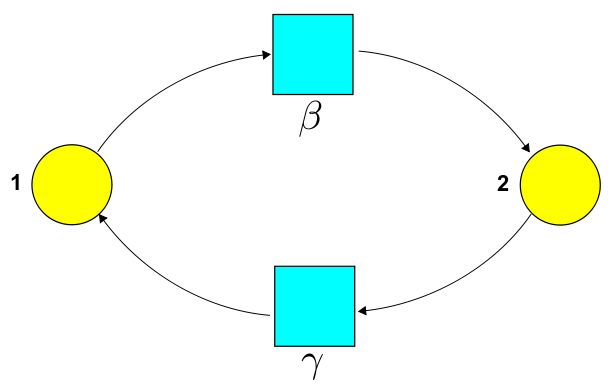}
\end{center}

We have species 1 turning into species 2 with rate constant $ \beta,$ and species 2 turning back into species 1 with rate constant $ \gamma.$  So, the rate equation is:
$$ \begin{array}{ccc} \displaystyle{ \frac{d x_1}{d t}} &=& -\beta x_1 + \gamma x_2  \\  \qquad \mathbf{} \\ \displaystyle{ \frac{d x_2}{d t}} &=&  \beta x_1 - \gamma x_2  \end{array} $$
where $ x_1$ and $ x_2$ are the amounts of species 1 and 2, respectively.

\subsection[Equilibria]{Equilibrium solutions of the rate equation}

Let's look for \emph{equilibrium}  solutions of the rate equation\index{rate equation!equilibrium}, meaning solutions where the amount of each species doesn't change with time.  Equilibrium occurs when each species is getting created at the same rate at which it's getting destroyed.

So, let's see when
$$ \displaystyle{ \frac{d x_1}{d t} = \frac{d x_2}{d t} = 0} $$
Clearly this happens precisely when
$$  \beta x_1 = \gamma x_2 $$
This says the rate at which 1's are turning into 2's equals the rate at which 2's are turning back into 1's.  That makes perfect sense.  

\subsection{Complex balanced equilibria}\index{complex balanced equilibrium|(}
\index{equilibrium!complex balanced|(}

In general, a chemical reaction involves a \emph{bunch}  of species turning into a \emph{bunch}  of species.  Since `bunch' is not a very dignified term, a bunch of species is usually called a {\bf complex}.  We saw in Section \ref{sec:8} that it's very interesting to study a strong version of equilibrium: {\bf complex balanced equilibrium}, in which each \emph{complex}  is being created at the same rate at which it's getting destroyed.
 
However, in the Petri net we're studying now, all the complexes being produced or destroyed consist of a single species.   In this situation, any equilibrium solution is automatically complex balanced.   This is great, because it means we can apply the Anderson--Craciun--Kurtz theorem from Section~\ref{sec:8}!  This says how to get from a complex balanced equilibrium solution of the \emph{rate equation}  to an equilibrium solution of the \emph{master equation}.  

First remember what the master equation says.  Let $ \psi_{n_1, n_2}(t)$ be the probability that we have $ n_1$ things of species 1 and $ n_2$ things of species 2 at time $ t$.  We summarize all this information in a formal power series:
$$ \Psi(t) = \sum_{n_1, n_2 = 0}^\infty \psi_{n_1, n_2}(t) z_1^{n_1} z_2^{n_2} $$
Then the master equation says
$$ \frac{d}{d t} \Psi(t) = H \Psi (t) $$
where following the general rules laid down in Section~\ref{sec:7}, 
$$ \begin{array}{ccl} H &=& \beta (a_2^\dagger - a_1^\dagger) a_1 + \gamma (a_1^\dagger - a_2^\dagger )a_2 \end{array} $$
This may look scary, but the {\bf annihilation operator} $ a_i$ and the {\bf creation operator} $ a_i^\dagger$ are just funny ways of writing the partial derivative $ \partial/\partial z_i$ and multiplication by $ z_i$, so 
$$ \begin{array}{ccl} H &=& \displaystyle{ \beta (z_2 - z_1) \frac{\partial}{\partial z_1} + \gamma (z_1 - z_2) \frac{\partial}{\partial z_2}} \end{array} $$
or if you prefer,
$$ \begin{array}{ccl} H &=& \displaystyle{ (z_2 - z_1) \, (\beta \frac{\partial}{\partial z_1} - \gamma \frac{\partial}{\partial z_2})} \end{array} $$
The first term describes species 1 turning into species 2.  The second describes species 2 turning back into species 1.

Now, the Anderson--Craciun--Kurtz theorem says that whenever $ (x_1,x_2)$ is a complex balanced solution of the rate equation, this recipe gives an equilibrium solution of the master equation:
$$ \displaystyle{ \Psi = \frac{e^{x_1 z_1 + x_2 z_2}}{e^{x_1 + x_2}}} $$
In other words: whenever $ \beta x_1 = \gamma x_2$, we have

we have 
$$ H\Psi = 0 $$ 
Let's check this!  For starters, the constant in the denominator of $ \Psi$ doesn't matter here, since $ H$ is linear.  It's just a normalizing constant, put in to make sure that our probabilities $ \psi_{n_1, n_2}$ sum to 1.  So, we just need to check that
$$ \displaystyle{ (z_2 - z_1) (\beta \frac{\partial}{\partial z_1} - \gamma \frac{\partial}{\partial z_2}) e^{x_1 z_1 + x_2 z_2} = 0} $$
If we do the derivatives on the left hand side, it's clear we want
$$ \displaystyle{ (z_2 - z_1) (\beta x_1 - \gamma x_2) e^{x_1 z_1 + x_2 z_2} = 0} $$
and this is indeed true when $ \beta x_1 = \gamma x_2 $.

So, the theorem works as advertised.  And now we can work out the probability $ \psi_{n_1,n_2}$ of having $ n_1$ things of species 1 and $ n_2$ of species 2 in our equilibrium state $ \Psi$.  To do this, we just expand the function $ \Psi$ as a power series and look at the coefficient of $ z_1^{n_1} z_2^{n_2}$.  We have
$$ \Psi = \displaystyle{ \frac{e^{x_1 z_1 + x_2 z_2}}{e^{x_1 + x_2}}} = \displaystyle{ \frac{1}{e^{x_1} e^{x_2}} \sum_{n_1,n_2}^\infty \frac{(x_1 z_1)^{n_1}}{n_1!} \frac{(x_2 z_2)^{n_2}}{n_2!}}  $$
so we get
$$ \psi_{n_1,n_2} = \displaystyle{ \frac{1}{e^{x_1}} \frac{x_1^{n_1}}{n_1!}  \cdot  \frac{1}{e^{x_2}} \frac{x_1^{n_2}}{n_2!}}$$
This is just a product of two independent Poisson distributions!\index{Poisson distribution!independent pair} 

\index{complex balanced equilibrium|)} \index{equilibrium!complex balanced|)}

In case you forget, a \href{http://en.wikipedia.org/wiki/Poisson\_distribution}{Poisson distribution} says the probability of $ k$ events occurring in some interval of time if they occur with a fixed average rate and independently of the time since the last event.  If the expected number of events is $ \lambda$, the Poisson distribution is 
$$ \displaystyle{ \frac{1}{e^\lambda} \frac{\lambda^k}{k!}} $$
and it looks like this for various values of $ \lambda$:

\begin{center}
 \includegraphics[width=90mm]{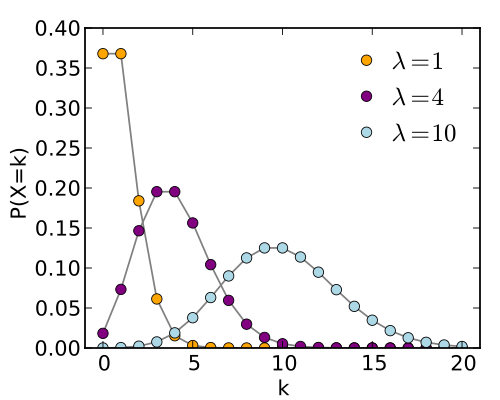}
\end{center}

\noindent
It looks almost like a Gaussian when $ \lambda$ is large, but when $ \lambda$ is small it becomes very lopsided.

Anyway: we've seen that in our equilibrium state, the number of things of species $ i = 1,2$ is given by a Poisson distribution with mean $ x_i$.   That's very nice and simple... but the amazing thing is that these distributions are \emph{independent}. 

Mathematically, this means we just \emph{multiply}  them to get the probability of finding $ n_1$ things of species 1 and $ n_2$ of species 2.  But it also means that knowing how many things there are of one species says nothing about the number of the other.

But something seems odd here.  One transition in our Petri net consumes a 1 and produces a 2, while the other consumes a 2 and produces a 1.  The total number of particles in the system never changes.  The more 1's there are, the fewer 2's there should be.   But we just said knowing how many 1's we have tells us nothing about how many 2's we have!  

At first this seems like a paradox.  Have we made a mistake?  Not exactly.   But we're neglecting something.

\index{Anderson--Craciun--Kurtz theorem!example of|)}

\subsection{Conserved quantities} \index{Noether's theorem|(}

Namely: the equilibrium solutions of the master equation we've found so far are not the only ones!  There are other solutions that fit our intuitions better.

Suppose we take any of our equilibrium solutions $ \Psi$ and change it like this: set the probability $ \psi_{n_1,n_2}$ equal to 0 unless
$$ n_1 + n_2 = n $$
but otherwise leave it unchanged.  Of course the probabilities no longer sum to 1, but we can rescale them so they do.  

The result is a new equilibrium solution, say $ \Psi_n.$ Why?  Because, as we've already seen, no transitions will carry us from one value of $ n_1 + n_2$ to another.  And in this new solution, the number of 1's is clearly \emph{not}  independent from the number of 2's.  The bigger one is, the smaller the other is.  

\begin{problem}\label{prob:12} 
Show that this new solution $ \Psi_n$ depends only on $ n$ and the ratio $ x_1/x_2 = \gamma/\beta$, not on anything more about the values of $ x_1$ and $ x_2$ in the original solution
$$ \Psi = \displaystyle{ \frac{e^{x_1 z_1 + x_2 z_2}}{e^{x_1 + x_2}}} $$\end{problem} 

\begin{problem}\label{prob:13} 
What is this new solution like when $ \beta = \gamma$?  (This particular choice makes the problem symmetrical when we interchange species 1 and 2.)
\end{problem} 

What's happening here is that this particular stochastic Petri net has a `conserved quantity': the total number of things never changes with time.  So, we can take any equilibrium solution of the master equation and---in the language of quantum mechanics---`project down to the subspace' where this conserved quantity takes a definite value, and get a new equilibrium solution.  In the language of probability theory, we say it a bit differently: we're `conditioning on' the conserved quantity taking a definite value.  But the idea is the same.

This important feature of conserved quantities suggests that we should try to invent a new version of \href{http://en.wikipedia.org/wiki/Noether\%27s_theorem}{Noether's theorem}.   This theorem links conserved quantities and \emph{symmetries}  of the Hamiltonian.

There are already a couple versions of Noether's theorem for classical mechanics, and for quantum mechanics... but now we want a version for \emph{stochastic}  mechanics.  And indeed one exists, and it's relevant to what we're doing here.  So let us turn to that.


\newpage
\section[Noether's theorem]{A stochastic version of Noether's theorem}
\label{sec:10}

Noether proved lots of theorems, but when people talk about \href{http://en.wikipedia.org/wiki/Noether\%27s\_theorem}{Noether's theorem}, they always seem to mean her result linking \emph{symmetries}  to \emph{conserved quantities}.  Her original result applied to classical mechanics, but now we'd like to present a version that applies to `stochastic mechanics'---or in other words, Markov processes.  This theorem was proved here:

\begin{enumerate}
\item[\cite{BF12}]  John Baez and Brendan Fong, A Noether theorem for Markov processes, {\sl J.\ Math.\ Phys.} {\bf 54} (2013), 013301.   Also available as \href{http://arxiv.org/abs/1203.2035}{arXiv:1203.2035}. 
\end{enumerate}

What's a \href{http://en.wikipedia.org/wiki/Markov\_process}{Markov process}?  We'll say more in a minute---but in plain English, it's a physical system where something hops around randomly from state to state, where its probability of hopping anywhere depends only on where it is \emph{now}, not its past history.   Markov processes include, as a special case, the stochastic Petri nets we've been talking about.  

Our stochastic version of Noether's theorem is copied after a well-known \emph{quantum}  version.  It's yet another example of how we can exploit the analogy between stochastic mechanics and quantum mechanics.  But for now we'll just present the stochastic version.  In the next section, we'll compare it to the quantum one.

\subsection{Markov processes}

We should and probably \emph{will}  be more general, but let's start by considering a finite set of states, say $X$.  To describe a Markov process we then need a matrix of real numbers $H = (H_{i j})_{i, j \in X}.$   The idea is this: suppose right now our system is in the state $i$.  Then the probability of being in some state $j$ changes as time goes by---and $H_{i j}$ is defined to be the \emph{time derivative}  of this probability right now.  

So, if $\psi_i(t)$ is the probability of being in the state $i$ at time $t$, we want the {\bf master equation} to hold:\index{master equation!and Markov processes}
$$ \frac{d}{d t} \psi_i(t) = \sum_{j \in X} H_{i j} \psi_j(t) $$
This motivates the definition of `infinitesimal stochastic', which we gave in a more general context back in Section~\ref{sec:4}:

\begin{definition}
 Given a finite set $X$, a matrix of real numbers $H = (H_{i j})_{i, j \in X}$ is {\bf infinitesimal stochastic} if 
$$ i \ne j  \Rightarrow  H_{i j} \ge 0 $$
and
$$ \sum_{i \in X} H_{i j} = 0  $$
for all $j \in X$. 
\end{definition}  

The inequality says that if we start in the state $i$, the probability of being in some other state, which starts at 0, can't go down, at least initially.  The equation says that the probability of being \emph{somewhere or other}  doesn't change.  Together, these facts imply that:
$$ H_{i i} \le 0 $$
That makes sense: the probability of being in the state $i$, which starts at 1, can't go up, at least initially.

Using the magic of matrix multiplication, we can rewrite the master equation as follows:
$$ \frac{d}{d t} \psi(t) = H \psi(t) $$\index{matrix!magic of matrix multiplication}
and we can solve it like this:
$$ \psi(t) = \exp(t H) \psi(0) $$ 

If $H$ is an infinitesimal stochastic operator, we will call $\exp(t H)$ a {\bf Markov process}, and $H$ its {\bf Hamiltonian}.

(Actually, most people call $\exp(t H)$ a {\bf Markov semigroup},\index{semigroup!Markov} \index{Markov semigroup} and reserve the term  \href{http://en.wikipedia.org/wiki/Markov\_process}{Markov process} for another way of looking at the same idea.  So, be careful.) \index{Markov process}

Noether's theorem is about `conserved quantities', that is, observables whose expected values don't change with time.\index{expected value!time independent}   To understand this theorem, you need to know a bit about observables.  In stochastic mechanics an {\bf observable} is simply a function assigning a number $O_i$ to each state $i \in X$. 

However, in quantum mechanics we often think of observables as matrices, so it's nice to do that here, too.  It's easy: we just create a matrix whose diagonal entries are the values of the function $O$.  And just to confuse you, we'll also call this matrix $O$.  So:
$$ O_{i j} = \left\{ \begin{array}{ccl}  O_i & \textrm{if} & i = j \\ 0 & \textrm{if} & i \neq j  \end{array} \right. $$
One advantage of this trick is that it lets us ask whether an observable commutes with the Hamiltonian.  Remember, the {\bf commutator} of matrices is defined by
$$ [O,H] = O H - H O$$
Noether's theorem will say that $[O,H] = 0$ if and only if $O$ is `conserved' in some sense.  What sense?  First, recall that a {\bf stochastic state} is just our fancy name for a probability distribution $\psi$ on the set $X$.  Second, the {\bf expected value}\index{expected value!definition of, stochastic} of an observable $O$ in the stochastic state $\psi$ is defined to be
$$ \sum_{i \in X} O_i \psi_i $$
In Section~\ref{sec:4} we introduced the notation
$$ \int \phi = \sum_{i \in X} \phi_i $$
for any function $\phi$ on $X$.   The reason is that later, when we generalize $X$ from a finite set to a measure space\index{measure space}, the sum at right will become an integral over $X$.  Indeed, a sum is just a special sort of integral!

Using this notation and the magic of matrix multiplication, we can write the expected value of $O$ in the stochastic state $\psi$ as\index{matrix!magic of matrix multiplication}
$$ \int O \psi $$ 
We can calculate how this changes in time if $\psi$ obeys the master equation... and we can write the answer using the commutator $[O,H]$:
\index{infinitesimal stochastic operator|(}
\index{operator!infinitesimal stochastic|(}
\begin{lemma}  Suppose $H$ is an infinitesimal stochastic operator and $O$ is an observable.  If $\psi(t)$ obeys the master equation, then
$$ \frac{d}{d t} \int O \psi(t) = \int [O,H] \psi(t) $$
\end{lemma}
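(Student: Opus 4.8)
The plan is to differentiate the expected value term by term, substitute the master equation, and then use the one defining feature of an infinitesimal stochastic operator that we have not yet exploited here: each of its columns sums to zero.

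First I would write, using linearity of the finite sum $\int$ and the master equation $\frac{d}{dt}\psi(t) = H\psi(t)$,
$$ \frac{d}{d t} \int O \psi(t) = \int O \, \frac{d}{d t}\psi(t) = \int O H \psi(t). $$
Since $X$ is finite and $O$ is just a diagonal matrix, there is no convergence issue and differentiating under the sum is automatic. So the whole content of the lemma is the identity $\int O H \psi = \int [O,H]\psi = \int (OH - HO)\psi$, which holds precisely when $\int H O \psi = 0$.

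The key sub-fact is that $\int H \phi = 0$ for \emph{every} function $\phi$ on $X$. This is immediate from the definition of infinitesimal stochastic: $\int H\phi = \sum_{i} (H\phi)_i = \sum_{i}\sum_{j} H_{ij}\phi_j = \sum_{j}\bigl(\sum_{i} H_{ij}\bigr)\phi_j = 0$, because $\sum_{i} H_{ij} = 0$ for each $j$. Applying this with $\phi = O\psi(t)$ gives $\int H O \psi(t) = 0$, hence
$$ \int O H \psi(t) = \int O H \psi(t) - \int H O \psi(t) = \int (OH - HO)\psi(t) = \int [O,H]\psi(t), $$
and combining with the first display completes the proof.

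There is no real obstacle here: once the "columns sum to zero" property is recalled, it is a two-line computation. The only things to watch are bookkeeping matters — keeping the index convention consistent with the definition of infinitesimal stochastic stated just above (the sum $\sum_i H_{ij}$ runs over the first index, which is exactly why $\int H\phi = 0$ rather than $\int \phi^{T} H = 0$), and, if one later wants to generalize $X$ from a finite set to a measure space, justifying the interchange of $\tfrac{d}{dt}$ with the integral and the convergence of the sums involved.
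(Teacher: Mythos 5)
Your proposal is correct and follows the same route as the paper's proof: differentiate under the (finite) sum, use the master equation to get $\int O H \psi(t)$, note that $\int H \phi = 0$ for every $\phi$ because the columns of $H$ sum to zero, and apply this with $\phi = O\psi(t)$ to replace $\int O H \psi(t)$ by $\int [O,H]\psi(t)$. No gaps; the remarks on index conventions and on what would need extra care for a general measure space are sensible but not needed here.
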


\begin{proof}  Using the master equation we have
$$  \frac{d}{d t} \int O \psi(t) = \int O \frac{d}{d t} \psi(t) = \int O H \psi(t)  $$
But since $H$ is infinitesimal stochastic,
$$ \sum_{i \in X} H_{i j} = 0  $$
so for any function $\phi$ on $X$ we have
$$ \int H \phi = \sum_{i, j \in X} H_{i j} \phi_j = 0 $$
and in particular
\[   \int H O \psi(t) = 0    \]
Since $[O,H] = O H - H O $, we conclude from (1) and (2) that
$$ \frac{d}{d t} \int O \psi(t) = \int [O,H] \psi(t) $$
as desired.  
\end{proof}

The commutator doesn't look like it's doing much here, since we also have
$$  \frac{d}{d t} \int O \psi(t) = \int O H \psi(t)  $$
which is even simpler.  But the commutator will become useful when we get to Noether's theorem!

\subsection{Noether's theorem}

Here's a version of Noether's theorem for Markov processes.  It says an observable commutes with the Hamiltonian iff the expected values of that observable \emph{and its square}  don't change as time passes:\index{expected value!and Noether's theorem|(} 

\begin{theorem}\label{theorem:stochastic-noether}
\index{Noether's theorem!stochastic version} 
Suppose $H$ is an infinitesimal stochastic operator and $O$ is an observable.  Then 
$$ [O,H] =0 $$
if and only if 
$$ \frac{d}{d t} \int O\psi(t) = 0 $$
and 
$$ \frac{d}{d t} \int O^2\psi(t) = 0 $$
for all $\psi(t)$ obeying the master equation.
\end{theorem}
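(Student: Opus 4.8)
The plan is to bootstrap everything from the Lemma just proved, exploiting the two defining features of an infinitesimal stochastic matrix $H$: each column sums to zero, and the off-diagonal entries are nonnegative. The ``only if'' direction is routine. Suppose $[O,H]=0$. Then the Lemma immediately gives $\frac{d}{d t}\int O\psi(t)=\int[O,H]\psi(t)=0$. Since $O$ is diagonal, so is $O^2$, hence $O^2$ is again an observable; and $[O^2,H]=O[O,H]+[O,H]O=0$, so applying the Lemma with $O^2$ in place of $O$ gives $\frac{d}{d t}\int O^2\psi(t)=0$ as well.

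For the ``if'' direction, suppose both time derivatives vanish along every solution of the master equation. By the Lemma, $\int[O,H]\psi(t)=0$ and $\int[O^2,H]\psi(t)=0$ for all such $\psi(t)$. I would then specialize: take the initial stochastic state $\psi(0)$ to be the Kronecker delta $\delta_j$ for an arbitrary $j\in X$, and evaluate at $t=0$. This converts the two integral conditions into $\sum_{i\in X}[O,H]_{ij}=0$ and $\sum_{i\in X}[O^2,H]_{ij}=0$ for every $j$. Because $O$ is diagonal, $[O,H]_{ij}=(O_i-O_j)H_{ij}$ and $[O^2,H]_{ij}=(O_i^2-O_j^2)H_{ij}$, and using $\sum_i H_{ij}=0$ to absorb the $i=j$ term these read, for each fixed $j$,
$$\sum_{i\neq j}(O_i-O_j)H_{ij}=0,\qquad \sum_{i\neq j}(O_i^2-O_j^2)H_{ij}=0.$$
The crucial move is to combine these into a single identity with a definite sign: from $(O_i-O_j)^2H_{ij}=(O_i^2-O_j^2)H_{ij}-2O_j(O_i-O_j)H_{ij}$, summing over $i\neq j$ kills both terms on the right, so $\sum_{i\neq j}(O_i-O_j)^2H_{ij}=0$. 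Since $H_{ij}\ge 0$ for $i\neq j$, every summand is nonnegative, hence each vanishes: $(O_i-O_j)H_{ij}=0$ for all $i\neq j$, and trivially for $i=j$. Thus $[O,H]_{ij}=(O_i-O_j)H_{ij}=0$ for all $i,j$, i.e.\ $[O,H]=0$.

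I expect the main obstacle to be conceptual rather than computational: seeing \emph{why} the hypothesis about $O^2$ is genuinely needed. The first-moment condition alone only yields $\sum_{i\neq j}(O_i-O_j)H_{ij}=0$, which can hold through cancellation between positive and negative terms, so it does not force $[O,H]=0$; bringing in $O^2$ is exactly what lets one assemble the manifestly nonnegative combination $\sum_{i\neq j}(O_i-O_j)^2H_{ij}$, whose vanishing does force each entry to be zero. A secondary point requiring a little care is justifying the specialization to the point-mass initial states $\delta_j$ and the evaluation of the derivative at $t=0$ --- this is the step that turns the ``for all $\psi(t)$'' hypothesis into pointwise conditions on the matrix entries of $H$.
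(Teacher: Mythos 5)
Your proposal is correct and follows essentially the same route as the paper: the forward direction via the Lemma applied to powers of $O$, and the backward direction by extracting the column conditions from delta-function initial states and then forming the manifestly nonnegative sum $\sum_{i\neq j}(O_i-O_j)^2H_{ij}$, whose vanishing (using $H_{ij}\ge 0$ off the diagonal) forces $(O_i-O_j)H_{ij}=0$ entrywise. The only cosmetic difference is that you assemble that sum from the two commutator identities, while the paper expands it into three sums and kills them using $\sum_i H_{ij}=0$ together with the first- and second-moment conditions — the same computation in slightly different bookkeeping.
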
 

If you know Noether's theorem from quantum mechanics, you might be surprised that in this version we need not only the observable \emph{but also its square}  to have an unchanging expected value!  We'll explain this, but first let's prove the theorem.

\begin{proof} 
The easy part is showing that if $[O,H]=0$ then $\frac{d}{d t} \int O\psi(t) = 0$ and $\frac{d}{d t} \int O^2\psi(t) = 0$. In fact there's nothing special about these two powers of $t$; we'll show that
$$ \frac{d}{d t} \int O^n \psi(t) = 0 $$
for all $n$.  The point is that since $H$ commutes with $O$, it commutes with all powers of $O$:
$$ [O^n, H] = 0$$
So, applying the Lemma to the observable $O^n$, we see
$$ \frac{d}{d t} \int O^n \psi(t) =  \int [O^n, H] \psi(t) = 0 $$
The backward direction is a bit trickier.   We now assume that 
$$ \frac{d}{d t} \int O\psi(t) = \frac{d}{d t} \int O^2\psi(t) = 0 $$
for all solutions $\psi(t)$ of the master equation.  This implies 
$$  \int O H\psi(t) = \int O^2H\psi(t) = 0 $$
or since this holds for \emph{all}  solutions,
\[ \sum_{i \in X} O_i H_{i j} = \sum_{i \in X} O_i^2H_{i j} = 0  \]
We wish to show that $[O,H]= 0$.  

First, recall that we can think of $O$ as a diagonal matrix with:
$$ O_{i j} = \left\{ \begin{array}{ccl}  O_i & \textrm{if} & i = j \\ 0 & \textrm{if} & i \ne j  \end{array} \right. $$
So, we have
$$ [O,H]_{i j} = \sum_{k \in X} (O_{i k}H_{k j} - H_{i k} O_{k j}) = O_i H_{i j} - H_{i j}O_j = (O_i-O_j)H_{i j} $$
To show this is zero for each pair of elements $i, j \in X$, it suffices to show that when $H_{i j} \ne 0$, then $O_j = O_i$.  That is, we need to show that if the system can move from state $j$ to state $i$, then the observable takes the same value on these two states.

In fact, it's enough to show that this sum is zero for any $j \in X$:
$$ \sum_{i \in X} (O_j-O_i)^2 H_{i j}  $$
Why?   When $i = j$, $O_j-O_i = 0$, so that term in the sum vanishes.  But when $i \ne j$, $(O_j-O_i)^2$ and $H_{i j}$ are both non-negative---the latter because $H$ is infinitesimal stochastic.  So if they sum to zero, they must each be individually zero. Thus for all $i \ne j$, we have $(O_j-O_i)^2H_{i j}=0$.  But this means that either $O_i = O_j$ or $H_{i j} = 0$, which is what we need to show.

So, let's take that sum and expand it: 
$$ \begin{array}{ccl} \displaystyle{ \sum_{i \in X} (O_j-O_i)^2 H_{i j}} &=& \displaystyle{ \sum_i (O_j^2 H_{i j}- 2O_j O_i H_{i j} +O_i^2 H_{i j})} \\  &=& \displaystyle{ O_j^2\sum_i H_{i j} - 2O_j \sum_i O_i H_{i j} + \sum_i O_i^2 H_{i j}} \end{array} $$
The three terms here are each zero: the first because $H$ is infinitesimal stochastic, and the latter two by equation (3). So, we're done! 
\end{proof}

\subsection{Markov chains}\index{Markov chain|(} 

So that's the proof... but why do we need both $O$ \emph{and its square}  to have an expected value\index{expected value!Markov chains} that doesn't change with time to conclude $[O,H] = 0$?  There's an easy counterexample if we leave out the condition involving $O^2$.  However, the underlying idea is clearer if we work with \href{http://en.wikipedia.org/wiki/Markov\_chain}{Markov chains} instead of Markov processes.  

In a Markov process, time passes by continuously.  In a Markov chain, time comes in discrete steps!  We get a Markov process by forming $\exp(t H)$ where $H$ is an infinitesimal stochastic operator.  We get a Markov chain by forming the operator $U, U^2, U^3, \dots$ where $U$ is a `stochastic operator'.  Remember:

\begin{definition}
Given a finite set $X$, a matrix of real numbers $U = (U_{i j})_{i, j \in X}$ is {\bf stochastic} if 
$$ U_{i j} \ge 0 $$
for all $i, j \in X$ and 
$$ \sum_{i \in X} U_{i j} = 1  $$
for all $j \in X$.  
\end{definition} 

The idea is that $U$ describes a random hop, with $U_{i j}$ being the probability of hopping to the state $i$ if you start at the state $j$.  These probabilities are nonnegative and sum to 1.

Any stochastic operator gives rise to a \href{http://en.wikipedia.org/wiki/Markov\_chain}{{\bf Markov chain}} $U, U^2, U^3, \dots .$  And in case it's not clear, that's how we're \emph{defining}  a Markov chain: the sequence of powers of a stochastic operator.  There are other definitions, but they're equivalent.

We can draw a Markov chain by drawing a bunch of states and arrows labelled by transition probabilities, which are the matrix elements $U_{i j}$:


\begin{center}
 \includegraphics[width=50mm]{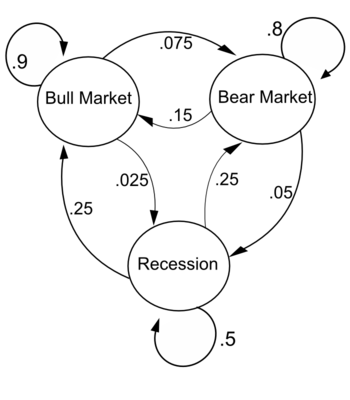}
\end{center}
%
Here is Noether's theorem for Markov chains:

\begin{theorem}\index{Noether's theorem!for Markov chains} 

Suppose $U$ is a stochastic operator and $O$ is an observable.  Then  
$$ [O,U] =0 $$
if and only if 
$$  \int O U \psi = \int O \psi $$
and
$$ \int O^2 U \psi = \int O^2 \psi $$
for all stochastic states $\psi$.  
\end{theorem}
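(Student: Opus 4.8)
The plan is to adapt the proof of the Markov-process version of Noether's theorem almost word for word, with the stochastic operator $U$ playing the role of the infinitesimal stochastic operator $H$. The one structural change is the substitute for the Lemma: where the Markov-process argument used that $\int H\phi = 0$ for every function $\phi$ on $X$ (because the columns of $H$ sum to zero), here I would use that $\int U\phi = \int \phi$ for every $\phi$ (because the columns of $U$ sum to one). This follows in one line:
$$ \int U\phi = \sum_{i,j\in X} U_{ij}\phi_j = \sum_{j\in X}\phi_j \sum_{i\in X} U_{ij} = \sum_{j\in X}\phi_j = \int\phi . $$

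First I would handle the forward implication. Assuming $[O,U]=0$, the operator $U$ commutes with every power $O^n$, so $\int O^n U\psi = \int U(O^n\psi) = \int O^n\psi$ by the identity just noted, applied with $\phi = O^n\psi$. Specializing to $n=1$ and $n=2$ gives the two stated equations — and, exactly as in the continuous-time case, there is nothing special about these two powers: the expected value of every power of $O$ is invariant under the Markov chain.

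For the converse, I would assume the two equations hold for all stochastic states $\psi$ and test them on the Kronecker delta functions $\delta_j$, which are themselves stochastic states. Unwinding $\int O U\delta_j$ and $\int O^2 U\delta_j$ gives, for every $j\in X$, the identities $\sum_{i\in X} O_i U_{ij} = O_j$ and $\sum_{i\in X} O_i^2 U_{ij} = O_j^2$. Then, writing $O$ as a diagonal matrix so that $[O,U]_{ij} = (O_i - O_j) U_{ij}$, I would fix $j$ and expand the nonnegative sum $\sum_{i\in X} (O_j - O_i)^2 U_{ij}$; using $\sum_i U_{ij} = 1$ together with the two identities, it collapses to $O_j^2 - 2O_j^2 + O_j^2 = 0$. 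Since $U_{ij}\ge 0$, each summand $(O_j-O_i)^2 U_{ij}$ must vanish, forcing $(O_i-O_j)U_{ij} = 0$ for all $i,j$, that is, $[O,U]=0$.

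I do not expect a genuine obstacle here; this version is if anything easier than the Markov-process one, because the full nonnegativity of $U$ (as opposed to merely its off-diagonal entries) means the $i=j$ term of the key sum is automatically zero and needs no separate discussion. The only point worth stating carefully is that ``holds for all stochastic states'' really does yield ``holds for $\psi = \delta_j$'', which is legitimate because each $\delta_j$ is a stochastic state; equivalently, since the $\delta_j$ span the whole space, the hypotheses in fact give $\int O U\psi = \int O\psi$ and $\int O^2 U\psi = \int O^2\psi$ as identities of linear functionals on all of $\mathbb{R}^X$.
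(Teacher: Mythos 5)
Your proof is correct and follows essentially the same route as the paper's own answer to this problem: the easy direction via $[O^n,U]=0$ and the stochasticity identity $\int U\phi = \int\phi$, and the converse by testing on the Kronecker deltas and expanding the nonnegative sum $\sum_{i}(O_j-O_i)^2 U_{ij}$, which collapses to zero using $\sum_i U_{ij}=1$ and the two moment identities. Your explicit remark that evaluating on the deltas is legitimate (and extends the hypotheses to all of $\mathbb{R}^X$ by linearity) just makes precise a step the paper leaves implicit.
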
 

In other words, an observable commutes with $U$ iff the expected values of that observable \emph{and its square}  don't change when we evolve our state one time step using $U$.  You can prove this theorem by copying the proof for Markov processes:

\begin{problem}\label{prob:14} 
Prove Noether's theorem for Markov chains.
\end{problem} 

But let's see why we need the condition on the square of observable!  That's the intriguing part. Here's a nice little Markov chain:

\begin{center}
 \includegraphics[width=60mm]{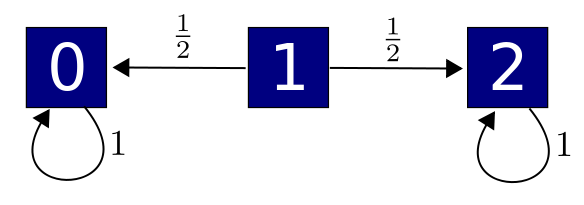}
\end{center}

\index{Markov chain|)} 

\noindent
where we haven't drawn arrows labelled by 0.  So, state 1 has a 50\% chance of hopping to state 0 and a 50\% chance of hopping to state 2; the other two states just sit there.  Now, consider the observable $O$ with 
$$O_i = i$$
It's easy to check that the expected value of this observable doesn't change with time:
$$  \int O U \psi = \int O \psi $$
for all $\psi$.  The reason, in plain English, is this.  Nothing at all happens if you start at states 0 or 2: you just sit there, so the expected value of $O$ doesn't change.  If you start at state 1, the observable equals 1.  You then have a 50\% chance of going to a state where the observable equals 0 and a 50\% chance of going to a state where it equals 2, so its \emph{expected}  value doesn't change: it still equals 1.  

On the other hand, we do \emph{not}  have $[O,U] = 0$ in this example, because we \emph{can}  hop between states where $O$ takes different values.  Furthermore, 
$$ \int O^2 U \psi \ne \int O^2 \psi $$
After all, if you start at state 1, $O^2$ equals 1 there.  You then have a 50\% chance of going to a state where $O^2$ equals 0 and a 50\% chance of going to a state where it equals 4, so its expected value changes!

So, that's why $\int O U \psi = \int O \psi $ for all $\psi$ is not enough to guarantee $[O,U] = 0$.  The same sort of counterexample works for Markov processes, too.

Finally, we should add that there's nothing terribly sacred about the \emph{square}  of the observable.  For example, we have:

\begin{theorem}\index{infinitesimal stochastic operator|)} 
\index{operator!infinitesimal stochastic|)} 
Suppose $H$ is an infinitesimal stochastic operator and $O$ is an observable.  Then  
$$ [O,H] =0 $$
if and only if 
$$ \frac{d}{d t} \int f(O) \psi(t) = 0 $$
for all smooth $f\colon \mathbb{R} \to \mathbb{R}$ and all $\psi(t)$ obeying the master equation.
\end{theorem}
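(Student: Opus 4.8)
The plan is to reduce this statement to the Noether's theorem for Markov processes proved above, together with one elementary observation: a function of a diagonal observable is again an observable, and if it makes the relevant expectation constant then it commutes with $H$ for a trivial reason.

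First I would treat the forward implication. Suppose $[O,H]=0$. Since $O$ is the diagonal matrix with entries $O_i$, for any $f:\mathbb{R}\to\mathbb{R}$ we may define the observable $f(O)$ to be the diagonal matrix with entries $f(O_i)$; this is again a legitimate observable. I would then check that $[f(O),H]=0$. Working in components, $[O,H]_{ij}=(O_i-O_j)H_{ij}$, so the hypothesis $[O,H]=0$ says that $H_{ij}\ne 0$ forces $O_i=O_j$. But then $f(O_i)=f(O_j)$ as well, and the same computation gives $[f(O),H]_{ij}=(f(O_i)-f(O_j))H_{ij}=0$ for all $i,j$. Applying the Lemma above to the observable $f(O)$ then yields $\frac{d}{dt}\int f(O)\,\psi(t) = \int [f(O),H]\,\psi(t) = 0$ for every solution of the master equation and every $f$ --- no smoothness needed.

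For the converse I would simply specialize. Assuming $\frac{d}{dt}\int f(O)\,\psi(t)=0$ for all smooth $f$ and all solutions $\psi(t)$, take $f(x)=x$ and $f(x)=x^2$, both of which are smooth. These give exactly the two hypotheses $\frac{d}{dt}\int O\,\psi(t)=0$ and $\frac{d}{dt}\int O^2\,\psi(t)=0$ of the earlier Noether's theorem for Markov processes, whose conclusion is precisely $[O,H]=0$. So the converse requires no new argument at all.

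The hard part is essentially nonexistent: the mathematical content is already contained in the $O$-and-$O^2$ version of the theorem. The only points deserving a sentence each are (i) verifying that $f(O)$ is a well-defined observable and that $[O,H]=0$ propagates to $[f(O),H]=0$ via the identity $[f(O),H]_{ij}=(f(O_i)-f(O_j))H_{ij}$, and (ii) noting that the two test functions $x$ and $x^2$ used for the converse lie in the class of smooth functions, so the hypothesis applies to them. Beyond these remarks the proof is a direct appeal to the Lemma and to the previously established theorem.
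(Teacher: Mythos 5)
Your proof is correct, and it is essentially the argument the paper leaves implicit (this theorem is stated there without proof as an easy extension of the stochastic Noether theorem). One small point worth keeping: your componentwise identity $[f(O),H]_{ij}=(f(O_i)-f(O_j))H_{ij}$ is genuinely the right move for the forward direction, since the earlier proof's trick of commuting $H$ with powers $O^n$ only handles polynomial $f$ directly, whereas your argument works for arbitrary $f$ (smoothness is indeed never used); the converse, as you say, is just the specialization $f(x)=x$ and $f(x)=x^2$ to the already-proved theorem.
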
 

\begin{theorem} 
Suppose $U$ is a stochastic operator and $O$ is an observable.  Then  
$$ [O,U] =0 $$
if and only if 
$$  \int f(O) U \psi = \int f(O) \psi $$
for all smooth $f\colon \mathbb{R} \to \mathbb{R}$ and all stochastic states $\psi$.
\end{theorem}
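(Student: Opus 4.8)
The plan is to prove both implications by reducing to facts already in hand, with essentially no new computation. For the forward direction, I would suppose $[O,U]=0$ and first upgrade this to $[f(O),U]=0$ for every smooth $f$. Since $X$ is finite, the set of values $\{O_i : i\in X\}$ is a finite subset of $\mathbb{R}$, so there is a polynomial $p$ with $p(O_i)=f(O_i)$ for all $i$; because $O$ is the diagonal matrix with entries $O_i$, this means $f(O)=p(O)$ as operators. From $[O,U]=0$ we get $[O^n,U]=0$ for all $n$, hence $[p(O),U]=0$, hence $[f(O),U]=0$. Then I would invoke the discrete-time analogue of the Lemma of Section~\ref{sec:10}: since every column of $U$ sums to $1$, $\int U\phi = \sum_{i,j} U_{ij}\phi_j = \sum_j \phi_j = \int\phi$ for any function $\phi$ on $X$. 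Applying this with $\phi = f(O)\psi$ and using $f(O)U\psi = Uf(O)\psi$ gives $\int f(O)U\psi = \int U f(O)\psi = \int f(O)\psi$, which is what we want.

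For the backward direction, assume $\int f(O)U\psi = \int f(O)\psi$ for all smooth $f$ and all stochastic states $\psi$. The quickest route is to specialize to $f(x)=x$ and $f(x)=x^2$, which reproduces exactly the two hypotheses of the Markov-chain Noether theorem for $O$ and $O^2$ stated above (the one whose proof is Problem~\ref{prob:14}), and that theorem immediately gives $[O,U]=0$. If a self-contained argument is preferred, I would fix $j\in X$, set $V=\{O_i : O_i\neq O_j\}$, choose a smooth bump function $f$ with $f(O_j)=1$ and $f\equiv 0$ on the finite set $V$, and test the hypothesis on $\psi=\delta_j$. This yields $\sum_{i:\,O_i=O_j} U_{ij} = f(O_j)=1 = \sum_i U_{ij}$, so $\sum_{i:\,O_i\neq O_j} U_{ij}=0$; since the entries are nonnegative, $U_{ij}=0$ whenever $O_i\neq O_j$. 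Then $[O,U]_{ij}=(O_i-O_j)U_{ij}=0$ for all $i,j$, using the formula for $[O,U]_{ij}$ derived earlier in this section.

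I do not expect a real obstacle here; both directions are short. The only point needing a word of care is the functional-calculus step in the forward direction — that for a diagonal $O$ on a finite set, $f(O)$ is unambiguously the diagonal matrix with entries $f(O_i)$ and hence equals a polynomial in $O$ — together with the identity $\int U\phi = \int\phi$, which is the discrete-time counterpart of $\int H\phi = 0$ used in the Lemma of Section~\ref{sec:10}. I would also remark that the penultimate theorem of this section (the Markov-process version with arbitrary smooth $f$) follows by the identical argument, or by differentiating $\int f(O)\exp(tH)\psi$ at $t=0$ and applying that Lemma with $f(O)$ in place of $O$.
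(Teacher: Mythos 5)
Your proof is correct. The forward direction (polynomial interpolation on the finite set of values $\{O_i\}$ to get $f(O)=p(O)$, hence $[f(O),U]=0$, then the column-sum identity $\int U\phi=\int\phi$) and the reduction of the backward direction to the $f(x)=x$, $f(x)=x^2$ case are exactly the route the paper has in mind: it states this theorem without proof precisely because it follows from the already-proved Markov-chain version for $O$ and $O^2$ together with the observation that commuting with $O$ forces commuting with any function of $O$ in the finite-dimensional diagonal setting. Your self-contained alternative for the backward direction is a genuinely different and rather nice argument: by testing a smooth bump function $f$ (equal to $1$ at $O_j$ and $0$ at the other values of $O$) against $\psi=\delta_j$, you conclude directly from nonnegativity of the entries that $U_{ij}=0$ whenever $O_i\neq O_j$, so $[O,U]_{ij}=(O_i-O_j)U_{ij}=0$. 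This bypasses the expand-the-square computation entirely and makes transparent why arbitrary smooth $f$ is a stronger hypothesis than $f(x)=x$ alone: the bump function isolates the level sets of $O$, which is exactly what the $O^2$ condition accomplishes indirectly in the paper's proof. Both of your steps that need care --- that $f(O)$ is unambiguously the diagonal matrix with entries $f(O_i)$, and that $\delta_j$ is a legitimate stochastic state --- are fine since $X$ is finite.
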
 

These make the `forward direction' of Noether's theorem stronger... and in fact, the forward direction, while easier, is probably more useful!  However, if you ever use Noether's theorem in the `reverse direction', it might be easier to check a condition involving only $O$ and its square.

\subsection{Answers}

Here's the answer to the problem:

\vskip 1em \noindent {\bf Problem 14.} 
Suppose $U$ is a stochastic operator and $O$ is an observable.   Show that $O$ commutes with $U$ iff the expected values of $O$ \emph{and its square}  don't change when we evolve our state one time step using $U$.  In other words, show that
$$ [O,U] =0 $$
if and only if
$$ \displaystyle{  \int O U \psi = \int O \psi} $$
and
$$ \displaystyle{ \int O^2 U \psi = \int O^2 \psi} $$
for all stochastic states $\psi.$

\index{expected value!and Noether's theorem|)} 

\begin{answer}
One direction is easy: if $[O,U] = 0$ then $[O^n,U] = 0$ for all $ n$, so
$$ \displaystyle{ \int O^n U \psi = \int U O^n \psi = \int O^n \psi} $$
where in the last step we use the fact that $U$ is stochastic.

For the converse we can use the same tricks that worked for Markov processes.  Assume that
$$ \displaystyle{  \int O U \psi = \int O \psi}  $$
and
$$ \displaystyle{ \int O^2 U \psi = \int O^2 \psi} $$
for all stochastic states $ \psi$.  These imply that
$$ \displaystyle{ \sum_{i \in X} O_i U_{i j} = O_j}  \quad \qquad (1) $$
and
$$ \displaystyle{ \sum_{i \in X} O^2_i U_{i j} = O^2_j} \qquad \qquad (2) $$
We wish to show that $[O,U]= 0$.   Note that
$$ \begin{array}{ccl} [O,U]_{i j} &=& \displaystyle{ \sum_{k \in X} (O_{i k}U_{k j} - U_{i k} O_{k j})} \\ \\  &=& (O_i-O_j)U_{i j} \end{array} $$
To show this is always zero, we'll show that when $ U_{i j} \ne 0$, then $ O_j = O_i$.  This says that when our system can hop from one state to another, the observable $ O$ must take the same value on these two states.

For this, in turn, it's enough to show that the following sum vanishes for any $ j \in X$:
$$ \displaystyle{ \sum_{i \in X} (O_j-O_i)^2 U_{i j}}  $$
Why?   The matrix elements $ U_{i j}$ are nonnegative since $ U$ is stochastic.  Thus the sum can only vanish if each term vanishes, meaning that $ O_j = O_i$ whenever $ U_{i j} \ne 0$.

To show the sum vanishes, let's expand it:
$$ \begin{array}{ccl} \displaystyle{ \sum_{i \in X} (O_j-O_i)^2 U_{i j}} &=& \displaystyle{ \sum_i (O_j^2 U_{i j}- 2O_j O_i U_{i j} +O_i^2 U_{i j})}  \\  \\ &=& \displaystyle{  O_j^2\sum_i U_{i j} - 2O_j \sum_i O_i U_{i j} + \sum_i O_i^2 U_{i j}} \end{array} $$
Now, since (1) and (2) hold for all stochastic states $ \psi$, this equals
$$ \displaystyle{  O_j^2\sum_i U_{i j} - 2O_j^2 + O_j^2 } $$
But this is zero because $ U$ is stochastic, which implies
$$ \sum_i U_{i j} = 1$$
So, we're done!

\end{answer} 

\index{Noether's theorem|)} 


\newpage
\section{Quantum mechanics vs stochastic mechanics} 
\label{sec:11}

In Section \ref{sec:10} we proved a version of Noether's theorem for stochastic mechanics.  Now we want to compare that to the more familiar quantum version.  
But to do this, we need to say more about the analogy between stochastic mechanics and quantum mechanics.   So far, whenever we've tried, we feel pulled toward explaining some technical issues involving analysis: whether sums converge, whether derivatives exist, and so on.  We've been trying to avoid such stuff---not because we dislike it, but because we're afraid \emph{you} might.  But the more we put off discussing these issues, the more they fester.  So, this time we will gently explore some of these technical issues.  But don't be scared: we'll \emph{mainly} talk about some simple big ideas.  In the next section, we'll use these to compare the two versions of Noether's theorem.  

To beging with, we need to recall the analogy we began sketching in Section~\ref{sec:4}, and push it a bit further.   The idea is that stochastic mechanics differs from quantum mechanics in two big ways:

\begin{itemize}
\item  First, instead of complex amplitudes, stochastic mechanics uses nonnegative real probabilities.  The complex numbers form a \href{http://en.wikipedia.org/wiki/Ring\_\%28mathematics\%29}{ring}; the nonnegative real numbers form a mere \href{http://ncatlab.org/nlab/show/rig}{rig}, which is a `ri{\bf n}g without {\bf n}egatives'.  Rigs are much neglected in the typical math curriculum, but unjustly so: they're almost as good as rings in many ways, and there are lots of important examples, like the natural numbers $\mathbb{N}$ and the nonnegative real numbers, $[0,\infty)$.  For probability theory, we should learn to love rigs.  

But there are, alas, situations where we need to subtract probabilities, even when the answer comes out negative: namely when we're taking the \emph{time derivative}  of a probability.  So sometimes we need $\mathbb{R}$ instead of just $[0,\infty)$.

\item  Second, while in quantum mechanics a state is described using a `wavefunction', meaning a complex-valued function obeying
$$ \int |\psi|^2 = 1 $$
in stochastic mechanics it's described using a `probability distribution', meaning a nonnegative real function obeying
$$ \int \psi = 1 $$\end{itemize} 

So, let's try our best to present the theories in close analogy, while respecting these two differences.

\subsection{States}
\label{sec:11_states}

We'll start with a set $X$ whose points are {\bf states} that a system can be in.  So far, we assumed $X$ was a finite set, but this section is so mathematical we might as well  assume it's a \href{http://en.wikipedia.org/wiki/Measure\_\%28mathematics\%29}{measure space}.\index{measure space}  A measure space lets you do integrals, but a finite set is a special case, and then these integrals are just sums.\index{integration!finite sums}   So, we'll write things like\index{integration!and measure space}  
$$ \int f $$
and mean the integral of the function $f$ over the measure space $X$, but if $X$ is a finite set this just means 
$$ \sum_{x \in X} f(x) $$
Now, we've already defined the word `state', but both quantum and stochastic mechanics need a more general concept of state.  Let's call these `quantum states' and `stochastic states':

\begin{itemize} 
\item  In {\bf quantum mechanics}, the system has an {\bf amplitude} $\psi(x)$ of being in any state $x \in X$.  These amplitudes are complex numbers with 
$$\int | \psi |^2 = 1$$
We call $\psi \colon X \to \mathbb{C}$ obeying this equation a {\bf quantum state}.

\item  In {\bf stochastic mechanics}, the system has a {\bf probability} $\psi(x)$ of being in any state $x \in X$.  These probabilities are nonnegative real numbers with 
$$\int \psi = 1$$
We call $\psi \colon X \to [0,\infty)$ obeying this equation a {\bf stochastic state}.
\end{itemize} 

In quantum mechanics we often use this abbreviation:
$$ \langle \phi, \psi \rangle = \int \overline{\phi} \psi $$
so that a quantum state has
$$ \langle \psi, \psi \rangle = 1 $$
Similarly, we could introduce this notation in stochastic mechanics:
$$ \langle \psi \rangle = \int \psi $$
so that a stochastic state has
$$ \langle \psi \rangle = 1 $$
But this notation is a bit risky, since angle brackets of this sort often stand for expectation values of observables.  So, we've been writing $\int \psi$, and we'll
keep on doing this most of the time.

In quantum mechanics, $\langle \phi, \psi \rangle$ is well-defined whenever both $\phi$ and $\psi$ live in the vector space
$$L^2(X) = \{ \psi \colon X \to \mathbb{C} \; : \;  \int |\psi|^2 < \infty \} $$
In stochastic mechanics, $\langle \psi \rangle$ is well-defined whenever $\psi$ lives in the vector space
$$L^1(X) =  \{ \psi \colon X \to \mathbb{R}  : \; \int |\psi| < \infty \} $$
You'll notice we wrote $\mathbb{R}$ rather than $[0,\infty)$ here. That's because in some calculations we'll need functions that take negative values, even though our stochastic states are nonnegative.

\subsection{Observables}
\label{sec:11_observables}

A state is a way our system can be.  An observable is something we can measure about our system.   They fit together: we can measure an observable when our system is in some state.   If we repeat this we may get different answers, but there's a nice formula for average or `expected' answer.

In quantum mechanics, an {\bf observable} is a \href{http://en.wikipedia.org/wiki/Self-adjoint\_operator}{self-adjoint operator} $A$ on $L^2(X)$. \index{self-adjoint operator}\index{quantum mechanics!self-adjoint operator}\index{self-adjoint operator}\index{observable!quantum}\index{operator!self-adjoint} \index{self-adjoint operator} In the case where $X$ is a finite set, a self-adjoint operator on $L^2(X)$
is just one with
$$  \langle \psi, A \phi \rangle = \langle  A \psi, \phi \rangle $$
for all $\psi, \phi \in L^2(X)$.  
In general there are some extra subtleties, which we prefer to sidestep here. The {\bf expected value}\index{expected value!definition of, quantum} of an observable $A$ in the state $\psi$ is
$$ \langle \psi, A \psi \rangle = \int \overline{\psi} A \psi $$
Here we're assuming that we can apply $A$ to $\psi$ and get a new vector $A \psi \in L^2(X)$.   This is automatically true when $X$ is a finite set, but in general we need to be more careful.  

In stochastic mechanics, an {\bf observable} is a real-valued function $A$ on $X$.    The {\bf expected value} of $A$ in the state $\psi$ is\index{expected value} \index{observable!stochastic} 
$$ \langle A \psi \rangle = \int A \psi $$
Here we're using the fact that we can multiply $A$ and $\psi$ and get a new vector $A \psi \in L^1(X)$, at least if $A$ is bounded.  Again, this is automatic if $X$ is a finite set, but not otherwise.  

The left side of the equation above is just an abbreviation for the right side, where
we write the integral sign using angle brackets.  We mentioned in the previous section 
that this use of angle brackets is a bit risky, so we won't use 
it often.  It does make the analogy between quantum mechanics and stochastic mechanics rather pretty, though.  

There are a couple of other equivalent ways to write the expected value
of an observable in stochastic mechanics:
\begin{itemize}
\item Sometimes the stochastic state $\psi \in L^1(X)$ and 
the constant function $1$ both lie $L^2(X)$.  This happens 
whenever $X$ is a {\bf finite measure space}, \index{measure space!finite}
meaning one with $\int 1 < \infty$.   For example, it happens when $X$ is 
just a finite set with integration being summation.   

In this situation, we have
$$   \int A \psi = \langle 1, A \psi \rangle $$
This gives the expected value of stochastic observables a bit more of a quantum 
flavor!   We consider this formula to be more of a `trick'  than a deep insight.  Sometimes it may be useful, though.
\item Sometimes the stochastic state $\psi \in L^1(X)$ and the 
observable $A$, which is a bounded function on $X$, both lie in $L^2(X)$.
This is always the case when $X$ is a finite set with integration being summation.

In this situation we can think of $A$ as vector $\hat{A} \in L^2(X)$, and we
have
$$  \int A \psi = \langle \hat{A}, \psi \rangle $$
Again, this formula has a kind of quantum flavor.  We shall use this formula
in Section \ref{sec:24_3}.
\end{itemize}

\subsection{Symmetries}
\label{sec:11_symmetries}

Besides states and observables, we need `symmetries', which are transformations that map states to states.  For example, we use these to describe how our system changes when we wait a while.

In quantum mechanics, an {\bf isometry}\index{quantum mechanics!isometry}\index{operator!isometry}\index{isometry} is a linear map $U \colon L^2(X) \to L^2(X)$ such that
$$ \langle U \phi, U \psi \rangle = \langle \phi, \psi \rangle$$
for all $\psi, \phi \in L^2(X)$.  If $U$ is an isometry and $\psi$ is a quantum state, then $U \psi$ is again a quantum state.

In stochastic mechanics, a {\bf stochastic operator} is a linear map $U\colon L^1(X) \to L^1(X)$ such that \index{operator!stochastic}\index{stochastic operator}
$$ \int U \psi = \int \psi $$
and
$$ \psi \ge 0   \Rightarrow   U \psi \ge 0 $$
for all $\psi \in L^1(X)$.  If $U$ is stochastic and $\psi$ is a stochastic state, then $U \psi$ is again a stochastic state.

In quantum mechanics we are mainly interested in invertible isometries, which are called {\bf unitary} operators.\index{quantum mechanics!unitary operator}\index{unitary operator}\index{operator!unitary} There are lots of these, and their inverses are always isometries.   There are, however, very few stochastic operators whose inverses are stochastic:

\begin{problem}\label{prob:15} 
Suppose $X$ is a finite set.  Show that any isometry $U\colon L^2(X) \to L^2(X)$ is invertible, and its inverse is again an isometry.\index{quantum mechanics!isometry}
\end{problem} 

\begin{problem}\label{prob:16} 
Suppose $X$ is a finite set.  Which stochastic operators $U\colon L^1(X) \to L^1(X)$ have stochastic inverses?
\end{problem} 

This is why we usually think of time evolution as being reversible quantum mechanics, but not in stochastic mechanics!  In quantum mechanics we often describe time evolution using a `1-parameter group', while in stochastic mechanics we describe it using a 1-parameter \emph{semi}group... meaning that we can run time forwards, but not backwards.\index{semigroup!1-parameter} 

But let's see how this works in detail! 

\subsection[Quantum evolution]{Time evolution in quantum mechanics}

In quantum mechanics there's a beautiful relation between observables and symmetries, which goes like this.   Suppose that for each time $t$ we want a unitary operator $U(t)\colon  L^2(X) \to L^2(X)$ that describes time evolution.  Then it makes a lot of sense to demand that these operators form a 1-parameter group:

\begin{definition} 
A collection of linear operators U(t) ($t \in \mathbb{R}$) on some vector space forms a {\bf 1-parameter group} if\index{group theory!unitary!1-parameter}
$$ U(0) = 1 $$
and
$$ U(s+t) = U(s) U(t) $$
for all $s,t \in \mathbb{R}$.
\end{definition} 
\noindent
Note that these conditions force all the operators $U(t)$ to be invertible.  

Now suppose our vector space\index{vector space}\index{Hilbert space!from vector space} is a Hilbert space, like $L^2(X)$.  Then we call a 1-parameter group a {\bf 1-parameter unitary group}\index{group theory!unitary!1-parameter} if the operators involved are all unitary.  

It turns out that 1-parameter unitary groups\index{group theory!unitary!1-parameter!continuous} are either continuous in a certain way, or so pathological that you can't even prove they exist without the axiom of choice!  So, we always focus on the continuous case:

\begin{definition} 
A 1-parameter unitary group\index{one-parameter unitary group!continuous} is {\bf \href{http://en.wikipedia.org/wiki/C0-semigroup}{strongly continuous}} if $U(t) \psi$ depends continuously on $t$ for all $\psi$, in this sense:
$$ t_i \to t  \Rightarrow  \|U(t_i) \psi - U(t) \psi \| \to 0 $$
\end{definition} 

Then we get a classic result proved by Marshall Stone back in the early 1930s.  You may not know him, but he was so influential at the University of Chicago during this period that it's often called the `Stone Age'.  And here's one reason why:

\begin{theorem}[{\bf \href{http://en.wikipedia.org/wiki/Stone\%27s\_theorem\_on\_one-parameter\_unitary\_groups}{Stone's Theorem}}]\index{Stone's theorem!for unitary groups}  There is a one-to-one correspondence between strongly continuous 1-parameter unitary groups on a Hilbert space\index{Hilbert space!one-parameter unitary groups on} and self-adjoint operators on that Hilbert space, given as follows.  Given a strongly continuous 1-parameter unitary group\index{one-parameter unitary group} $U(t)$ we can always write 
$$ U(t) = \exp(-i t H)$$ 
for a unique self-adjoint operator $H$.  Conversely, any self-adjoint operator determines a strongly continuous 1-parameter group this way.  For all vectors $\psi$ for which $H \psi$ is well-defined, we have
$$ \left.\frac{d}{d t} U(t) \psi \right|_{t = 0} = -i H \psi  $$
Moreover, for any of these vectors, if we set
$$ \psi(t) = \exp(-i t H) \psi $$
we have
$$ \frac{d}{d t} \psi(t) = - i H \psi(t) $$
\end{theorem}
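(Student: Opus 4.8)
The plan is to prove the two directions separately, using the spectral theorem for the passage from self-adjoint operators to unitary groups, and the theory of the infinitesimal generator for the converse. For the forward direction I would start from a self-adjoint operator $H$, invoke the spectral theorem to write $H = \int_{\mathbb{R}} \lambda \, dE(\lambda)$ for a projection-valued measure $E$, and then \emph{define} $U(t) = \exp(-itH) = \int_{\mathbb{R}} e^{-it\lambda}\, dE(\lambda)$ through the Borel functional calculus. Unitarity of each $U(t)$ and the group laws $U(0) = 1$ and $U(s+t) = U(s)U(t)$ are then automatic, since the scalar functions $\lambda \mapsto e^{-it\lambda}$ have modulus one and multiply correctly. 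Strong continuity follows from the identity $\|U(t_i)\psi - U(t)\psi\|^2 = \int_{\mathbb{R}} |e^{-it_i\lambda} - e^{-it\lambda}|^2 \, d\langle\psi, E(\lambda)\psi\rangle$ together with dominated convergence, the integrand being bounded by $4$ and converging pointwise. For the differentiation statement I restrict to $\psi$ in the domain of $H$, i.e.\ $\int_{\mathbb{R}} \lambda^2 \, d\langle\psi, E(\lambda)\psi\rangle < \infty$, use the elementary estimate $|e^{-is\lambda} - 1 + is\lambda| \le s^2\lambda^2/2$, and pass the difference quotient inside the integral to get $\left.\frac{d}{dt}U(t)\psi\right|_{t=0} = -iH\psi$; applying this with $\psi$ replaced by $U(t)\psi$, which remains in the domain, yields $\frac{d}{dt}\psi(t) = -iH\psi(t)$.

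For the converse, given a strongly continuous one-parameter unitary group $U(t)$, I would define its generator by $H\psi = i\lim_{t\to 0}\frac{U(t)\psi - \psi}{t}$ on the domain $D(H)$ of vectors for which the limit exists. Four things need checking. First, $D(H)$ is dense: for $\psi \in L^2(X)$ and $f \in C_c^\infty(\mathbb{R})$, the smeared vector $\psi_f = \int f(s)U(s)\psi\, ds$ lies in $D(H)$ by a change of variables in the difference quotient, and taking $f$ to run over an approximate identity shows such vectors are dense. Second, $H$ is symmetric: differentiate $\langle U(t)\phi, \psi\rangle = \langle\phi, U(-t)\psi\rangle$ at $t = 0$. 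Third, $H$ is genuinely self-adjoint, not merely symmetric: I would first show $U(t)$ maps $D(H)$ into itself with $\frac{d}{dt}U(t)\psi = -iHU(t)\psi$ for $\psi \in D(H)$, and then rule out deficiency vectors, for if $H^*\phi = \pm i\phi$ then $f(t) = \langle U(t)\psi, \phi\rangle$ satisfies $f'(t) = \mp f(t)$, so $f$ grows exponentially in one time direction, contradicting $|f(t)| \le \|\psi\|\,\|\phi\|$ unless $f \equiv 0$, whence $\phi = 0$ by density of $D(H)$. Fourth, $U(t) = \exp(-itH)$: for $\psi \in D(H)$ both $U(t)\psi$ and $\exp(-itH)\psi$ solve the initial value problem $\frac{d}{dt}\xi = -iH\xi$, $\xi(0) = \psi$, and $\frac{d}{dt}\|U(t)\psi - \exp(-itH)\psi\|^2 = 0$ by symmetry of $H$, so the two agree, and then on all of $L^2(X)$ by density and boundedness. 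Uniqueness of $H$ is immediate, since the generator is recovered from $U$ by the limit formula.

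The main obstacle is the third step of the converse: upgrading symmetry of the generator to self-adjointness. This is precisely where the boundedness of the $U(t)$ is indispensable — it is what kills the deficiency vectors — and it is exactly the feature that fails for a mere one-parameter \emph{semi}group of stochastic operators, the asymmetry between quantum and stochastic mechanics we want to emphasize. The remaining steps are routine once one is comfortable with the spectral theorem and with differentiating Hilbert-space-valued functions under an integral sign.
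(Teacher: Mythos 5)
The paper does not actually prove this statement: it quotes Stone's theorem as a classical result (citing Stone's work from the early 1930s), explicitly deferring the technicalities about unbounded self-adjoint operators and domains, and only spells out a ``baby version'' in finite dimensions where $\exp(-itH)$ is a convergent power series. So your proposal cannot be compared with an in-paper argument; judged on its own, it is the standard textbook proof and it is essentially correct. The forward direction via the spectral theorem and Borel functional calculus, with strong continuity from dominated convergence and the differentiation statement from the estimate $|e^{-is\lambda}-1+is\lambda|\le s^2\lambda^2/2$ on the domain $\int \lambda^2\, d\langle\psi,E(\lambda)\psi\rangle<\infty$, is exactly how this is usually done; likewise the converse via the generator, density of the smeared vectors $\int f(s)U(s)\psi\,ds$, symmetry, the deficiency-vector argument (where, as you rightly stress, unitarity of the group is what forces the deficiency spaces to vanish and where the analogy with Markov semigroups breaks), and the norm-derivative uniqueness argument identifying $U(t)$ with $\exp(-itH)$. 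Two small points deserve one extra line each if you write this up: the claim that uniqueness of $H$ is ``immediate'' is cleanest if you say that any self-adjoint $H$ with $U(t)=\exp(-itH)$ is, by the forward differentiation statement, a symmetric restriction of the generator, and a self-adjoint operator admits no proper symmetric extension, so $H$ equals the generator; and the differentiability of $f(t)=\langle U(t)\psi,\phi\rangle$ used against a deficiency vector $\phi$ relies on your earlier (correct but unproved) assertion that $U(t)$ preserves $D(H)$ with $\frac{d}{dt}U(t)\psi=-iHU(t)\psi$, so that step should be established before it is invoked. Also note the theorem is stated for an arbitrary Hilbert space, not just $L^2(X)$, though nothing in your argument uses the $L^2$ structure.
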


When $U(t) = \exp(-i t H)$ describes the evolution of a system in time, $H$ is is called the {\bf Hamiltonian}\index{quantum mechanics!Hamiltonian}, and it has the physical meaning of `energy'.  The equation we just wrote down is then called {\bf Schr\"{o}dinger's equation}\index{quantum mechanics!Schr\"{o}dinger's equation}.

So, simply put, in quantum mechanics we have a correspondence between observables and nice one-parameter groups of symmetries.  Not surprisingly, our favorite observable, energy, corresponds to our favorite symmetry: time evolution!

However, if you were paying attention, you noticed that we carefully avoided explaining how we define $\exp(- i t H)$.  We didn't even say what a self-adjoint operator is.    This is where the technicalities come in: they arise when $H$ is \href{http://en.wikipedia.org/wiki/Unbounded\_operator}{unbounded}\index{unbounded operator}, and not defined on all vectors in our Hilbert space.

Luckily, these technicalities evaporate for \emph{finite-dimensional}  Hilbert spaces\index{Hilbert space!finite-dimensional}, such as $L^2(X)$ for a finite set $X$.  Then we get:

\begin{theorem}[{\bf Stone's Theorem---Baby Version}]\index{Stone's theorem!baby version}
  Suppose we are given a \emph{finite-dimensional}  Hilbert space.  In this case, a linear operator $H$ on this space is self-adjoint iff it's defined on the whole space and
$$ \langle \phi , H \psi \rangle = \langle H \phi, \psi \rangle $$
for all vectors $\phi, \psi$.  Given a strongly continuous\index{group theory!unitary!1-parameter} 1-parameter unitary group $U(t)$ we can always write
$$ U(t) = \exp(- i t H) $$
for a unique self-adjoint operator $H$, where
$$ \exp(-i t H) \psi = \sum_{n = 0}^\infty \frac{(-i t H)^n}{n!} \psi $$
with the sum converging for all $\psi$.  Conversely, any self-adjoint operator on our space determines a strongly continuous 1-parameter group this way.  For all vectors $\psi$ in our space we then have
$$ \left.\frac{d}{d t} U(t) \psi \right|_{t = 0} = -i H \psi  $$
and if we set
$$ \psi(t) = \exp(-i t H) \psi $$
we have
$$ \frac{d}{d t} \psi(t) = - i H \psi(t) $$
\end{theorem}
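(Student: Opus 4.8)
The plan is to leverage finite-dimensionality throughout: every linear operator is bounded, the operator-valued series defining $\exp(-itH)$ converges absolutely and uniformly on compact $t$-intervals, and ``strongly continuous'' is equivalent to ``continuous'' since all norms on a finite-dimensional space induce the same topology. Given all this, the theorem becomes a careful transcription of the elementary properties of the scalar exponential.

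I would begin with the characterization of self-adjointness. In a finite-dimensional Hilbert space every linear operator $H$ is automatically defined on the whole space, and its adjoint $H^\dagger$ is the unique operator satisfying $\langle H^\dagger \phi, \psi\rangle = \langle \phi, H\psi\rangle$ for all $\phi,\psi$. Hence $H$ equals its adjoint --- the definition of self-adjoint --- precisely when $\langle H\phi,\psi\rangle = \langle \phi, H\psi\rangle$ for all $\phi,\psi$. This step is pure bookkeeping.

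Next comes the direction self-adjoint $\Rightarrow$ unitary group. Given self-adjoint $H$, I would define $U(t) = \exp(-itH) = \sum_{n\ge 0}(-itH)^n/n!$; since $\|(-itH)^n/n!\| \le (|t|\,\|H\|)^n/n!$, the series converges absolutely and uniformly for $t$ in any bounded interval, so $U(t)$ is well-defined and depends continuously (indeed real-analytically) on $t$. The identities $U(0)=1$ and $U(s+t)=U(s)U(t)$ follow from the Cauchy-product argument valid for the scalar exponential, which applies because $H$ commutes with itself. Self-adjointness yields $U(t)^\dagger = \exp(itH) = U(-t) = U(t)^{-1}$, so each $U(t)$ is unitary; strong continuity is immediate from the continuity noted above. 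Differentiating the series term by term --- legitimate by uniform convergence on compacta --- gives $\frac{d}{dt}U(t)\psi = -iH\,U(t)\psi$, hence $\left.\frac{d}{dt}U(t)\psi\right|_{t=0} = -iH\psi$ and, with $\psi(t) := U(t)\psi$, the equation $\frac{d}{dt}\psi(t) = -iH\psi(t)$.

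For the converse, and this is the one step requiring a genuine idea, I would extract a generator from a merely strongly continuous 1-parameter unitary group $U(t)$. For small $\epsilon>0$ set $A_\epsilon := \frac{1}{\epsilon}\int_0^\epsilon U(s)\,ds$, which makes sense because $s\mapsto U(s)$ is continuous; by continuity $A_\epsilon \to U(0) = 1$ as $\epsilon \to 0$, so $A_\epsilon$ is invertible once $\epsilon$ is small enough --- fix such an $\epsilon$. Using the group law, $U(t)A_\epsilon = \frac{1}{\epsilon}\int_t^{t+\epsilon}U(s)\,ds$, so $\frac{U(t)-1}{t}A_\epsilon = \frac{1}{\epsilon t}\bigl(\int_\epsilon^{\epsilon+t}U(s)\,ds - \int_0^t U(s)\,ds\bigr) \to \frac{1}{\epsilon}(U(\epsilon)-1)$ as $t\to 0$. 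Therefore $\frac{U(t)-1}{t}$ has a limit as $t\to 0$; writing that limit as $-iH$ defines a linear operator $H$. The group law then gives $\frac{d}{dt}U(t) = U(t)(-iH) = (-iH)U(t)$, so $U(\cdot)$ solves the linear ODE $U'(t) = -iHU(t)$ with $U(0)=1$, whose unique solution is $\exp(-itH)$; thus $U(t)=\exp(-itH)$. Differentiating $\langle U(t)\phi, U(t)\psi\rangle = \langle\phi,\psi\rangle$ at $t=0$ gives $\langle H\phi,\psi\rangle = \langle\phi,H\psi\rangle$, so $H$ is self-adjoint by the first step; and uniqueness of $H$ follows by differentiating $\exp(-itH_1)=\exp(-itH_2)$ at $t=0$. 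The main obstacle, then, is precisely this averaging-and-inverting argument; the rest is the familiar calculus of the exponential applied to a commuting family of operators, and the finite dimension is what lets ``$A_\epsilon$ close to $1$'' imply ``$A_\epsilon$ invertible''.
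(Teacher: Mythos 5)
Your proof is correct, and in fact the paper never proves this statement at all: the ``baby version'' of Stone's theorem is stated in Section 11 as background, with only the general Stone's theorem cited, so there is no in-paper argument to compare against. Your two directions are the standard ones---the norm-convergent power series for $\exp(-itH)$, with term-by-term differentiation and $U(t)^\dagger = U(-t) = U(t)^{-1}$, for the forward direction, and the averaging trick $A_\epsilon = \frac{1}{\epsilon}\int_0^\epsilon U(s)\,ds$ together with the group law to manufacture differentiability of $U$ at $0$ for the converse---and the remaining steps (ODE uniqueness to identify $U(t)$ with $\exp(-itH)$, differentiating $\langle U(t)\phi, U(t)\psi\rangle$ to get self-adjointness, differentiating at $t=0$ for uniqueness of $H$) are all sound; the paper itself carries out that differentiation-of-the-inner-product computation in Section 4 when motivating self-adjointness. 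Two minor remarks: the invertibility of $A_\epsilon$ near the identity is a Neumann-series fact that holds in any Banach algebra, so it is not really where finite-dimensionality enters (what finite-dimensionality buys you is that $H$ is everywhere defined and bounded, and that strong continuity upgrades to norm continuity); and when you pass from the limit of $\frac{1}{t}(U(t)-1)$ to $\frac{d}{dt}U(t) = -iHU(t)$ you should note, as your integral identity indeed permits, that the limit is two-sided, e.g.\ via $\frac{U(-s)-1}{-s} = U(-s)\frac{U(s)-1}{s}$.
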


\subsection[Stochastic evolution]{Time evolution in stochastic mechanics}

We've seen that in quantum mechanics, time evolution is usually described by a 1-parameter group of operators that comes from an observable: the Hamiltonian.  Stochastic mechanics is different!   

First, since stochastic operators aren't usually invertible, we typically describe time evolution by a mere `semigroup':

\begin{definition}
A collection of linear operators $U(t)$ on some vector space,
where $t \in [0,\infty)$, forms a {\bf 1-parameter semigroup}\index{semigroup!1-parameter} if
$$ U(0) = 1 $$
and
$$ U(s+t) = U(s) U(t) $$
for all $s, t \ge 0$.
\end{definition} 

Now suppose this vector space is $L^1(X)$ for some measure space $X$.\index{measure space}  We want to focus on the case where the operators $U(t)$ are stochastic and depend continuously on $t$ in the same sense we discussed earlier.  

\begin{definition}
 A 1-parameter strongly continuous semigroup of stochastic operators $U(t)\colon L^1(X) \to L^1(X)$ is called a {\bf Markov semigroup}.\index{semigroup!Markov}
\index{Markov semigroup}
 \end{definition} 

What's the analogue of Stone's theorem for Markov semigroups?  We don't know a fully satisfactory answer!  If you know, please tell us.  In Section \label{sec:11_hille-yosida} we'll say what we \emph{do} know---we're not \emph{completely} clueless---but for now let's look at the `baby' case where $X$ is a finite set.  Then the story is neat and complete:

\begin{theorem}\index{infinitesimal stochastic operator|(} 
\index{operator!infinitesimal stochastic|(}
Suppose we are given a \emph{finite} set $X$.  In this case, a linear operator $H$ on $L^1(X)$ is {\bf infinitesimal stochastic} iff it's defined on the whole space,
$$ \int H \psi = 0 $$
for all $\psi \in L^1(X)$, and the matrix of $H$ in terms of the obvious basis obeys
$$ H_{i j} \ge 0 $$
for all $j \ne i$.  Given a Markov semigroup $U(t)$ on $L^1(X)$, we can always write
$$ U(t) = \exp(t H) $$
\index{semigroup!Markov} \index{Markov semigroup}
for a unique infinitesimal stochastic operator $H$, where
$$ \exp(t H) \psi = \sum_{n = 0}^\infty \frac{(t H)^n}{n!} \psi $$
with the sum converging for all $\psi$.  Conversely, any infinitesimal stochastic operator on our space determines a Markov semigroup this way.  For all $\psi \in L^1(X)$ we then have
$$ \left.\frac{d}{d t} U(t) \psi \right|_{t = 0} = H \psi  $$
and if we set
$$ \psi(t) = \exp(t H) \psi $$
we have the {\bf master equation}:
$$ \frac{d}{d t} \psi(t) = H \psi(t) $$
\end{theorem}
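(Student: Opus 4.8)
The plan is to exploit finite-dimensionality at every step: since $X$ is finite, $L^1(X) \cong \mathbb{R}^{|X|}$, so every linear operator is bounded and everywhere-defined, the series $\exp(tH) = \sum_{n} (tH)^n/n!$ converges absolutely in operator norm for every $H$ and every $t$, and strong continuity of a one-parameter family of operators coincides with norm continuity. Thus the only real content is (i) matching the three defining properties of ``infinitesimal stochastic'' with the statement that $U(t) = \exp(tH)$ is stochastic for all $t \ge 0$, and (ii) showing that every Markov semigroup arises from a generator.

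First I would do the forward direction: assume $H$ is infinitesimal stochastic and show $U(t) = \exp(tH)$ is a Markov semigroup. The identities $U(0) = 1$ and $U(s+t) = U(s)U(t)$ are immediate from the power series, and norm continuity in $t$ is automatic. That $\int U(t)\psi = \int \psi$ follows since $\frac{d}{dt}\int U(t)\psi = \int H U(t)\psi = 0$ by the column-sum-zero condition $\int H\phi = 0$, together with $\int U(0)\psi = \int \psi$. The one nontrivial point is positivity, $\psi \ge 0 \Rightarrow U(t)\psi \ge 0$. The trick is to pick $c > 0$ large enough that $H + cI$ has all entries nonnegative --- possible because the off-diagonal entries of $H$ are already $\ge 0$ and the finitely many diagonal entries are just real numbers, made nonnegative by adding a large enough multiple of the identity. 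Then
$$\exp(tH) = e^{-ct}\exp\bigl(t(H + cI)\bigr) = e^{-ct}\sum_{n=0}^{\infty}\frac{t^n (H+cI)^n}{n!},$$
and since a product of entrywise-nonnegative matrices is entrywise nonnegative, every term of the sum is a nonnegative matrix, hence so is the sum, and multiplying by the positive scalar $e^{-ct}$ preserves this. So $\exp(tH)$ is entrywise nonnegative for $t \ge 0$, giving positivity preservation. The claimed formulas $\frac{d}{dt}U(t)\psi|_{t=0} = H\psi$ and the master equation are then just term-by-term differentiation of the convergent series.

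Next I would do the reverse direction: given a Markov semigroup $U(t)$, recover $H$. The key lemma is that in finite dimensions a norm-continuous one-parameter semigroup has a bounded generator. I would prove this by the standard averaging trick: set $V(s) = \int_0^s U(t)\,dt$; since $V(s)/s \to U(0) = I$ as $s \to 0^+$, the operator $V(s_0)$ is invertible for some small $s_0 > 0$, and for $h > 0$
$$\frac{U(h) - 1}{h}\,V(s_0) = \frac{1}{h}\left(\int_{s_0}^{s_0 + h} U(t)\,dt - \int_0^h U(t)\,dt\right) \longrightarrow U(s_0) - 1,$$
so $\frac{U(h)-1}{h}$ converges to $H := \bigl(U(s_0) - 1\bigr)V(s_0)^{-1}$ as $h \to 0^+$; a short argument using the semigroup law upgrades this to $U'(t) = H\,U(t)$ and hence $U(t) = \exp(tH)$. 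It remains to check $H$ is infinitesimal stochastic: $\int H\psi = \frac{d}{dt}\big|_{t=0}\int U(t)\psi = \frac{d}{dt}\big|_{t=0}\int \psi = 0$, and for $i \ne j$ we get $H_{ij} = \lim_{h \to 0^+}\frac{U(h)_{ij} - \delta_{ij}}{h} = \lim_{h \to 0^+} U(h)_{ij}/h \ge 0$ because each $U(h)$ is stochastic, hence entrywise nonnegative. Uniqueness of $H$ is then immediate: if $\exp(tH_1) = \exp(tH_2)$ for all $t \ge 0$, differentiate at $t = 0$.

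I expect the main obstacle --- or at least the only step that is not pure bookkeeping --- to be the generator-existence lemma in the reverse direction: showing that a strongly (equivalently, norm) continuous semigroup on a finite-dimensional space is automatically differentiable at $0$. The averaging trick above is the cleanest route, but one must take a little care that the operator-valued integrals and the invertibility of $V(s_0)$ are legitimate; in finite dimensions these are just ordinary Riemann integrals of continuous matrix-valued functions, so there is no genuine difficulty. The positivity decomposition $H = -cI + (H + cI)$ is the conceptual heart of the equivalence between ``infinitesimal stochastic'' and ``generates a Markov semigroup,'' but it is short; everything else is differentiating convergent power series and reading off matrix entries.
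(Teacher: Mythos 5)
Your proposal is correct, and it actually goes beyond what the paper itself does: the theorem is stated in Section \ref{sec:11} without proof, the only argument in the text being the first-order heuristic of Section \ref{sec:4}, where $\exp(tH)=1+tH+\cdots$ is expanded to order $t$ to show that column sums zero and nonnegative off-diagonal entries are \emph{necessary} for $\exp(tH)$ to be stochastic. Your verification that the generator of a Markov semigroup is infinitesimal stochastic (via $\int H\psi=\frac{d}{dt}\big|_{t=0}\int U(t)\psi=0$ and $H_{ij}=\lim_{h\to 0^+}U(h)_{ij}/h\ge 0$) is the rigorous version of that sketch. What you add, and what the paper nowhere supplies, are the two halves it leaves implicit: sufficiency, via the decomposition $H=-cI+(H+cI)$ with $c$ large enough that $H+cI$ is entrywise nonnegative, so $\exp(tH)=e^{-ct}\exp\bigl(t(H+cI)\bigr)$ is entrywise nonnegative --- a trick that echoes the paper's own use of $H+cI$ in the Perron--Frobenius arguments of Sections \ref{sec:16} and \ref{sec:22}, though it is never invoked there for this purpose --- and the existence of a bounded generator for an arbitrary Markov semigroup via the averaging operator $V(s)=\int_0^s U(t)\,dt$, which the paper sidesteps entirely (in the infinite-dimensional discussion it simply appeals to Hille--Yosida). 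The only step you should spell out when writing this up is the ``short argument'' upgrading the right derivative at $0$ to $U(t)=\exp(tH)$ for all $t\ge 0$: note that $H$ commutes with each $U(t)$, compute both one-sided derivatives of $U$ at $t>0$, and conclude that $e^{-tH}U(t)$ is constant. With that in place your proof is complete and is the natural finite-dimensional argument the theorem calls for.
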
 

In short, time evolution in stochastic mechanics is a lot like time evolution in quantum mechanics, except it's typically not invertible, and the Hamiltonian is typically \emph{not an observable}.  Why not?  Because we defined an observable to be a function $A \colon X \to \mathbb{R}$.  We can think of this as giving an operator on $L^1(X)$, namely the operator of multiplication by $A$.  That's a nice trick, which we used to good effect in Section~\ref{sec:10}.  However, at least when $X$ is a finite set, this operator will be diagonal in the obvious basis consisting of functions that equal 1 one point of $X$ and zero elsewhere.  So, it can only be infinitesimal stochastic if it's zero!

\begin{problem}\label{prob:17} 
If $X$ is a finite set, show that any operator on $L^1(X)$ that's both diagonal and infinitesimal stochastic must be zero.
\end{problem} 

\subsection[The Hille--Yosida theorem]{The Hille--Yosida theorem}\index{Hille--Yosida theorem|(}
\label{sec:11_hille-yosida}

We've now told you everything you really need to know about the broad analogy between quantum mechanics and stochastic mechanics... but not everything we want to say.   What happens when $X$ is not a finite set?   What are Markov semigroups like then?  We can't abide letting this question go unresolved!  Unfortunately we only know a partial answer.   We can get a certain distance using the `Hille--Yosida theorem', but this theorem does not completely characterize Markov semigroups.  Still, it's worth knowing.  

We need a few concepts to get started:

\begin{definition} 
A \href{http://en.wikipedia.org/wiki/Banach_space}{\bf Banach space} is a vector space with a \href{http://en.wikipedia.org/wiki/Norm\_\%28mathematics\%29\#Definition}{norm} such that any Cauchy sequence\index{Cauchy sequence} converges.
\end{definition}

\noindent
Examples include Hilbert spaces like $L^2(X)$ for any measure space\index{measure space}, but also other spaces like $L^1(X)$ for any measure space!

\begin{definition} 
If $V$ is a Banach space, a 1-parameter semigroup of operators $U(t)\colon V \to V$ is called a {\bf contraction semigroup} \index{semigroup!contraction}\index{contraction semigroup} if it is strongly continuous and 
$$ \| U(t) \psi \| \le \| \psi \| $$
for all $t \ge 0 $ and all $\psi \in V$. 
\end{definition} 

\noindent
Examples include strongly continuous 1-parameter unitary groups, but also Markov semigroups!\index{group theory!unitary!1-parameter}

\begin{problem}\label{prob:18} 
Show any Markov semigroup is a contraction semigroup. \index{semigroup!Markov}
\index{semigroup!contraction} \index{Markov semigroup} \index{contraction semigroup}
\end{problem} 

The Hille--Yosida theorem generalizes Stone's theorem to contraction semigroups:\index{Stone's theorem!generalization to contraction semigroups}\index{semigroup!contraction}  \index{contraction semigroup}

\begin{theorem}[{\bf  \href{http://en.wikipedia.org/wiki/Hille\%E2\%80\%93Yosida\_theorem}{Hille--Yosida Theorem}}] \index{Hille--Yosida theorem}
Given a contraction semigroup $U(t)$ we can always write 
$$ U(t) = \exp(t H)$$ 
for some \href{http://en.wikipedia.org/wiki/Densely\_defined\_operator}{densely defined} operator $H$ such that $H - \lambda I $ has an inverse and
$$ \displaystyle{ \| (H - \lambda I)^{-1} \psi \| \le \frac{1}{\lambda} \| \psi \|} $$
for all $\lambda > 0 $ and $\psi \in V$.  Conversely, any such operator determines a strongly continuous 1-parameter group.  For all vectors $\psi$ for which $H \psi$ is well-defined, we have
$$ \left.\frac{d}{d t} U(t) \psi \right|_{t = 0} = H \psi  $$
Moreover, for any of these vectors, if we set
$$ \psi(t) = U(t) \psi $$
we have
$$ \frac{d}{d t} \psi(t) = H \psi(t) $$
\end{theorem}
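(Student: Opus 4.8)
The plan is to prove the two directions separately; this statement is the Hille--Yosida generation theorem for contraction semigroups, and I would follow the classical route via the resolvent and the Yosida approximation. \textbf{Forward direction.} Given a contraction semigroup $U(t)$ on $V$, I would define its generator $H$ on the domain $D(H)$ of those $\psi$ for which the strong limit $\lim_{t\to 0^+} t^{-1}(U(t)\psi - \psi)$ exists, and set $H\psi$ equal to that limit. First one checks $D(H)$ is dense: for any $\psi$ and $h > 0$ the averaged vector $\psi_h := h^{-1}\int_0^h U(s)\psi\, ds$ lies in $D(H)$ (its difference quotient has a limit by strong continuity, since applying $h^{-1}\int_0^h$ turns $U(t)-1$ into $(th)^{-1}$ times a difference of integrals), and $\psi_h \to \psi$ as $h \to 0$. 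Then introduce the resolvent as a Laplace transform, $R(\lambda)\psi := \int_0^\infty e^{-\lambda t} U(t)\psi\, dt$ for $\lambda > 0$; contractivity gives $\|R(\lambda)\psi\| \le \lambda^{-1}\|\psi\|$ at once, and manipulating the semigroup law shows $R(\lambda)$ maps $V$ into $D(H)$ with $(\lambda I - H)R(\lambda) = I$ on $V$ and $R(\lambda)(\lambda I - H) = I$ on $D(H)$. Hence $(H - \lambda I)^{-1} = -R(\lambda)$ is everywhere defined and bounded with the stated estimate; in particular $H$ is closed. Differentiating the identity $U(t)\psi = \psi + \int_0^t U(s)H\psi\, ds$, valid for $\psi\in D(H)$, gives the remaining claims.

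\textbf{Backward direction.} Conversely, let $H$ be densely defined with $R(\lambda) := (\lambda I - H)^{-1}$ bounded and $\|\lambda R(\lambda)\| \le 1$ for all $\lambda > 0$. I would first show $\lambda R(\lambda)\psi \to \psi$ as $\lambda \to \infty$ — for $\psi\in D(H)$ because $\lambda R(\lambda)\psi - \psi = R(\lambda)H\psi$ tends to $0$, then for all $\psi$ by density and uniform boundedness. Introduce the Yosida approximant $H_\lambda := \lambda H R(\lambda) = \lambda^2 R(\lambda) - \lambda I$, a \emph{bounded} operator with $H_\lambda\psi = \lambda R(\lambda) H\psi \to H\psi$ for $\psi \in D(H)$. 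Since $H_\lambda$ is bounded, $U_\lambda(t) := \exp(t H_\lambda)$ makes sense and $\|U_\lambda(t)\| \le e^{-\lambda t}\exp(\lambda^2 t\|R(\lambda)\|) \le e^{-\lambda t}e^{\lambda t} = 1$. The operators $H_\lambda$, $H_\mu$ and $U_\lambda(s)$, $U_\mu(s)$ all commute (they are functions of the mutually commuting resolvents), so the telescoping identity
\begin{equation*}
U_\mu(t)\psi - U_\lambda(t)\psi = \int_0^t U_\lambda(t-s)U_\mu(s)\,(H_\mu - H_\lambda)\psi\, ds
\end{equation*}
together with contractivity yields $\|U_\mu(t)\psi - U_\lambda(t)\psi\| \le t\,\|H_\mu\psi - H_\lambda\psi\|$ for $\psi\in D(H)$. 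Thus $U_\lambda(t)\psi$ is Cauchy as $\lambda\to\infty$, uniformly on compact $t$-intervals, first on $D(H)$ and then on all of $V$; set $U(t)\psi := \lim_{\lambda\to\infty}U_\lambda(t)\psi$. Uniform convergence on compacta makes $U(t)$ strongly continuous, and contractivity and the semigroup law pass to the limit. Finally, passing to the limit in $U_\lambda(t)\psi = \psi + \int_0^t U_\lambda(s)H_\lambda\psi\, ds$ for $\psi\in D(H)$ gives $U(t)\psi = \psi + \int_0^t U(s)H\psi\, ds$, hence $\frac{d}{dt}U(t)\psi|_{t=0} = H\psi$; that the generator of $U(t)$ is exactly $H$, with no larger domain, follows since both operators have the same $\lambda$ in their resolvent set with the same bounded inverse, forcing the two closed operators to agree.

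\textbf{Main obstacle.} The Laplace-transform computations and the contraction bounds are routine; the delicate step is the backward direction's convergence argument — showing $U_\lambda(t)\psi$ converges and that the limit is a semigroup whose generator is precisely $H$ rather than a proper extension. The crucial tool is the single estimate $\|\lambda R(\lambda)\|\le 1$, invoked three times: to make the Yosida approximants generate contractions, to extend strong convergence from $D(H)$ to all of $V$ by density, and to pin down the generator's domain via uniqueness of the resolvent.
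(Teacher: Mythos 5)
Your proposal is the standard, correct Hille--Yosida argument: generator defined via difference quotients, density of the domain by averaging, resolvent as a Laplace transform with the contraction bound $\|R(\lambda)\psi\|\le \lambda^{-1}\|\psi\|$, and in the converse direction the Yosida approximants $H_\lambda = \lambda^2 R(\lambda)-\lambda I$, the telescoping estimate $\|U_\mu(t)\psi - U_\lambda(t)\psi\|\le t\|H_\mu\psi - H_\lambda\psi\|$, and identification of the generator with $H$ (rather than a proper extension) via injectivity/surjectivity of $\lambda I - H$. There is nothing in the paper to compare this against: the text states the theorem without proof and points the reader to Ma and R\"ockner's book for a self-contained one, so your write-up supplies an argument the course deliberately omits. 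Two cosmetic remarks: you correctly read the paper's condition on $(H-\lambda I)^{-1}$ as an everywhere-defined bounded inverse (so $(H-\lambda I)^{-1} = -R(\lambda)$), and the phrase ``strongly continuous 1-parameter group'' in the paper's statement of the converse is a slip for ``semigroup,'' which is what your construction produces.
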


\noindent
If you like, you can take the stuff at the end of this theorem to be what we mean by saying $U(t) = \exp(t H)$.   When $ U(t) = \exp(t H)$, we say that $ H$ {\bf generates} the semigroup $ U(t)$.

But now suppose $V = L^1(X)$.  Besides the conditions in the Hille--Yosida theorem, what \emph{extra}  conditions on $H$ are necessary and sufficient for $H$ to generate a Markov semigroup?  In other words, what's a definition of `infinitesimal stochastic operator' that's suitable not only when $X$ is a finite set, but an arbitrary measure space?\index{measure space}

The best answer we know, not completely satisfactory, seems to be here:

\begin{enumerate}
\item[\cite{MR92}]  Zhi-Ming Ma and Michael R\"{o}ckner, \emph{Introduction to the Theory of (Non-Symmetric) Dirichlet Forms}, Springer, Berlin, 1992.
\end{enumerate}

\noindent 
This book provides a very nice self-contained proof of the
Hille--Yosida theorem.  On the other hand, it does \emph{not} answer
the question in general, but only when the skew-symmetric part of $H$
is dominated (in a certain sense) by the symmetric part.  So, we're
stuck on this front, but that needn't bring the whole project to a
halt.  We'll just sidestep this question.

For a good well-rounded introduction to Markov semigroups and what
they're good for, try:

\begin{enumerate}
\item[\cite{RPT02}]  Ryszard Rudnicki, Katarzyna Pich\`{o}r and Marta Tyran-Kam\`{i}nska, \href{http://www.impan.pl/~rams/r48-ladek.pdf}{Markov semigroups and their applications}.
\end{enumerate} 

\index{Hille--Yosida theorem|)} 

\subsection[Answers]{Answers}
\label{sec:11_answers}

\vskip 1em \noindent
{\bf Problem 15.} Suppose $X$ is a finite set.  Show that any isometry $U \colon L^2(X) \to L^2(X)$ is invertible, and its inverse is again an isometry.\index{quantum mechanics!isometry}

\begin{answer}
Remember that $U$ being an isometry means that it preserves the inner product:\index{quantum mechanics!isometry}
$$\langle U \psi, U \phi \rangle = \langle \psi, \phi \rangle $$
and thus it preserves the $L^2$ norm
$$\|U \psi \| = \| \psi \|  $$
given by $\| \psi \| = \langle \psi, \psi \rangle^{1/2}$.  It follows that if $U\psi = 0$, then $\psi = 0$, so $U$ is one-to-one.  Since $U$ is a linear operator from a \emph{finite-dimensional}  vector space to itself, $U$ must therefore also be onto.  Thus $U$ is invertible, and because $U$ preserves the inner product, so does its inverse: given $\psi, \phi \in L^2(X)$ we have
$$\langle U^{-1} \phi, U^{-1} \psi \rangle = \langle \phi, \psi \rangle $$
since we can write $\phi' = U^{-1} \phi,$ $\psi' = U^{-1} \psi$  and then the above equation says
$$ \langle \phi' , \psi' \rangle = \langle U \phi' , U \psi' \rangle $$
\end{answer} 

\vskip 1em \noindent {\bf Problem 16.} 
Suppose $X$ is a finite set.  Which stochastic operators $U \colon L^1(X) \to L^1(X)$ have stochastic inverses?

\begin{answer}
Suppose the set $ X$ has $ n$ points.  Then the set of stochastic states 
$$ S = \{ \psi \colon X \to \mathbb{R} \; : \; \psi \ge 0, \quad \int \psi = 1 \} $$
is a \href{http://en.wikipedia.org/wiki/Simplex}{simplex}.  That is, it's an equilateral triangle when $ n = 3$, a regular tetrahedron when $ n = 4$, and so on.
\end{answer} 

\begin{center}
 \includegraphics[width=30mm]{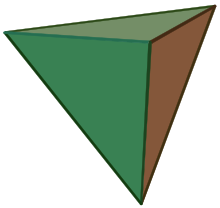}
\end{center}

\noindent
In general, $S$ has $ n$ corners, which are the functions $ \psi$ that equal 1 at one point of $ S$ and zero elsewhere.  Mathematically speaking, $S$ is a {\bf \href{http://en.wikipedia.org/wiki/Convex\_set}{convex set}}, and its corners are its {\bf \href{http://en.wikipedia.org/wiki/Extreme_point}{extreme points}}: the points that can't be written as {\bf \href{http://en.wikipedia.org/wiki/Convex\_combination}{convex combinations}} of other points of $S$ in a nontrivial way.

Any stochastic operator $ U$ must map $ S$ into itself, so if $U$ has an inverse that's also a stochastic operator, it must give a bijection $ U \colon S \to S$.   Any linear transformation acting as a bijection between convex sets must map extreme points to extreme points (this is easy to check), so $ U$ must map corners to corners in a bijective way.  This implies that it comes from a permutation of the points in $ X$.

In other words, any stochastic matrix with an inverse that's also stochastic is a {\bf \href{http://en.wikipedia.org/wiki/Permutation\_matrix}{permutation matrix}}: a square matrix with every entry 0 except for a single 1 in each row and each column.   \index{matrix!permutation} \index{permutation matrix}  For another proof, see the answer to Problem \ref{prob:23}, which appears in Section \ref{sec:16_answers}.

It is worth adding that there are lots of stochastic operators whose inverses are not, in general, stochastic.  We can see this in at least two ways.

First, for any measure space\index{measure space} $X$, every stochastic operator $U \colon L^1(X) \to L^1(X)$ that's `close to the identity' in this sense:
$$ \| U - I \| < 1 $$
(where the norm is the \href{http://en.wikipedia.org/wiki/Operator\_norm}{operator norm}) will be invertible, simply because \emph{every}  operator obeying this inequality is invertible!  After all, if this inequality holds, we have a convergent geometric series:
$$ \displaystyle{ U^{-1} = \frac{1}{I - (I - U)} = \sum_{n = 0}^\infty (I - U)^n} $$
Second, suppose $X$ is a finite set and $H$ is infinitesimal stochastic operator on $ L^1(X)$.  Then $H$ is bounded, so the stochastic operator $\exp(t H)$ where $t \ge 0$ will always have an inverse, namely $\exp(-t H)$.  But for $t$ sufficiently small, this inverse $\exp(-tH)$ will only be stochastic if $-H$ is infinitesimal stochastic, and that's only true if $H = 0$.

In something more like plain English: when you've got a finite set of states, you can formally run any Markov process backwards in time, but a lot of those `backwards-in-time' operators will involve negative probabilities for the system to hop from one state to another!

\vskip 1em \noindent {\bf Problem 17.}  
If $X$ is a finite set, show that any operator on $L^1(X)$ that's both diagonal and infinitesimal stochastic must be zero.

\begin{answer}
We are thinking of operators on $L^1(X)$ as matrices with respect to the obvious basis of functions that equal 1 at one point and 0 elsewhere.  If $H_{i j}$ is an infinitesimal stochastic matrix, the sum of the entries in each column is zero.  If it's diagonal, there's at most one nonzero entry in each column.  So, we must have $H = 0$.
\end{answer} 

\vskip 1em \noindent {\bf Problem 18.} 
Show any Markov semigroup $ U(t)\colon L^1(X) \to L^1(X)$ is a contraction semigroup.

\begin{answer}
We need to show
$$ \|U(t) \psi\| \le \| \psi \| $$
for all $t \ge 0$ and $\psi \in L^1(X)$.  Here the norm is the $L^1$ norm, so more explicitly we need to show
$$ \int |U(t) \psi | \le \int |\psi| $$
We can split $ \psi$ into its positive and negative parts:
$$\psi = \psi_+ - \psi_-$$
where
$$ \psi_{\pm} \ge 0$$
Since $ U(t)$ is stochastic we have
$$ U(t) \psi_{\pm} \ge 0$$
and
$$ \int U(t) \psi_\pm = \int \psi_\pm $$
so
$$ \begin{array}{ccl}  \int |U(t) \psi | &=& \int |U(t) \psi_+ - U(t) \psi_-| \\  &\leq & \int |U(t) \psi_+| + |U(t) \psi_-| \\  &=& \int  U(t) \psi_+ + U(t) \psi_-  \\ &=& \int \psi_+ + \psi_- \\ &=&  \int |\psi|  \end{array} $$
\end{answer} 


\newpage
\section[Noether's theorem: quantum vs stochastic]{Noether's theorem: quantum vs stochastic}
\label{sec:12}\index{Noether's theorem!quantum vs stochastic|(}  

In this chapter we merely want to show you the quantum and stochastic
versions of Noether's theorem, side by side.  Having made our
sacrificial offering to the math gods last time by explaining how
everything generalizes when we replace our finite set $X$ of states by
an infinite set or an even more general
\href{http://en.wikipedia.org/wiki/Measure\_space}{measure space},\index{measure space}
we'll now relax and state Noether's theorem only for a finite set.  If
you're the sort of person who finds that unsatisfactory, you can do
the generalization yourself.

\subsection{Two versions of Noether's theorem}

Let us write the quantum and stochastic Noether's theorem so they look almost the same:

\begin{theorem}\index{Noether's theorem!quantum version} 
\label{theorem:quantum-noether}  
Let $X$ be a finite set.  Suppose $H$ is a self-adjoint operator on $L^2(X)$, and let $O$ be an observable.  Then 
$$[O,H] = 0 $$
if and only if for all states $\psi(t)$ obeying Schr\"{o}dinger's equation
$$ \displaystyle{ \frac{d}{d t} \psi(t) = -i H \psi(t)} $$
the expected value of $O$ in the state $\psi(t)$ does not change with $t.$
\end{theorem}

\begin{theorem}  
Let $X$ be a finite set.  Suppose $H$ is an infinitesimal stochastic operator on $L^1(X)$, and let $O$ be an observable.  Then 
 $$[O,H] =0 $$
if and only if for all states $\psi(t)$ obeying the master equation
$$ \displaystyle{ \frac{d}{d t} \psi(t) = H \psi(t)} $$
the expected values of $O$ and $O^2$ in the state $\psi(t)$ do not change with $t.$
\end{theorem}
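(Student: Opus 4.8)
The plan is to recycle, essentially verbatim, the argument already used to prove the Markov-process version of Noether's theorem in Section~\ref{sec:10}: the present statement is just that theorem, specialized to the case where time evolution on a finite set is generated by an infinitesimal stochastic $H$. So I would begin by recalling the Lemma of Section~\ref{sec:10}: for any infinitesimal stochastic $H$ and any observable $O$, a solution $\psi(t)$ of the master equation satisfies $\frac{d}{dt}\int O\psi(t) = \int [O,H]\psi(t)$, and moreover $\int OH\psi(t) = \int [O,H]\psi(t)$ since $\int H\phi = 0$ for every $\phi$.

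For the forward implication, suppose $[O,H]=0$. Then $O$ commutes with every power $O^n$, hence $[O^n,H]=0$, and applying the Lemma to $O^n$ gives $\frac{d}{dt}\int O^n\psi(t) = \int[O^n,H]\psi(t)=0$ for all $n$. Taking $n=1$ and $n=2$ yields exactly the stated conclusion (and in fact shows the expected value of \emph{every} power of $O$ is conserved).

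For the converse, assume $\frac{d}{dt}\int O\psi(t) = \frac{d}{dt}\int O^2\psi(t) = 0$ for every solution of the master equation. By the Lemma this says $\int OH\psi(t) = \int O^2 H\psi(t) = 0$ for all such $\psi$. The key move is that, since this holds for \emph{all} solutions and any Kronecker delta $\delta_j$ is a legitimate stochastic state, I may differentiate at $t=0$ with $\psi(0)=\delta_j$ to obtain the entrywise identities $\sum_{i} O_i H_{ij} = 0$ and $\sum_i O_i^2 H_{ij} = 0$ for every $j \in X$. Viewing $O$ as the diagonal matrix with entries $O_i$, one computes $[O,H]_{ij} = (O_i - O_j)H_{ij}$, so it suffices to show $O_i = O_j$ whenever $H_{ij}\neq 0$. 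Fixing $j$ and completing the square, $\sum_i (O_j-O_i)^2 H_{ij} = O_j^2\sum_i H_{ij} - 2O_j\sum_i O_i H_{ij} + \sum_i O_i^2 H_{ij}$; the first term vanishes because $\sum_i H_{ij}=0$ ($H$ infinitesimal stochastic) and the remaining two vanish by the identities just derived. Since $(O_j-O_i)^2 H_{ij}\geq 0$ for $i\neq j$ (the off-diagonal entries of $H$ are nonnegative), a vanishing sum of nonnegative terms forces each term to vanish, so $O_i=O_j$ whenever $H_{ij}\neq 0$, i.e. $[O,H]=0$.

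The point that needs real care is the converse, and in particular why the hypothesis on $O^2$ cannot be dropped: the Markov-chain counterexample of Section~\ref{sec:10} (which, as noted there, adapts to Markov processes) shows that conservation of $\int O\psi$ alone does not imply $[O,H]=0$. The $O^2$ condition is precisely what makes the completed-square expression $\sum_i (O_j-O_i)^2 H_{ij}$ collapse, after which nonnegativity of the off-diagonal entries of $H$ finishes the job. Everything else is routine bookkeeping: turning the ``holds for all $\psi(t)$'' hypotheses into statements about matrix entries of $H$ by testing against delta functions.
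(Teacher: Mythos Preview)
Your proposal is correct and follows essentially the same route as the paper's proof in Section~\ref{sec:10}: the Lemma, the forward direction via $[O^n,H]=0$, and the converse via the vanishing of $\sum_i (O_j-O_i)^2 H_{ij}$ using column-sum-zero plus the two identities. One small wording slip: in the forward direction you wrote ``$O$ commutes with every power $O^n$, hence $[O^n,H]=0$'' --- you of course mean that $H$ commutes with every power $O^n$ (since $[O,H]=0$), which is what the paper also uses.
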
 

This makes the big difference stick out like a sore thumb: in the quantum version we only need the expected value of $O$, while in the stochastic version we need the expected values of $O$ and $O^2$!

We've already proved the stochastic version of Noether's theorem.  Now let's do the quantum version.

\subsection[Proof]{Proof of the quantum version}

Our statement of the quantum version was silly in a couple of ways.  First, we spoke of the Hilbert space $L^2(X)$ for a finite set $X$, but any finite-dimensional Hilbert space will do equally well.  Second, we spoke of the ``self-adjoint operator" $H$ and the ``observable" $O$, but in quantum mechanics an observable is the \emph{same thing}  as a self-adjoint operator! 

Why did we talk in such a silly way?  Because we were attempting to emphasize the similarity between quantum mechanics and stochastic mechanics.   But they're somewhat different.  For example, in stochastic mechanics we have two very different concepts: infinitesimal stochastic operators, which \emph{generate symmetries}, and functions on our set $X,$ which \emph{are observables}.  But in quantum mechanics something wonderful happens: self-adjoint operators both \emph{generate symmetries}  and \emph{are observables!}   So, our attempt was a bit strained.  
\index{infinitesimal stochastic operator|)} 
\index{operator!infinitesimal stochastic|)}

Let us state and prove a less silly quantum version of Noether's theorem, which implies the one above:

\begin{theorem}\index{Noether's theorem!quantum version} 
Suppose $H$ and $O$ are self-adjoint operators on a finite-dimensional Hilbert space.  Then 
$$[O,H] = 0 $$
if and only if for all states $\psi(t)$ obeying Schr\"{o}dinger's equation\index{quantum mechanics!Schr\"{o}dinger's equation}
$$ \displaystyle{ \frac{d}{d t} \psi(t) = -i H \psi(t)} $$
the expected value of $O$ in the state $\psi(t)$ does not change with $t$:
$$ \displaystyle{ \frac{d}{d t} \langle \psi(t), O \psi(t) \rangle = 0} $$\end{theorem}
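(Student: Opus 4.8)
The plan is to differentiate the expected value $\langle \psi(t), O\psi(t)\rangle$ directly along a solution of Schr\"odinger's equation, rewrite the result as $-i\langle \psi(t),[O,H]\psi(t)\rangle$ using self-adjointness of $H$, and then read off both implications from this single identity. The only step with any real content is a standard fact about quadratic forms on \emph{complex} Hilbert spaces, and I would highlight it because it is exactly what fails in the stochastic setting.

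First I would compute, for $\psi(t)$ obeying $\frac{d}{dt}\psi(t) = -iH\psi(t)$,
$$ \frac{d}{dt}\langle\psi(t),O\psi(t)\rangle = \langle -iH\psi(t),O\psi(t)\rangle + \langle\psi(t),O(-iH\psi(t))\rangle. $$
Since the inner product is conjugate-linear in the first slot, the first term is $i\langle H\psi(t),O\psi(t)\rangle = i\langle\psi(t),HO\psi(t)\rangle$ by self-adjointness of $H$, and the second term is $-i\langle\psi(t),OH\psi(t)\rangle$. Adding gives $\frac{d}{dt}\langle\psi(t),O\psi(t)\rangle = i\langle\psi(t),(HO-OH)\psi(t)\rangle = -i\langle\psi(t),[O,H]\psi(t)\rangle$. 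The forward direction is then immediate: $[O,H]=0$ kills the right-hand side for every solution.

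For the converse I would use the baby version of Stone's theorem quoted above: every vector $\phi$ equals $\psi(0)$ for the solution $\psi(t)=\exp(-itH)\phi$, so the hypothesis forces $\langle\phi,[O,H]\phi\rangle = 0$ for all $\phi$. Now observe that $[O,H]^\dagger = (OH-HO)^\dagger = HO - OH = -[O,H]$ since $O,H$ are self-adjoint, so $A := i[O,H]$ is self-adjoint and has $\langle\phi,A\phi\rangle = 0$ for all $\phi$. On a complex inner product space this forces $A=0$: by the polarization identity $\langle\phi,A\chi\rangle$ is a linear combination of the four numbers $\langle\xi,A\xi\rangle$ with $\xi \in \{\phi+\chi,\ \phi-\chi,\ \phi+i\chi,\ \phi-i\chi\}$, each of which vanishes. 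Hence $[O,H]=0$.

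The step I expect to be the crux — indeed the only conceptually substantive one — is this last polarization argument: a Hermitian operator whose quadratic form vanishes identically must be zero. This is where the complex scalars are doing essential work, and it is precisely the ingredient unavailable in the stochastic Noether theorem, where the relevant operator acts real-linearly on a real space; that is why the stochastic version needed the extra assumption on $O^2$. I would make this contrast explicit at the end, since it is the whole point of displaying the two theorems side by side.
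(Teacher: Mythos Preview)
Your proof is correct and follows essentially the same route as the paper's: differentiate $\langle\psi(t),O\psi(t)\rangle$ to obtain $-i\langle\psi(t),[O,H]\psi(t)\rangle$, read off the forward direction, and for the converse note that $i[O,H]$ is self-adjoint with vanishing quadratic form, whence the polarization identity forces it to be zero. Your closing remark contrasting this with the stochastic case is exactly the point the paper makes in the subsequent ``Comparison'' subsection.
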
 

\begin{proof} 
The trick is to compute the time derivative we just wrote down.  Using Schr\"{o}dinger's equation\index{quantum mechanics!Schr\"{o}dinger's equation}, the product rule, and the fact that $H$ is self-adjoint we get: 
$$\begin{array}{ccl}  \displaystyle{ \frac{d}{d t} \langle \psi(t), O \psi(t) \rangle} &=& 
\langle -i H \psi(t) , O \psi(t) \rangle + \langle \psi(t) , O (- i H \psi(t)) \rangle \\  \\
&=& i \langle \psi(t) , H O \psi(t) \rangle -i \langle \psi(t) , O H \psi(t) \rangle \\  \\
&=& - i \langle \psi(t), [O,H] \psi(t) \rangle  \end{array} $$
So, if $[O,H] = 0$, clearly the above time derivative vanishes.  Conversely, if this time derivative vanishes for all states $\psi(t)$ obeying Schr\"{o}dinger's equation\index{quantum mechanics!Schr\"{o}dinger's equation}, we know 
$$\langle \psi, [O,H] \psi \rangle = 0 $$
for all states $\psi$ and thus all vectors in our Hilbert space.  Does this imply $[O,H] = 0$?  Yes, because $i$ times a commutator of a self-adjoint operator is self-adjoint, and for any self-adjoint operator $A$ we have
$$\forall \psi  \;  \langle \psi, A \psi \rangle = 0 \qquad \Rightarrow \qquad A = 0 $$
This is a well-known fact whose proof goes like this.  Assume $ \langle \psi, A \psi \rangle = 0$ for all $\psi.$ Then to show $A = 0,$ it is enough to show $\langle \phi, A \psi \rangle = 0$ for all $\phi$ and $\psi$.  But we have a marvelous identity:
$$ \begin{array}{ccl} \langle \phi, A \psi \rangle &=& \frac{1}{4} \left( \langle \psi + \phi, \, A (\psi + \phi) \rangle  -  \langle \psi - \phi, \, A (\psi - \phi) \rangle \right. \\ && \left. +i \langle \psi + i \phi, \, A (\psi + i \phi) \rangle  -  i\langle \psi - i \phi, \, A (\psi - i \phi) \rangle \right) \end{array} $$
and all four terms on the right vanish by our assumption.   
\end{proof} 

The marvelous identity\index{polarization identity} up there is called the {\bf \href{http://en.wikipedia.org/wiki/Polarization\_identity}{polarization identity}}.  In plain English, it says: if you know the diagonal entries of a self-adjoint matrix in every basis, you can figure out \emph{all}  the entries of that matrix in every basis.  Why is it called the `polarization identity'?  Presumably because it shows up in optics, in the study of polarized light.

\subsection[Comparison]{Comparison}

In both the quantum and stochastic cases, the time derivative of the expected value of an observable $O$ is expressed in terms of its commutator with the Hamiltonian.  In the quantum case we have
$$ \displaystyle{ \frac{d}{d t} \langle \psi(t), O \psi(t) \rangle = - i \langle \psi(t), [O,H] \psi(t) \rangle} $$
and for the right side to \emph{always}  vanish, we need $[O,H] = 0 $, thanks to the polarization identity.\index{polarization identity}  In the stochastic case, a perfectly analogous equation holds:
$$\displaystyle{ \frac{d}{d t} \int O \psi(t) = \int [O,H] \psi(t)} $$
but now the right side can always vanish even without $[O,H] = 0.$  We saw a counterexample in Section~\ref{sec:10}.  There is nothing like the polarization identity to save us!  To get $[O,H] = 0$ we need a supplementary hypothesis, for example the vanishing of 
$$\displaystyle{ \frac{d}{d t} \int O^2 \psi(t)} $$
Okay!  Starting in the next Section we'll change gears and look at some more examples of stochastic Petri nets and Markov processes, including some from chemistry.  After some more of that, we'll move on to networks of other sorts.  There's a really big picture here, and we're afraid we've been getting caught up in the details of a tiny corner.

\index{Noether's theorem!quantum vs stochastic|)}  


\newpage
\section[Chemistry and the Desargues graph]{Chemistry and the Desargues graph}\index{chemistry!Desargues graph}
\label{sec:13}

We've been doing a lot of hard work lately.  Let's take a break and
think about a fun example from chemistry!

\subsection{The ethyl cation}\index{chemistry!ethyl cation} 

Suppose you start with a molecule of \href{http://en.wikipedia.org/wiki/Ethane}{ethane}, which has 2 carbons and 6 hydrogens arranged like this:\index{chemistry!ethane} 

\begin{center}
\includegraphics[width=50mm]{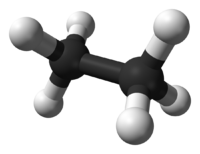}
\end{center}

\noindent 
Then suppose you remove one hydrogen.  The result is a positively charged ion, or `\href{http://en.wikipedia.org/wiki/Ion}{cation}'.  It would be cute if the opposite of a cation was called a `dogion'.  Alas, it's \href{http://en.wikipedia.org/wiki/Ion\#Anions_and_cations}{not}.

This particular cation, formed from removing one hydrogen from an ethane molecule,\index{chemistry!ethyl cation} 
is called an `ethyl cation'.  People used to think it looked like this:

\begin{center}
 \includegraphics[width=50mm]{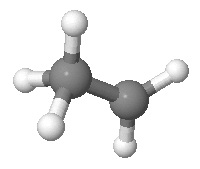}
\end{center}

They also thought a hydrogen could hop from the carbon with 3 hydrogens attached to it to the carbon with 2.  So, they drew a graph with a vertex for each way the hydrogens could be arranged, and an edge for each hop.  It looks really cool:

\begin{center}
 \includegraphics[width=80mm]{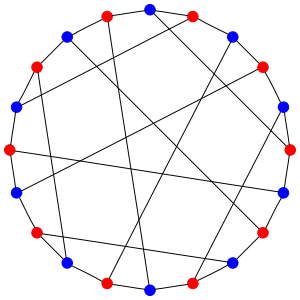}
\end{center}

\noindent
The red vertices come from arrangements where the \emph{first}  carbon has 2 hydrogens attached to it, and the blue vertices come from those where the \emph{second}  carbon has 2 hydrogens attached to it.  So, each edge goes between a red vertex and a blue vertex.  

This graph has 20 vertices, which are arrangements or `states' of the ethyl cation.  It has 30 edges, which are hops or `transitions'.  Let's see why those numbers are right.  

First we need to explain the rules of the game.  The rules say that the 2 carbon atoms are distinguishable: there's a `first' one and a `second' one.  The 5 hydrogen atoms are also distinguishable.  But, all we care about is which carbon atom each hydrogen is bonded to: we don't care about further details of its location.  And we require that 2 of the hydrogens are bonded to one carbon, and 3 to the other.

If you're a physicist, you may wonder why the rules work this way: after all, at a fundamental level, identical particles \emph{aren't}  really distinguishable.  We're afraid we can't give a fully convincing explanation right now: we're just reporting the rules!

Given these rules, there are 2 choices of which carbon has two hydrogens attached to it. Then there are 
$$ \displaystyle{ {5 \choose 2} = \frac{5 \times 4}{2 \times 1} = 10} $$ 
choices of which two hydrogens are attached to it.  This gives a total of $2 \times 10 = 20$ states.  These are the vertices of our graph: 10 red and 10 blue.

The edges of the graph are transitions between states.  Any hydrogen in the group of 3 can hop over to the group of 2.  There are 3 choices for which hydrogen atom makes the jump.  So, starting from any vertex in the graph there are 3 edges.  This means there are $3 \times 20 / 2 = 30$ edges.  

Why divide by 2?  Because each edge touches 2 vertices.  We have to avoid double-counting them.

\subsection{The Desargues graph}

The idea of using this graph in chemistry goes back to this paper:

\begin{itemize} 
 \item[\cite{BFB66}] A. T. Balaban, D. F\v{a}rca\c{s}iu, and R. B\v{a}nic\v{a}, Graphs of multiple 1,2-shifts in carbonium ions and related systems, \emph{Rev. Roum. Chim.\ }{\bf 11} (1966), 1205.
 \end{itemize} 
\noindent
This paper is famous because it was the first to use graphs in chemistry to describe molecular transitions, as opposed to using them as pictures of molecules!  But this particular graph was already famous for other reasons.  It's called the {\bf Desargues--Levi graph}, or {\bf Desargues graph} for short: \index{graph theory!Desargues graph|(}\index{Desargues graph|(}

\begin{itemize} 
 \item \href{http://en.wikipedia.org/wiki/Desargues\_graph}{Desargues graph}, Wikipedia.
\end{itemize} 

\noindent
Later we'll say why it's called this.  

There are lots of nice ways to draw the Desargues graph.  For example:

\begin{center}
 \includegraphics[width=80mm]{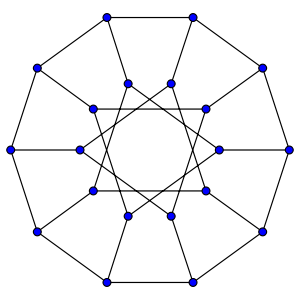}
\end{center}

\noindent
The reason why we can draw such pretty pictures is that the Desargues graph is very symmetrical.  Clearly any permutation of the 5 hydrogens acts as a symmetry of the graph, and so does any permutation of the 2 carbons.  This gives a symmetry group $S_5 \times S_2$, which has $5! \times 2! = 240$ elements.  And in fact this turns out to be the full symmetry group of the Desargues graph.

The Desargues graph, its symmetry group, and its applications to chemistry are discussed here:
\begin{itemize} 
 \item[\cite{Ran97}] Milan Randic, Symmetry properties of graphs of interest in chemistry: II: Desargues--Levi graph, \href{http://onlinelibrary.wiley.com/doi/10.1002/qua.560150610/abstract;jsessionid=0FB39FF9D7D985C1B6454CF4E30D9E92.d03t02}{\emph{Int. Jour. Quantum Chem.}}  {\bf 15} (1997), 663--682.
\end{itemize} 

\subsection{The ethyl cation, revisited}\index{chemistry!ethyl cation|(} 

We can try to describe the ethyl cation using probability theory.  If at any moment its state corresponds to some vertex of the Desargues graph, and it hops randomly along edges as time goes by, it will trace out a random walk\index{random walk!Desargues graph} on the Desargues graph.  This is a nice example of a \href{http://en.wikipedia.org/wiki/Continuous-time\_Markov\_process}{Markov process}!  

We could also try to describe the ethyl cation using quantum mechanics.   Then, instead of having a \emph{probability}  of hopping along an edge, it has an \emph{amplitude}  of doing so.   But as we've seen, a lot of similar math will still apply.  

It should be fun to compare the two approaches.  But we bet you're wondering which approach is correct.  This is a somewhat tricky question.  The answer would seem to depend on how much the ethyl cation is interacting with its environment---for example, bouncing off other molecules.  When a system is interacting a lot with its environment, a probabilistic approach seems to be more appropriate.  The relevant buzzword is \href{http://en.wikipedia.org/wiki/Quantum\_decoherence}{`environmentally induced decoherence'}.\index{quantum mechanics!decoherence}    
 
However, there's something much more basic we have tell you about.

After the paper by Balaban, Farcasiu, and Banic came out, people gradually realized that the ethyl cation \emph{doesn't really look like the drawing we showed you!}   It's what chemists call `nonclassical' ion.\index{chemistry!nonclassical ion}  What they mean is this: its actual structure is not what you get by taking the traditional ball-and-stick model of an ethane molecule and ripping off a hydrogen.  The ethyl cation really looks like this:


\begin{center}
 \includegraphics[width=50mm]{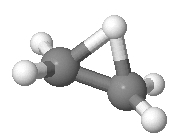}
\end{center}

\noindent

This picture comes from here:
\begin{itemize} 
 \item Stephen Bacharach, Ethyl cation, \href{http://comporgchem.com/blog/?p=60}{Computational Organic Chemistry}.
\end{itemize} 
If you go to this website, you'll find more details, and pictures that you can actually rotate.

\subsection{Trigonal bipyramidal molecules}

In short: if we stubbornly insist on applying the Desargues graph to realistic chemistry, the ethyl cation won't do: we need to find some other molecule to apply it to.  Luckily, there are lots of options!  They're called \href{http://en.wikipedia.org/wiki/Trigonal\_bipyramidal\_molecular\_geometry}{trigonal bipyramidal molecules}.  They look like this:

\begin{center}
 \includegraphics[width=40mm]{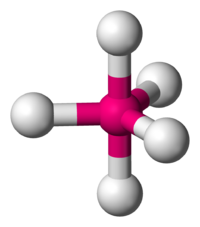}
\end{center}

\noindent
The 5 balls\index{chemistry!ligand} on the outside are called `ligands': they could be atoms or bunches of atoms.   In chemistry, `\href{http://en.wikipedia.org/wiki/Ligand}{ligand}' just means something that's stuck onto a central thing.   For example, in \href{http://en.wikipedia.org/wiki/Phosphorus\_pentachloride}{phosphorus pentachloride} the ligands are chlorine atoms, all attached to a central phosphorus atom:

\begin{center}
 \includegraphics[width=40mm]{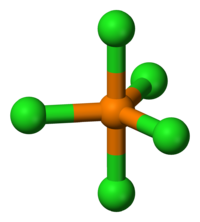}
\end{center}

\noindent
It's a colorless solid\index{chemistry!solid}, but as you might expect, it's pretty \href{http://www.cbwinfo.com/Chemical/Precursors/p36.html}{nasty stuff}: it's not flammable, but it reacts with water or heat to produce toxic chemicals like hydrogen chloride\index{chemistry!hydrogen chloride}.

Another example is \href{http://en.wikipedia.org/wiki/Iron\_pentacarbonyl}{iron pentacarbonyl}, where 5 carbon-oxygen ligands are attached to a central iron atom:

\begin{center}
 \includegraphics[width=40mm]{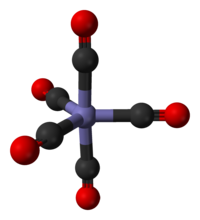}
\end{center}

\noindent
You can make this stuff by letting powdered iron react with carbon monoxide.  It's a straw-colored liquid with a pungent smell!

Whenever you've got a molecule of this shape, the ligands come in two kinds.  There are the 2 `axial' ones, and the 3 `equatorial' ones:

\begin{center}
 \includegraphics[width=40mm]{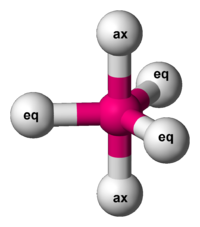}
\end{center}

The molecule has 20 states... but only if count the states a certain way.   We have to treat all 5 ligands as distinguishable, but think of two arrangements of them as the same if we can rotate one to get the other.   The trigonal bipyramid has a rotational symmetry group with 6 elements.  So, there are 5! / 6 = 20 states.  

The transitions between states are devilishly tricky.  They're called {\bf pseudorotations}, and they look like this:

\begin{center}
 \includegraphics[width=80mm]{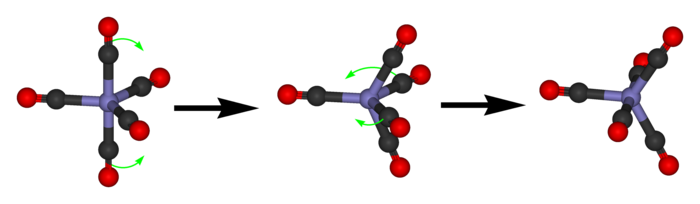}
\end{center}

\noindent
If you look very carefully, you'll see what's going on.  First the 2 axial ligands move towards each other to become equatorial.  Now the equatorial ones are no longer in the horizontal plane: they're in the plane facing us!   Then 2 of the 3 equatorial ones swing out to become axial.    This fancy dance is called the \href{http://en.wikipedia.org/wiki/Berry\_mechanism}{Berry pseudorotation mechanism}.

To get from one state to another this way, we have to pick 2 of the 3 equatorial ligands to swing out and become axial.  There are 3 choices here.   So, if we draw a graph with states as vertices and transitions as edges, it will have 20 vertices and $20 \times 3 / 2 = 30$ edges.  That sounds suspiciously like the  Desargues graph\index{graph theory!Desargues graph}!

\begin{problem}\label{prob:19} 
Show that the graph with states of a trigonal bipyramidal molecule as vertices and pseudorotations as edges is indeed the Desargues graph.
\end{problem} 

It seems this fact was first noticed here:

\begin{enumerate} 
 \item[\cite{LR68}] Paul C. Lauterbur and Fausto Ramirez, Pseudorotation in trigonal-bipyramidal molecules, \href{http://pubs.acs.org/doi/abs/10.1021/ja01026a029}{\emph{J. Am. Chem. Soc.}} {\bf 90} (1968), 6722.6726.
\end{enumerate} 

In the next section we'll say more about the Markov process or quantum process corresponding to a random walk on the Desargues graph.  But since the Berry pseudorotation mechanism is so hard to visualize, we'll \emph{pretend} that the ethyl cation looks like this:

\begin{center}
 \includegraphics[width=50mm]{ethyl_cation_classical.jpg}
\end{center}

\noindent
and we'll use this picture to help us think about the Desargues graph.

That's okay: everything we'll figure out can easily be translated to apply to the real-world situation of a trigonal bipyramidal molecule.  The virtue of math is that when two situations are `mathematically the same', or `isomorphic', we can talk about either one, and the results automatically apply to the other.  This is true even if the one we talk about doesn't actually exist in the real world!

\subsection{Drawing the  Desargues graph}

Before we quit, let's think a bit more about how we draw the Desargues graph.  For this it's probably easiest to go back to our naive model of an ethyl cation:

\begin{center}
 \includegraphics[width=50mm]{ethyl_cation_classical.jpg}
\end{center}

Even though ethyl cations don't really look like this, and we should be talking about some trigonal bipyramidal molecule instead, it won't affect the math to come.  Mathematically, the two problems are isomorphic!  So let's stick with this nice simple picture.

We can be a bit more abstract, though.  A state of the ethyl cation is like having 5 balls, with 3 in one pile and 2 in the other.   And we can focus on the first pile and forget the second, because whatever isn't in the first pile must be in the second.

Of course a mathematician calls a pile of things a `set', and calls those things `elements'.  So let's say we've got a set with 5 elements.  Draw a red dot for each 2-element subset, and a blue dot for each 3-element subset.   Draw an edge between a red dot and a blue dot whenever the 2-element subset is \emph{contained}  in the 3-element subset.  We get the Desargues graph.

That's true by definition.  But we never proved that this graph looks like this:

\begin{center}
 \includegraphics[width=80mm]{300px-Desargues_graph_2COL.png}
\end{center}

We won't now, either.  We'll just leave it as a problem:

\begin{problem}\label{prob:20} 
If we define the Desargues graph to have vertices corresponding to 2- and 3-element subsets of a 5-element set, with an edge between vertices when one subset is contained in another, why does it look like the picture above?
\end{problem} 

To draw a picture we know is correct, it's actually easier to start with a big graph that has vertices for \emph{all}  the subsets of our 5-element set.   If we draw an edge whenever an $ n$-element subset is contained in an $ (n+1)$-element subset, the Desargues graph will be sitting inside this big graph.   

Here's what the big graph looks like:

\begin{center}
 \includegraphics[width=60mm]{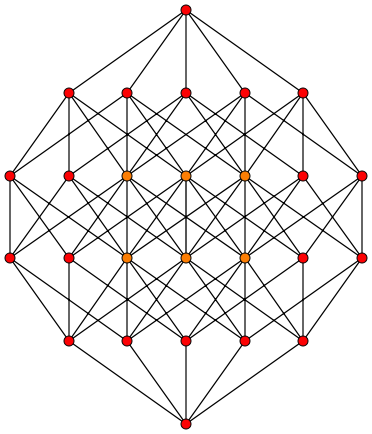}
\end{center}

\noindent
This graph has $ 2^5$ vertices.  It's actually a picture of a \href{http://en.wikipedia.org/wiki/Penteract}{5-dimensional hypercube}!
Unfortunately, to make the picture pretty, some vertices are hidden behind others: each orange vertex has another behind it.   The vertices are arranged in rows.  There's

\begin{itemize} 
 \item one 0-element subset,
 \item five 1-element subsets,
 \item ten 2-element subsets,
 \item ten 3-element subsets,
 \item five 4-element subsets,
 \item one 5-element subset.
\end{itemize} 

\noindent
So, the numbers of vertices in each row should go like this:
$$ 1 \quad 5 \quad 10 \quad 10 \quad 5 \quad 1 $$
which is a row in \href{http://en.wikipedia.org/wiki/Pascal\%27s\_triangle}{Pascal's triangle}.    We get the Desargues graph if we keep only the vertices corresponding to 2- and 3-element subsets, like this:

\begin{center}
 \includegraphics[width=0.8\textwidth]{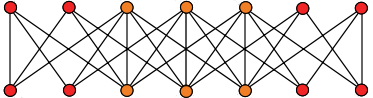}
\end{center}

\noindent
However, each orange vertex has another hiding behind it, so we only see 7 vertices in each row here, not 10.   Laurens Gunnarsen fixed this, and got the following picture of the Desargues graph:

\begin{center}
 \includegraphics[width=0.8\textwidth]{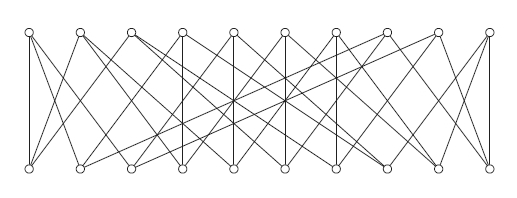}
\end{center}

\noindent
It's less pretty than our earlier picture of the Desargues graph, but at least there's no mystery to it.   Also, it shows that the Desargues graph can be generalized in various ways.  For example, there's a theory of \href{http://mathworld.wolfram.com/BipartiteKneserGraph.html}{bipartite Kneser graphs} $ H(n,k).$  The Desargues graph is $ H(5,2)$.

\subsection{Desargues' theorem}\index{graph theory!Desargues' theorem}

Finally, we can't resist answering this question: why is this graph called the `Desargues graph'?  This name comes from \href{http://en.wikipedia.org/wiki/Desargues\%27\_theorem}{Desargues' theorem}, a famous result in projective geometry\index{projective geometry}.  Suppose you have two triangles ABC and abc, like this:

\begin{center}
 \includegraphics[width=80mm]{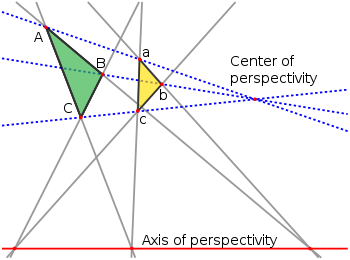}
\end{center}

\noindent
Suppose the lines Aa, Bb, and Cc all meet at a single point, the `center of perspectivity'.  Then the point of intersection of ab and AB, the point of intersection of ac and AC, and the point of intersection of bc and BC all lie on a single line, the `axis of perspectivity'.  The converse is true too.  Quite amazing!  

\begin{definition}The {\bf Desargues configuration}\index{Desargues configuration} consists of all the actors in this drama: 
\begin{itemize} 
\item 10 points: A, B, C, a, b, c, the center of perspectivity, and the three points on the axis of perspectivity and 
\item 10 lines: Aa, Bb, Cc, AB, AC, BC, ab, ac, bc and the axis of perspectivity
\end{itemize} 
\end{definition} 

Given any configuration of points and lines, we can form a graph called its \href{http://en.wikipedia.org/wiki/Levi\_graph}{{\bf Levi graph}} by drawing a vertex for each point or line, and drawing edges to indicate which points lie on which lines.  And now for the punchline: the Levi graph of the Desargues configuration is the `Desargues--Levi graph'!---or Desargues graph, for short.   \index{graph theory!Levi graph} \index{Levi graph}

Alas, we don't know how this is relevant to anything we've discussed.  For now it's just a tantalizing curiosity.

\subsection{Answers}

We didn't get any solutions of Problem \ref{prob:19} the first time we posted this on the Azimuth blog, so we reposed it on \href{http://johncarlosbaez.wordpress.com/2011/10/23/a-math-problem-coming-from-chemistry/}{Azimuth} and the \href{http://golem.ph.utexas.edu/category/2011/10/a\_math\_problem\_coming\_from\_chem.html}{\emph{n}-Category Caf\'{e}}.  That brought out lots of great solutions, which we've used to compose the answer here.  Special thanks go especially to J.\ M.\ Allen, Twan van Laarhoven, Peter Morgan and Johan Swanljung.  

\vskip 1em \noindent {\bf Problem 19.} 
Show that the graph with states of a trigonal bipyramidal molecule as vertices and pseudorotations as edges is indeed the Desargues graph.

\begin{answer}
To be specific, let's use iron pentacarbonyl as our example of a trigonal bipyramidal molecule:
\end{answer} 

\begin{center}
 \includegraphics[width=40mm]{200px-Iron-pentacarbonyl-from-xtal-3D-balls.png}
\end{center}

\noindent
It suffices to construct a 1-1 correspondence between the states of this molecule and those of the ethyl cation, such that two states of this molecule are connected by a transition if and only if the same holds for the corresponding states of the ethyl cation.

Here's the key idea: the ethyl cation has 5 hydrogens, with 2 attached to one carbon and 3 attached to the other.  Similarly, the trigonal bipyramidal molecule has 5 carbonyl grops, with 2 axial and 3 equatorial.  We'll use this resemblance to set up our correspondence.

There are various ways to describe states of the ethyl cation, but this is the best for us.  Number the hydrogens 1,2,3,4,5.  Then a state of the ethyl cation consists of a partition of the set $\{1,2,3,4,5\}$ into a 2-element set and a 3-element set, together with one extra bit of information, saying which carbon has 2 hydrogens attached to it.  This extra bit is the color here:

\begin{center}
 \includegraphics[width=80mm]{300px-Desargues_graph_2COL.png}
\end{center}

What do transitions look like in this description?  When a transition occurs, two hydrogens that belonged to the 3-element set now become part of the 2-element set.  Meanwhile, both hydrogens that belonged to the 2-element set now become part of the 3-element set.  (Ironically, the one hydrogen that hops is the one that stays in the 3-element set.)  Moreover, the extra bit of information changes.  That's why every edge goes from a red dot to a blue one, or vice versa.

So, to solve the problem, we need to show that the same description also works for the states and transitions of iron pentacarbonyl!

In other words, we need to describe its states as ways of partitioning the set {1,2,3,4,5} into a 2-element set and a 3-element set, together with one extra bit of information.  And we need its transitions to switch two elements of the 2-element set with two of the 3-element set, while changing that extra bit.

To do this, number the carbonyl groups 1,2,3,4,5.  The 2-element set consists of the axial ones, while the 3-element set consists of the equatorial ones.  When a transition occurs, two of axial ones trade places with two of the equatorial ones, like this:

\begin{center}
 \includegraphics[width=119.0625mm]{700px-Iron-pentacarbonyl-Berry-mechanism.png}
\end{center}

So, now we just need to figure out what that extra bit of information is, and why it always \emph{changes}  when a transition occurs.

Here's how we calculate that extra bit.  Hold the iron pentacarbonyl molecule vertically with one of the equatorial carbonyl groups pointing to your left.  Remember, the carbonyl groups are numbered.  So, write a list of these numbers, say (a,b,c,d,e), where a is the top axial one, b,c,d are the equatorial ones listed in counterclockwise order starting from the one pointing left, and e is the bottom axial one.  This list is some permutation of the list (1,2,3,4,5).  Take the sign of this permutation to be our bit!

Let's do an example:

\begin{center}
 \includegraphics[width=35mm]{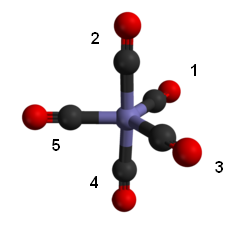}
\end{center}

\noindent
Here we get the list (2,5,3,1,4) since 2 is on top, 4 is on bottom, and 5,3,1 are the equatorial guys listed counterclockwise starting from the one at left.  The list (2,5,3,1,4) is an odd permutation of (1,2,3,4,5), so our bit of information is \emph{odd} .

Of course we must check that this bit is well-defined: namely, that it doesn't change if we \emph{rotate}  the molecule.  Rotating it a third of a turn gives an even permutation of the equatorial guys and leaves the axial ones alone, so this is an even permutation.  Flipping it over gives an odd permutation of the equatorial guys, but it also gives an odd permutation of the axial ones, so this too is an even permutation.  So, rotating the molecule doesn't change the sign of the permutation we compute from it.  The sign is thus a well-defined function of the state of the molecule.

Next we must to check that this sign changes whenever our molecule undergoes a transition.  For this we need to check that any transition changes our list of numbers by an odd permutation.  Since all transitions are conjugate in the permutation group, it suffices to consider one example:

\begin{center}
 \includegraphics[width=119.0625mm]{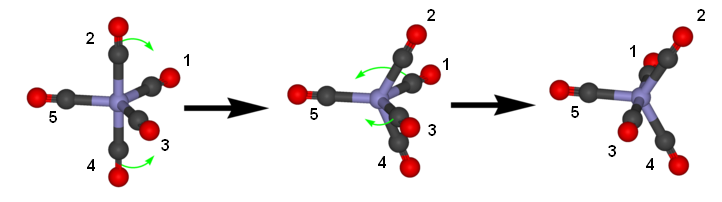}
\end{center}

Here we started with a state giving the list (2,5,3,1,4).  The transition ttakes us to a state that gives the list (3,5,4,2,1) if we hold the molecule so that 3 is pointing up and 5 to the left.  The reader can check that going from one list to another requires an \emph{odd}  permutation.  So we're done.   

Tracy Hall's discussion of the above problem was very interesting and erudite.  He also addressed an interesting subsidiary problem, namely: what graph do we get if we discard that extra bit of information that says which carbon in the ethyl cation has 3 hydrogens attached to it and which has 2?  The answer is the \href{http://en.wikipedia.org/wiki/Petersen\_graph}{Petersen graph}:


\begin{center}
 \includegraphics[width=30mm]{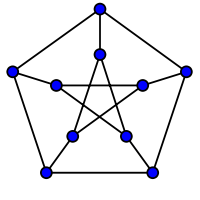}
\end{center}

Hall wrote:

\begin{quote}
As some comments have pointed out over on Azimuth,\index{Azimuth Project!blog} in both cases there are ten underlying states which simply pick out two of the five pendant atoms as special, together with an extra parity bit (which can take either value for any of the ten states), giving twenty states in total. The correspondence of the ten states is clear: an edge exists between state A and state B, in either case, if and only if the two special atoms of state A are disjoint from the two special atoms of state B. This is precisely one definition of the \href{http://en.wikipedia.org/wiki/Petersen\_graph}{Petersen graph} (a famous 3-valent graph on 10 vertices that shows up as a small counterexample to lots of naive conjectures). Thus the graph in either case is a double cover of the Petersen graph---but that does not uniquely specify it, since, for example, both the Desargues graph and the dodecahedron graph are double covers of the Petersen graph.

For a labeled graph, each double cover corresponds uniquely to an element of the $\mathbb{Z}/2\mathbb{Z}$ cohomology of the graph (for an unlabeled graph, some of the double covers defined in this way may turn out to be isomorphic). Cohomology over $\mathbb{Z}/2\mathbb{Z}$ takes any cycle as input and returns either 0 or 1, in a consistent way (the output of a $\mathbb{Z}/2\mathbb{Z}$ sum of cycles is the sum of the outputs on each cycle). The double cover has two copies of everything in the base (Petersen) graph, and as you follow all the way around a cycle in the base, the element of cohomology tells you whether you come back to the same copy (for 0) or the other copy (for 1) in the double cover, compared to where you started.

One well-defined double cover for any graph is the one which simply switches copies for every single edge (this corresponding to the element of cohomology which is 1 on all odd cycles and 0 on all even cycles). This always gives a double cover which is a bipartite graph, and which is connected if and only if the base graph is connected and not bipartite. So if we can show that in both cases (the fictitious ethyl cation and phosphorus pentachloride) the extra parity bit can be defined in such a way that it switches on every transition, that will show that we get the Desargues graph in both cases.

The fictitious ethyl cation is easy: the parity bit records which carbon is which, so we can define it as saying which carbon has three neighbors. This switches on every transition, so we are done. Phosphorus pentachloride is a bit trickier; the parity bit distinguishes a labeled molecule from its mirror image, or enantiomer. As has already been pointed out on both sites, we can use the parity of a permutation to distinguish this, since it happens that the orientation-preserving rotations of the molecule, generated by a three-fold rotation acting as a three-cycle and by a two-fold rotation acting as a pair of two-cycles, are all even permutations, while the mirror image that switches only the two special atoms is an odd permutation. The pseudorotation can be followed by a quarter turn to return the five chlorine atoms to the five places previously occupied by chlorine atoms, which makes it act as a four-cycle, an odd permutation. Since the parity bit in this case also can be defined in such a way that it switches on every transition, the particular double cover in each case is the Desargues graph---a graph one might be surprised to come across here. 

The five chlorine atoms in phosphorus pentachloride lie in six triangles which give a triangulation of the 2-sphere, and another way of thinking of the pseudorotation is that it corresponds to a \href{http://en.wikipedia.org/wiki/Pachner\_moves}{Pachner move} or bistellar flip on this triangulation---in particular, any bistellar flip on this triangulation that preserves the number of triangles and the property that all vertices in the triangulation have degree at least three corresponds to a pseudorotation as described.
\end{quote}

The next problem was both posed and answered by Greg Egan \href{http://johncarlosbaez.wordpress.com/2011/10/26/network-theory-part-15/\#comment-8869}{on Azimuth}. \index{Egan, Greg}

\vskip 1em \noindent {\bf Problem 20.} 
 If we define the Desargues graph to have vertices corresponding to 2- and 3-element subsets of a 5-element set, with an edge between vertices when one subset is contained in another, why does it look like this picture?

\begin{center}
 \includegraphics[width=80mm]{300px-Desargues_graph_2COL.png}
\end{center}

\begin{answer}
 For $i=0,...,4$, and with all the additions below modulo 5, define five pairs of red dots as:
$$ \{i, i+1\}, \{i+1,i+4\}$$
and five pairs of blue dots as:
$$ \{i, i+1, i+4\}, \{i+1, i+2, i+4\}.$$
The union of the $ i$th pair of red dots is the first of the $ i$th pair of blue dots, and the union of the second of the $ i$th pair of red dots and the first of the $ (i+1)$th pair of red dots is the second of the $ i$th pair of blue dots.  So if we form a 20-sided polygon whose vertices are alternating red and blue dots generated in this order, all the edges of the polygon will join red dots to blue dots of which they are subsets:

\begin{center}
 \includegraphics[width=105mm]{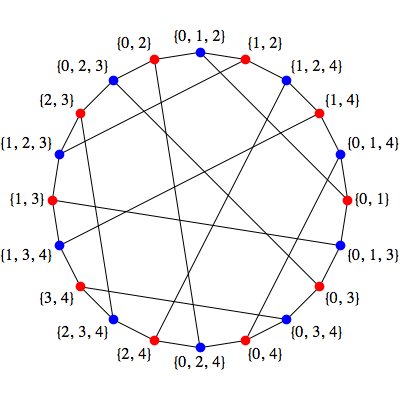}
\end{center}

\noindent
The first red dot of the $i$th pair is also a subset of the first blue dot of the $(i+1)$th pair:
$$\{i+1, i+2, i\}$$
which gives the five short chords in the picture, while the second red dot of the $ i$th pair is a subset of the second blue dot of the $(i+2)$ pair:
$$\{i+3, i+4, i+1\}$$
which gives the five long chords in the picture. 
\end{answer} 

\index{chemistry!ethyl cation|)} 
\index{graph theory!Desargues graph|)}\index{Desargues graph|)}


\newpage
\section[Graph Laplacians]{Graph Laplacians}\index{Laplacian!graph|(} 
\label{sec:14}

In Section \ref{sec:13} we saw how to get a graph whose vertices are states of a molecule and whose edges are transitions between states.  We focused on two beautiful but not completely realistic examples that both give rise to the same highly symmetrical graph: the `Desargues graph'.  

In this section we'll consider how a molecule can carry out a random walk\index{random walk!Desargues graph} on this graph.  Then we'll get to work showing how \emph{any} graph gives:

\begin{itemize} 
\item A Markov process, namely a random walk\index{random walk!as Markov processes} on the graph.
\item A quantum process, where instead of having a \emph{probability} to hop from vertex to vertex as time passes, we have an \emph{amplitude}.\index{random walk!quantum walk}
\end{itemize} 

The trick is to use an operator called the `graph Laplacian', a discretized version of the Laplacian which happens to be both infinitesimal stochastic and self-adjoint.\index{infinitesimal stochastic!Laplacian}    By our results in in Section~\ref{sec:11}, such an operator will give rise \emph{both}  to a Markov process \emph{and} a quantum process (that is, a one-parameter unitary group).\index{one-parameter unitary group}

The most famous operator that's both infinitesimal stochastic\index{infinitesimal stochastic!Laplacian} and self-adjoint is the Laplacian, $ \nabla^2.$  Because it's both, the Laplacian shows up in \emph{two}  important equations: one in stochastic mechanics, the other in quantum mechanics.

\begin{itemize} 
\item The \href{http://en.wikipedia.org/wiki/Heat\_equation}{{\bf heat equation}}:\index{heat equation}
$$ \displaystyle{ \frac{d}{d t} \psi = \nabla^2 \psi} $$
describes how the probability $ \psi(x)$ of a particle being at the point $ x$ smears out as the particle randomly walks around.  The corresponding Markov process is called \href{http://en.wikipedia.org/wiki/Brownian\_motion}{`Brownian motion'}.\index{Brownian motion} 

\begin{center}
 \includegraphics[width=90mm]{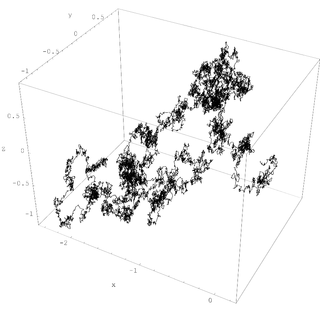}
\end{center}

\item
The \href{http://en.wikipedia.org/wiki/Schr\%C3\%B6dinger\_equation}{{\bf Schr\"{o}dinger equation}}:
$$ \displaystyle{ \frac{d}{d t} \psi = -i \nabla^2 \psi} $$
describes how the amplitude $ \psi(x)$ of a particle being at the point $ x$ wiggles about as the particle `quantumly' walks around.
\end{itemize} 

Both these equations have analogues where we replace space by a graph, and now we'll describe them.

\subsection{A random walk on the Desargues graph}\index{random walk!Desargues graph|(}

Back to business!  We've been telling you about the analogy between quantum mechanics and stochastic mechanics.  This analogy becomes especially interesting in chemistry, which lies on the uneasy borderline between quantum and stochastic mechanics.  

Fundamentally, of course, atoms and molecules are described by quantum mechanics.  But sometimes chemists describe chemical reactions using stochastic mechanics instead.  When can they get away with this?  Apparently whenever the molecules involved are big enough and interacting with their environment enough for `decoherence' to kick in.  We won't attempt to explain this now.

Let's imagine we have a molecule of iron pentacarbonyl with---here's the unrealistic part, but it's not really too bad---distinguishable carbonyl groups:

\begin{center}
 \includegraphics[width=40mm]{200px-Iron-pentacarbonyl-from-xtal-3D-balls.png}
\end{center}\

\noindent
Iron pentacarbonyl is liquid at room temperatures, so as time passes, each molecule will bounce around and occasionally do a maneuver called a `pseudorotation':

\begin{center}
 \includegraphics[width=119.0625mm]{700px-Iron-pentacarbonyl-Berry-mechanism.png}
\end{center}

\noindent
We can approximately describe this process as a random walk on a graph whose \emph{vertices}  are \emph{states}  of our molecule, and whose \emph{edges}  are \emph{transitions}  between states---namely, pseudorotations.  And as we saw in Section~\ref{sec:13}, this graph is the Desargues graph:\index{graph theory!Desargues graph!random walk on} \index{vertex} \index{edge}
\index{graph theory!vertex} \index{graph theory!edge}

\begin{center}
 \includegraphics[width=80mm]{300px-Desargues_graph_2COL.png}
\end{center}

\noindent
Note: all the transitions are reversible here.   And thanks to the enormous amount of symmetry, the rates of all these transitions must be equal.  

Let's write $ V$ for the set of vertices of the Desargues graph.  A probability distribution of states of our molecule is a function
$$ \psi \colon V \to [0,\infty) $$
with 
$$ \sum_{x \in V} \psi(x) = 1 $$
We can think of these probability distributions as living in this vector space:
$$ L^1(V) = \{ \psi \colon V \to \mathbb{R} \}$$
We're calling this space $L^1$ because of the general abstract nonsense explained in Section~\ref{sec:11}: probability distributions on any measure space\index{measure space!probability distributions on}\index{vector space!of probability distributions}\index{$L^1$} live in a vector space called $ L^1.$  Right now that notation is overkill, since \emph{every}  function on $ V$ lies in $ L^1.$  But please humor us.

The point is that we've got a general setup that applies here.  There's a Hamiltonian:
$$ H\colon L^1(V) \to L^1(V) $$
describing the rate at which the molecule randomly hops from one state to another... and the probability distribution $ \psi \in L^1(X)$ evolves in time according to the equation:
$$ \displaystyle{ \frac{d}{d t} \psi(t) = H \psi(t)} $$
But what's the Hamiltonian $ H$?  It's very simple, because it's equally likely for the state to hop from any vertex  to any other vertex that's connected to that one by an edge.  Why?  Because the problem has so much symmetry that nothing else makes sense.  

So, let's write $ E$ for the set of edges of the Desargues graph.  We can think of this as a subset of $ V \times V$ by saying $ (x,y) \in E$ when $ x$ is connected to $ y$ by an edge.  Then
$$ \displaystyle{ (H \psi)(x) =  \sum_{y : (x,y) \in E}~ \psi(y) \quad - \quad 3 \psi(x)} $$
We're subtracting $ 3 \psi(x)$ because there are 3 edges coming out of each vertex $ x,$ so this is the rate at which the probability of staying at $ x$ decreases.   We could multiply this Hamiltonian by a constant if we wanted the random walk to happen faster or slower... but let's not.

The next step is to solve this discretized version of the heat equation:\index{heat equation} 
$$ \displaystyle{ \frac{d}{d t} \psi(t) = H \psi(t)} $$
Abstractly, the solution is easy:
$$ \psi(t) = \exp(t H) \psi(0) $$
But to actually compute $ \exp(t H),$ we might want to diagonalize the operator $ H.$  In this particular example, we can do this by taking advantage of the enormous symmetry of the Desargues graph.  But let's not do this just yet.  First let's draw some general lessons from this example.

\subsection{Graph Laplacians}
\label{sec:14_laplacians}

The Hamiltonian we just saw is an example of a `graph Laplacian'.  We can write down such a Hamiltonian for any graph, but it gets a tiny bit more complicated when different vertices have different numbers of edges coming out of them.  

The word `graph' means lots of things, but right now we're talking about \href{http://en.wikipedia.org/wiki/Graph\_\%28mathematics\%29\#Simple\_graph}{{\bf simple graphs}}\index{graph theory!simple graph}\index{simple graph}.   Such a graph has a set of {\bf vertices} $V$ \index{vertex} \index{graph theory!vertex} and a set of {\bf edges} $E \subseteq V \times V,$ \index{edge} \index{graph theory!edge}
such that
$$ (x,y) \in E \Rightarrow (y,x) \in E $$
which says the edges are undirected, and
$$ (x,x) \notin E $$
which says there are no loops.  Let $ d(x)$ be the {\bf degree} of the vertex $ x,$ meaning the number of edges coming out of it.  \index{degree} \index{graph theory!vertex!degree of} \index{vertex!degree of} 

Then the \href{http://en.wikipedia.org/wiki/Laplacian\_matrix}{{\bf graph Laplacian}} is this operator on $ L^1(V)$:\index{Laplacian!graph!definition of}
$$ \displaystyle{ (H \psi)(x) =  \sum_{y  :  (x,y) \in E}  \psi(y) \quad - \quad d(x) \psi(x)} $$
There is a huge amount to say about graph Laplacians!  If you want, you can get started here:

\begin{enumerate}
 \item[\cite{New00}]
Michael William Newman, \href{http://www.seas.upenn.edu/\~jadbabai/ESE680/Laplacian\_Thesis.pdf}{\emph{The Laplacian Spectrum of Graphs}}, Masters Thesis, Department of Mathematics, University of Manitoba, 2000.
\end{enumerate}

But for now, let's just say that $ \exp(t H)$ is a Markov process describing a random walk\index{random walk!as Markov processes} on the graph, where hopping from one vertex to any neighboring vertex has unit probability per unit time.  We can make the hopping faster or slower by multiplying $ H$ by a constant.  And here is a good time to admit that most people use a graph Laplacian that's the negative of ours, and write time evolution as $ \exp(-t H)$.  The advantage is that then the eigenvalues of the Laplacian are $ \ge 0.$

But what matters most is this.  We can write the operator $ H$ as a matrix whose entry $ H_{x y}$ is 1 when there's an edge from $ x$ to $ y$ and 0 otherwise, except when $ x = y$, in which case the entry is $ -d(x).$  And then:

\begin{problem}\label{prob:21} 
Show that for any finite graph, the graph Laplacian $ H$ is infinitesimal stochastic, meaning that: 
$$ \displaystyle{ \sum_{x \in V} H_{x y} = 0} $$\index{infinitesimal stochastic!Laplacian}
and
$$ x \ne y \Rightarrow  H_{x y} \ge 0 $$
\end{problem} 

This fact implies that for any $ t \ge 0,$ the operator $ \exp(t H)$ is stochastic---just what we need for a Markov process.

But we could also use $ H$ as a Hamiltonian for a \emph{quantum}  system, if we wanted.  Now we think of $ \psi(x)$ as the \emph{amplitude}  for being in the state $ x \in V$.  But now $ \psi$  is a function
$$ \psi \colon V \to \mathbb{C} $$
with 
$$ \displaystyle{ \sum_{x \in V} |\psi(x)|^2 = 1} $$
We can think of this function as living in the Hilbert space
$$ L^2(V) = \{ \psi \colon V \to \mathbb{C} \} $$
where the inner product is
$$ \langle \phi, \psi \rangle = \sum_{x \in V} \overline{\phi(x)} \psi(x) $$

\begin{problem}\label{prob:22} 
Show that for any finite graph, the graph Laplacian $H \colon L^2(V) \to L^2(V)$ is self-adjoint, meaning that: 
$$ H_{x y} = \overline{H}_{y x} $$\end{problem} 

This implies that for any $ t \in \mathbb{R}$, the operator $ \exp(-i t H)$ is unitary---just what we need for one-parameter unitary group\index{one-parameter unitary group}.  So, we can take this version of Schr\"{o}dinger's equation\index{quantum mechanics!Schr\"{o}dinger's equation}:
$$ \displaystyle{ \frac{d}{d t} \psi = -i H \psi} $$
and solve it:
$$ \displaystyle{ \psi(t) = \exp(-i t H) \psi(0)} $$
and we'll know that time evolution is unitary!

So, we're in a dream world where we can do stochastic mechanics \emph{and}  quantum mechanics with the same Hamiltonian.  We'd like to exploit this somehow, but we're not quite sure how.  Of course physicists like to use a trick called \href{http://en.wikipedia.org/wiki/Wick\_rotation}{Wick rotation} where they turn quantum mechanics into stochastic mechanics by replacing time by imaginary time.  We can do that here.  But we'd like to do something new, special to this context.

You might want to learn more about the relation between chemistry and graph theory.  Of course, graphs show up in at least two ways: first for drawing molecules, and second for drawing states and transitions, as we've been doing.  These books seem to be good:

\begin{enumerate} 
\item[\cite{BR91}] Danail Bonchev and D.\ H.\ Rouvray, eds., \emph{Chemical Graph Theory: Introduction and Fundamentals}, Taylor and Francis, 1991. 
\item[\cite{Tr92}] Nenad Trinajstic, \emph{Chemical Graph Theory}, CRC Press, 1992. 
\item[\cite{Kin93}] R. Bruce King, \emph{Applications of Graph Theory and Topology in Inorganic Cluster Coordination Chemistry}, 
CRC Press, 1993. 
\end{enumerate} 
\index{random walk!Desargues graph|)}

\subsection{The Laplacian of the Desargues graph}\index{Laplacian!graph!Desargues graph}\index{Desargues graph!Laplacian of} 

\index{Egan, Greg}
Greg Egan figured out a nice way to explicitly describe the eigenvectors of the Laplacian $H$ for the Desargues graph.  This lets us explicitly solve the heat equation\index{heat equation} 
$$ \displaystyle{ \frac{d}{d t} \psi = H \psi} $$
and also the Schr\"{o}dinger equation on this graph.

First there is an obvious eigenvector: any constant function.  Indeed, for any finite graph, any constant function $\psi$ is an eigenvector of the graph Laplacian with eigenvalue zero:
$$H \psi = 0$$
so that
$$ \exp(t H) \psi = \psi $$
This reflects the fact that the evenly smeared probability distribution is an equilibrium state for the random walk\index{random walk!heat equation} described by heat equation.\index{heat equation}  For a connected graph this will be the only equilibrium state.  For a graph with several connected components, there will be an equilibrium state for each connected component, which equals some positive constant on that component and zero elsewhere.\index{connected component!equilibrium state} 

For the Desargues graph, or indeed any connected graph, all the other eigenvectors will be functions that undergo exponential decay at various rates:
$$H \psi = \lambda \psi $$
with $\lambda < 0$, so that
$$\exp(t H) \psi = \exp(\lambda t) \psi $$
decays as $t$ increases.  But since probability is conserved, any vector that undergoes exponential decay must have terms that sum to zero.  So a vector like this won't be a stochastic state, but rather a deviation from equilibrium.  We can write any stochastic state as the equilibrium state plus a sum of terms that decay with different rate constants.

If you put a value of $1$ on every red dot and $-1$ on every blue dot,  you get an eigenvector of the Hamiltonian with eigenvalue $-6$: $-3$ for the diagonal entry for each vertex, and $-3$ for the sum over the neighbors.

By trial and error, it's not too hard to find examples of variations on this where some vertices have a value of zero.  Every vertex with zero either has its neighbors all zero, or two neighbors of opposite signs:

\begin{itemize} 
\item Eigenvalue $-5$:  every non-zero vertex has two neighbors with the opposite value to it, and one neighbor of zero.
\item Eigenvalue $-4$:  every non-zero vertex has one neighbor with the opposite value to it, and two neighbors of zero.
\item Eigenvalue $-2$:  every non-zero vertex has one neighbor with the same value, and two neighbors of zero.
\item Eigenvalue $-1$: every non-zero vertex has two neighbors with the same value, and one neighbor of zero.
\end{itemize} 

So the general pattern is what you'd expect:  the more neighbors of equal value each non-zero vertex has, the more slowly that term will decay.

In more detail, the eigenvectors are given by the the following functions, together with functions obtained by rotating these pictures.  Unlabelled dots are labelled by zero:

\begin{center}
 \includegraphics[width=130mm]{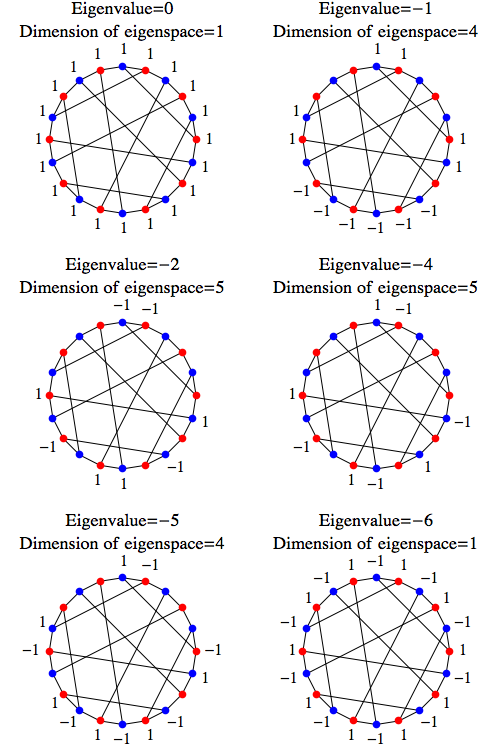}
\end{center}

\newpage 

In fact you can span the eigenspaces by rotating the particular eigenvectors we drew by successive multiples of $1/5$ of a rotation, i.e.\ $4$ dots.  If you do this for the examples with eigenvalues $-2$ and $-4$, it's not too hard to see that all five rotated diagrams are linearly independent.  If you do it for the examples with eigenvalues $-1$ and $-5$, four rotated diagrams are linearly independent.  So along with the two 1-dimensional spaces, that's enough to prove that you've exhausted the whole 20-dimensional space.

\index{Laplacian!graph|)}


\newpage
\section[Dirichlet operators]{Dirichlet operators and electrical circuits}\index{electrical circuits!and Dirichlet operators|(}\index{Dirichlet operator|(}
\label{sec:15}

We've been comparing two theories: stochastic mechanics and quantum mechanics.   Last time we saw that any graph gives us an example of \emph{both}  theories!  It's a bit peculiar, but now we'll explore the intersection of these theories a little further, and see that it has another interpretation.  It's also the theory of \emph{electrical circuits made of resistors!}   

That's nice, because we're supposed to be talking about `network theory', and electrical circuits are perhaps the most practical networks of all:

\begin{center}
 \includegraphics[width=80mm]{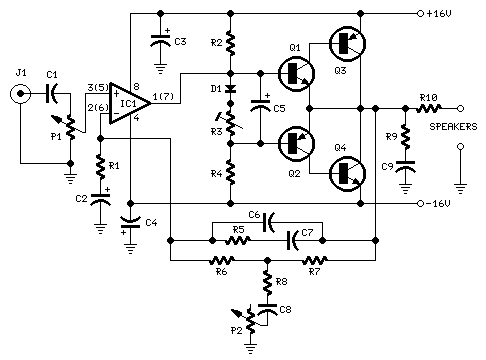}
\end{center}

\subsection{Dirichlet operators and electrical circuits}\index{Dirichlet operator!electrical circuits}

Last time we saw that any graph gives us an operator called the `graph Laplacian' that's both infinitesimal stochastic and self-adjoint.\index{infinitesimal stochastic!Laplacian}  That means we get both:

\begin{itemize} 
\item a \href{http://en.wikipedia.org/wiki/Markov\_process}{Markov process} describing the random walk of a classical particle on the graph.\index{random walk!classical particle} and
\item a \href{http://en.wikipedia.org/wiki/Stone\%27s\_theorem\_on\_one-parameter\_unitary\_groups}{1-parameter unitary group}\index{one-parameter unitary group} describing the motion of a quantum particle on the graph.\index{random walk!quantum walk}
\end{itemize} 

That's sort of neat, so it's natural to wonder what are \emph{all}  the operators that are both infinitesimal stochastic and self-adjoint.  They're called `Dirichlet operators', and at least in the finite-dimensional case we're considering, they're easy to completely understand.  Even better, it turns out they describe electrical circuits made of resistors! 
\index{Dirichlet operator}

Let's take a lowbrow attitude and think of a linear operator $H\colon \mathbb{C}^n \to \mathbb{C}^n$ as an $n \times n$ matrix with entries $H_{i j}$.  Then:

\begin{itemize}
\item  $H$ is {\bf self-adjoint} if it equals the conjugate of its transpose:
$$H_{i j} = \overline{H}_{j i}$$\index{self-adjoint operator!definition of}
\end{itemize} 

\begin{itemize} 
\item $H$ is {\bf infinitesimal stochastic} if its columns sum to zero and its off-diagonal entries are nonnegative:\index{infinitesimal stochastic operator} \index{operator!infinitesimal stochastic}
$$\sum_i H_{i j} = 0$$ 
 $$i \ne j \Rightarrow H_{i j} \ge 0 $$
\end{itemize} 

\begin{itemize} 
\item $H$ is a {\bf Dirichlet operator}\index{Dirichlet operator!definition of} if it's both self-adjoint and infinitesimal stochastic.
\end{itemize} 

What are Dirichlet operators like?  Suppose $H$ is a Dirichlet operator.  Then its off-diagonal entries are $\geq 0$, and since
$$ \sum_i H_{i j} = 0$$ 
its diagonal entries obey
$$  H_{i i} = - \sum_{ i \ne j} H_{i j} \le 0$$
So all the entries of the matrix $H$ are real, which in turn implies it's symmetric:
$$ H_{i j} = \overline{H}_{j i} = H_{j i} $$
So, we can build any Dirichlet operator $H$ as follows:

\begin{itemize} 
\item Choose the entries above the diagonal, $H_{i j}$ with $i < j$, to be arbitrary nonnegative real numbers.
\item The entries below the diagonal, $H_{i j}$ with $i > j$, are then forced on us by the requirement that $H$ be symmetric: $H_{i j} = H_{j i}$.
\item The diagonal entries are then forced on us by the requirement that the columns sum to zero: $H_{i i} = - \sum_{ i \ne j} H_{i j}$.
\end{itemize} 

Note that because the entries are real, we can think of a Dirichlet operator as a linear operator $H \colon \mathbb{R}^n \to \mathbb{R}^n$.  We'll do that for the rest of this section.

\subsection{Circuits made of resistors}
\label{sec:15_circuits}

Now for the fun part.  We can easily \emph{draw}  any Dirichlet operator!   To this we draw $n$ dots, connect each pair of distinct dots with an edge, and label the edge connecting the $i$th dot to the $j$th with any number $H_{i j} \ge 0$.   

\begin{center}
 \includegraphics[width=55mm]{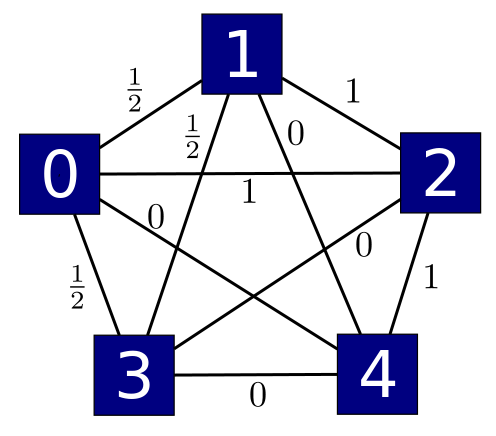}
\end{center}

\noindent
This contains all the information we need to build our Dirichlet operator.  To make the picture prettier, we can leave out the edges labelled by 0:

\begin{center}
 \includegraphics[width=55mm]{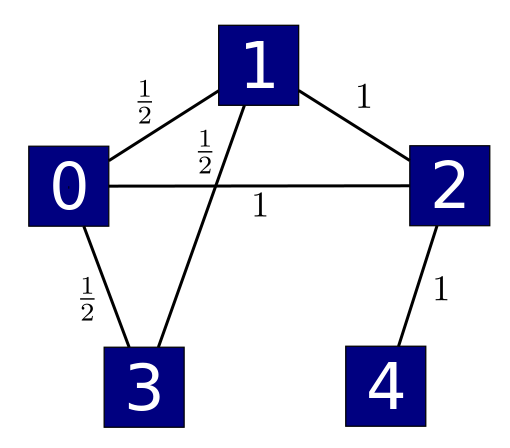}
\end{center}

\noindent
Like last time, the graphs we're talking about are \href{http://en.wikipedia.org/wiki/Simple\_graph\#Simple\_graph}{{\bf simple}}: undirected, with no edges from a vertex to itself, and at most one edge from one vertex to another.   So: \index{graph theory!simple graph} \index{simple graph}

\begin{theorem}
Any finite simple graph with edges labelled by positive numbers gives a Dirichlet operator, and conversely.
\end{theorem}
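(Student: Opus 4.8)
The statement asserts a bijective correspondence between finite simple graphs with positively-labelled edges and Dirichlet operators, so the plan is to establish the two directions of the correspondence and then check they are mutually inverse. This is essentially an unpacking of the definitions together with the structural observations already made in the text (that a Dirichlet operator has real, symmetric, nonnegative off-diagonal entries and diagonal entries determined by the column-sum-zero condition), so the proof should be short.

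First I would prove the forward direction: given a finite simple graph on vertex set $\{1,\dots,n\}$ with each edge $\{i,j\}$ labelled by a positive number $w_{ij}$, define the matrix $H$ by setting $H_{ij} = w_{ij}$ if $\{i,j\}$ is an edge, $H_{ij}=0$ if $i\neq j$ and $\{i,j\}$ is not an edge, and $H_{ii} = -\sum_{j\neq i} H_{ij}$. I would then verify the three defining properties: $H$ is symmetric (hence self-adjoint, since all entries are real) because the labelling of an undirected edge does not depend on orientation; its off-diagonal entries are $\geq 0$ by construction (they are either positive labels or zero); and each column sums to zero, directly from the definition of the diagonal entry, so $H$ is infinitesimal stochastic. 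Thus $H$ is a Dirichlet operator. I would note this is a mild generalization of the graph Laplacian construction of Section~\ref{sec:14}, where all labels were forced to be $1$.

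Next I would prove the converse by invoking the structural analysis already carried out in the excerpt: an arbitrary Dirichlet operator $H$ on $\mathbb{R}^n$ has all entries real, is symmetric, has off-diagonal entries $H_{ij}\geq 0$, and has diagonal entries $H_{ii} = -\sum_{j\neq i}H_{ij}$. From such an $H$ I would build a labelled graph: put an edge $\{i,j\}$ precisely when $H_{ij} > 0$, and label that edge by the positive number $H_{ij}$ (this is well-defined and symmetric because $H_{ij}=H_{ji}$). The resulting graph is simple — undirected since $H$ is symmetric, loop-free since we only use off-diagonal entries, with at most one edge between any two vertices. Finally I would check that the two constructions are inverse to each other: starting from a labelled graph, forming $H$, and reading off the labelled graph again recovers the original (an edge with positive label $w_{ij}$ gives $H_{ij}=w_{ij}>0$, hence the edge reappears with the same label; a non-edge gives $H_{ij}=0$, hence no edge); and starting from a Dirichlet operator, reading off the graph, and rebuilding the matrix recovers $H$ (off-diagonal entries match by construction, and the diagonal entries match because both are fixed by the column-sum-zero condition).

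There is no real obstacle here — the only thing to be careful about is the bookkeeping that makes ``and conversely'' into a genuine bijection rather than just two one-way constructions, and in particular the observation (already supplied in the text) that a Dirichlet operator's diagonal is \emph{forced}, so that no information is lost or gained in passing back and forth. I would state this explicitly as the crux of the round-trip check. One minor point worth a sentence: the correspondence sends edges labelled $0$ to ``no edge,'' so to get a clean bijection one must insist the graph's edge labels be strictly positive (as the statement does), matching the fact that $H_{ij}>0$ exactly on edges.
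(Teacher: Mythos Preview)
Your proposal is correct and follows essentially the same approach as the paper. The paper does not give a formal proof in a proof environment; instead the theorem is stated as a summary of the preceding discussion, which explains how to build any Dirichlet operator by freely choosing nonnegative off-diagonal entries (with symmetry and the column-sum-zero condition forcing the rest) and how to draw any Dirichlet operator as a simple graph with positively-labelled edges. Your write-up is more careful than the paper's in that you explicitly verify the two constructions are mutually inverse and flag the role of the forced diagonal, but the underlying argument is the same.
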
 

We already talked about a special case last time: if we label all the edges by the number 1, our operator $H$ is called the {\bf graph Laplacian}.  So, now we're generalizing that idea by letting the edges have more interesting labels.

What's the meaning of this trick?  Well, we can think of our graph as an \emph{electrical circuit}  where the edges are \emph{wires}.  What do the numbers labelling these wires mean?  One obvious possibility is to put a \href{http://en.wikipedia.org/wiki/Resistor}{resistor} on each wire, and let that number be its \href{http://en.wikipedia.org/wiki/Electrical\_resistance\_and\_conductance}{resistance}.   But that doesn't make sense, since we're leaving out wires labelled by 0.  If we leave out a wire, that's not like having a wire of zero resistance: it's like having a wire of \emph{infinite}  resistance!  No current can go through when there's no wire.  So the number labelling an edge should be the \emph{conductance}  of the resistor on that wire.  Conductance is the reciprocal of resistance.

So, our Dirichlet operator above gives a circuit like this:

\begin{center}
 \includegraphics[width=60mm]{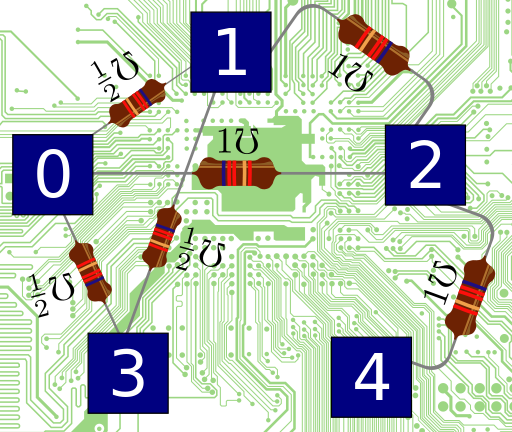}
\end{center}

\noindent
Here $\Omega$ is the symbol for an \href{http://en.wikipedia.org/wiki/Ohm}{`ohm'}, a unit of resistance... but the upside-down version, namely $\mho$, is the symbol for a \href{http://en.wikipedia.org/wiki/Siemens\_\%28unit\%29\#Mho}{`mho'}, a unit of conductance that's the reciprocal of an ohm.  (Actually the term \href{http://en.wikipedia.org/wiki/Siemens_\%28unit\%29}{`siemens'} is officially preferred over `mho', but we like the word `mho'.)

Let's see if this cute idea leads anywhere.  Think of a Dirichlet operator $H \colon \mathbb{R}^n \to \mathbb{R}^n$ as a circuit made of resistors.   What could a vector $\psi \in \mathbb{R}^n$ mean?  It assigns a real number to each vertex of our graph.  The only sensible option is for this number to be the \href{http://en.wikipedia.org/wiki/Electric_potential}{electric potential} at that point in our circuit.  So let's try that.

Now, what's
$$ \langle \psi, H \psi \rangle  ?$$
In quantum mechanics this would be a very sensible thing to look at: it would be gives us the expected value\index{expected value!of Hamiltonian, quantum} of the Hamiltonian $H$ in a state $\psi$.  But what does it mean in the land of electrical circuits?

Up to a constant factor, it turns out to be the \emph{power}  consumed by the electrical circuit!\index{electrical circuits!power}   

Let's see why.  First, remember that when a current flows along a wire, power gets consumed.  In other words, electrostatic potential energy gets turned into heat.  The power consumed is 
$$ P = V I $$
where $V$ is the voltage across the wire and $I$ is the current flowing along the wire.  If we assume our wire has resistance $R$ we also have Ohm's law:
$$ I = V / R $$
so 
$$ \displaystyle{ P = \frac{V^2}{R}} $$
If we write this using the conductance $C = 1/R$ instead of the resistance $R$, we get
$$ P = C V^2 $$
But our electrical circuit has \emph{lots}  of wires, so the power it consumes will be a sum of terms like this.   We're assuming  $H_{i j}$ is the conductance of the wire from the $i$th vertex to the $j$th, or zero if there's no wire connecting them.  And by definition, the voltage across this wire is the difference in electrostatic potentials at the two ends: $\psi_i - \psi_j$.  So, the total power consumed is
$$ \displaystyle{ P = \sum_{i \ne j}  H_{i j} (\psi_i - \psi_j)^2}$$
This is nice, but what does it have to do with $ \langle \psi , H \psi \rangle$? 

The answer is here:

\begin{theorem}  If $H \colon \mathbb{R}^n \to \mathbb{R}^n$ is any Dirichlet operator, and $\psi \in \mathbb{R}^n$ is any vector, then
$$ \displaystyle{ \langle \psi , H \psi \rangle = -\frac{1}{2} \sum_{i \ne j}  H_{i j} (\psi_i - \psi_j)^2}  $$
\end{theorem}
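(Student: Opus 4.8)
The plan is to prove the identity by straightforward expansion of both sides and matching terms, using only two facts already established in the excerpt: that a Dirichlet operator is real and symmetric, so $H_{ij} = H_{ji}$, and that its columns (hence, by symmetry, its rows) sum to zero, so that $H_{ii} = -\sum_{j \ne i} H_{ij}$. Because all entries are real, we are entitled to regard $H$ as an operator on $\mathbb{R}^n$ and drop all complex conjugates; it is worth stating this explicitly at the outset, since otherwise $\langle \psi, H\psi\rangle$ would carry a conjugate that the statement silently omits.

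First I would expand the left-hand side as $\langle \psi, H\psi \rangle = \sum_{i,j} H_{ij}\psi_i\psi_j$ and split off the diagonal: $\sum_{i \ne j} H_{ij}\psi_i\psi_j + \sum_i H_{ii}\psi_i^2$. Substituting $H_{ii} = -\sum_{j \ne i} H_{ij}$ into the second sum rewrites it as $-\sum_{i \ne j} H_{ij}\psi_i^2$, so the left-hand side becomes $\sum_{i \ne j} H_{ij}(\psi_i\psi_j - \psi_i^2)$.

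Next I would expand the right-hand side, $-\tfrac12\sum_{i \ne j} H_{ij}(\psi_i^2 - 2\psi_i\psi_j + \psi_j^2)$. The cross term gives $+\sum_{i \ne j}H_{ij}\psi_i\psi_j$ directly. For the two square terms, the key move is to relabel the summation indices $i \leftrightarrow j$ in the sum $\sum_{i\ne j} H_{ij}\psi_j^2$ and invoke symmetry $H_{ij}=H_{ji}$, giving $\sum_{i\ne j}H_{ij}\psi_i^2$; hence $-\tfrac12\sum_{i\ne j} H_{ij}\psi_i^2 - \tfrac12\sum_{i\ne j} H_{ij}\psi_j^2 = -\sum_{i\ne j}H_{ij}\psi_i^2$. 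Collecting, the right-hand side is also $\sum_{i\ne j}H_{ij}(\psi_i\psi_j - \psi_i^2)$, which matches the left-hand side, completing the proof.

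There is no genuine obstacle here: it is a short computation, and the only thing demanding care is the index bookkeeping in the relabeling step and the remark about reality of $H$. As a closing observation one can note that the identity immediately re-proves that every Dirichlet operator is negative semidefinite, consistent with the electrical reading that the dissipated power $\sum_{i\ne j} H_{ij}(\psi_i-\psi_j)^2$, being a sum of nonnegative terms, is nonnegative.
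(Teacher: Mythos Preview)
Your proof is correct and follows essentially the same route as the paper's: expand the square, use the symmetry $H_{ij}=H_{ji}$ to merge the two quadratic terms, and use the column-sum-zero property (you via the substitution $H_{ii}=-\sum_{j\ne i}H_{ij}$, the paper directly via $\sum_i H_{ij}=0$ after extending the sum to all $i,j$). The only cosmetic difference is that the paper starts from the right-hand side and reduces it to $-2\langle\psi,H\psi\rangle$, whereas you reduce each side separately to the common expression $\sum_{i\ne j}H_{ij}(\psi_i\psi_j-\psi_i^2)$.
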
 

\begin{proof} Let's start with the formula for power:
$$ \displaystyle{ P = \sum_{i \ne j}  H_{i j} (\psi_i - \psi_j)^2} $$
Note that this sum includes the condition $i \ne j$, since we only have wires going between distinct vertices.  But the summand is zero if $i = j$, so we also have
$$  \displaystyle{ P = \sum_{i, j}  H_{i j} (\psi_i - \psi_j)^2}$$
Expanding the square, we get
$$ \displaystyle{ P = \sum_{i, j}  H_{i j} \psi_i^2 - 2 H_{i j} \psi_i \psi_j + H_{i j} \psi_j^2} $$
The middle term looks promisingly similar to $\langle \psi, H \psi \rangle$, but what about the other two terms?  Because $H_{i j} = H_{j i}$, they're equal:
$$  \displaystyle{ P = \sum_{i, j} - 2 H_{i j} \psi_i \psi_j + 2 H_{i j} \psi_j^2 } $$
And in fact they're zero!  Since $H$ is infinitesimal stochastic, we have
$$ \displaystyle{ \sum_i H_{i j} = 0} $$
so 
$$ \displaystyle{ \sum_i H_{i j} \psi_j^2 = 0} $$
and it's still zero when we sum over $j$.  We thus have
$$  \displaystyle{ P = - 2 \sum_{i, j} H_{i j} \psi_i \psi_j} $$
But since $\psi_i$ is real, this is -2 times
$$ \displaystyle{ \langle \psi, H \psi \rangle  = \sum_{i, j}  H_{i j} \overline{\psi}_i \psi_j} $$
So, we're done.  

An instant consequence of this theorem is that a Dirichlet operator has
$$ \langle \psi , H \psi \rangle \le 0 $$
for all $\psi$.  Actually most people use the opposite sign convention\index{infinitesimal stochastic!Laplacian!sign convention} in defining infinitesimal stochastic operators.  This makes $H_{i j} \le 0$, which is mildly annoying, but it gives
$$ \langle \psi , H \psi \rangle \ge 0 $$
which is nice.  When $H$ is a Dirichlet operator, defined with this opposite sign convention, $\langle \psi , H \psi \rangle$ is called a {\bf Dirichlet form}. 
\end{proof}

\subsection{The big picture}

Maybe it's a good time to step back and see where we are.

So far we've been exploring the analogy between stochastic mechanics and quantum mechanics.  Where do networks come in?  Well, they've actually come in twice so far:

\begin{enumerate} 
\item  First we saw that Petri nets can be used to describe stochastic or quantum processes where things of different kinds randomly react and turn into other things.  A Petri net is a kind of network like this:

\begin{center}
 \includegraphics[width=50mm]{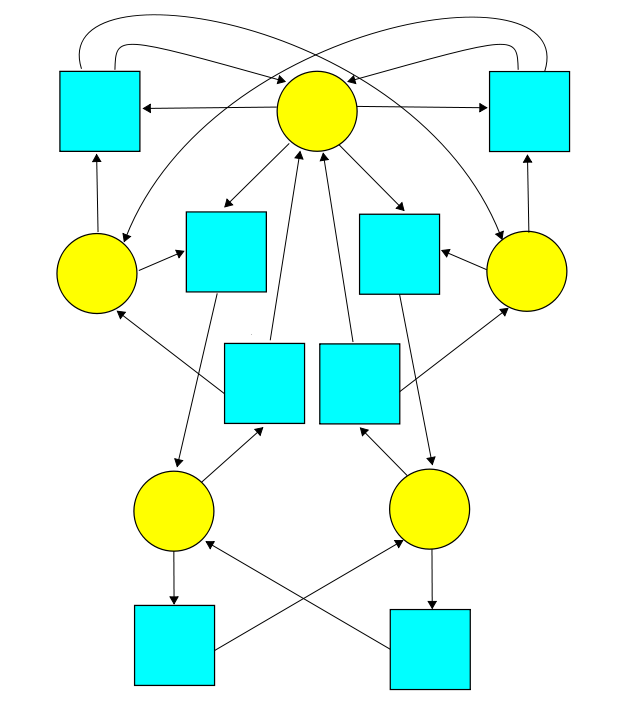}
\end{center}

The different kinds of things are the yellow circles; we called them {\bf species}.  The reactions where things turn into other things are the blue squares: we called them {\bf transitions}.  We label the transitions by numbers to say the rates at which they occur.

\item Then we looked at stochastic or quantum processes where in each transition a single thing turns into a single thing.  We can draw these as Petri nets where each transition has just one species as input and one species as output.  But we can also draw them as directed graphs with edges labelled by numbers:

\begin{center}
 \includegraphics[width=60mm]{markov_chain_noether_counterexample.png}
\end{center}

\noindent
Now the dark blue boxes are species and the \emph{edges} are transitions!  Furthermore, in this situation we often call the species {\bf states}.  
\end{enumerate}

In this section we looked at a special case of the second kind of network: the Dirichlet operators.  For these the `forward' transition rate $H_{i j}$ equals the `reverse' rate $H_{j i}$, so our graph can be undirected: no arrows on the edges.  And for these the rates $H_{i i}$ are determined by the rest, so we can omit the edges from vertices to themselves.  We also omit edges labelled by zero:

\begin{center}
 \includegraphics[width=55mm]{complete-graph-5-zero.png}
\end{center}

\noindent
The result can be seen as an electrical circuit made of resistors!  So we're building up a little dictionary:

\begin{itemize} 
\item Stochastic mechanics: $\psi_i$ is a probability and $H_{i j}$ is a transition rate (probability per time).
\item Quantum mechanics: $\psi_i$ is an amplitude and $H_{i j}$ is a transition rate (amplitude per time).
\item Circuits made of resistors: $\psi_i$ is a voltage and $H_{i j}$ is a conductance.
\end{itemize} 

This dictionary may seem rather odd---especially the third item, which looks completely different than the first two!  But that's good: when things aren't odd, we don't get many new ideas.  The whole point of this `network theory' business is to think about networks from many different viewpoints and let the sparks fly! 

Actually, this particular oddity is well-known in certain circles.  We've been looking at the \emph{discrete}  version, where we have a finite set of states.  But in the \emph{continuum}, the classic example of a Dirichlet operator is the Laplacian $H = \nabla^2$.  And then:

\begin{itemize} 
\item The \href{http://en.wikipedia.org/wiki/Heat_equation}{{\bf heat equation}}:
$$ \frac{d}{d t} \psi = \nabla^2 \psi $$
is fundamental to stochastic mechanics.

\item The \href{http://en.wikipedia.org/wiki/Schr\%C3\%B6dinger\_equation}{{\bf Schr\"{o}dinger equation}}:
$$ \frac{d}{d t} \psi = -i \nabla^2 \psi $$
is fundamental to quantum mechanics.

\item The \href{http://en.wikipedia.org/wiki/Poisson\%27s\_equation}{{\bf Poisson equation}}:
$$ \nabla^2 \psi = -\rho $$
is fundamental to electrostatics.
\end{itemize} 

Briefly speaking, \href{http://en.wikipedia.org/wiki/Electrostatics}{electrostatics} is the study of how the electric potential $\psi$ depends on the charge density $\rho$.  The theory of electrical circuits made of resistors can be seen as a special case, at least when the current isn't changing with time.

If you want to learn more, this is a great place to start:

\begin{enumerate} 
\item[\cite{DS84}] P. G. Doyle and J. L. Snell, \href{http://www.math.dartmouth.edu/~doyle/}{\sl Random Walks and Electrical Circuits}, Mathematical Association of America, Washington DC, 1984. 
\end{enumerate} 

\noindent
This free online book explains, in a really fun informal way, how random walks on graphs, are related to electrical circuits made of resistors.\index{electrical circuits!resistor networks}\index{random walk!and electrical networks}  To dig deeper into the continuum case, try:

\begin{enumerate} 
\item[\cite{Fuk80}]  M. Fukushima, {\sl Dirichlet Forms and Markov Processes}, North-Holland, Amsterdam, 1980.  
\end{enumerate} 

\index{electrical circuits!and Dirichlet operators|)}\index{Dirichlet operator|)}


\newpage
\section[Perron--Frobenius theory]{Perron--Frobenius theory}\label{sec:16} 

We're in the middle of a battle: in addition to our typical man vs.\ equation scenario, it's a battle between two theories.  You should know the two opposing forces well by now.  It's our old friends, at it again:

\begin{center} 
\emph{Stochastic Mechanics vs Quantum Mechanics!}
\end{center} 

Today we're reporting live from a crossroads, and we're facing a skirmish that gives rise to what some might consider a paradox.  Let's sketch the main thesis before we get our hands dirty with the gory details.   

First we need to tell you that the battle takes place at the intersection of stochastic and quantum mechanics.  We recall from Section~\ref{sec:15} that there is a class of operators called `Dirichlet operators'\index{Dirichlet operator} that are valid Hamiltonians for both stochastic and quantum mechanics.  In other words, you can use them to generate time evolution both for old-fashioned random processes and for quantum processes.
 
Staying inside this class allows the theories to fight it out on the same turf.  We will be considering a special subclass of Dirichlet operators, which we call `irreducible' Dirichlet operators.  These are the ones that generate Markov processes where starting at any vertex, we have a nonzero chance of winding up at any other vertex. When considering this subclass, we see something interesting: 

\begin{thesis} Let $H$ be an irreducible Dirichlet operator with $n$ eigenstates.  In stochastic mechanics, there is only one valid state that is an eigenvector of $H$: the unique so-called `Perron--Frobenius state'.  The other $n-1$ eigenvectors are forbidden states of a stochastic system: the stochastic system is either in the Perron--Frobenius state, or in a superposition of at least two eigensvectors.  In quantum mechanics, all $n$ eigenstates of $H$ are valid states.  
\end{thesis}\index{Perron–Frobenius theorem!and quantum states} 

\noindent This might sound like a riddle, but now we'll prove, riddle or not, that it's a fact.  

\subsection{At the intersection of two theories} 

The \emph{typical} view of how quantum mechanics and probability theory come into contact looks like this:

\begin{center}
 \includegraphics[width=50mm]{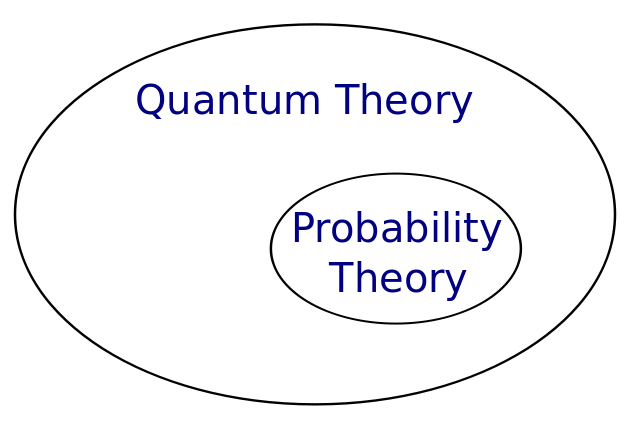}
\end{center}

\noindent
The idea is that quantum theory generalizes classical probability theory by considering observables that don't commute.

That's perfectly valid, but we've been exploring an alternative view in this series.  Here quantum theory doesn't subsume probability theory, but they intersect:

\begin{center}
 \includegraphics[width=70mm]{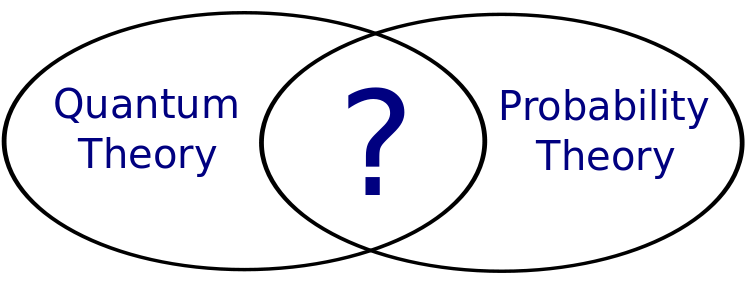}
\end{center}

\noindent
What goes in the middle, you might ask?  We saw in Section~\ref{sec:15_circuits} that electrical circuits made of resistors constitute the intersection!

\begin{center}
 \includegraphics[width=70mm]{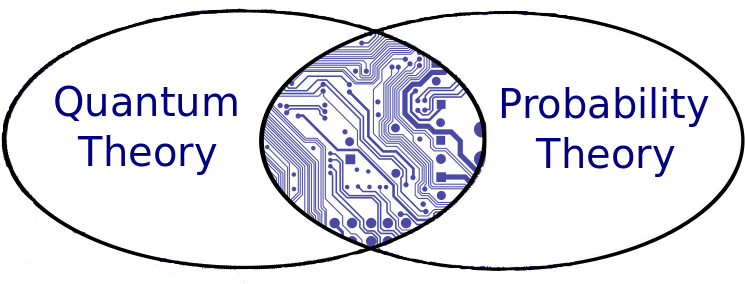}
\end{center}

\noindent
For example, a circuit like this:

\begin{center}
 \includegraphics[width=80mm]{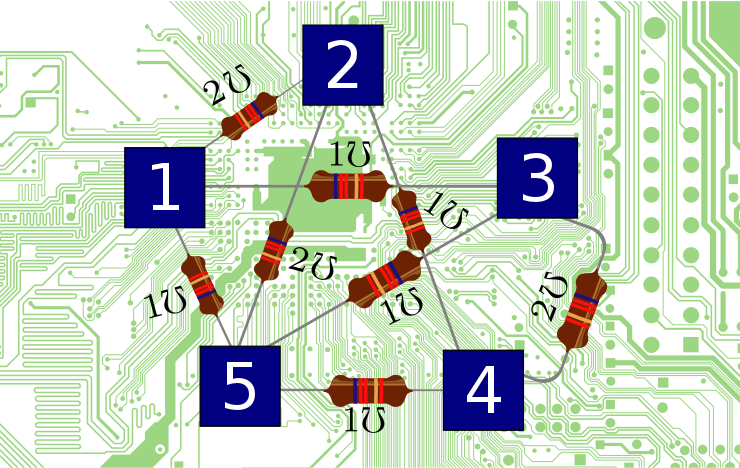}
\end{center}

\noindent
gives rise to a Hamiltonian $H$ that's good both for quantum mechanics and stochastic mechanics.   Indeed, we saw that the power dissipated by a circuit made of resistors is related to the familiar quantum theory concept known as the expectation value of the Hamiltonian!
$$ \textrm{power} = -2 \langle \psi, H \psi \rangle $$
Oh---and you might think we made a mistake and wrote our $\Omega$ (ohm) symbols upside down.  We didn't.  It happens that $\mho$ is the symbol for a `mho'---a unit of conductance that's the reciprocal of an ohm. Reread Section~\ref{sec:15} for the details.

\subsection{Stochastic mechanics versus quantum mechanics}

Before we plunge in, let's review how states, time evolution, symmetries and observables work in the two theories.   In this section we'll take a lowbrow approach, fixing a basis\index{vector space!basis} for our vector space of states, and assuming it's finite-dimensional so that all vectors have $n$ components which are either complex or real numbers.   In other words, we'll treat our space as either $ \mathbb{C}^n$ or $ \mathbb{R}^n$.  In this fashion, linear operators that map such spaces to themselves will be represented as square matrices.  

Vectors will be written as $\psi_i$ where the index $i$ runs from 1 to $n$, and we think of each choice of the index as a {\bf state} of our system---but since we'll be using that word in other ways too, let's call it a {\bf configuration}.  It's just a basic way our system can be.

\subsubsection*{States} 

Besides the configurations $i = 1,\dots, n$, we have more general states that tell us the probability or amplitude of finding our system in one of these configurations:
\begin{itemize}
\item {\bf Stochastic states} are $n$-tuples of nonnegative real numbers:  
\end{itemize} 
$$ \psi_i \in \mathbb{R}^+ $$
The probability of finding the system in the $i$th configuration is defined to be $\psi_i$.  For these probabilities to sum to one, $\psi_i$ needs to be normalized like this: 
$$ \sum_i \psi_i = 1 $$
or in the notation we're using: 
$$ \langle \psi \rangle = 1 $$
where we define
$$ \langle \psi \rangle = \sum_i \psi_i $$
\begin{itemize} 
\item {\bf Quantum states} are $n$-tuples of complex numbers:
\end{itemize} 
$$ \psi_i \in \mathbb{C} $$
The probability of finding a state in the $i$th configuration is defined to be $|\psi(x)|^2$.  For these probabilities to sum to one, $\psi$ needs to be normalized like this:
$$ \sum_i |\psi_i|^2 = 1 $$
or in other words 
$$ \langle \psi,  \psi \rangle = 1 $$
where the {\bf inner product} of two vectors $\psi$ and $\phi$ is defined by
$$ \langle \psi, \phi \rangle = \sum_i \overline{\psi}_i \phi_i $$
Now, the usual way to turn a quantum state $\psi$ into a stochastic state is to take the absolute value of each number $\psi_i$ and then square it.  However, if the numbers $\psi_i$ happen to be nonnegative, we can also turn $\psi$ into a stochastic state simply by multiplying it by a number to ensure $\langle \psi \rangle = 1$.

This is very unorthodox, but it lets us evolve the same vector $\psi$ either stochastically or quantum-mechanically, using the recipes we'll describe next.  In physics jargon these correspond to evolution in `real time' and `imaginary time'.  But don't ask which is which: from a quantum viewpoint stochastic mechanics uses imaginary time, but from a stochastic viewpoint it's the other way around! 

\subsubsection*{Time evolution}

Time evolution works similarly in stochastic and quantum mechanics, but with a few big differences:

\begin{itemize} 
\item In stochastic mechanics the state changes in time according to the {\bf master equation}:\index{stochastic mechanics!master equation} 
$$  \frac{d}{d t} \psi(t) = H \psi(t) $$
which has the solution
$$ \psi(t) = \exp(t H) \psi(0) $$
\item In quantum mechanics the state changes in time according to {\bf Schr\"{o}dinger's equation}\index{quantum mechanics!Schr\"{o}dinger's equation}:
$$  \frac{d}{d t} \psi(t) = -i H \psi(t) $$
which has the solution
$$ \psi(t) = \exp(-i t H) \psi(0) $$
\end{itemize} 

The operator $H$ is called the {\bf Hamiltonian}.  The properties it must have depend on whether we're doing stochastic mechanics or quantum mechanics:

\begin{itemize} 
\item  We need $H$ to be {\bf infinitesimal stochastic} for time evolution given by $\exp(t H)$ to send stochastic states to stochastic states.  In other words, we need that (i) its columns sum to zero and (ii) its off-diagonal entries are real and nonnegative:\index{operator!infinitesimal stochastic} \index{infinitesimal stochastic operator}
$$ \sum_i H_{i j}=0 $$
$$ i\neq j\Rightarrow H_{i j}> 0 $$
\item We need $ H$ to be {\bf self-adjoint} for time evolution given by $\exp(-i t H)$ to send quantum states to quantum states.  So, we need
\end{itemize} 
$$  H = H^\dagger $$
where we recall that the adjoint of a matrix is the conjugate of its transpose:
$$ (H^\dagger)_{i j} := \overline{H}_{j i} $$
\index{operator!self-adjoint} \index{self-adjoint operator}
We are concerned with the case where the operator $H$ generates both a valid quantum evolution and also a valid stochastic one:
\begin{itemize} 
\item $H$ is a {\bf Dirichlet operator} if it's both self-adjoint and infinitesimal stochastic.
\end{itemize} 
We will soon go further and zoom in on this intersection!  But first let's finish our review.  

\subsubsection*{Symmetries}

As explained in Section~\ref{sec:11_symmetries}, besides states and observables we need symmetries, which are transformations that map states to states.  These include the evolution operators which we only briefly discussed in the preceding subsection.

\begin{itemize} 
\item  A linear map $U$ that sends quantum states to quantum states is called an  {\bf isometry},\index{quantum mechanics!isometry!definition of} and isometries are characterized by this property:
$$ U^\dagger U = 1$$
\item A linear map $U$ that sends stochastic states to stochastic states is called a {\bf stochastic operator}, and stochastic operators are characterized by these properties:\index{stochastic operator}\index{operator!stochastic}
$$ \sum_i U_{i j} = 1 $$
and 
$$ U_{i j}> 0 $$
\end{itemize} 

A notable difference here is that in our finite-dimensional situation, isometries are always invertible, but stochastic operators may not be!  If $U$ is an $n \times n$ matrix that's an isometry,\index{matrix!isometry} $U^\dagger$ is its inverse.  So, we also have
$$ U U^\dagger = 1$$\index{quantum mechanics!isometry}
and we say $U$ is {\bf unitary}.  But if $U$ is stochastic, it may not have an inverse---and even if it does, its inverse is rarely stochastic.   This explains why in stochastic mechanics time evolution is often not reversible, while in quantum mechanics it always is.

\begin{problem}\label{prob:23} 
  Suppose $U$ is a stochastic $n \times n$ matrix whose inverse is stochastic.  What are the possibilities for $U$?
\end{problem}

It is quite hard for an operator to be a symmetry in both stochastic and quantum mechanics, especially in our finite-dimensional situation:

\begin{problem}\label{prob:24} 
 Suppose $U$ is an $n \times n$ matrix that is both stochastic and unitary.  What are the possibilities for $U$?
\end{problem}

\subsubsection*{Observables}

`Observables' are real-valued quantities that can be measured, or predicted, given a specific theory.\index{expected value!quantum vs stochastic}   
\begin{itemize} 
\item In quantum mechanics, an observable is given by a self-adjoint matrix $O$, and the expected value of the observable $O$ in the quantum state $\psi$ is
$$ \langle \psi , O \psi \rangle = \sum_{i,j} \overline{\psi}_i O_{i j} \psi_j $$
\item In stochastic mechanics, an observable $O$ has a value $O_i$ in each configuration $i$, and the expected value of the observable $O$ in the stochastic state $\psi$ is
$$ \langle O \psi \rangle = \sum_i O_i \psi_i $$
\end{itemize} 

We can turn an observable in stochastic mechanics into an observable in quantum mechanics by making a diagonal matrix whose diagonal entries are the numbers $O_i$.

\subsection{From graphs to matrices}
\label{sec:16_graphs_to_matrices}

In Section~\ref{sec:15} we explained how a graph with positive numbers on its edges gives rise to a Hamiltonian in both quantum and stochastic mechanics---in other words, a Dirichlet operator.

Recall how it works.  We'll consider \href{http://en.wikipedia.org/wiki/Simple_graph#Simple_graph}{{\bf simple graphs}}: graphs without arrows on their edges, with at most one edge from one vertex to another, and with no edge from a vertex to itself.  And we'll only look at graphs with finitely many vertices and edges.  We'll assume each edge is labelled by a positive number, like this:

\begin{center}
 \includegraphics[width=55mm]{complete-graph-5-zero.png}
\end{center}

\noindent
If our graph has $n$ vertices, we can create an $n \times n$ matrix $A$ where $A_{i j}$ is the number labelling the edge from $i$ to $j$, if there is such an edge, and 0 if there's not.  This matrix is symmetric, with real entries, so it's self-adjoint.   So $A$ is a valid Hamiltonian in quantum mechanics.  

How about stochastic mechanics?   Remember that a Hamiltonian in stochastic mechanics needs to be `infinitesimal stochastic'.  So, its off-diagonal entries must be nonnegative, which is indeed true for our $A$, but also the sums of its columns must be zero, which is not true when our $A$ is nonzero.

But now comes the best news you've heard all day: we can improve $A$ to a stochastic operator in a way that is completely determined by $A$ itself!  This is done by subtracting a diagonal matrix $L$ whose entries are the sums of the columns of $A$:
$$L_{i i} = \sum_i A_{i j} $$
$$  i \ne j \Rightarrow L_{i j} = 0 $$
It's easy to check that
$$ H = A - L $$
is still self-adjoint, but now also infinitesimal stochastic.  So, it's a \emph{Dirichlet operator}: a good Hamiltonian for \emph{both} stochastic and quantum mechanics!  
\index{Dirichlet operator}

In Section~\ref{sec:15}, we saw a bit more: \emph{every} Dirichlet operator arises this way.  It's easy to see.  You just take your Dirichlet operator and make a graph with one edge for each nonzero off-diagonal entry.  Then you label the edge with this entry.  So, Dirichlet operators are essentially the same as finite simple graphs with edges labelled by positive numbers. 

Now, a simple graph can consist of many separate `pieces', called {\bf components}.  Then there's no way for a particle hopping along the edges to get from one component to another, either in stochastic or quantum mechanics.  So we might as well focus our attention on graphs with just one component.  These graphs are called `connected'.  In other words:

\begin{definition} 
A simple graph is {\bf connected} if it is nonempty and there is a path of edges connecting any vertex to any other.  
\end{definition} 

Our goal now is to understand more about Dirichlet operators coming from connected graphs.  For this we need to learn the Perron--Frobenius theorem.  But let's start with something easier.

\subsection{Perron's theorem}\index{Perron's theorem|(}

In quantum mechanics it's good to think about observables that have positive expected values:\index{expected value!positive} 
$$   \langle \psi, O \psi \rangle \ge 0 $$
for every quantum state $\psi \in \mathbb{C}^n$.  These are called {\bf positive definite}.\index{matrix!positive definite}  But in stochastic mechanics it's good to think about matrices that are positive in a more naive sense:

\begin{definition}
An $n \times n$ real matrix $T$ is {\bf positive}\index{matrix!positive} if all its entries are positive: 
$$    T_{i j} \ge 0 $$
for all $1 \le i, j \le n$.
\end{definition} 

Similarly: 

\begin{definition}
A vector $\psi \in \mathbb{R}^n$ is {\bf positive}\index{vector!positive!definition of} if all its components are positive:
$$   \psi_i \ge 0 $$
for all $1 \le i \le n$.
\end{definition}

We'll also define {\bf nonnegative}\index{matrix!nonnegative}\index{vector!nonnegative} matrices and vectors in the same way, replacing $> 0$ by $\ge 0$.  A good example of a nonnegative vector is a stochastic state.

In 1907, Perron proved the following fundamental result about positive matrices:

\begin{theorem}[{\bf Perron's Theorem}]\index{Perron's theorem!statement of} 
  Given a positive square matrix $T$, there is a positive real number $r$, called the {\bf Perron--Frobenius eigenvalue} of $T$, such that $r$ is an eigenvalue of $T$ and any other eigenvalue $\lambda$ of $T$ has $ |\lambda| < r$.  Moreover, there is a positive vector $\psi \in \mathbb{R}^n$ with $T \psi = r \psi$.  Any other vector with this property is a scalar multiple of $\psi$.  Furthermore, any nonnegative vector that is an eigenvector of $T$ must be a scalar multiple of $\psi$.
\end{theorem} 

In other words, if $T$ is positive, it has a unique eigenvalue with the largest absolute value.  This eigenvalue is positive.  Up to a constant factor, it has an unique eigenvector.  We can choose this eigenvector to be positive.  And then, up to a constant factor, it's the \emph{only} nonnegative eigenvector of $T$.

\subsection{From matrices to graphs}
\label{sec:16_matrices_to_graphs}

The conclusions of Perron's theorem don't hold for matrices that are merely nonnegative.  For example, these matrices
$$ \left( \begin{array}{cc} 1 & 0 \\ 0 & 1 \end{array} \right) , \qquad \left( \begin{array}{cc} 0 & 1 \\ 0 & 0 \end{array} \right) $$
are nonnegative, but they violate lots of the conclusions of Perron's theorem.  

Nonetheless, in 1912 Frobenius published an impressive generalization of Perron's result.  In its strongest form, it doesn't apply to \emph{all} nonnegative matrices; only to those that are `irreducible'.  So, let us define those.

\index{Perron's theorem|)}

We've seen how to build a matrix from a graph.  Now we need to build a graph from a matrix!  Suppose we have an $n \times n$ matrix $T$.  Then we can build a graph $G_T$ with $n$ vertices where there is an edge from the $i$th vertex to the $j$th vertex if and only if $T_{i j} \ne 0$.  

But watch out: this is a different kind of graph!  It's a \href{http://en.wikipedia.org/wiki/Directed_graph}{{\bf directed graph}}, meaning the edges have directions, there's at most one edge going from any vertex to any vertex, and we do allow an edge going from a vertex to itself.  There's a stronger concept of `connectivity' for these graphs:

\begin{definition}
 A directed graph is {\bf strongly connected} if there is a directed path of edges going from any vertex to any other vertex.\index{graph theory!strongly connected graph}\index{strongly connected graph}    
\end{definition}

So, you have to be able to walk along edges from any vertex to any other vertex, but always following the direction of the edges!  Using this idea we define irreducible matrices:

\begin{definition}\index{strongly connected graph!and irreducibility} 
 A nonnegative square matrix $T$ is {\bf irreducible} if its graph $G_T$ is strongly connected.  
\end{definition}

\subsection{The Perron--Frobenius theorem} 
\label{sec:16_perron_frobenius}

Now we are ready to state:

\begin{theorem}[{\bf The Perron--Frobenius Theorem}]\index{Perron–Frobenius theorem!statement of} 
 Given an irreducible nonnegative square matrix $T$, there is a positive real number $r$, called the {\bf Perron--Frobenius eigenvalue} of $T$, such that $r$ is an eigenvalue of $T$ and any other eigenvalue $\lambda$ of $T$ has $|\lambda| \le r$.  Moreover, there is a positive vector $\psi \in \mathbb{R}^n$ with $T\psi = r \psi$.  Any other vector with this property is a scalar multiple of $\psi$.  Furthermore, any nonnegative vector that is an eigenvector of $T$ must be a scalar multiple of $\psi$.
\end{theorem}

The only conclusion of this theorem that's weaker than those of Perron's theorem is that there may be other eigenvalues with $|\lambda| = r$.  For example, this matrix is irreducible\index{matrix!irreducible} and nonnegative\index{matrix!nonnegative}:
$$ \left( \begin{array}{cc} 0 & 1 \\ 1 & 0 \end{array} \right) $$
Its Perron--Frobenius eigenvalue is 1, but it also has -1 as an eigenvalue.    In general, Perron--Frobenius theory says quite a lot about the other eigenvalues on the circle $|\lambda| = r,$ but we won't need that fancy stuff here.

Perron--Frobenius theory is useful in many ways, from highbrow math to ranking football teams.   We'll need it not just now but also in Section \ref{sec:22}.  There are many books and other sources of information for those that want to take a closer look at this subject.  If you're interested, you can \href{https://www.google.com/search?q=perron-frobenius}{search online} or take a look at these:

\begin{itemize} 
\item  Dimitrious Noutsos, \href{http://www.math.uoi.gr/~dnoutsos/Papers_pdf_files/slide_perron.pdf}{Perron Frobenius theory and some extensions}, 2008.  (Includes proofs of the basic theorems.)

\item  V. S. Sunder, \href{http://www.imsc.res.in/~sunder/pf.pdf}{Perron Frobenius theory}, 18 December 2009.  (Includes applications to graph theory, Markov chains and von Neumann algebras.)

\item  Stephen Boyd, \href{http://www.stanford.edu/class/ee363/lectures/pf.pdf}{Lecture 17: Perron Frobenius theory}, Winter 2008-2009.  (Includes a max-min characterization of the Perron--Frobenius eigenvalue and applications to Markov chains, economics, population growth and power control.)

\end{itemize} 

If you are interested and can read German, the original work appears here:

\begin{enumerate} 
\item[\cite{Per07}]  Oskar Perron, Zur Theorie der Matrizen, \emph{Math. Ann.} {\bf 64} (1907), 248--63.

\item[\cite{Fro12}]  Georg Frobenius, \"{U}ber Matrizen aus nicht negativen Elemente, \emph{S.-B. Preuss Acad. Wiss. Berlin} (1912), 456--477.   
\end{enumerate}

\noindent And, of course, there's this:

\begin{itemize} 
\item Wikipedia, \href{http://en.wikipedia.org/wiki/Perron%E2%80%93Frobenius_theorem}{Perron--Frobenius theorem}.
\end{itemize} 

\noindent It's quite good.

\subsection{Irreducible Dirichlet operators}\index{Dirichlet operator!irreducible}

Now comes the payoff.\index{Dirichlet operator!irreducible}  We saw how to get a Dirichlet operator $H$ from any finite simple graph with edges labelled by positive numbers.  Now let's apply Perron--Frobenius theory to prove our thesis.

Unfortunately, the matrix $H$ is rarely nonnegative.  If you remember how we built it, you'll see its off-diagonal entries will always be nonnegative... but its diagonal entries can be negative.  

Luckily, we can fix this just by adding a big enough multiple of the identity matrix to $H$!   The result is a \index{matrix!nonnegative}nonnegative matrix
$$  T = H + c I $$
where $c > 0$ is some large number.  This matrix $T$ has the same eigenvectors as $H$.   The off-diagonal matrix entries of $T$ are the same as those of $H$, so $T_{i j}$ is nonzero for $i \ne j$ exactly when the graph we started with has an edge from $i$ to $j$.  So, for $i \ne j$, the graph $G_T$ will have an directed edge going from $i$ to $j$ precisely when our original graph had an edge from $i$ to $j$.  And that means that if our original graph was connected, $G_T$ will be strongly connected.  Thus, by definition, the matrix $T$ is irreducible!\index{strongly connected graph!and irreducibility} 

Since $T$ is nonnegative and irreducible, the Perron--Frobenius theorem swings into action and we conclude:

\begin{lemma} 
Suppose $H$ is the Dirichlet operator coming from a connected finite simple graph with edges labelled by positive numbers.  Then the eigenvalues of $H$ are real.  Let $\lambda$ be the largest eigenvalue.  Then there is a positive vector $\psi \in \mathbb{R}^n$ with $H\psi = \lambda \psi$.  Any other vector with this property is a scalar multiple of $\psi$.  Furthermore, any nonnegative vector that is an eigenvector of $H$ must be a scalar multiple of $\psi$.
\end{lemma}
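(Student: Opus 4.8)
The plan is to deduce the Lemma directly from the Perron--Frobenius Theorem by way of the shifted operator $T = H + cI$, exactly as the preceding paragraph sets up. First I would observe that $H$ is symmetric with real entries (this was established in Section~\ref{sec:15}: a Dirichlet operator coming from a simple graph with positive edge labels is a real symmetric matrix), so all its eigenvalues are real and $H$ is diagonalizable with an orthonormal eigenbasis. In particular there is a largest eigenvalue $\lambda$, and it makes sense to ask about its eigenvectors. This takes care of the first claim of the Lemma.

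Next I would pick $c > 0$ large enough that every diagonal entry $H_{ii} + c$ is nonnegative --- for instance $c = \max_i |H_{ii}|$ works --- and set $T = H + cI$. Then $T$ is nonnegative: its off-diagonal entries agree with those of $H$, which are $\ge 0$ because $H$ is infinitesimal stochastic, and its diagonal entries are nonnegative by choice of $c$. Moreover $T$ is irreducible: for $i \ne j$ we have $T_{ij} \ne 0$ iff the original graph has an edge between $i$ and $j$, so the directed graph $G_T$ has an arc $i \to j$ precisely when the original undirected graph has that edge; since the original graph is connected, $G_T$ is strongly connected, which is the definition of irreducibility. So the Perron--Frobenius Theorem applies to $T$.

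The Perron--Frobenius Theorem then gives a Perron--Frobenius eigenvalue $r > 0$ of $T$ with a positive eigenvector $\psi$, unique up to scalar, and such that any nonnegative eigenvector of $T$ is a scalar multiple of $\psi$. The final step is to transport all of this back to $H$. Since $T = H + cI$, the operators $T$ and $H$ have exactly the same eigenvectors, and eigenvalues are related by $\mu \mapsto \mu + c$. Because $r$ is the eigenvalue of $T$ of largest absolute value and $T$'s eigenvalues are all real (being shifts of $H$'s real eigenvalues) and at most $r$, it is in fact the largest eigenvalue of $T$; subtracting $c$, the eigenvalue $\lambda = r - c$ is the largest eigenvalue of $H$. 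Its eigenvector is the Perron vector $\psi$, which is positive and unique up to scalar, and any nonnegative eigenvector of $H$ is a nonnegative eigenvector of $T$, hence a scalar multiple of $\psi$. This is exactly the statement of the Lemma.

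I do not expect a serious obstacle here; the content is entirely in the Perron--Frobenius Theorem, which we are allowed to assume. The only points requiring a little care are the two bookkeeping claims: that adding $cI$ genuinely preserves the eigenvectors and merely translates the spectrum (immediate, since $I$ commutes with everything), and that ``largest $|\lambda|$'' for $T$ coincides with ``largest $\lambda$'' for $H$ --- this uses that the spectrum of $H$ is real, so after shifting by a large enough $c$ all eigenvalues of $T$ become nonnegative and the one of largest modulus is the one of largest value. If one wanted to be cautious one could simply choose $c$ large enough that $T$ is not only nonnegative but has all eigenvalues nonnegative (possible since $H$'s spectrum is a finite real set), making this last point transparent.
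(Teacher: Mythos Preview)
Your proof is correct and follows essentially the same route as the paper: shift $H$ to the nonnegative irreducible matrix $T = H + cI$, apply Perron--Frobenius, and translate the spectrum back. If anything you are a bit more careful than the paper about why the Perron--Frobenius eigenvalue of $T$ (which a priori only has largest \emph{absolute value}) is in fact the \emph{largest} eigenvalue, using that $H$'s spectrum is real.
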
 

\begin{theorem}
 The eigenvalues of $H$ are real since $H$ is self-adjoint.  Notice that if $r$ is the Perron--Frobenius eigenvalue of $T = H + c I$ and
$$ T \psi = r \psi$$
then 
$$ H \psi = (r - c)\psi $$
By the Perron--Frobenius theorem the number $r$ is positive, and it has the largest absolute value of any eigenvalue of $T$.  Thanks to the subtraction, the eigenvalue $r - c$ may not have the largest absolute value of any eigenvalue of $H$.  It is, however, the largest eigenvalue of $H$, so we take this as our $\lambda$.  The rest follows from the Perron--Frobenius theorem.  
\end{theorem}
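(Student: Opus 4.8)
The plan is to reduce everything to the Perron--Frobenius theorem by a shift of spectrum. First I would note that $H$ is self-adjoint, being a Dirichlet operator, so its eigenvalues are automatically real; in particular there is a well-defined largest eigenvalue. That disposes of the first claim for free, and reality of the spectrum will be used again at the end.

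Next I would manufacture a nonnegative matrix with the same eigenvectors as $H$. Recall that $H = A - L$, where $A_{ij} \geq 0$ records the edge labels (with $A_{ii} = 0$, since the graph is simple) and $L$ is diagonal with $L_{ii} = \sum_j A_{ij}$, so the only negative entries of $H$ sit on the diagonal, where $H_{ii} = -\sum_j A_{ij} \leq 0$. Choosing any constant $c \geq \max_i \sum_j A_{ij}$ makes $T := H + cI$ entrywise nonnegative, and $T$ has exactly the same eigenvectors as $H$, each eigenvalue being shifted up by $c$. The structural point I would then check is that $T$ is \emph{irreducible}: for $i \neq j$ we have $T_{ij} = A_{ij}$, so the directed graph $G_T$ has an edge $i \to j$ precisely when the original undirected graph has an edge between $i$ and $j$; since that graph is connected and its edges are undirected, every vertex is reachable from every other by a directed path in $G_T$, i.e.\ $G_T$ is strongly connected. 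Hence $T$ is nonnegative and irreducible.

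Now I would invoke the Perron--Frobenius theorem for $T$. It supplies a Perron--Frobenius eigenvalue $r > 0$ with a positive eigenvector $\psi$, unique up to a scalar, such that every nonnegative eigenvector of $T$ is a scalar multiple of $\psi$, and such that $r \geq |\mu|$ for every eigenvalue $\mu$ of $T$. Translating back through $T = H + cI$ gives $H\psi = (r-c)\psi$, with $\psi$ positive and unique up to a scalar among eigenvectors sharing this eigenvalue. To finish I must identify $r - c$ with the largest eigenvalue $\lambda$ of $H$: since $T$ is symmetric all its eigenvalues $\mu$ are real, so $\mu \leq |\mu| \leq r$, whence $r$ is the largest eigenvalue of $T$ and $r - c$ is the largest eigenvalue of $H$; set $\lambda = r - c$. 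The remaining assertions---uniqueness up to a scalar, and that any nonnegative eigenvector of $H$ is a scalar multiple of $\psi$---transfer verbatim from Perron--Frobenius, since $H$ and $T$ have the same eigenvectors.

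The step I expect to be the real content, as opposed to bookkeeping, is the verification that connectivity of the undirected graph forces $G_T$ to be strongly connected, so that the \emph{irreducible} version of Perron--Frobenius applies rather than merely the positive-matrix version; everything else is a spectral shift together with a careful unwinding of definitions. A minor point to watch is that Perron--Frobenius delivers the eigenvalue of largest modulus, and one must combine this with the reality of the spectrum and the positivity of $r$ to conclude that it is the largest eigenvalue in the ordered sense---this is where self-adjointness of $H$ enters a second time.
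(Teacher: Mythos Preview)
Your proof is correct and follows essentially the same route as the paper: shift $H$ by $cI$ to get a nonnegative irreducible matrix $T$, apply Perron--Frobenius, and translate back. The paper relies on the preceding paragraphs for the irreducibility of $T$ and simply asserts that $r-c$ is the largest eigenvalue of $H$, whereas you spell out both the irreducibility argument and the step $\mu \le |\mu| \le r$ using reality of the spectrum; this is the same argument, just made more explicit.
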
 

But in fact we can improve this result, since the largest eigenvalue $\lambda$ is just zero.  Let's also make up a definition, to make our result sound more slick:

\begin{definition} 
A Dirichlet operator is {\bf irreducible} if it comes from a connected finite simple graph with edges labelled by positive numbers.
\end{definition} 

\noindent
This meshes nicely with our earlier definition of irreducibility for nonnegative matrices.  Now:

\begin{theorem} 
\label{theorem_equilibrium}
Suppose $H$ is an irreducible Dirichlet operator.  Then $H$ has zero as its largest real eigenvalue.  There is a positive vector $\psi \in \mathbb{R}^n$ with $H\psi = 0$.  Any other vector with this property is a scalar multiple of $\psi$.  Furthermore, any nonnegative vector that is an eigenvector of $H$ must be a scalar multiple of $\psi$.
\end{theorem}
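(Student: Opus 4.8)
The strategy is to deduce Theorem \ref{theorem_equilibrium} from the already-established Lemma (the Perron--Frobenius statement for the Dirichlet operator $H$ of a connected finite simple graph) by pinning down that the largest eigenvalue $\lambda$ is exactly $0$. The only new content beyond the Lemma is the identification $\lambda = 0$; everything else (reality of the spectrum, existence and uniqueness up to scalar of a positive eigenvector, and the fact that any nonnegative eigenvector is a scalar multiple of it) transfers verbatim once we know $\lambda = 0$. So the plan has two halves: first show $0$ is an eigenvalue of $H$, and second show no eigenvalue of $H$ exceeds $0$.

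First I would exhibit $0$ as an eigenvalue. Since $H$ is infinitesimal stochastic, every column sums to zero: $\sum_i H_{ij} = 0$ for all $j$. This says precisely that the all-ones vector $\mathbf{1}$ is a left eigenvector of $H$ with eigenvalue $0$, i.e. $\mathbf{1}^\dagger H = 0$. Because $H$ is self-adjoint (indeed symmetric with real entries, as shown in Section \ref{sec:15}), taking the transpose gives $H \mathbf{1} = 0$ as well — though for the eigenvalue statement it is enough that $H$ and $H^\dagger$ have the same spectrum, so $0$ is an eigenvalue of $H$. Hence $0$ is a real eigenvalue, so the largest real eigenvalue $\lambda$ from the Lemma satisfies $\lambda \ge 0$.

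Second I would show $\lambda \le 0$, which combined with the above forces $\lambda = 0$. Here I invoke the power-dissipation identity from Section \ref{sec:15}:
$$ \langle \psi, H\psi \rangle = -\frac{1}{2} \sum_{i \ne j} H_{ij} (\psi_i - \psi_j)^2 \le 0 $$
for every real vector $\psi$, since each $H_{ij} \ge 0$ for $i \ne j$. For a real eigenvector $\psi$ with $H\psi = \mu\psi$ and $\psi \ne 0$, this gives $\mu \langle \psi, \psi \rangle \le 0$, hence $\mu \le 0$. Every eigenvalue of $H$ is real (self-adjointness), and real eigenvectors can be chosen for a real symmetric matrix, so every eigenvalue is $\le 0$; in particular $\lambda \le 0$. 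Therefore $\lambda = 0$. Now I apply the Lemma with this value: there is a positive vector $\psi \in \mathbb{R}^n$ with $H\psi = \lambda\psi = 0$, it is unique up to a scalar multiple, and any nonnegative eigenvector of $H$ is a scalar multiple of it. This is exactly the assertion of the theorem.

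The main obstacle is essentially bookkeeping rather than a genuine difficulty: one must be careful that the Lemma is stated for the \emph{largest} eigenvalue (not the largest in absolute value), so the identification $\lambda = 0$ has to be pinned down independently of Perron--Frobenius itself — which is what the two observations above accomplish. A secondary subtlety worth a sentence in the writeup is that $H$ need not be a nonnegative matrix, so Perron--Frobenius was applied in the Lemma to the shifted operator $T = H + cI$; but since the theorem we are proving only quotes the Lemma as a black box, we do not need to revisit that shift. One should also note explicitly that "irreducible Dirichlet operator" in the theorem statement is, by the definition just given, exactly the hypothesis of the Lemma, so there is no gap there.
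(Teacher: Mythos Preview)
Your proof is correct, but it takes a different route from the paper's. The paper shows $\lambda = 0$ by a semigroup argument: normalize the Perron--Frobenius eigenvector $\psi$ to a stochastic state, observe that $\exp(tH)$ preserves stochastic states because $H$ is infinitesimal stochastic, and compute $\sum_i (\exp(tH)\psi)_i = e^{t\lambda}$, forcing $\lambda = 0$. Your argument instead exploits the self-adjointness that is special to the Dirichlet case twice: first to turn the column-sum condition into $H\mathbf{1} = 0$ (giving $\lambda \ge 0$), and second via the quadratic-form identity $\langle \psi, H\psi\rangle \le 0$ from Section~\ref{sec:15} (giving $\lambda \le 0$).

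Each approach has its own payoff. The paper's argument uses only the infinitesimal-stochastic property, not symmetry, and this is exactly what gets reused later in Theorem~\ref{theorem_equilibria} for Markov processes that are not self-adjoint; so it is the more portable proof. Your argument is more elementary in the Dirichlet setting---no exponentials---and it makes the equilibrium explicit: you have actually shown that $\mathbf{1}$ itself is the positive null vector, i.e.\ the equilibrium is the uniform distribution. Incidentally, this observation lets you shortcut your own proof: once you know $H\mathbf{1} = 0$ with $\mathbf{1}$ positive, the last clause of the Lemma (any nonnegative eigenvector is a multiple of the top one) already forces $\lambda = 0$, so the Dirichlet-form half of your argument, while correct, is not needed.
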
 

\begin{proof} 
Choose $\lambda$ as in the Lemma, so that $H\psi = \lambda \psi$.  Since $\psi$ is positive we can normalize it to be a stochastic state:
$$ \sum_i \psi_i = 1 $$
Since $H$ is a Dirichlet operator, $\exp(t H)$ sends stochastic states to stochastic states, so
$$  \sum_i (\exp(t H) \psi)_i = 1 $$
for all $t \ge 0$.  On the other hand,
$$ \sum_i (\exp(t H)\psi)_i = \sum_i e^{t \lambda} \psi_i = e^{t \lambda} $$
so we must have $\lambda = 0$.  
\end{proof} 

What's the point of all this?  One point is that there's a unique stochastic state $\psi$ that's an {\bf equilibrium}, meaning that $H \psi = 0$, so doesn't change with time.  It's also {\bf globally stable}: since all the other eigenvalues of $H$ are negative, all other stochastic states converge to this one as time goes forward.

\subsection{An example}

There are many examples of irreducible Dirichlet operators.  For instance, in Section~\ref{sec:14} we talked about graph Laplacians.  The Laplacian of a connected simple graph is always irreducible.  But let us try a different sort of example, coming from the picture of the resistors we saw earlier:

\begin{center}
 \includegraphics[width=60mm]{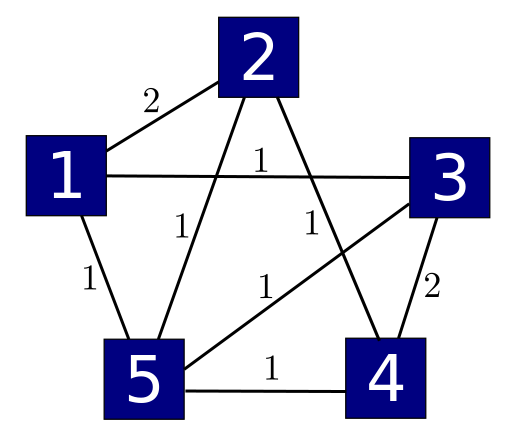}
\end{center}

\noindent
Let's create a matrix $A$ whose entry $A_{i j}$ is the number labelling the edge from $i$ to $j$ if there is such an edge, and zero otherwise:
$$A = \left(
\begin{array}{ccccc}
 0 & 2 & 1 & 0 & 1 \\
 2 & 0 & 0 & 1 & 1 \\
 1 & 0 & 0 & 2 & 1 \\
 0 & 1 & 2 & 0 & 1 \\
 1 & 1 & 1 & 1 & 0
\end{array}
\right) $$
Remember how the game works.  The matrix $A$ is already a valid Hamiltonian for quantum mechanics, since it's self adjoint.  However, to get a valid Hamiltonian for both stochastic and quantum mechanics---in other words, a Dirichlet operator---we subtract the diagonal matrix $L$ whose entries are the sums of the columns of $A.$  In this example it just so happens that the column sums are all 4, so $L = 4 I,$ and our Dirichlet operator is
$$ H = A - 4 I = \left(
\begin{array}{ccccc}
 -4 & 2 & 1 & 0 & 1 \\
 2 & -4 & 0 & 1 & 1 \\
 1 & 0 & -4 & 2 & 1 \\
 0 & 1 & 2 & -4 & 1 \\
 1 & 1 & 1 & 1 & -4
\end{array}
\right) $$
We've set up this example so it's easy to see that the vector $\psi = (1,1,1,1,1)$ has
$$  H \psi = 0 $$
So, this is the unique eigenvector for the eigenvalue 0.  We can use Mathematica to calculate the remaining eigenvalues of $H$.  The set of eigenvalues is
$$\{0, -7, -8, -8, -3 \} $$
As we expect from our theorem, the largest real eigenvalue is 0.   By design, the eigenstate associated to this eigenvalue is
$$ | v_0 \rangle = (1, 1, 1, 1, 1) $$
(This funny notation for vectors is common in quantum mechanics, so don't worry about it.)   All the other eigenvectors fail to be nonnegative, as predicted by the theorem.  They are:  
$$ \begin{array}{ccl} | v_1 \rangle &=& (1, -1, -1, 1, 0),\\
 | v_2 \rangle &=& (-1, 0, -1, 0, 2), \\ 
 | v_3 \rangle &=& (-1, 1, -1, 1, 0), \\
 | v_4 \rangle &=& (-1, -1, 1, 1, 0). \end{array} $$
To compare the quantum and stochastic states, consider first $ |v_0\rangle$.  This is the only eigenvector that can be normalized to a stochastic state.  Remember, a stochastic state must have nonnegative components.  This rules out $ |v_1\rangle$ through to $ |v_4\rangle$ as valid stochastic states, no matter how we normalize them!  However, these are allowed as states in quantum mechanics, once we normalize them correctly.  For a stochastic system to be in a state other than the Perron--Frobenius state, it must be a linear combination of least two eigenstates.  For instance,  
$$  \psi_a = (1-a)|v_0\rangle + a |v_1\rangle $$
can be normalized to give stochastic state only if $ 0 \leq a \leq \frac{1}{2}$.  

And, it's easy to see that it works this way for any irreducible Dirichlet operator, thanks to our theorem.  So, our thesis has been proved true!

\subsection{Problems} 

Let us conclude with a few more problems.  There are lots of ways to characterize irreducible nonnegative matrices; we don't need to mention graphs.  Here's one:

\begin{problem}\label{prob:25} 
Let $T$ be a nonnegative $n \times n$ matrix.  Show that $T$ is irreducible if and only if for all $i,j \ge 0$, $(T^m)_{i j} > 0$ for some natural number $m$.
\end{problem} 
\index{matrix!irreducible} \index{irreducible matrix}

You may be confused because we explained the usual concept of irreducibility for nonnegative matrices, but we also defined a concept of irreducibility for Dirichlet operators.  Luckily there's no conflict: Dirichlet operators aren't nonnegative matrices, but if we add a big multiple of the identity to a Dirichlet operator it becomes a nonnegative matrix, and then:

\begin{problem}\label{prob:26} 
Show that a Dirichlet operator $H$ is irreducible if and only if the nonnegative operator $H + c I$ (where $c$ is any sufficiently large constant) is irreducible.
 \end{problem} 

Irreducibility is also related to the nonexistence of interesting conserved quantities. In Section~\ref{sec:10} we saw a version of Noether's Theorem for stochastic mechanics.  Remember that an observable $O$ in stochastic mechanics assigns a number $O_i$ to each configuration $i = 1, \dots, n$.  We can make a diagonal matrix with $O_i$ as its diagonal entries, and by abuse of language we call this $O$ as well.  Then we say $O$ is a {\bf conserved quantity} for the Hamiltonian $H$ if the commutator $[O,H] = O H - H O$ vanishes.

\begin{problem}\label{prob:27} 
Let $H$ be a Dirichlet operator.  Show that $H$ is irreducible if and only if every conserved quantity $O$ for $H$ is a constant, meaning that for some  $c \in \mathbb{R}$ we have $O_i = c$ for all $i$.  (Hint: examine the proof of Noether's theorem.)
\end{problem} 

In fact this works more generally:

\begin{problem}\label{prob:28} 
Let $H$ be an infinitesimal stochastic matrix. 
\index{operator!infinitesimal stochastic} \index{infinitesimal stochastic operator} 
Show that $H + c I$ is an irreducible nonnegative matrix for all sufficiently large $c$ if and only if every conserved quantity $O$ for $H$ is a constant.
\end{problem} 

\subsection{Answers}
\label{sec:16_answers}

Here are the answers to the problems in this section, thanks to 
Arjun Jain.  \index{Jain, Arjun}

\vskip 1em \noindent {\bf Problem 23.} 
Suppose $U$ is a stochastic $n \times n$ matrix whose inverse is stochastic.  What are the possibilities for $U$?

\begin{answer}
\index{matrix!permutation} \index{permutation matrix}
The matrix $U$ must be a {\bf permutation matrix}: that is, all its entries must be $0$ 
except for a single $1$ in each row and each column.  This follows from the answer to 
Problem \ref{prob:16}, which appears in Section \ref{sec:11_answers}.  

In fact, Problem 23 is just Problem \ref{prob:16} stated in a less fancy way---but this 
prompted Arjun Jain \index{Jain, Arjun} to find the following less fancy solution. 
To see this, suppose that $U$ is a stochastic $n \times n$ matrix whose inverse, $V$, is also stochastic.   In particular, we have   
$$ U_{i j},  V_{i j} \geq 0  , \qquad \displaystyle{ \sum_i U_{i j} = 1 } $$
and
$$ \displaystyle{ \sum_k U_{i k} V_{k j} = \delta_{i j} }$$

Fix $i$ and $j$.  First, suppose $i \neq j$. We have
$$ \displaystyle{ \sum_k U_{i k} V_{k j} = 0 } $$
Since $U_{i k}, V_{k j} \geq 0$, for every choice of $k$ either $U_{i k}$ or $V_{k j}$ must be zero.  It follows that if some entry $U_{i k}$ is nonzero, then $V_{k j}$ must be zero for all $j$ except for $j = i$.  Further if $U_{i' k}$ is also nonzero where $i' \neq i$, then $V_{k j}$ must be zero except for $j = i'$. Since $i' \neq i$, $V_{k i}$ is also zero, and so is $V_{k i'}$. Thus, if any two entries of a column in $U$ were nonzero, some row of $V$ would be zero, implying that $V$ is noninvertible. 

As this is not the case, at most one entry in each column of $U$ can be nonzero.  Furthermore, since the sum of the entries in each column is $1$:
$$\displaystyle{ \sum_i U_{i j} = 1 }$$ 
the one nonzero entry in each column of $U$ must equal $1$.

Next suppose $i=j$.  Now we have
$$\displaystyle{ \sum_k U_{i k} V_{k i} = 1 }$$
Since each column of $U$ has just one nonzero entry,  for each $k$ we have one choice of index $I$ such that $U_{I j} \neq 0$. Suppose that $i$ is not equal to this choice of $I$ for any $k$. Then 
$$\displaystyle{ \sum_k U_{i k} V_{k i} = 0 } $$ 
which is a contradiction.  Thus, for every $i$ we have a unique $k$ such that $U_{i k}$ is nonzero.  In other words, there is exactly one nonzero entry in each row of $U$.

In short, a stochastic matrix with a stochastic inverse must be a permutation matrix.  Conversely, it is easy to check that any permutation matrix is stochastic, with its
inverse being another permutation matrix, and thus also stochastic.
\end{answer}

\vskip 1em \noindent {\bf Problem 24.} 
Suppose $U$ is an $n \times n$ matrix that is both stochastic and unitary.  What are the possibilities for $U$?
\index{matrix!stochastic} \index{stochastic matrix}
\index{matrix!unitary} \index{unitary matrix}

\begin{answer}
In the solution to Problem 23 we only used the fact that $U$ is a stochastic matrix
whose inverse has nonnegative entries.   If $U$ is stochastic and unitary this is
true, because its inverse is its transpose.  So, the argument in Problem 23 shows
that a stochastic unitary matrix must be a permutation matrix.  Conversely,
it is easy to check that any permutation matrix is stochastic and unitary.
\index{matrix!permutation} \index{permutation matrix}
\end{answer}

\vskip 1em \noindent {\bf Problem 25.}
Let $T$ be a nonnegative $n \times n$ matrix.  Show that $T$ is irreducible if and only if for all $i,j \ge 0$, $(T^m)_{i j} > 0$ for some natural number $m$.
\index{matrix!irreducible} \index{irreducible matrix}

\begin{answer}
As in Section \ref{sec:16_matrices_to_graphs}, we build a directed graph $G_T$ with $n$ vertices, with an edge from the $i$th vertex to the $j$th if and only if $T_{ij} 
\ne 0$.  We need to show this graph is strongly connected if and only if for all $i,j \ge 0$, 
$(T^m)_{i j} > 0$ for some natural number $m$.   Recall that a directed graph is {\bf strongly connected} when there is a directed edge path from any vertex to any other.

Notice that 
$$T^2_{i j} = \displaystyle{ \sum_k T_{i k} T_{k j} }$$ 
If $T_{i k} T_{k j}$ is nonzero, it means that neither $T_{i k}$ nor $T_{k j}$ is zero: that is, the graph $G_T$ has an edge from $i$ to $k$ and an edge from $k$ to $j$.  Thus $T_{i k} T_{k j}$ is nonzero if and only if there is a directed 2-edge path from $i$ to $j$.  Since all entries of $T$ are nonnegative, $T^2_{i j}$ is zero only if there is no directed 2-edge path from $i$ to $j$.  If it is nonzero, there is at least one directed 2-edge path from $i$ to $j$.

Next consider
$$ T^3_{i j} = \displaystyle{ \sum_k T^2_{i k} T_{k j} }$$ 
This is nonzero if and only if there is at least one directed 2-edge path from $i$ to $k$ and a directed edge from $k$ to $j$---and thus a directed 3-edge path from $i$ to $j$.

Using induction, we can show that $T^m_{i j}$ is nonzero if and only if the graph $G_T$ has a directed $m$-edge path from $i$ to $j$.  Thus, this graph is strongly connected if and only if $i,j \ge 0$, $(T^m)_{i j} > 0$ for some natural number $m$.
\index{strongly connected graph} \index{graph!strongly connected}
\end{answer}

\vskip 1em \noindent {\bf Problem 26.} 
Show that a Dirichlet operator $H$ is irreducible if and only if the nonnegative operator $H + c I$ (where $c$ is any sufficiently large constant) is irreducible.
\index{Dirichlet operator!irreducible} \index{irreducible Dirichlet operator}
\index{matrix!irreducible} \index{irreducible matrix}

\begin{answer}
Recall that by definition, an $n \times n$ Dirichlet operator $H$ is irreducible if and only 
if the simple graph $\Gamma_H$ with an edge from $i$ to $j$ when $H_{i j} > 0$ is connected.  Similarly, the nonnegative operator $T = H + c I$ is irreducible if and only if the directed graph $G_T$ with an edge from $i$ to $j$ when $T_{i j} > 0$ is strongly connected.   

Since the matrix $H$ is symmetric, so is $T$, so there is no difference between connectedness and strong connectedness for $G_T$.   Only the
off-diagonal entries of $H$ matter for determining the connectedness of $\Gamma_H$, and similarly, only the off-diagonal entries of $T$ matter for determining the connectedness of $G_T$.  Since $H$ and $T$ have the same off-diagonal entries,
$\Gamma_H$ is connected if and only if $G_T$ is.
\end{answer}

\vskip 1em \noindent {\bf Problem 27.}
Let $H$ be a Dirichlet operator.  Show that $H$ is irreducible if and only if every conserved quantity $O$ for $H$ is a constant, meaning that for some  $c \in \mathbb{R}$ we have $O_i = c$ for all $i$.  
\index{operator!Dirichlet} \index{Dirichlet operator}

\begin{answer}
This is a special case of the next problem.
\end{answer}

\vskip 1em \noindent {\bf Problem 28.}
Let $H$ be an infinitesimal stochastic matrix. 
\index{operator!infinitesimal stochastic} \index{infinitesimal stochastic operator} 
Show that $H + c I$ is an irreducible nonnegative matrix for all sufficiently large $c$ if and only if every conserved quantity $O$ for $H$ is a constant.

\begin{answer}
Let $T = H + c I$ where $c$ is chosen large enough to make $T$ be nonnegative.
If $O$ is a conserved quantity for $H$, then $O H = H O$, implying that $O H^m = H^m O$ for all $m \in \mathbb{N}$.  Since a scalar multiply of the identity matrix commutes with everything, it follows that $O T^m = T^m O$ for all $m$.  As $O$ is a diagonal matrix we have
$$   (O_{ii} - O_{jj}) T^m_{i j} = 0 $$
for all $i,j$.

If $T$ is irreducible, then from Problem \ref{prob:25} there exists a natural number m such that $T^m_{i j} > 0$ for every $i,j$ with $i \neq j$.  By the above equation this implies $(O_{ii} - O_{jj}) = 0$. Thus, $O$ is a scalar multiple of the
identity, or in other words, a constant.

Conversely, assume that $T$ is not irreducible.  Let $G_T$ be the directed graph associated to $T$.  Then there is a diagonal matrix $O$ that is not a scalar multiple
of the identity such that $O_{ii}$ is constant on each strongly connected component
of $G_T$.  This implies that 
$$   (O_{ii} - O_{jj}) T^m_{i j} = 0 $$
for all $i,j$ and all $m \in \mathbb{N}$.  This in turn implies that $O T^m = T^m O$
for all $m$, and thus $O H = H O$.
\end{answer}


\newpage
\section{The deficiency zero theorem}\index{Deficiency Zero Theorem!introduction to|(} 
\label{sec:17}

We've seen how Petri nets can be used to describe chemical reactions.  Indeed our very first example way back in Section \ref{sec:2} came from chemistry:

\begin{center}
 \includegraphics[width=119.0625mm]{chemistryNetBasicA.png}
\end{center}

\noindent
However, chemists rarely use the formalism of Petri nets.  They use a different but entirely equivalent formalism, called `reaction networks'.  So now we'd like to tell you about those. 

You may wonder: why bother with another formalism, if it's equivalent to the one we've already seen?   Well, one goal of this network theory program is to get people from different subjects to talk to each other---or at least be \emph{able}  to.  This requires setting up some dictionaries to translate between formalisms.  Furthermore, lots of deep results on stochastic Petri nets are being proved by chemists---but phrased in terms of reaction networks.  So you need to learn this other formalism to read their papers.  Finally, this other formalism is actually better in some ways!

\subsection{Reaction networks}\index{reaction network|(} 
\label{sec:17_reaction_networks}

Here's a reaction network:

\begin{center}
 \includegraphics[width=35mm]{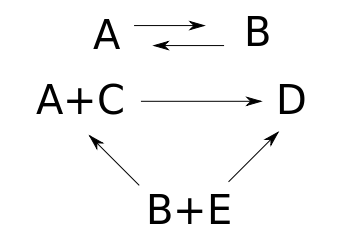}
\end{center}

This network involves five {\bf species}: that is, different kinds of things.  They could be atoms, molecules, ions or whatever: chemists call all of these \href{http://en.wikipedia.org/wiki/Chemical_species}{species}, but there's no need to limit the applications to chemistry: in population biology, they could even be biological species!  We're calling them A, B, C, D, and E, but in applications, we'd call them by specific names like CO$_2$ and HCO$_3$, or `rabbit' and `wolf', or whatever.  \index{species} \index{reaction network!species}

This network also involves five {\bf reactions}, which are shown as arrows.  Each reaction turns one bunch of species into another.  So, written out more longwindedly, we've got these reactions:
$$A \to B$$ 
$$B \to A$$
$$A + C \to D$$
$$B + E \to A + C$$
$$B + E \to D$$
If you remember how Petri nets work, you can see how to translate any reaction network into a Petri net, or vice versa.  For example, the reaction network we've just seen gives this Petri net:

\begin{center}
 \includegraphics[width=90mm]{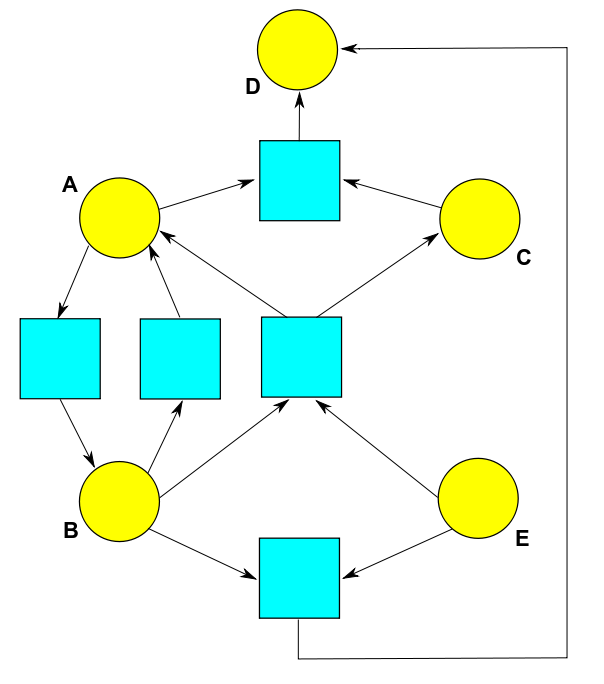}
\end{center}

Each species corresponds to a yellow circle in this Petri net.  And each reaction corresponds to {\bf transition} of this Petri net, drawn as a blue square.  The arrows say how many things of each species appear as input or output to each transition.  There's less explicit emphasis on complexes in the Petri net notation, but you can read them off if you want them.

In chemistry, a bunch of species is called a `complex'.  But what do we mean by `bunch', exactly?  Well, we mean that in a given complex, each species can show up 0,1,2,3...\ or any natural number of times. So, we can formalize things like this:

\begin{definition}  Given a set $S$ of {\bf species}, a {\bf complex} of those species is a function $C\colon S \to \mathbb{N}$.\index{complex} 
\end{definition}

Roughly speaking, a reaction network is a graph whose vertices are labelled by complexes.  Unfortunately, the word \href{http://en.wikipedia.org/wiki/Graph\_\%28mathematics\%29}{`graph'} means different things in mathematics---appallingly many things!  Everyone agrees that a graph has vertices and edges, but there are lots of choices about the details.  Most notably:

\begin{itemize} 
\item  We can either put arrows on the edges, or not.
\item  We can either allow more than one edge between vertices, or not.
\item  We can either allow edges from a vertex to itself, or not.
\end{itemize} 

If we say `no' in every case we get the concept of `simple graph', which we discussed in Sections~\ref{sec:15} and \ref{sec:16}.  At the other extreme, if we say `yes' in every case we get the concept of `directed multigraph', which is what we want now.  A bit more formally:

\begin{definition}\index{directed multigraph}  
\index{directed multigraph} \index{graph theory!directed multigraph}
\index{graph theory!vertex} \index{graph theory!edge}
A {\bf directed multigraph} consists of a set $V$ of {\bf vertices}, a set $E$ of {\bf edges},\index{vertex}\index{edge} and functions $s,t\colon E \to V$ saying the {\bf source} and {\bf target} of each edge.  \index{source} \index{target} 
\end{definition} 

Given this, we can say:

\begin{definition}\index{reaction network!definition of}  
A {\bf reaction network} is a set of species together with a directed multigraph whose vertices are labelled by complexes of those species.
\end{definition} 

You can now prove that reaction networks are equivalent to Petri nets:

\begin{problem}\label{prob:29}  
Show that any reaction network gives rise to a Petri net, and vice versa.  
\end{problem} 

In a stochastic Petri net each transition is labelled by a {\bf rate constant}: that is, a number in $(0,\infty)$.  This lets us write down some differential equations saying how species turn into each other.  So, let's make this definition (which is not standard, but will clarify things for us):

\begin{definition}\index{reaction network!stochastic!definition of}  
A {\bf stochastic reaction network} is a reaction network where each reaction is labelled by a rate constant. 
\end{definition} 

Now you can do this:

\begin{problem}\label{prob:30} 
Show that any stochastic reaction network gives rise to a stochastic Petri net, and vice versa.  
\end{problem} 

For extra credit, show that in each of these problems we actually get an equivalence of categories! For this you need to define morphisms between Petri nets, morphisms between reaction networks, and similarly for stochastic Petric nets and stochastic reaction networks.  If you get stuck, ask Eugene Lerman for advice.  There are different ways to define morphisms, but he knows a \href{http://arxiv.org/abs/1008.5359}{cool one}.

We've been downplaying category theory so far, but it's been lurking beneath everything we do, and someday it may rise to the surface.

\subsection{The deficiency zero theorem}
\label{sec:17_theorem}

You may already have noticed one advantage of reaction networks over Petri nets: they're quicker to draw.  This is true even for tiny examples.  For instance, this reaction network:
$$2 X_1 + X_2 \leftrightarrow 2 X_3 $$
corresponds to this Petri net:

\begin{center}
 \includegraphics[width=88.5mm]{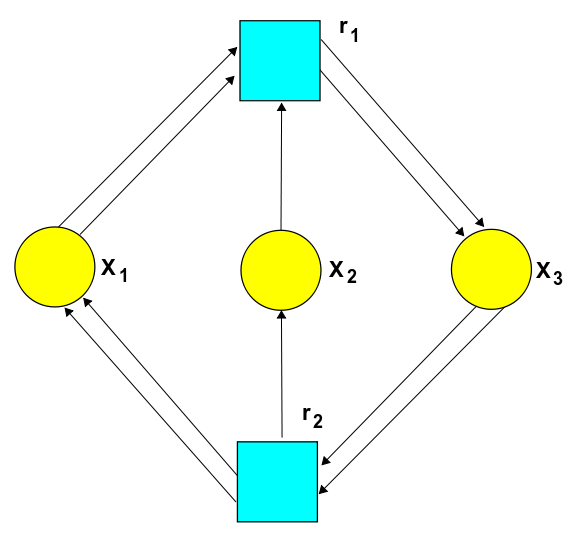}
\end{center}

\noindent
But there's also a deeper advantage.  As we know, any stochastic Petri net gives two equations:

\begin{itemize} 
\item  The \href{http://en.wikipedia.org/wiki/Master\_equation}{{\bf master equation}}, which says how the \emph{probability that we have a given number of things of each species}  changes with time.

\item  The \href{http://en.wikipedia.org/wiki/Rate\_equation}{{\bf rate equation}}, which says how the \emph{expected number of things of each species} changes with time.   
\end{itemize} 

The simplest solutions of these equations are the equilibrium solutions, where nothing depends on time. Back in Section~\ref{sec:8}, we explained when an equilibrium solution of the rate equation gives an equilibrium solution of the master equation.    But when is there an equilibrium solution of the rate equation in the first place?  

The `deficiency zero theorem' gives a handy sufficient condition.  And this condition is best stated using reaction networks!  But to understand it, we need to understand the `deficiency' of a reaction network.  So let's define that, and then say what all the words in the definition mean:

\begin{definition} \index{reaction network deficiency!definition of}\index{reaction network!deficiency of} 
The {\bf deficiency} of a reaction network is:
\begin{itemize} 
\item  the number of vertices minus 
\item  the number of connected components minus\index{connected component!deficiency}  
\item  the dimension of the stoichiometric subspace.  
\end{itemize} 
\end{definition} 

The first two concepts here are easy.  A reaction network is a graph (okay, a directed multigraph).  So, it has some number of vertices, and also some number of connected components.  Two vertices lie in the same {\bf connected component} \index{graph theory!connected component} \index{connected component} iff you can get from one to the other by a path \emph{where you don't care which way the arrows point}.  For example, this reaction network:

\begin{center}
 \includegraphics[width=35mm]{chemical_reaction_network_part_17_I.png}
\end{center}

\noindent
has 5 vertices and 2 connected components.  

So, what's the `stoichiometric subspace'?  `\href{http://en.wikipedia.org/wiki/Stoichiometry}{Stoichiometry}' is a scary-sounding word. It comes from the Greek words \emph{stoicheion}, meaning element, and \emph{metron}, meaning measure. In chemistry, it's the study of the relative quantities of reactants and products in chemical reactions.   But for us, stoichiometry is just the art of counting species.  To do this, we can form a\index{vector space} vector space $\mathbb{R}^S$ where $S$ is the set of species.  A vector in $\mathbb{R}^S$ is a function from species to real numbers, saying how much of each species is present.  Any complex gives a vector in $\mathbb{R}^S$, because it's actually a function from species to \emph{natural}  numbers.  

\begin{definition} 
The {\bf stoichiometric subspace} of a reaction network is the subspace $\mathrm{Stoch} \subseteq \mathbb{R}^S$ spanned by vectors of the form $x - y$ where $x$ and $y$ are complexes connected by a reaction.
\end{definition} 

`Complexes connected by a reaction' makes sense because vertices in the reaction network are complexes, and edges are reactions.  Let's see how it works in our example:

\begin{center}
 \includegraphics[width=35mm]{chemical_reaction_network_part_17_I.png}
\end{center}

\noindent
Each complex here can be seen as a vector in $\mathbb{R}^S$, which is a vector space whose basis we can call $A, B, C, D, E$.  Each reaction gives a difference of two vectors coming from complexes:

\begin{itemize} 
\item  The reaction $A \to B$ gives the vector $B - A.$

\item  The reaction $B \to A$ gives the vector $A - B.$

\item  The reaction $A + C \to D$ gives the vector $D - A - C.$

\item  The reaction $B + E \to A + C$ gives the vector $A + C - B - E.$

\item  The reaction $B + E \to D$ gives the vector $D - B - E.$
\end{itemize} 
\noindent
The pattern is obvious, we hope.

These 5 vectors span the stoichiometric subspace.   But this subspace isn't 5-dimensional, because these vectors are linearly dependent!  The first vector is the negative of the second one.  The last is the sum of the previous two.   And those are all the linear dependencies, so the stoichiometric subspace is 3 dimensional.  For example, it's spanned by these 3 linearly independent vectors: $A - B, D - A - C,$ and $D - B - E$.  

We hope you see the moral of this example: the stoichiometric subspace is the space of \emph{ways to move in}  $\mathbb{R}^S$ \emph{that are allowed by the reactions in our reaction network!}    And this is important because the rate equation describes how the amount of each species changes as time passes.  So, it describes a point moving around in $\mathbb{R}^S$.

Thus, if $\mathrm{Stoch} \subseteq \mathbb{R}^S$ is the stoichiometric subspace, and $x(t) \in \mathbb{R}^S$ is a solution of the rate equation, then $x(t)$ always stays within the set
$$x(0) + \mathrm{Stoch} = \{ x(0) + y \colon  y \in \mathrm{Stoch} \} $$
Mathematicians would call this set the \href{http://en.wikipedia.org/wiki/Coset}{coset} of $x(0)$, but chemists call it the {\bf stoichiometric compatibility class} of $x(0).$ \index{stoichiometric!compatibility class}

Anyway: what's the deficiency of the reaction network in our example?  It's 
$$5 - 2 - 3 = 0$$
since there are 5 complexes, 2 connected components and the dimension of the stoichiometric subspace is 3.   

But what's the deficiency zero theorem?  You're almost ready for it.  You just need to know \emph{one}  more piece of jargon!  A reaction network is {\bf weakly reversible} if whenever there's a reaction going from a complex $x$ to a complex $y$, there's a path of reactions going back from $y$ to $x$.  Here the paths need to follow the arrows.\index{reaction network!weakly reversible}\index{weakly reversible!reaction network} 

So, this reaction network is \emph{not}  weakly reversible:

\begin{center}
 \includegraphics[width=40mm]{chemical_reaction_network_part_17_I.png}
\end{center}

\noindent
since we can get from $A + C$ to $D$ but not back from $D$ to $A + C,$ and from $B+E$ to $D$ but not back, and so on.  However, the network becomes weakly reversible if we add a reaction going back from $D$ to $B + E$:

\begin{center}
 \includegraphics[width=40mm]{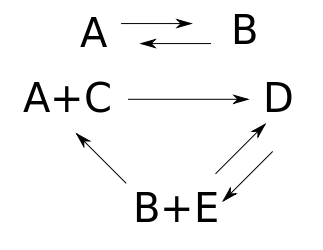}
\end{center}

If a reaction network isn't weakly reversible, one complex can  turn into another, but not vice versa.  In this situation, what typically happens is that as time goes on we have less and less of one species.  We could have an equilibrium where there's \emph{none}  of this species.  But we have little right to expect an equilibrium solution of the rate equation that's {\bf positive}, meaning that it sits at a point $x \in (0,\infty)^S$, where there's a nonzero amount of every species.  

The argument here is not watertight: you'll note that we fudged the difference between species and complexes.   But it can be made so when our reaction network has deficiency zero:

\begin{theorem}[{\bf Deficiency Zero Theorem}]\index{Deficiency Zero Theorem!statement of} 
  Suppose we are given a reaction network with a finite set of species $S$, and suppose its deficiency is zero.  Then: 
\begin{enumerate} 
\item[(i)] If the network is not weakly reversible and the rate constants are positive, the rate equation does not have a positive equilibrium solution.
\item[(ii)] If the network is not weakly reversible and the rate constants are positive, the rate equation does not have a positive periodic solution, that is, a periodic solution lying in $(0,\infty)^S$. 
\item[(iii)] If the network is weakly reversible and the rate constants are positive, the rate equation has exactly one equilibrium solution in each positive stoichiometric compatibility class.   This equilibrium solution is complex balanced.  Any sufficiently nearby solution that starts in the same stoichiometric compatibility class will approach this equilibrium as $t \to +\infty$.   Furthermore, there are no other positive periodic solutions.
\index{stoichiometric!compatibility class}
\end{enumerate}
\end{theorem}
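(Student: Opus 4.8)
The plan is to turn the whole statement into linear algebra on the space of complexes, exactly along the lines of the ``matrix exponentiation'' trick. Let $S$ be the species, $K$ the set of complexes (the vertices of the reaction network), and $\ell$ the number of linkage classes (connected components of the directed multigraph, ignoring arrow directions). Write $\Phi\colon[0,\infty)^S\to[0,\infty)^K$ for the nonlinear map with $\Phi(x)_\kappa = x^\kappa = x_1^{\kappa_1}\cdots x_k^{\kappa_k}$, let $Y\colon\mathbb{R}^K\to\mathbb{R}^S$ be the linear map sending each complex to its species vector, and let $H\colon\mathbb{R}^K\to\mathbb{R}^K$ be the infinitesimal stochastic operator read off the reaction graph and rate constants just as a graph Laplacian is in Section~\ref{sec:14}: the off-diagonal entry $H_{\kappa'\kappa}$ is the sum of the rate constants of the reactions $\kappa\to\kappa'$, and the diagonal is fixed so that each column sums to zero. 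Then the rate equation is $\dot x = YH\Phi(x)$, and ``$x$ is complex balanced'' in the sense of Section~\ref{sec:8} is exactly the statement $H\Phi(x)=0$. The first step is a structural lemma: writing $\Delta\colon\mathbb{R}^\ell\to\mathbb{R}^K$ for the inclusion of the functions that are constant on each linkage class, and $W=\ker\Delta^\top = \{w : \sum_{\kappa\in\lambda} w_\kappa = 0 \text{ for each linkage class } \lambda\}$, one checks $\mathrm{Stoch}=Y(W)$, whence the deficiency equals $\dim(\ker Y\cap W)$, so deficiency zero is equivalent to $\ker Y\cap W = 0$, equivalently (orthogonal complements) to $\mathrm{im}\,Y^\top + \mathrm{im}\,\Delta = \mathbb{R}^K$. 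When in addition the network is weakly reversible, each linkage class is strongly connected, so $H$ restricted to it is irreducible infinitesimal stochastic and has a one-dimensional kernel spanned by a strictly positive vector (Perron--Frobenius, Section~\ref{sec:16}, by the argument of Theorem~\ref{theorem_equilibrium}); hence $\ker H$ is $\ell$-dimensional and $\mathrm{im}\,H = W$.

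\emph{Existence of a positive complex balanced equilibrium (weakly reversible case).} By the above, the strictly positive elements of $\ker H$ are exactly the vectors $v = \sum_\lambda c_\lambda \pi^{(\lambda)}$ with all $c_\lambda>0$, where $\pi^{(\lambda)}$ is the Perron--Frobenius kernel vector of $H$ on the linkage class $\lambda$. I want $x\in(0,\infty)^S$ with $\Phi(x)$ of this form. Taking componentwise logarithms, and using $\log\Phi(x) = Y^\top\log x$, this asks for $\xi = \log x$ and $b\in\mathbb{R}^\ell$ with $Y^\top \xi = \eta + \Delta b$, where $\eta_\kappa = \log\pi^{(\lambda(\kappa))}_\kappa$. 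Since deficiency zero gives $\mathrm{im}\,Y^\top + \mathrm{im}\,\Delta = \mathbb{R}^K$, such $\xi,b$ exist; then $x=e^\xi$ is a strictly positive complex balanced equilibrium, hence an equilibrium of the rate equation, and it sits in whatever positive stoichiometric compatibility class we land in.

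\emph{Uniqueness in each class, stability, and no other periodic orbits.} Fix such an $x^*$. Subtracting its log-equation from that of an arbitrary positive complex balanced $x$ yields the Horn--Jackson identity: a positive $x$ is complex balanced iff $\log x - \log x^* \in \mathrm{Stoch}^\perp$. Moreover every positive equilibrium is complex balanced, because $H\Phi(x)\in\ker Y\cap\mathrm{im}\,H = 0$ by deficiency zero; so among positive states ``equilibrium'' and ``complex balanced equilibrium'' coincide, and they form the log-affine set $\{x : \log x - \log x^*\in\mathrm{Stoch}^\perp\}$. That this set meets each positive compatibility class in exactly one point is a convexity fact (Birch's theorem): the strictly convex function $G(x) = \sum_i\bigl(x_i\log(x_i/x^*_i) - x_i + x^*_i\bigr)$ has a unique minimizer on $(a+\mathrm{Stoch})\cap(0,\infty)^S$, and its first-order condition there is precisely $\log x - \log x^*\perp\mathrm{Stoch}$. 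Finally, taking $G$ built from the equilibrium of the relevant class, $\dot G = \bigl(Y^\top\log(x/x^*)\bigr)\cdot H\Phi(x)\le 0$, with equality only when $x$ is complex balanced; this forces local convergence to $x^*$ and kills other positive periodic solutions, since on a periodic orbit $G$ returns to its value, so $\dot G\equiv 0$ there, so the orbit is a set of complex balanced equilibria, hence the single point $x^*$. The hard part will be this last inequality together with the properness claim that $G$ actually attains its minimum on the non-compact set $(a+\mathrm{Stoch})\cap(0,\infty)^S$: the dissipation estimate $\dot G\le 0$ with its sharp equality condition is an ``H-theorem''-type inequality that genuinely uses complex balance, and is the only non-formal ingredient; everything else is linear algebra, Perron--Frobenius, and the convexity of $t\mapsto t\log t$.

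\emph{The non-weakly-reversible cases (i) and (ii).} Here deficiency zero again gives $\ker Y\cap\mathrm{im}\,H = 0$. If $x\in(0,\infty)^S$ were a positive equilibrium, then $H\Phi(x)\in\ker Y\cap\mathrm{im}\,H = 0$, so $\Phi(x)$ would be a strictly positive vector in $\ker H$; if $x(t)$ were a positive periodic solution of period $T$, integrating $\dot x = YH\Phi(x)$ over $[0,T]$ gives $H\bigl(\int_0^T\Phi(x(t))\,dt\bigr)\in\ker Y\cap\mathrm{im}\,H = 0$, so $\int_0^T\Phi(x(t))\,dt$ is a strictly positive vector in $\ker H$. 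So in both cases it suffices to show that when the network is not weakly reversible, $\ker H$ has no strictly positive vector. Some linkage class $\lambda$ fails to be strongly connected; take a ``source'' strongly connected component $C\subsetneq\lambda$ of its reaction subgraph, meaning one with no incoming edges from the rest of $\lambda$, which must then have at least one outgoing edge to another component. Summing $(Hv)_\kappa=0$ over $\kappa\in C$ makes all within-$C$ terms cancel and leaves $-\sum_{\kappa\in C} v_\kappa\,(\text{positive total rate out of }C) = 0$, impossible if $v>0$. This finishes (i) and (ii), and together with the previous paragraphs, the theorem.
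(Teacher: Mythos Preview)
The paper does not actually prove the full Deficiency Zero Theorem as stated; it only proves what it calls the ``Child's Version'' in Section~\ref{sec:23}: under weak reversibility and deficiency zero, there \emph{exists} a positive complex balanced equilibrium. Parts (i) and (ii), together with the uniqueness, local stability, and absence of other positive periodic orbits in (iii), are stated in Section~\ref{sec:17} with references to Feinberg and Horn--Jackson but are never proved in the paper.

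For the existence part, your approach is the same as the paper's. Your decomposition $\mathrm{im}\,Y^\top + \mathrm{im}\,\Delta = \mathbb{R}^K$ is exactly Lemma~\ref{lemma:decomposition} ($\ker\partial^\dagger + \mathrm{im}\,Y^\dagger = \mathbb{R}^K$), since $\mathrm{im}\,\Delta = \ker\partial^\dagger$ is the space of functions constant on linkage classes; your use of Perron--Frobenius on each linkage class to produce a strictly positive $\psi\in\ker H$ matches Theorem~\ref{theorem_equilibria}; and your passage to logarithms to solve $\Phi(x)=\psi$ up to a linkage-class-constant rescaling is Lemma~\ref{lemma:exponential} together with the two lemmas that follow it. So on the part the paper actually proves, you and the paper coincide.

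Beyond that you go further, and your sketch for the remaining pieces is correct and follows the classical Horn--Jackson/Feinberg route. Your argument for (i) and (ii) is clean: $\mathrm{im}\,H\subseteq W=\mathrm{im}\,\partial$ always (column sums vanish on each linkage class), so deficiency zero forces $H\Phi(x)=0$ at any positive equilibrium and $H\int_0^T\Phi(x(t))\,dt=0$ on any positive periodic orbit; the source-SCC summation then rules out strictly positive kernel vectors when weak reversibility fails. For (iii), your identification of the positive complex balanced set as $\{x>0:\log x-\log x^*\in\mathrm{Stoch}^\perp\}$, uniqueness per class via strict convexity of $G$, and the Lyapunov inequality $\dot G\le 0$ with equality only at complex balanced points are all standard and correct; you honestly flag the dissipation inequality and the Birch-type existence-in-every-class as the substantive analytic steps, which is fair. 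In short: where the paper has a proof you match it, and elsewhere you supply a sound outline the paper omits.
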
 

This is quite an impressive result.  Even better, the `complex balanced' condition means we can instantly turn the equilibrium solutions of the rate equation we get from this theorem into equilibrium solutions of the master equation, thanks to the Anderson--Craciun--Kurtz theorem!  If you don't remember what we're talking about here, go back to Section~\ref{sec:8}.

We'll look at an easy example of this theorem in Section \ref{sec:18}.

\subsection{References and remarks}

The deficiency zero theorem appears here:

\begin{enumerate} 
\item[\cite{Fei87}]  Martin Feinberg, \href{http://www.seas.upenn.edu/~jadbabai/ESE680/Fei87a.pdf}{Chemical reaction network structure and the stability of complex isothermal reactors: I. The deficiency zero and deficiency one theorems}, \emph{Chemical Engineering Science}  {\bf 42} (1987), 2229--2268. 

\item[\cite{HJ72}] F.\ Horn and Roy Jackson, General mass action kinetics, {\sl Archive for Rational Mechanics and Analysis}, {\bf 47},  81--116, 1972. 
\end{enumerate} 

You can learn more about chemical reaction networks and the deficiency zero theorem here:

\begin{enumerate} 
\item[\cite{Fei79}]  Martin Feinberg, \href{http://www.che.eng.ohio-state.edu/~FEINBERG/LecturesOnReactionNetworks/}{\sl Lectures on Reaction Networks}, 1979. 

\item[\cite{Gun03}]  Jeremy Gunawardena, \href{http://vcp.med.harvard.edu/papers/crnt.pdf}{Chemical reaction network theory for \emph{in-silico} biologists}, 2003. 
\end{enumerate} 

At first glance the deficiency zero theorem might seem to settle all the basic questions about the dynamics of reaction networks, or stochastic Petri nets... but actually, it just means that deficiency zero reaction networks don't display very interesting dynamics in the limit as $t \to +\infty.$  So, to get more interesting behavior, we need to look at reaction networks that don't have deficiency zero.  

For example, in biology it's interesting to think about `bistable' chemical reactions: reactions that have two stable equilibria.  An electrical switch of the usual sort is a bistable system: it has stable `on' and `off' positions.  A bistable chemical reaction can serve as a kind of biological switch:\index{bistability} 

\begin{enumerate} 
\item[\cite{CTF06}]  Gheorghe Craciun, Yangzhong Tang and Martin Feinberg, \href{http://www.pnas.org/content/103/23/8697.abstract}{Understanding bistability in complex enzyme-driven reaction networks}, \href{http://www.pnas.org/content/108/11/4281.abstract}{ \emph{PNAS}}  {\bf 103} (2006), 8697--8702.  
\end{enumerate} 

It's also interesting to think about chemical reactions with stable periodic solutions.  Such a reaction can serve as a biological clock:

\begin{enumerate} 
\item[\cite{For11}]  Daniel B. Forger, \href{http://www.pnas.org/content/108/11/4281.abstract}{Signal processing in cellular clocks}, \href{http://www.pnas.org/content/108/11/4281.abstract}{\emph{PNAS}}  {\bf 108} (2011), 4281--4285. 
\end{enumerate} 
\index{Deficiency Zero Theorem!introduction to|)} 

\index{reaction network|)} 


\newpage
\section[Example of the deficiency zero theorem]{Example of the deficiency zero theorem}\label{sec:18}\index{Deficiency Zero Theorem!example of|(}  

In the last section we explained how `reaction networks', as used in chemistry, are just another way of talking about Petri nets.  We stated an amazing result on reaction networks: the deficiency zero theorem.  This settles quite a number of questions about chemical reactions.  Now let's illustrate it with an example.

Our example won't show how \emph{powerful} this theorem is: it's too simple.  But it'll help explain the ideas involved.

\subsection{Diatomic molecules}\index{chemistry!diatomic molecules|(}

A diatomic molecule consists of two atoms of the same kind, stuck together:

\begin{center}
 \includegraphics[width=40mm]{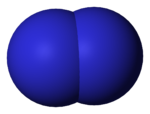}
\end{center}

\noindent At room temperature there are 5 elements that are diatomic gases: hydrogen, nitrogen, oxygen, fluorine, chlorine.  Bromine is a diatomic liquid, but easily evaporates into a diatomic gas:\index{chemistry!bromine}


\begin{center}
 \includegraphics[width=40mm]{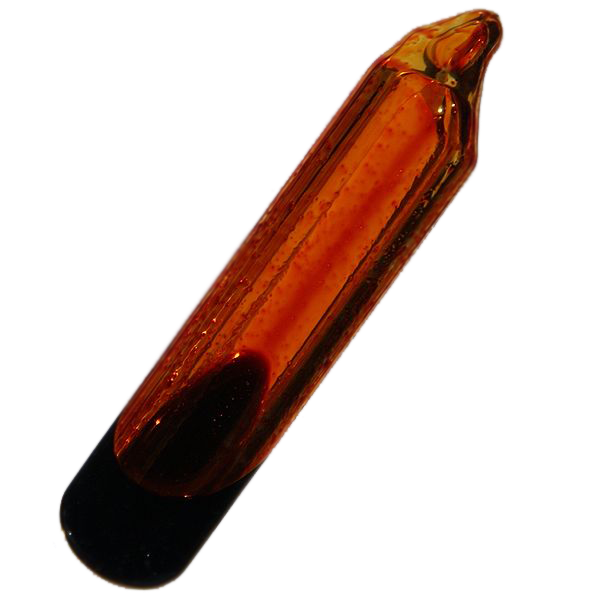}
\end{center}

\noindent
Iodine is a crystal at room temperatures:\index{chemistry!iodine}

\begin{center}
 \includegraphics[width=40mm]{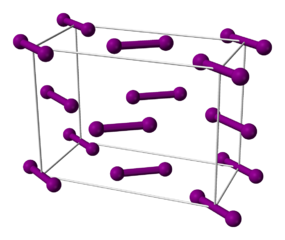}
\end{center}

\noindent
but if you heat it a bit, it becomes a diatomic liquid and then a gas:

\begin{center}
 \includegraphics[width=40mm]{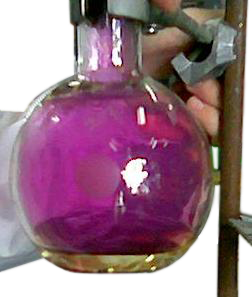}
\end{center}

\noindent
so people often list it as a seventh member of the diatomic club.  

When you heat any diatomic gas enough, it starts becoming a `monatomic' gas as molecules break down into individual atoms.  However, just as a diatomic molecule can break apart into two  atoms:
$$   A_2 \to A + A $$
two atoms can recombine to form a diatomic molecule:
$$  A + A \to A_2 $$
So in equilibrium, the gas will be a mixture of diatomic and monatomic forms.   The exact amount of each will depend on the temperature and pressure, since these affect the likelihood that two colliding atoms stick together, or a diatomic molecule splits apart.   The detailed nature of our gas also matters, of course.

But we don't need to get into these details here!  Instead, we can just write down the `rate equation' for the reactions we're talking about. All the details we're ignoring will be hiding in some constants called `rate constants'.  We won't try to compute these; we'll leave that to our chemist friends.

\subsection{A reaction network} 

To write down our rate equation, we start by drawing a `reaction network'.    For this, we can be a bit abstract and call the diatomic molecule $B$ instead of $A_2$.  Then it looks like this:


\begin{center}
 \includegraphics[width=40mm]{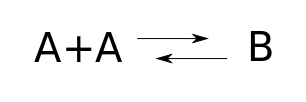}
\end{center}

We could write down the same information using a Petri net:


\begin{center}
 \includegraphics[width=92mm]{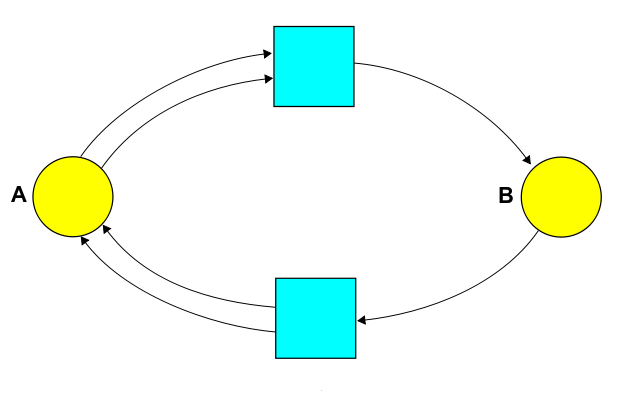}
\end{center}

But now let's focus on the reaction network!  Staring at this picture, we can read off various things: 

\begin{itemize} 
\item {\bf Species.}  The species are the different kinds of atoms, molecules, etc.  In our example the set of species is $S = \{A, B\}$.  

\item {\bf Complexes.} A complex is a finite sum of species, like $A$, or $A + A$, or for a fancier example using more efficient notation, $2 A + 3 B$.    So, we can think of a complex as a vector $v \in \mathbb{R}^S$.  The complexes that actually show up in our reaction network form a set $C \subseteq \mathbb{R}^S$.  In our example, $C = \{A+A, B\}$. 

\item {\bf Reactions.}  A reaction is an arrow going from one complex to another.  In our example we have two reactions: $A + A \to B$ and $B \to A + A$.    
\end{itemize}

Chemists define a \emph{reaction network} to be a triple $(S, C, T)$ where $S$ is a set of species, $C$ is the set of complexes that appear in the reactions, and $T$ is the set of reactions $v \to w$ where $v, w \in C$.  (Stochastic Petri net people call reactions \emph{transitions}, hence the letter $T$.)

So, in our example we have:

\begin{itemize}
\item Species: $S = \{A,B\}$.  
\item Complexes: $C= \{A+A, B\}$.  
\item Reactions: $T =  \{A+A\to B, B\to A+A\}$.  
\end{itemize}  
  
To get the rate equation, we also need one more piece of information: a \emph{rate constant} $r(\tau)$ for each reaction $\tau \in T$.  This is a nonnegative real number that affects how fast the reaction goes.  All the details of how our particular diatomic gas behaves at a given temperature and pressure are packed into these constants!  

\subsection{The rate equation}

The rate equation says how the expected numbers of the various species, atoms, molecules and the like changes with time.  This equation is deterministic.   It's a good approximation when the numbers are large and any fluctuations in these numbers are negligible by comparison. 

Here's the general form of the rate equation:
$$ \frac{dx_i}{d t} =  \sum_{\tau\in T} r(\tau) \, (n_i(\tau)-m_i(\tau)) \, x^{m(\tau)} $$ 

Let's take a closer look.  The quantity $x_i$ is the expected population of the $i$th species.  So, this equation tells us how that changes.   But what about the right hand side?  As you might expect, it's a sum over reactions.  And:

\begin{itemize} 
\item The term for the reaction $\tau$ is proportional to the rate constant $r(\tau)$. 

\item Each reaction $\tau$ goes between two complexes, so we can write it as $m(\tau) \to n(\tau)$.  Among chemists the input $m(\tau)$ is called the \emph{reactant complex}, and the output is called the \emph{product complex}.  The difference $n_i(\tau)-m_i(\tau)$ tells us how many items of species $i$ get created, minus how many get destroyed.  So, it's the net amount of this species that gets produced by the reaction $\tau$.  The term for the reaction $\tau$ is proportional to this, too.

\item Finally, the \href{http://en.wikipedia.org/wiki/Law_of_mass_action}{law of mass action} says that the rate of a reaction is proportional to the product of the concentrations of the species that enter as inputs.  More precisely, if we have a reaction $\tau$ where the input is the complex $m(\tau)$, we define $ x^{m(\tau)} = x_1^{m_1(\tau)} \cdots x_k^{m_k(\tau)}$.  The law of mass action says the term for the reaction $\tau$ is proportional to this, too!
\end{itemize} 

Let's see what this says for the reaction network we're studying:
  
\begin{center}
 \includegraphics[width=40mm]{chemical_reaction_network_part_18.png}
\end{center}

Let's write $x_1(t)$ for the number of $A$ atoms and $x_2(t)$ for the number of $B$ molecules.  Let the rate constant for the reaction $B \to A + A$ be $\alpha$, and let the rate constant for $A + A \to B$ be $\beta$.   Then the rate equation is this:
$$ \frac{d}{d t} x_1 =  2 \alpha x_2 - 2 \beta x_1^2 $$ 
$$ \frac{d}{d t} x_2 = -\alpha x_2 + \beta x_1^2 $$
This is a bit intimidating.  However, we can solve it in closed form thanks to something very precious: a \emph{conserved quantity}.    \index{conserved quantity}

We've got two species, $A$ and $B$.  But remember, $B$ is just an abbreviation for a molecule made of two $A$ atoms.   So, the total number of $A$ atoms is conserved by the reactions in our network.  This is the number of $A$'s plus twice the number of $B$'s: $x_1 + 2x_2$.  So, this should be a \emph{conserved quantity}: it should not change with time.  Indeed, by adding the first equation above to twice the second, we see:
$$ \frac{d}{d t} (x_1 + 2x_2) = 0 $$ 
As a consequence, any solution will stay on a line 
$$ x_1 + 2 x_2 = c $$
for some constant $c$.  We can use this fact to rewrite the rate equation just in terms of $x_1$:
$$ \frac{d}{d t} x_1 = \alpha (2c - x_1) - 2 \beta x_1^2 $$ 

This is a separable differential equation, so we can solve it if we can figure out how to do this integral 
$$ t = \int \frac{d x_1}{\alpha (2c - x_1) - 2 \beta x_1^2 } $$
and then solve for $x_1$.

This sort of trick won't work for more complicated examples. 
But the idea remains important: the numbers of atoms of various kinds---hydrogen, helium, lithium, and so on---are conserved by chemical reactions, so a solution of the rate equation can't roam freely in $\mathbb{R}^S$.  It will be trapped on some hypersurface, which is called a `stoichiometric compatibility class'.  And this is very important.

We don't feel like doing the integral required to solve our rate equation in closed form, because this idea doesn't generalize too much.  On the other hand, we can always solve the rate equation numerically.  So let's try that!

For example, suppose we set $\alpha = \beta = 1$.  We can plot the solutions for three different choices of initial conditions, say $(x_1,x_2) = (0,3), (4,0),$ and $(3,3)$.  We get these graphs:

\begin{center}
 \includegraphics[width=\textwidth]{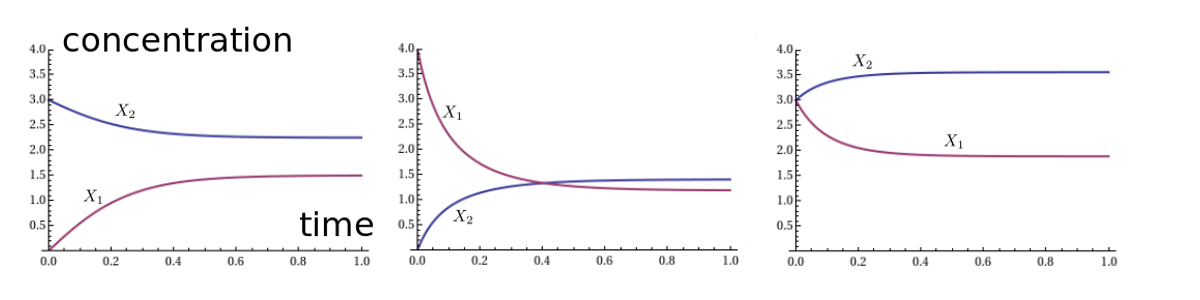}
\end{center}

\noindent
It looks like the solution always approaches an equilibrium.   We seem to be getting different equilibria for different initial conditions, and the pattern is a bit mysterious.   However, something nice happens when we plot the ratio $x_1^2 / x_2$:


\begin{center}
 \includegraphics[width=\textwidth]{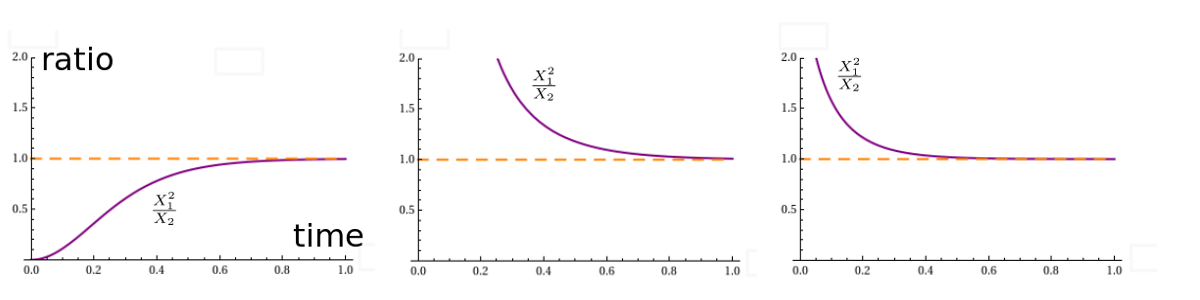}
\end{center}

Apparently it always converges to 1.   Why should that be?  It's not terribly surprising.  With both rate constants equal to 1, the reaction $A + A \to B$ proceeds at a rate equal to the square of the number of $A$'s, namely $x_1^2$.  The reverse reaction proceeds at a rate equal to the number of $B$'s, namely $x_2$.  So in equilibrium, we should have $x_1^2 = x_2$.

But why is the equilibrium \emph{stable}?  In this example we could see that using the closed-form solution, or maybe just common sense.   But it also follows from a powerful theorem that handles a \emph{lot} of reaction networks.

\subsection[Deficiency zero theorem]{The deficiency zero theorem} 
\index{Deficiency Zero Theorem|(}

It's called the deficiency zero theorem, and we saw it in Section \ref{sec:17_theorem}.  Very roughly, it says that if our reaction network is `weakly reversible' and has `deficiency zero', the rate equation will have equilibrium solutions that behave about as nicely as you could want.  

Let's see how this works.   We need to remember some jargon:

\begin{itemize} 

\item {\bf Weakly reversible.} A reaction network is {\bf weakly reversible} if for every reaction $v \to w$ in the network, there exists a path of reactions in the network starting at $w$ and leading back to $v$.

\item {\bf Reversible.}   A reaction network is {\bf reversible} if for every reaction $v \to w$ in the network, $w \to v$ is also a reaction in the network.    Any reversible reaction network is weakly reversible.   Our example is reversible, since it consists of reactions $A + A \to B$, $B \to A + A$.\index{reaction network!reversible}\index{reversible reaction network}  

\end{itemize} 

But what about `deficiency zero'?  We defined that concept in Section~\ref{sec:18}, but let's review:

\begin{itemize} 

\item {\bf Connected component.}\index{connected component!definition of}
  A reaction network gives a kind of graph with complexes as vertices and reactions as edges.  Two complexes lie in the same  {\bf connected component} if we can get from one to the other by a path of reactions, where at each step we're allowed to go either forward \emph{or backward} along a reaction.  Chemists call a connected component a {\bf linkage class}.  In our example there's just one:

\begin{center}
 \includegraphics[width=40mm]{chemical_reaction_network_part_18.png}
\end{center}

\item {\bf Stoichiometric subspace.}\index{stoichiometric!subspace}\index{reaction network!stoichiometric subspace of} 
  The {\bf stoichiometric subspace} is the subspace $\mathrm{Stoch} \subseteq \mathbb{R}^S$ spanned by the vectors of the form $w - v$ for all reactions $v \to w$ in our reaction network.    This subspace describes the directions in which a solution of the rate equation can move.  In our example, it's spanned by $B - 2 A$ and $2 A - B$, or if you prefer, $(-2,1)$ and $(2,-1)$.  These vectors are linearly dependent, so the stoichiometric subspace has dimension 1.  

\item {\bf Deficiency.}\index{deficiency}\index{reaction network!deficiency of}
  The {\bf deficiency} of a reaction network is the number of complexes, minus the number of connected components, minus the dimension of the stoichiometric subspace.  In our example there are 2 complexes, 1 connected component, and the dimension of the stoichiometric subspace is 1.  So, our reaction network has deficiency 2 - 1 - 1 = 0.  

\end{itemize} 

So, the deficiency zero theorem applies!  What does it say?  To understand it, we need a bit more jargon.  First of all, a vector $x \in \mathbb{R}^S$ tells us how much we've got of each species: the amount of species $i \in S$ is the number $x_i$.  And then:

\begin{itemize} 

\item {\bf Stoichiometric compatibility class.}\index{stoichiometric!compatibility class}\index{reaction network!stoichiometric compatibility class}  Given a vector $v\in \mathbb{R}^S$, its {\bf stoichiometric compatibility class} is the subset of all vectors that we could reach using the reactions in our reaction network:
$$ \{ v + w \; : \; w \in \mathrm{Stoch} \} $$
\end{itemize} 

In our example, where the stoichiometric subspace is spanned by $(2,-1)$, the stoichiometric compatibility class of the vector $(a,b)$ is the line consisting of points
$$ (x_1, x_2) = (a,b) + s(2,-1) $$
where the parameter $s$ ranges over all real numbers.  Notice that this line can also be written as
$$ x_1 + 2x_2 = c $$
We've already seen that if we start with initial conditions on such a line, the solution will stay on this line.  And that's how it always works: as time passes, any solution of the rate equation stays in the same stoichiometric compatibility class!  

In other words: the stoichiometric subspace is defined by a bunch of linear equations, one for each linear conservation law that all the reactions in our network obey.    Here a {\bf linear conservation law} is a law saying that some linear combination of the numbers of species does not change.

Next: 

\begin{itemize} 

\item {\bf Positivity.}\index{stoichiometric!compatibility class!positivity}  A vector in $\mathbb{R}^S$ is {\bf positive} if all its components are positive; this describes a container of chemicals where all the species are actually present.  The {\bf positive stoichiometric compatibility class} of $x\in \mathbb{R}^S$ consists of all positive vectors in its stoichiometric compatibility class.

\end{itemize} 

We finally have enough jargon in our arsenal to state the deficiency zero theorem.  We'll only state the part we need now:

\begin{theorem}[{\bf Deficiency Zero Theorem}]\index{Deficiency Zero Theorem!statement of}  If a reaction network is weakly reversible and the rate constants are positive, the rate equation has exactly one equilibrium solution in each positive stoichiometric compatibility class.  Any sufficiently nearby solution that starts in the same class will approach this equilibrium as $t \to +\infty$. 
\end{theorem}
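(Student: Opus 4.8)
The plan is to follow the classical Horn--Jackson--Feinberg argument, but organized around tools already in hand: Perron--Frobenius theory (Section \ref{sec:16}) to produce a distinguished equilibrium, the complex-balance machinery of Section \ref{sec:8} and Problem \ref{prob:11} to connect it to the rate equation, and a Lyapunov function for uniqueness and local stability. Throughout we assume (as in part (iii) of the deficiency zero theorem of Section \ref{sec:17}) that the deficiency is zero and the rate constants are positive.

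First I would rewrite the rate equation in structured form. Let $C$ be the set of complexes, $\ell$ the number of linkage classes (connected components of the complex graph), and let $Y : \mathbb{R}^C \to \mathbb{R}^S$ be the linear map sending the basis vector $e_\kappa$ to the complex $\kappa$ regarded as an element of $\mathbb{R}^S$. Encode the reactions and rate constants in the matrix $A_\kappa$ on $\mathbb{R}^C$ whose off-diagonal entry for $(\kappa\to\kappa')$ is $r(\kappa\to\kappa')$ and whose columns sum to zero: this is a graph Laplacian of the directed graph of complexes, hence infinitesimal stochastic in the sense of Section \ref{sec:4}. Writing $\Psi(x)\in\mathbb{R}^C$ for the vector of mass-action monomials $\Psi(x)_\kappa = x^{\kappa}$, the rate equation becomes $\dot x = Y A_\kappa \Psi(x)$, and the complex-balanced condition of Section \ref{sec:8} is exactly $A_\kappa \Psi(x) = 0$. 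Weak reversibility says each linkage class is strongly connected, so restricting $A_\kappa$ to a linkage class $j$ and applying the Perron--Frobenius theorem yields a positive vector $\rho^{(j)}$ supported on that class with $A_\kappa\rho^{(j)}=0$, unique up to scaling; these span the part of $\ker A_\kappa$ lying in the positive cone of each block.

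Next I would produce a complex-balanced positive equilibrium $x^\ast\in(0,\infty)^S$. Taking logarithms turns the monomial map into the linear map $\ln\Psi(x) = Y^\top\ln x$, so $x^\ast$ is complex balanced iff $Y^\top\ln x^\ast$ agrees, on each linkage class, with $\ln\rho^{(j)}$ up to an additive constant, i.e.\ $Y^\top\ln x^\ast - \beta \in \Delta^\perp$, where $\beta_\kappa = (\ln\rho^{(j)})_\kappa$ for $\kappa$ in class $j$ and $\Delta\subseteq\mathbb{R}^C$ is the span of the reaction vectors $e_{\kappa'}-e_\kappa$. Since $\mathrm{im}(Y^\top) = (\ker Y)^\perp$ and $\Delta^\perp$ is the space of functions constant on linkage classes, such an $\ln x^\ast$ exists iff $\beta\in(\ker Y)^\perp + \Delta^\perp = (\ker Y\cap\Delta)^\perp$. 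Here the hypothesis enters: a rank--nullity count gives $\dim(\ker Y\cap\Delta) = |C| - \ell - \dim\mathrm{Stoch}$, which is exactly the deficiency, so deficiency zero makes $\ker Y\cap\Delta = \{0\}$, the condition on $\beta$ is vacuous, and $x^\ast$ exists. By Problem \ref{prob:11} this $x^\ast$ is an equilibrium of the rate equation. Moreover the set of all complex-balanced equilibria is $\{x>0 : \ln x - \ln x^\ast \in \mathrm{Stoch}^\perp\}$, and a strict-convexity (Birch-type) argument shows this set meets each positive stoichiometric compatibility class in exactly one point, so every positive class contains a complex-balanced equilibrium.

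Finally, uniqueness inside a given positive compatibility class and convergence of nearby solutions I would get from the Horn--Jackson Lyapunov function
\[
 G(x) \;=\; \sum_{i\in S}\bigl( x_i\ln(x_i/x^\ast_i) - x_i + x^\ast_i \bigr),
\]
built from the complex-balanced equilibrium $x^\ast$ of that class. The function $G$ is strictly convex on $(0,\infty)^S$ and attains a strict minimum at $x^\ast$ within the affine compatibility class. Computing along a solution, $\dot G = \langle\ln(x/x^\ast),\,Y A_\kappa\Psi(x)\rangle = \langle Y^\top\ln(x/x^\ast),\,A_\kappa\Psi(x)\rangle = \langle\ln(\Psi(x)/\Psi(x^\ast)),\,A_\kappa\Psi(x)\rangle$, and the crucial lemma is that for a weakly reversible Laplacian $A_\kappa$ whose kernel contains the positive vector $\Psi(x^\ast)$, this pairing is $\le 0$, with equality precisely when $\Psi(x)\in\ker A_\kappa$, i.e.\ when $x$ is complex balanced. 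Strict convexity then forces $x^\ast$ to be the only equilibrium in the class, and LaSalle's invariance principle upgrades $\dot G\le 0$ to asymptotic stability for nearby initial data in the same class. The hard part will be this last inequality with its equality case --- it is the analytic heart of the theorem and the place where weak reversibility (rather than mere existence of $x^\ast$) is genuinely used --- together with the Birch-type bijectivity statement and the compactness bookkeeping needed to run LaSalle on a possibly non-compact compatibility class; everything else reduces to linear algebra or to results already established above.
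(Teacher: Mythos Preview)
Your existence argument is essentially the paper's own proof. The paper rewrites the rate equation as $\dot x = YHx^Y$ with $H=\partial s^\dagger$ the infinitesimal-stochastic Laplacian on complexes (Theorem~\ref{theorem_rate_equation}), uses Perron--Frobenius on each linkage class to produce a positive $\psi\in\ker H$ (Theorem~\ref{theorem_equilibria}), and then invokes deficiency zero in the dual form $\ker\partial^\dagger + \mathrm{im}\,Y^\dagger = \mathbb{R}^K$ (Lemma~\ref{lemma:decomposition}) together with $x^Y=\exp(Y^\dagger\ln x)$ (Lemma~\ref{lemma:exponential}) to rescale $\psi$ by a factor constant on linkage classes until $\psi=x^Y$. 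Your $A_\kappa$, $\rho^{(j)}$, $\Delta$, and the solvability condition $\beta\in(\ker Y\cap\Delta)^\perp$ match the paper's $H$, $\psi_C$, $\mathrm{im}\,\partial$, and orthogonal-complement lemma line for line.

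Where you diverge is that you go further than the paper does. The paper only proves existence of a positive complex-balanced equilibrium (its ``Child's Version,'' Section~\ref{sec:23}); it states uniqueness within each compatibility class and local asymptotic stability (Sections~\ref{sec:17}--\ref{sec:18}) but defers those to the references of Feinberg and Horn--Jackson. Your Birch-type parametrization and the Horn--Jackson entropy $G$ are exactly the classical tools for those parts, so your outline is the right one. One simplification worth noting: under deficiency zero, \emph{every} positive equilibrium is automatically complex balanced, since $A_\kappa\Psi(x)\in\mathrm{im}\,H\subseteq\mathrm{im}\,\partial$ and $YA_\kappa\Psi(x)=0$ force $A_\kappa\Psi(x)\in\ker Y\cap\mathrm{im}\,\partial=\{0\}$. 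This gives uniqueness directly from the Birch argument, without routing through the equality case of $\dot G\le 0$. The dissipation inequality and the LaSalle bookkeeping are still genuinely needed for the stability conclusion, and you are right to flag them as the analytic heart.
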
 

In our example, this theorem says there's just one positive
equilibrium $(x_1,x_2)$ in each line
$$ x_1 + 2x_2 = c $$
We can find it by setting the time derivatives to zero:
$$ \frac{d}{d t} x_1 =  2 \alpha x_2 - 2 \beta x_1^2 = 0 $$ 
$$ \frac{d}{d t} x_2 = -\alpha x_2 + \beta x_1^2 = 0 $$
Solving these, we get
$$ \frac{x_1^2}{x_2} = \frac{\alpha}{\beta} $$ 

So, these are our equilibrium solutions.  It's easy to verify that indeed, there's one of these in each stoichiometric compatibility class $x_1 + 2x_2 = c$.  And the deficiency zero theorem also tells us that any sufficiently nearby solution that starts in the same class will approach this equilibrium as $t \to \infty$.\index{Deficiency Zero Theorem} 

This partially explains what we saw before in our graphs.  It shows that in the case $\alpha = \beta = 1$, any solution that starts by \emph{nearly} having
$$ \frac{x_1^2}{x_2} = 1 $$
will actually have
$$ \lim_{t \to +\infty} \frac{x_1^2}{x_2} = 1 $$
But in fact, in this example we don't even need to start \emph{near} the equilibrium for our solution to approach the equilibrium!  

What about in general?  For many years this was an open question:

\begin{conjecture}[{\bf Global Attractor Conjecture}]\index{Global Attractor Conjecture|(}  If a reaction network is weakly reversible and the rate constants are positive, the rate equation has exactly one equilibrium solution in each positive stoichiometric compatibility class, and \emph{any} positive solution that starts in the same class will approach this equilibrium as $t \to +\infty$. 
\end{conjecture}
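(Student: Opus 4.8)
The plan is to build on the Horn--Jackson machinery already behind the Deficiency Zero Theorem. The first, easy, part is the algebra: under the hypotheses (and in particular in the weakly reversible, deficiency-zero case just quoted, where complex balancing comes for free) there is a complex-balanced equilibrium $c \in (0,\infty)^S$, unique in each positive stoichiometric compatibility class, and the ``pseudo-Helmholtz free energy''
$$ V(x) \;=\; \sum_{i \in S} \bigl( x_i \ln x_i - x_i \ln c_i - x_i + c_i \bigr) $$
is a strict Lyapunov function: a classical computation gives $\dot V(x) \le 0$ on $(0,\infty)^S$, with equality exactly at the complex-balanced equilibria. Restricted to a compatibility class, the sublevel sets of $V$ (intersected with that class) are compact, so every positive solution stays bounded and has a nonempty, compact, connected $\omega$-limit set $\omega(x)$. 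LaSalle's invariance principle then puts $\omega(x)$ inside the largest invariant subset of $\{\dot V = 0\}$ --- but a priori this is computed in the \emph{closed} orthant, so $\omega(x)$ might touch the boundary $\partial[0,\infty)^S$, where $V$ is still finite.

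So the entire substance of the conjecture is a \emph{persistence} statement: no positive trajectory has an $\omega$-limit point on the boundary of the positive orthant. Granting persistence, $\omega(x) \subseteq (0,\infty)^S$; LaSalle forces $\omega(x)$ to consist of positive equilibria; uniqueness within the class pins this down to the single complex-balanced equilibrium $c$; and connectedness upgrades ``$\omega(x) = \{c\}$'' to genuine global convergence. Thus the work is to prove persistence.

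For persistence I would use the combinatorial theory of \emph{siphons}: a set $Z \subseteq S$ is a siphon if every reaction producing some species in $Z$ also consumes one, which makes the coordinate face $\{x : x_i = 0 \text{ for } i \in Z\}$ forward-invariant; a standard argument shows the support of any boundary $\omega$-limit point is the complement of a siphon. The strategy is then to show that a ``critical'' siphon --- one whose face meets the relevant compatibility class and could carry an $\omega$-limit set --- clashes with weak reversibility plus the Lyapunov structure: one analyzes the induced dynamics on the face (where $V$ is again a Lyapunov function for a sub-network), argues the trajectory would have to approach a boundary equilibrium complex-balanced for that sub-network, and derives a contradiction with the monotone decrease of $V$ computed along directions re-entering the orthant.

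The hard part is precisely this last step, in full generality. Honestly, I expect to get only partial results this way --- the conjecture is known when the stoichiometric subspace has dimension $\le 3$, when there is a single linkage class, and for strongly endotactic networks --- because a complete argument seems to need the \emph{toric} geometry of mass-action kinetics: the rate functions are monomials, so trajectories obey a ``toric differential inclusion'', and that is the ingredient that controls how a solution can spiral toward the boundary. So my realistic proposal is: prove the easy existence/uniqueness/local-stability part cleanly from the material already developed, reduce everything to persistence, and then either impose an extra hypothesis (bounded dimension, single linkage class, strong endotacticity) or supply a new idea about boundary behavior --- that idea being the true obstacle.
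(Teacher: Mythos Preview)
The paper does not prove this statement: it is explicitly presented as an open \emph{conjecture}, not a theorem. Immediately after stating it, the paper says ``If this is true, it would really clarify the dynamics of reaction networks in the zero deficiency case,'' notes that it goes back to Horn in 1974, and cites partial results --- Craciun--Dickenstein--Shiu--Sturmfels for a special case, and Anderson for the single linkage class case. There is no proof in the paper to compare against.

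Your proposal is entirely consistent with this state of affairs, and you are admirably honest about it. You correctly identify that the existence, uniqueness, and local asymptotic stability parts follow from the Horn--Jackson Lyapunov function and are already contained in the Deficiency Zero Theorem machinery; you correctly reduce the global statement to persistence; and you correctly flag that persistence in full generality is the genuine obstacle, listing exactly the partial results (low dimension, single linkage class, strongly endotactic) that are known. This is not a proof of the conjecture --- and you say so yourself --- but it is an accurate summary of the standard attack and where it stalls, which is all one can ask for an open problem.
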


In their groundbreaking 1972 paper \cite{HJ72}, Horn and Jackson thought they had proved this, but in 1974 Horn realized they had not:

\begin{enumerate}
\item[\cite{Hor74}] Fritz Horn, The dynamics of open reaction systems, in \textsl{Mathematical Aspects of Chemical and Biochemical Problems and Quantum Chemistry}, ed.\ Donald S.\ Cohen, \textsl{SIAM--AMS Proceedings} \textbf{8}, American Mathematical Society, Providence, R.I., 1974, pp.\ 125--137.
\end{enumerate}

\noindent
It was dubbed the `global attractor conjecture' in the following paper, which proved it in a special case:

\begin{enumerate} 
\item[\cite{CDSS07}] Gheorghe Craciun, Alicia Dickenstein, Anne Shiu and Bernd Sturmfels, Toric dynamical systems.  Available as \href{http://arxiv.org/abs/0708.3431}{arXiv:0708.3431}.
\end{enumerate} 

\noindent
In 2011 Anderson proved the conjecture for reaction networks with a single connected component, or `linkage class':

\begin{enumerate} 
\item[\cite{And11}] David F.\ Anderson, A proof of the Global Attractor Conjecture in the single linkage class case.  Available as \href{http://arxiv.org/abs/1101.0761}{arXiv:1101.0761}.
\end{enumerate} 

\noindent
In 2015, Craciun proved the conjecture in general:

\begin{enumerate}
\item[\cite{Cra15}] Gheorghe Craciun, Toric differential inclusions and a proof of the Global Attractor Conjecture.  Available as \href{https://arxiv.org/abs/1501.02860}{arXiv:1501.02860}.
\end{enumerate}

\index{Deficiency Zero Theorem|)}
\index{Global Attractor Conjecture|)} 
\index{Deficiency Zero Theorem!introduction to|)} 


\newpage
\section[Example of the Anderson--Craciun--Kurtz theorem]{Example of the Anderson--Craciun--Kurtz \\ theorem}\label{sec:19}\index{Anderson--Craciun--Kurtz theorem!example of|(} 

In Section \ref{sec:18} we started looking at a simple example: a diatomic gas.  

\begin{center}
 \includegraphics[width=40mm]{nitrogen.png}
\end{center}

\noindent
A diatomic molecule of this gas can break apart into two atoms:
$$   A_2 \to A + A $$
and conversely, two atoms can combine to form a diatomic molecule:
$$  A + A \to A_2 $$
We can draw both these reactions using a chemical reaction network:

\begin{center}
 \includegraphics[width=40mm]{chemical_reaction_network_part_18.png}
\end{center}

\noindent
where we're writing $B$ instead of $A_2$ to abstract away some detail that's just distracting here. 

In Section \ref{sec:18} we looked at the rate equation for this chemical reaction network, and found equilibrium solutions of that equation.  Now let's look at the master equation, and find equilibrium solutions of that.  This will illustrate the Anderson--Craciun--Kurtz theorem.  We'll also see how the conservation law we noticed in the last section is related to Noether's theorem for Markov processes.

\subsection[The master equation]{The master equation}

We'll start from scratch.  The master equation is all about how atoms or molecules or rabbits or wolves or other things interact randomly and turn into other things.  So, let's write $\psi_{m,n}$ for the probability that we have $m$ atoms of $A$ and $n$ molecules of $B$ in our container.  These probabilities are functions of time, and the master equation will say how they change.

First we need to pick a {\bf rate constant} for each reaction.  Let's say the rate constant for the reaction that produces $A$s is some number $\alpha > 0$:
$$   B \to A + A $$
while the rate constant for the reaction that produces $B$s is some number $\beta > 0$:
$$  A + A \to B $$
Before we make it pretty using the ideas we've been explaining all along, the master equation says:
$$  \displaystyle{ \frac{d}{d t} \psi_{m,n} (t)} \; = \;  \alpha (n+1) \, \psi_{m-2,n+1} \; - \; \alpha n \, \psi_{m,n}  \; + \; \beta (m+2)(m+1) \, \psi_{m+2,n-1} \; - \;\beta m(m-1) \, \psi_{m,n} \; $$

Yuck!  Normally we don't show you such nasty equations.  Indeed the whole point of our work has been to demonstrate that by packaging the equations in a better way, we can understand them using high-level concepts instead of mucking around with millions of scribbled symbols.  But we thought we'd show you what's secretly lying behind our beautiful abstract formalism, just once. 

Each term has a meaning.  For example, the third one:
$$ \beta (m+2)(m+1)\psi_{m+2,n-1}(t) $$
means that the reaction $A + A \to B$ will tend to increase the probability of there being $m$ atoms of $A$ and $n$ molecules of $B$ if we start with $m+2$ atoms of $A$ and $n-1$ molecules of $B.$  This reaction can happen in $(m+2)(m+1)$ ways.   And it happens at a probabilistic rate proportional to the rate constant for this reaction, $\beta$.  

We won't go through the rest of the terms.  It's a good exercise to do so, but there could easily be a typo in the formula, since it's so long and messy.  So let us know if you find one!

To simplify this mess, the key trick is to introduce a {\bf generating function} that summarizes all the probabilities in a single power series:
$$ \Psi = \sum_{m,n \ge 0} \psi_{m,n} y^m \, z^n $$
It's a power series in two variables, $y$ and $z,$ since we have two chemical species: $A$s and $B$s.\index{power series}   

Using this trick, the master equation looks like
$$ \displaystyle{ \frac{d}{d t} \Psi(t) = H \Psi(t) } $$
where the {\bf Hamiltonian} $H$ is a sum of terms, one for each reaction.  This Hamiltonian is built from operators that annihilate and create $A$s and $B$s.  The annihilation and creation operators for $A$ atoms are:
$$ \displaystyle{ a = \frac{\partial}{\partial y} , \qquad a^\dagger = y } $$
The annihilation operator differentiates our power series with respect to the variable $y.$  The creation operator multiplies it by that variable.  Similarly, the annihilation and creation operators for $B$ molecules are:
$$ \displaystyle{ b = \frac{\partial}{\partial z} , \qquad b^\dagger = z } $$
In Section~\ref{sec:7_master} we explained a recipe that lets us stare at our chemical reaction network and write down this Hamiltonian:
$$ H = \alpha ({a^\dagger}^2 b - b^\dagger b) + \beta (b^\dagger a^2 - {a^\dagger}^2 a^2) $$
As promised, there's one term for each reaction.  But each term is itself a sum of two: one that increases the probability that our container of chemicals will be in a new state, and another that decreases the probability that it's in its original state.  We get a total of four terms, which correspond to the four terms in our previous way of writing the master equation.

\begin{problem}\label{prob:31} 
Show that this way of writing the master equation is equivalent to the previous one. 
\end{problem}

\subsection[Equilibrium solutions]{Equilibrium solutions}

Now we will look for all {\bf equilibrium} solutions of the master equation: in other words, solutions that don't change with time.  So, we're trying to solve
$$ H \Psi = 0 $$
Given the rather complicated form of the Hamiltonian, this seems tough.  The challenge looks more concrete but even more scary if we go back to our original formulation.  We're looking for probabilities $\psi_{m,n},$ nonnegative numbers that sum to one, such that
$$  \alpha (n+1) \, \psi_{m-2,n+1} \; - \; \alpha n \, \psi_{m,n}  \; + \; \beta (m+2)(m+1) \, \psi_{m+2,n-1} \; - \;\beta m(m-1) \, \psi_{m,n} = 0 $$
This equation is horrid!  But the good news is that it's \emph{linear}, so a linear combination of solutions is again a solution.  This lets us simplify the problem using a conserved quantity.

Clearly, there's a quantity that the reactions here don't change:

\begin{center}
 \includegraphics[width=40mm]{chemical_reaction_network_part_18.png}
\end{center}

\noindent
What's that?  It's the number of $A$s plus \emph{twice} the number of $B$s.  After all, a $B$ can turn into two $A$s, or vice versa.  So a $B$ counts twice.

Of course the secret reason is that $B$ is a diatomic molecule made of two $A$s.  But you'd be able to follow the logic here even if you didn't know that, just by looking at the chemical reaction network...\ and sometimes this more abstract approach is handy!  Indeed, the way chemists first discovered that certain molecules are made of certain atoms is by seeing which reactions were possible and which weren't.

\index{chemistry!diatomic molecules|)}

Suppose we start in a situation where we know \emph{for sure} that the number of $A$s plus twice the number of $B$s equals some number $k$:  
$$   \psi_{m,n} = 0  \;  \textrm{unless} \; m+2n = k $$
Then we know $\Psi$ is initially of the form
$$  \Psi = \sum_{m+2n = k} \psi_{m,n} \, y^m z^n   $$
But since the number of $A$s plus twice the number of $B$s is conserved, if $\Psi$ obeys the master equation it will \emph{continue} to be of this form!

Put a fancier way, we know that if a solution of the master equation starts in this subspace:
$$  L_k =  \{ \Psi  \;: \; \Psi = \sum_{m+2n = k} \psi_{m,n} y^m z^n \; \textrm{for some} \; \psi_{m,n} \}  $$
it will \emph{stay} in this subspace.  So, because the master equation is linear, we can take any solution $\Psi$ and write it as a linear combination of solutions $\Psi_k,$ one in each subspace $L_k.$

In particular, we can do this for an equilibrium solution $\Psi.$  And then all the solutions $\Psi_k$ are also equilibrium solutions: they're linearly independent, so if one of them changed with time, $\Psi$ would too.

This means we can just look for equilibrium solutions in the subspaces $L_k.$ If we find these, we can get \emph{all} equilibrium solutions by taking linear combinations.

Once we've noticed that, our horrid equation makes a bit more sense:
$$  \alpha (n+1) \, \psi_{m-2,n+1} \; - \; \alpha n \, \psi_{m,n}  \; + \; \beta (m+2)(m+1) \, \psi_{m+2,n-1} \; - \;\beta m(m-1) \, \psi_{m,n} = 0 $$
Note that if the pair of subscripts $m, n$ obey $m + 2n = k,$ the same is true for the other pairs of subscripts here!  So our equation relates the values of $\psi_{m,n}$ for all the points $(m,n)$ with integer coordinates lying on this line segment:
$$ m+2n = k , \qquad m ,n \ge 0 $$
 
You should be visualizing something like this:

\begin{center}
 \includegraphics[width=40mm]{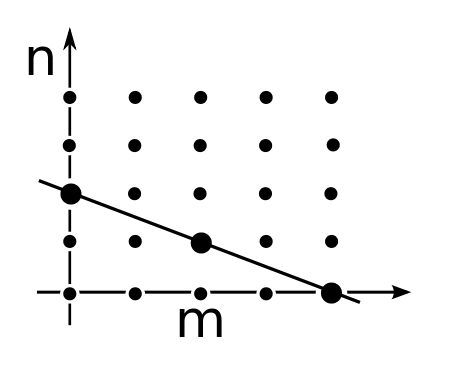}
\end{center}

\noindent
If you think about it a minute, you'll see that if we know $\psi_{m,n}$ at two points on such a line, we can keep using our equation to recursively work out all the rest.  So, there are \emph{at most two} linearly independent equilibrium solutions of the master equation in each subspace $L_k.$

Why \emph{at most} two?  Why not two?   Well, we have to be a bit careful about what happens at the ends of the line segment: remember that $\psi_{m,n}$ is defined to be zero when $m$ or $n$ becomes negative.   If we think very hard about this, we'll see there's just \emph{one} linearly independent equilibrium solution of the master equation in each subspace $L_k.$ But this is the sort of nitty-gritty calculation that's not fun to watch someone else do, so we won't bore you with that.

Soon we'll move on to a more high-level approach to this problem.  But first, one remark.  Our horrid equation is like a fancy version of the usual discretized form of the equation
$$ \displaystyle {\frac{d^2 \psi}{d x^2} = 0 } $$
namely:
$$  \psi_{n-1} - 2 \psi_{n} + \psi_{n+1} = 0 $$
And this makes sense, since we get
$$ \displaystyle {\frac{d^2 \psi}{d x^2} = 0 } $$
by taking the \href{http://en.wikipedia.org/wiki/Heat_equation}{\bf{heat equation}}:
$$ \displaystyle \frac{\partial \psi}{\partial t} = {\frac{\partial^2 \psi}{\partial x^2} } $$
and assuming $\psi$ doesn't depend on time.  So what we're doing is a lot like looking for equilibrium solutions of the heat equation.  The heat equation describes how heat smears out as little particles of heat randomly move around.  True, there don't really exist `little particles of heat', but this equation also describes the diffusion\index{diffusion}  of any other kind of particles as they randomly move around undergoing Brownian motion.\index{Brownian motion}   Similarly, our master equation describes a random walk\index{random walk!on line segment} on this line segment:
$$ m+2n = k , \qquad m , n \ge 0 $$
or more precisely, the points on this segment with integer coordinates.  The equilibrium solutions arise when the probabilities $\psi_{m,n}$ have diffused as much as possible.  

If you think about it this way, it should be physically obvious that there's just \emph{one} linearly independent equilibrium solution of the master equation for each value of $k.$ 

There's a general moral here, too, which we're seeing in a special case: the master equation for a chemical reaction network really describes a bunch of random walks, one for each allowed value of the conserved quantities that can be built as linear combinations of number operators.  In our case we have one such conserved quantity, but in general there may be more (or none).  

Furthermore, these `random walks' are what we've been calling Markov processes in Section~\ref{sec:10}.\index{random walk!as Markov processes}

\subsection[Noether's theorem]{Noether's theorem}

We simplified our task of finding equilibrium solutions of the master equation by finding a conserved quantity.  The idea of simplifying problems using conserved quantities is fundamental to physics: this is why physicists are so enamored with quantities like energy, momentum, angular momentum and so on.  

Nowadays physicists often use `Noether's theorem' to get conserved quantities from symmetries.  There's a very simple version of Noether's theorem for quantum mechanics, but in Section~\ref{sec:10} we saw a version for stochastic mechanics, and it's that version that is relevant now.  We don't really \emph{need} Noether's theorem now, since we found the conserved quantity and exploited it without even noticing the symmetry.  Nonetheless it's interesting to see how it relates to what we're doing.  

For the reaction we're looking at now, the idea is that the subspaces $L_k$ are eigenspaces of an operator that commutes with the Hamiltonian $H.$  It follows from standard math that a solution of the master equation that starts in one of these subspaces, stays in that subspace.

What is this operator?  It's built from `number operators'.  The {\bf number operator} for $A$s is
$$    N_A = a^\dagger a $$
and the number operator for $B$s is 
$$    N_B = b^\dagger b $$
A little calculation shows
$$  N_A \,y^m z^n = m \, y^m z^n, \quad \qquad  N_B\, y^m z^n = n \,y^m z^n $$
so the eigenvalue of $N_A$ is the number of $A$s, while the eigenvalue of $N_B$ is the number of $B$s.  This is why they're called number operators. 
\index{number operator}\index{quantum field theory!number operator}

As a consequence, the eigenvalue of the operator $N_A + 2N_B$ is the number of $A$s plus twice the number of $B$s:
$$  (N_A + 2N_B) \, y^m z^n = (m + 2n) \, y^m z^n $$
Let's call this operator $O,$ since it's so important:
$$ O = N_A + 2N_B $$
If you think about it, the spaces $L_k$ we saw a minute ago are precisely the eigenspaces of this operator:
$$  L_k = \{ \Psi \; : \; O \Psi = k \Psi \} $$
As we've seen, solutions of the master equation that start in one of these eigenspaces will stay there.  This lets us take some techniques that are very familiar in quantum mechanics, and apply them to this stochastic situation.  

First of all, time evolution as described by the master equation is given by the operators $\exp(t H).$  In other words, 
$$ \displaystyle{ \frac{d}{d t} \Psi(t) } = H \Psi(t) \quad \textrm{and} \quad  \Psi(0) = \Phi \quad  \Rightarrow \quad \Psi(t) = \exp(t H) \Phi $$
But if you start in some eigenspace of $O,$ you stay there.  Thus if $\Phi$ is an eigenvector of $O,$ so is $\exp(t H) \Phi,$ with the same eigenvalue.  In other words,
$$  O \Phi = k \Phi  $$
implies
$$  O \exp(t H) \Phi = k \exp(t H) \Phi = \exp(t H) O \Phi $$
But since we can choose a basis consisting of eigenvectors of $O,$ we must have
$$ O \exp(t H) = \exp(t H) O $$
or, throwing caution to the winds and differentiating:
$$  O H = H O $$
So, as we'd expect from Noether's theorem, our conserved quantity commutes with the Hamiltonian!   This in turn implies that $H$ commutes with any polynomial in $O,$ which in turn suggests that
$$  \exp(s O) H = H \exp(s O) $$
and also
$$ \exp(s O) \exp(t H) = \exp(t H) \exp(s O) $$
The last equation says that $O$ generates a 1-parameter family of `symmetries': operators $\exp(s O)$ that commute with time evolution.  But what do these symmetries actually do?  Since
$$  O y^m z^n = (m + 2n) y^m z^n $$
we have
$$ \exp(s O) y^m z^n = e^{s(m + 2n)}\, y^m z^n $$
So, this symmetry takes any probability distribution $\psi_{m,n}$ and multiplies it by $e^{s(m + 2n)}.$

In other words, our symmetry multiplies the relative probability of finding our container of gas in a given state by a factor of $e^s$ for each $A$ atom, and by a factor of $e^{2s}$ for each $B$ molecule.   It might not seem obvious that this operation commutes with time evolution!  However, experts on chemical reaction theory are familiar with this fact.

Finally, a couple of technical points.  Starting where we said ``throwing caution to the winds", our treatment has not been rigorous, since $O$ and $H$ are unbounded operators, and these must be handled with caution.  Nonetheless, all the commutation relations we wrote down are true.

The operators $\exp(s O)$ are unbounded for positive $s.$  They're bounded for negative $s,$ so they give a 1-parameter \emph{semi}group of bounded operators.  But they're not stochastic operators: even for $s$ negative, they don't map probability distributions to probability distributions.  However, they do map any nonzero vector $\Psi$ with $\psi_{m,n} \ge 0$ to a vector $\exp(s O) \Psi$ with the same properties.  So, we can just normalize this vector and get a probability distribution.  The need for this normalization is why we spoke of \emph{relative} probabilities.   

\subsection{The Anderson--Craciun--Kurtz theorem}

Now we'll actually find all equilibrium solutions of the master equation in closed form.  To understand this final section, you really do need to remember some things we've discussed earlier.  In Section~\ref{sec:17} we considered the same chemical reaction network we're studying now, but we looked at its rate equation, which looks like this:
$$ \displaystyle{ \frac{d}{d t} x_1 =  2 \alpha x_2 - 2 \beta x_1^2} $$
$$ \displaystyle{ \frac{d}{d t} x_2 = - \alpha x_2 + \beta x_1^2 } $$
This describes how the number of $A$s and $B$s changes in the limit where there are lots of them and we can treat them as varying continuously, in a deterministic way.  The number of $A$s is $x_1,$ and the number of $B$s is $x_2.$ 

We saw that the quantity
$$  x_1 + 2 x_2 $$
is conserved, just as now we've seen that $N_A + 2 N_B$ is conserved.  We saw that the rate equation has one equilibrium solution for each choice of $x_1 + 2 x_2.$ And we saw that these equilibrium solutions obey
$$ \displaystyle{ \frac{x_1^2}{x_2} = \frac{\alpha}{\beta} } $$
The Anderson--Craciun--Kurtz theorem, discussed in Section~\ref{sec:8}, is a powerful result that gets equilibrium solution of the master equation from equilibrium solutions of the rate equation.  It only applies to equilibrium solutions that are `complex balanced', but that's okay:

\begin{problem}\label{prob:32} 
Show that the equilibrium solutions of the rate equation for the chemical reaction network 

\begin{center}
 \includegraphics[width=40mm]{chemical_reaction_network_part_18.png}
\end{center}

\noindent
are complex balanced.
\end{problem} 

So, given any equilibrium solution $(x_1,x_2)$ of our rate equation, we can hit it with the Anderson--Craciun--Kurtz theorem and get an equilibrium solution of the master equation!  And it looks like this:
$$ \displaystyle{  \Psi = e^{-(x_1 + x_2)} \, \sum_{m,n \ge 0} \frac{x_1^m x_2^n} {m! n! } \, y^m z^n } $$
In this solution, the probability distribution
$$ \displaystyle{ \psi_{m,n} = e^{-(x_1 + x_2)} \,  \frac{x_1^m x_2^n} {m! n! } } $$
is a product of Poisson distributions.  The factor in front is there to make the numbers $\psi_{m,n}$ add up to one.   And remember, $x_1, x_2$ are any nonnegative numbers with 
$$ \displaystyle{ \frac{x_1^2}{x_2} = \frac{\alpha}{\beta} } $$
So from all we've said, the above formula gives an explicit closed-form solution of the horrid equation 
$$  \alpha (m+2)(m+1) \, \psi_{m+2,n-1} \; - \;\alpha m(m-1) \, \psi_{m,n} 
\;  + \; \beta (n+1) \, \psi_{m-2,n+1} \; - \; \beta n \, \psi_{m,n} = 0 $$
That's pretty nice.  We found some solutions without ever doing any nasty calculations.

But we've really done better than getting \emph{some} equilibrium solutions of the master equation.  By restricting attention to $n,m$ with $m+2n = k,$ our formula for $\psi_{m,n}$ gives an equilibrium solution that lives in the eigenspace $L_k$:
$$  \displaystyle{  \Psi_k = e^{-(x_1 + x_2)} \, \sum_{m+2n =k} \frac{x_1^m x_2^n} {m! n! } \, y^m z^n } $$
And by what we've said, linear combinations of these give \emph{all} equilibrium solutions of the master equation.   

And we got them with very little work!  Despite all our fancy talk, we essentially just took the equilibrium solutions of the rate equation and plugged them into a straightforward formula to get equilibrium solutions of the master equation.  This is why the Anderson--Craciun--Kurtz theorem is so nice.  And of course we're looking at a very simple reaction network: for more complicated ones it becomes even better to use this theorem to avoid painful calculations.

We could go further.  For example, we could study nonequilibrium solutions using Feynman diagrams like this:

\begin{center}
 \includegraphics[width=80mm]{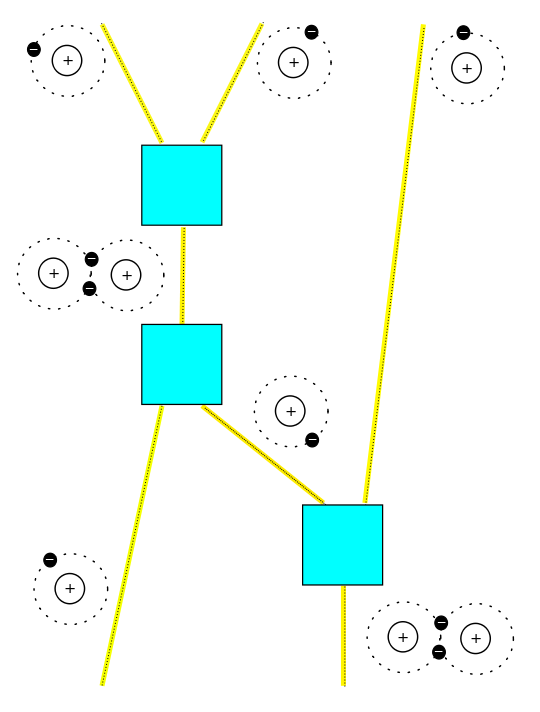}
\end{center}

But instead, we will leave off with another puzzle.  We introduced some symmetries, but we haven't really explored them yet:

\begin{problem}\label{prob:33} 
What do the symmetries associated to the conserved quantity $O$ do to the equilibrium solutions of the master equation given by
$$ \displaystyle{  \Psi = e^{-(x_1 + x_2)} \, \sum_{m,n \ge 0} \frac{x_1^m x_2^n} {m! n! } \, y^m z^n } $$
where $(x_1,x_2)$ is an equilibrium solution of the rate equation?  In other words, what is the significance of the 1-parameter family of solutions $ \exp(s O) \Psi$?
\end{problem}

Also, we used a conceptual argument to check that $H$ commutes with $O$, but it's good to know that we can check this sort of thing directly:

\begin{problem}\label{prob:34} 
Compute the commutator 
$$ [H, O] = H O - O H $$
and show it vanishes.
\end{problem}

\subsection{Answers}

\index{Egan, Greg}  \index{Jain, Arjun}
The answers to Problems 33 and 34 were written with help from Greg Egan
and Arjun Jain, respectively.

\vskip 1em \noindent {\bf Problem 33.} 
What do the symmetries associated to the conserved quantity $O$ do to the equilibrium solutions of the master equation given by
$$ \displaystyle{  \Psi = e^{-(x_1 + x_2)} \, \sum_{m,n \ge 0} \frac{x_1^m x_2^n} {m! n! } \, y^m z^n } $$
where $(x_1,x_2)$ is an equilibrium solution of the rate equation?  In other words, what is the significance of the 1-parameter family of solutions $ \exp(s O) \Psi$?

\begin{answer}
The symmetry $\exp(s O)$ maps the equilibrium solution of the master equation associated with the solution $(x_1, x_2)$ of the rate equation to that associated with $(e^s x_1, e^{2s} x_2)$.  Clearly the equation 
$$ \displaystyle{ \frac{x_1^2}{x_2} = \frac{\alpha}{\beta}} $$
is still satisfied by the new concentrations $x_1'=e^s x_1$ and $x_2'=e^{2s} x_2$.  

Indeed, the symmetries $\exp(s O)$ are related to a 1-parameter group of symmetries of the rate equation
$$  (x_1, x_2) \mapsto (x_1', x_2') = (e^s x_1, e^{2s} x_2) $$
These symmetries map the parabola of equilibrium solutions
$$  \displaystyle{ \frac{x_1^2}{x_2} = \frac{\alpha}{\beta}, \qquad x_1, x_2 \ge 0 } $$
to itself.  For example, if we have an equilibrium solution of the rate equation, we can multiply the number of lone atoms by 1.5 and multiply the number of molecules by 2.25, and get a new solution.

What's surprising is that this symmetry exists even when we consider small numbers of atoms and molecules, where we treat these numbers as integers instead of real numbers.  If we have 3 atoms, we can't multiply the number of atoms by 1.5.  So this is a bit shocking at first!

The trick is to treat the gas stochastically using the master equation rather than deterministically using the rate equation.  What our symmetry does is multiply the relative probability of finding our container of gas in a given state by a factor of $e^s$ for each lone atom, and by a factor of $e^{2s}$ for each molecule.

This symmetry commutes with time evolution as given by the master equation.  And for probability distributions that are products of Poisson distributions, this symmetry has the effect of multiplying the \emph{mean} number of lone atoms by $e^s$, and the \emph{mean} number of molecules by $e^{2s}$.

On the other hand, the symmetry $\exp(s O)$ maps each subspace $L_k$ to itself.  So this symmetry has the property that if we start in a state with a \emph{definite} total number of atoms (that is, lone atoms plus twice the number of molecules), it will map us to another state with the same total number of molecules!  

And if we start in a state with a definite number of lone atoms \emph{and} a definite number of molecules, the symmetry will leave this state completely unchanged!

These facts sound paradoxical at first, but of course they're not.  They're just a bit weird.   They're closely related to another weird fact, which however is well-known.  If we take a quantum system and start it off in an eigenstate of energy, it will never change, except for an unobservable phase.  Every state is a superposition of energy eigenstates.  So you might think that nothing can ever change in quantum mechanics.  But that's wrong: the phases that are unobservable in a single energy eigenstate become observable \emph{relative} phases in a superposition.  

Indeed the math is exactly the same, except now we're multiplying relative probabilities by positive real numbers, instead of multiplying amplitudes by complex numbers!
\end{answer}

\vskip 1em \noindent {\bf Problem 34.} 
Compute the commutator 
$$ [H, O] = H O - O H $$
and show it vanishes.

\begin{answer} 
One approach is to directly compute the commutator.  Recall that 
$$ H = \alpha ({a^\dagger}^2 b - b^\dagger b) + \beta (b^\dagger a^2 - {a^\dagger}^2 a^2) $$
and 
$$ O = a^\dagger a + 2 b^\dagger b. $$
Let us show that the term in $H$ proportional to $\alpha$ commutes with $O$.  For this
we need to show that
$$  ({a^\dagger}^2 b - b^\dagger b)(a^\dagger a + 2 b^\dagger b) - (a^\dagger a + 2 b^\dagger b)({a^\dagger}^2 b - b^\dagger b) $$
vanishes.
As $a$ and $a^\dagger$ commute with $b$ and $b^\dagger$, we can put the $a$ and $a^\dagger$ terms in front, obtaining
$$ ({a^\dagger}^3 a b - a^\dagger a b^\dagger b + 2 {a^\dagger}^2 b b^\dagger b - 2 b^\dagger b b^\dagger b) - (a^\dagger {a ^\dagger}^2 b - a^\dagger a b^\dagger b + 2 {a^\dagger}^2 b^\dagger b^2 - 2 b^\dagger b b^\dagger b) .$$
The second term cancels the sixth, and the fourth cancels the eighth, leaving
$$ a^\dagger ( {a^\dagger}^2 a + 2 a^\dagger b b^\dagger - a {a^\dagger}^2 - 2 a^\dagger b^\dagger b ) b$$
The second and fourth terms in the parentheses sum to $2 a^\dagger$, while the first and third terms give $ -2 a^\dagger$, so we get $0$.  A similar argument shows that the term in $H$ proportional to $ \beta$ commutes with $O$.

A more generally applicable strategy begins by showing that every formal power series is a (typically infinite) linear combination of vectors in the eigenspaces
$$  L_k = \{ \Psi \; : \; O \Psi = k \Psi \} $$ 
and noting that an operator commutes with $O$ if it maps each eigenspace to itself.  We can check that
\[      [N_A , a] = -a, \qquad [N_A, a^\dagger] = a, \qquad [N_A, b] = 0, \qquad
[N_A, b^\dagger] = 0 , \]
\[      [N_B , b] = -b, \qquad [N_B, b^\dagger] = b, \qquad [N_B, a] = 0, \qquad
[N_B, a^\dagger] = 0 , \]
and use these to show
$$ a \colon L_k \to L_{k-1} , \qquad a^\dagger \colon L_k \to L_{k+1} ,$$
$$ b \colon L_k \to L_{k-2} , \qquad b^\dagger \colon L_k \to L_{k+2} .$$
For example, if $\Psi \in L_k$ then $O\Psi = k \Psi$ so
$$ O b \Psi = (N_A + 2N_B) b \Psi = (b N_A + 2 (b N_B - 1))\Psi = (k-2)b\Psi $$
so $b \Psi \in L_{k-2}$.

Using this, we can see that each term in 
$$ H = \alpha ({a^\dagger}^2 b - b^\dagger b) + \beta (b^\dagger a^2 - {a^\dagger}^2 a^2) $$
commutes with $O$, so $[H,O] = 0$.  For example, consider the first term, involving
${a^\dagger}^2 b$.   If $\Psi \in L_k$ then $b\Psi \in L_{k-2}$, so $a^\dagger b \Psi \in L_{k-1}$ and ${a^\dagger}^2 b \in L_k$ again.  Thus $\alpha {a^\dagger}^2 b$ maps each eigenspace of $O$ to itself.  Each of the other terms does this as well.
\end{answer}

\index{Anderson--Craciun--Kurtz theorem!example of|)} 


\newpage
\section{The deficiency of a reaction network}\label{sec:20} 

In the last few sections we've been talking about `reaction networks', like this:

\begin{center}
 \includegraphics[width=40mm]{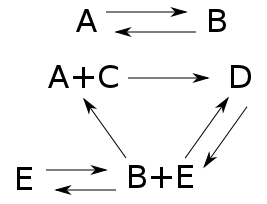}
\end{center}

\noindent
Here $A,B,C,D,$ and $E$ are names of chemicals, and the arrows are chemical reactions.   If we know how fast each reaction goes, we can write down a `rate equation' describing how the amount of each chemical changes with time.   

\index{Deficiency Zero Theorem|(}
In Section~\ref{sec:17} we met the deficiency zero theorem, a powerful tool for finding equilibrium solutions of the rate equation: in other words, solutions where the amounts of the various chemicals don't change at all.   To apply this theorem, two conditions must hold.   Both are quite easy to check:
\begin{itemize}
\item  Your reaction network needs to be `weakly reversible': if you have a reaction that takes one bunch of chemicals to another, there's a series of reactions that takes that other bunch back to the one you started with.  
\item  A number called the `deficiency' that you can compute from your reaction network needs to be zero.
\end{itemize}
The first condition makes a lot of sense, intuitively: you won't get an equilibrium with all the different chemicals present if some chemicals can turn into others but not the other way around.   But the second condition, and indeed the concept of `deficiency', seems mysterious.

Luckily, when you work through the proof of the deficiency zero theorem, the mystery evaporates.   It turns out that there are two equivalent ways to define the deficiency of a reaction network.  One makes it easy to compute, and that's the definition people usually give.  But the other explains why it's important.

In fact the whole proof of the deficiency zero theorem is full of great insights, so we want to show it to you.  This will be the climax of the book: we'll see that Markov processes, Noether's theorem and Perron--Frobenius theory play crucial roles, even though the deficiency zero theorem concerns not the master equation but the rate equation, which is deterministic and nonlinear!\index{nonlinearity!rate equation}\index{Perron–Frobenius theorem!and the deficiency zero theorem}

In this section, we'll just unfold the concept of `deficiency' so we see what it means.  In the next, we'll show you a slick way to write the rate equation, which is crucial to proving the deficiency zero theorem. Then we'll start the actual proof.

\subsection{Reaction networks revisited}
\label{sec:20_reaction_networks}

Let's recall what a reaction network is, and set up our notation.  In chemistry we consider a finite set of `species' like C, O$_2$, H$_2$O and so on... and then consider reactions like
$$
\text{CH}_4 + 3 \text{O}_2 \longrightarrow \text{CO}_2 + 2 \text{H}_2\text{O}
$$
On each side of this reaction we have a finite linear combination of species, with natural numbers as coefficients.  Chemists call such a thing a \emph{complex}.
\index{complex}

So, given any finite collection of species, say $S$, let's write $\mathbb{N}^S$ to mean the set of finite linear combinations of elements of $S$, with natural numbers as coefficients.   The complexes appearing in our reactions will form a subset of this, say $K$.   

We'll also consider a finite collection of reactions---or as we've been calling them, `transitions'.  Let's call this $T$.  Each transition goes from some complex to some other complex: if we want a reaction to be reversible we'll explicitly include another reaction going the other way.  So, given a transition $\tau \in T$ it will always go from some complex called its {\bf source} $s(\tau)$ to some complex called its {\bf target} $t(\tau)$.

All this data, put together, is a reaction network:

\begin{definition}
\index{reaction network}
A {\bf reaction network} $(S,\; s,t \colon T \to K)$ consists of:
\begin{itemize}
\item  a finite set $S$ of {\bf species}, 
\index{species}

\item  a finite set $T$ of {\bf transitions},
\index{transition}

\item  a finite set $K \subset \mathbb{N}^S$ of {\bf complexes},
\index{complex}

\item  {\bf source} and {\bf target} maps $s,t \colon T \to K$.
\index{source} \index{target}
\end{itemize}
\end{definition}

We can draw a reaction network as a graph with complexes as vertices and transitions as edges:

\begin{center}
 \includegraphics[width=40mm]{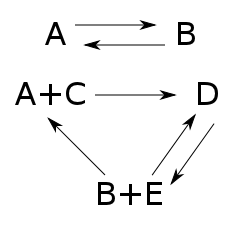}
\end{center}

\noindent
The set of species here is $S = \{A,B,C,D,E\}$, and the set of complexes is $K = \{A,B,D,A+C,B+E\}$.

But to write down the `rate equation' describing our chemical reactions, we need a bit more information: constants $r(\tau)$ saying the rate at which each transition $\tau$ occurs.   So, we define:

\begin{definition}
A {\bf stochastic reaction network} is a reaction network $(S, \; s,t \colon T \to K)$ together with a map $r\colon T \to (0,\infty)$ assigning a {\bf rate constant} to each transition.
\end{definition} 

Let us remind you how the rate equation works.  At any time we have some amount $x_i \in [0,\infty)$ of each species $i$.   These numbers are the components of a vector $x \in \mathbb{R}^S$, which of course depends on time.  The {\bf rate equation} says how this vector changes:
$$   \displaystyle{ \frac{d x_i}{d t} = \sum_{\tau \in T} r(\tau) \left(t_i(\tau) - s_i(\tau)\right)  x^{s(\tau)} } $$
Here we are writing $s_i(\tau)$ for the $i$th component of the vector $s(\tau)$, and similarly for $t_i(\tau)$.   We should remind you what $x^{s(\tau)}$ means, since here we are raising a vector to a vector power, which is a bit unusual.   So, suppose we have any vector $x = (x_1, \dots, x_k)$ and we raise it to the power of $s = (s_1, \dots, s_k)$.  Then by definition we get
$$  \displaystyle{ x^s = x_1^{s_1} \cdots x_k^{s_k} } $$ 

Given this, we hope the rate equation makes intuitive sense!  There's one term for each transition $\tau$.   The factor of $t_i(\tau) - s_i(\tau)$ shows up because our transition destroys $s_i(\tau)$ things of the $i$th species and creates $t_i(\tau)$ of them. The big product 
$$  \displaystyle{ x^{s(\tau)} = x_1^{s(\tau)_1} \cdots x_k^{s(\tau)_k} }$$
shows up because our transition occurs at a rate proportional to the product of the numbers of things it takes as inputs.  The constant of proportionality is the reaction rate $r(\tau)$.  

The deficiency zero theorem says lots of things, but in the next few episodes we'll prove a weak version, like this:

\begin{theorem}[{\bf Deficiency Zero Theorem---Baby Version}]\index{Deficiency Zero Theorem!baby version} 
   Suppose we have a weakly reversible reaction network with deficiency zero.  Then for any choice of rate constants there exists an equilibrium solution $x \in (0,\infty)^S$ of the rate equation.  In other words:
$$   \displaystyle{ \sum_{\tau \in T} r(\tau) \left(t_i(\tau) - s_i(\tau)\right)  x^{s(\tau)} = 0} $$
\end{theorem}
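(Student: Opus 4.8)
The plan is to adapt the Feinberg--Horn--Jackson proof, recast in the Markov-process and Perron--Frobenius language that Sections~\ref{sec:21}--\ref{sec:23} set up. The first move is to rewrite the rate equation compactly. Let $\mathbb{R}^K$ be the vector space with the complexes as a basis, and introduce three maps: the \textbf{complex matrix} $Y : \mathbb{R}^K \to \mathbb{R}^S$, sending the basis vector $e_\kappa$ to the vector $\kappa \in \mathbb{N}^S \subseteq \mathbb{R}^S$; the \textbf{Kirchhoff operator} $A : \mathbb{R}^K \to \mathbb{R}^K$, with off-diagonal entries $A_{\kappa'\kappa} = \sum_{\tau:\, s(\tau)=\kappa,\, t(\tau)=\kappa'} r(\tau)$ for $\kappa \ne \kappa'$ and diagonal entries fixed by the condition that every column sum to zero; and the nonlinear \textbf{mass-action map} $\Psi : (0,\infty)^S \to (0,\infty)^K$ with $\Psi(x)_\kappa = x^\kappa$. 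Summing the rate-equation contributions transition by transition shows that the rate equation is exactly
$$ \frac{dx}{dt} = Y A \Psi(x). $$
By construction $A$ is infinitesimal stochastic, and $Y, \Psi$ send positive things to positive things, so it is enough to find $x \in (0,\infty)^S$ with $\Psi(x) \in \ker A$; for such an $x$ we get $dx/dt = 0$.

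The second step is to understand $\ker A$. Since every transition joins two complexes in the same connected component, $A$ is block diagonal for the partition of $K$ into its $\ell$ linkage classes $K_1, \dots, K_\ell$. Weak reversibility says precisely that the directed graph of each block is strongly connected, so adding a large multiple of the identity makes each block an irreducible nonnegative matrix. Applying the Perron--Frobenius theorem of Section~\ref{sec:16} to $A + cI$ on each block, exactly as in the proof of Theorem~\ref{theorem_equilibrium} (which nowhere really uses self-adjointness, only that $\exp(tA)$ is stochastic), each block has a one-dimensional kernel spanned by a vector strictly positive on its linkage class. Extending these by zero gives $c^{(1)}, \dots, c^{(\ell)} \in \mathbb{R}^K$ forming a basis of $\ker A$, and the strictly positive elements of $\ker A$ are exactly the combinations $\sum_j \lambda_j c^{(j)}$ with all $\lambda_j > 0$. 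So I must find $x$ with $\Psi(x) = \sum_j \lambda_j c^{(j)}$ for some positive $\lambda_j$.

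Taking componentwise logarithms linearizes the problem. On one side, $\log \Psi(x) = Y^{\top} \log x$, so as $x$ ranges over $(0,\infty)^S$ the vector $\log \Psi(x)$ ranges over $\mathrm{im}(Y^{\top}) = (\ker Y)^{\perp}$. On the other side, the logarithms of the strictly positive vectors in $\ker A$ form the affine subspace $u_0 + W$, where $W = \mathrm{span}\{\mathbf{1}_{K_1}, \dots, \mathbf{1}_{K_\ell}\}$ is the space of vectors constant on each linkage class, and $u_0 \in \mathbb{R}^K$ has $(u_0)_\kappa = \log c^{(j)}_\kappa$ for $\kappa \in K_j$. Hence an equilibrium exists iff $u_0 \in (\ker Y)^{\perp} + W$. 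Now $W = \Delta_K^{\perp}$, where $\Delta_K = \mathrm{span}\{e_{\kappa'} - e_\kappa : \kappa, \kappa' \in K_j \text{ for some } j\}$, so $(\ker Y)^{\perp} + W = (\ker Y \cap \Delta_K)^{\perp}$. Here the \emph{structural} meaning of the deficiency enters: each $K_j$ is connected, so $\dim \Delta_K = |K| - \ell$; and $Y$ maps $\Delta_K$ onto the stoichiometric subspace $\mathrm{Stoch}$; so by rank--nullity the deficiency is $\delta = |K| - \ell - \dim \mathrm{Stoch} = \dim(\ker Y \cap \Delta_K)$. When $\delta = 0$ we get $\ker Y \cap \Delta_K = 0$, hence $(\ker Y \cap \Delta_K)^{\perp} = \mathbb{R}^K$, so $u_0$ lies in it automatically; choosing any $v$ with $Y^{\top}v = u_0 + w$ for a suitable $w \in W$ and setting $x = \exp(v)$ produces the desired positive equilibrium.

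The part I expect to carry the real weight is the opening reformulation together with the two identities $Y(\Delta_K) = \mathrm{Stoch}$ and $\dim \Delta_K = |K| - \ell$: these are what convert the easy-to-compute definition of deficiency into the meaningful statement $\delta = \dim(\ker Y \cap \Delta_K)$, after which the conclusion is just a short chain of orthogonal-complement manipulations. A secondary point needing care is that $A$ is generally not self-adjoint, so the Perron--Frobenius input must come from the general theorem of Section~\ref{sec:16} applied to $A + cI$, not from the Dirichlet-operator corollary; one checks directly that $r - c = 0$ using only that $\exp(tA)$ restricted to a block is stochastic. Finally, I note that the \emph{full} deficiency zero theorem asserts a unique positive equilibrium in each stoichiometric compatibility class, which requires an extra strict-convexity argument; the baby version above asks only for existence, so that refinement is not attempted.
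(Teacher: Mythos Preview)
Your proof is correct and matches the paper's approach closely: both rewrite the rate equation as $Y H x^Y$, use Perron--Frobenius on the weakly reversible Markov process to locate the positive part of $\ker H$, take componentwise logarithms to linearize the matching problem, and then invoke deficiency zero through the orthogonal-complement identity $(\ker Y)^\perp + W = \mathbb{R}^K$ (the paper writes this as $\mathrm{im}\, Y^\dagger + \ker \partial^\dagger = \mathbb{R}^K$, which is the same statement once one notes $\Delta_K = \mathrm{im}\,\partial$ and $W = \ker \partial^\dagger$). The only organizational difference is that you parametrize the whole positive cone of $\ker A$ upfront as $\{\sum_j \lambda_j c^{(j)} : \lambda_j > 0\}$, whereas the paper fixes a single positive $\psi \in \ker H$ and then invokes a Noether-type lemma (multiplication by a function constant on linkage classes commutes with $H$) to justify the rescaling; your packaging is slightly more direct, the paper's ties back to the conserved-quantity theme developed earlier in the course.
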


\noindent
An important feature of this result is that all the components of the vector $x$ are \emph{positive}.  In other words, there's actually some chemical of each species present!  

But what do the hypotheses in this theorem mean?

A reaction network is {\bf weakly reversible} if for any transition $\tau \in T$ going from a complex $\kappa$ to a complex $\kappa'$, there is a sequence of transitions going from $\kappa'$ back to $\kappa$.  But what about `deficiency zero'?  As we mentioned, this requires more work to understand.  

So, let's dive in!

\subsection{The concept of deficiency}

In modern math, we like to take all the stuff we're thinking about and compress it into a diagram with a few objects and maps going between these objects.   So, to understand the deficiency zero theorem, we wanted to isolate the crucial maps.  For starters, there's an obvious map
$$ Y \colon K \to \mathbb{N}^S $$
sending each complex to the linear combination of species that it is.  We can think of $K$ as an abstract set equipped with this map saying how each complex is made of species, if we like.  Then all the information in a stochastic reaction network sits in this diagram:
\[
\xymatrix{
(0,\infty) & T \ar[l]_<<<<r 
\ar@<0.5ex>[r]^s 
\ar@<-0.5ex>[r]_t & K \ar[r]^Y & \mathbb{N}^S  \\
}
\]
This is fundamental to everything we'll do from now on, so take a minute to lock it into your brain.

We'll do lots of different things with this diagram.  For example, we often want to use ideas from linear algebra, and then we want our maps to be linear.   For example, $Y$ extends uniquely to a linear map
$$ Y \colon \mathbb{R}^K \to \mathbb{R}^S $$
sending real linear combinations of complexes to real linear combinations of species.   Reusing the name $Y$ here won't cause confusion.  We can also extend $r$, $s$, and $t$ to linear maps in a unique way, getting a little diagram like this:
\[
\xymatrix{
\mathbb{R}  & \mathbb{R}^T \ar[l]_r 
\ar@<0.5ex>[r]^s 
\ar@<-0.5ex>[r]_t & \mathbb{R}^K \ar[r]^Y & \mathbb{R}^S  \\
}
\]
Linear algebra lets us talk about \emph{differences} of complexes.  Each transition $\tau$ gives a vector
$$  \partial \tau = t(\tau) - s(\tau)\in \mathbb{R}^K $$
saying the change in complexes that it causes.  And we can extend $\partial$ uniquely to a linear map
$$  \partial \colon \mathbb{R}^T \to \mathbb{R}^K $$
defined on linear combinations of transitions.  Mathematicians call $\partial$ a {\bf boundary operator}.

So, we have a little sequence of linear maps:
\[
\xymatrix{
 \mathbb{R}^T 
\ar[r]^\partial 
& \mathbb{R}^K 
\ar[r]^Y
& \mathbb{R}^S  
\\
}
\]
This turns a transition into a change in complexes, and then a change in species.

If you know fancy math you'll be wondering if this sequence is a `chain complex', which is a fancy way of saying that $Y \partial = 0$.  The answer is no.   This equation means that every linear combination of reactions leaves the amount of all species unchanged.   Or equivalently: every reaction leaves the amount of all species unchanged.   This only happens in very silly examples.

Nonetheless, it's \emph{possible} for a linear combination of reactions to leave the amount of all species unchanged.

For example, this will happen if we have a linear combination of reactions that leaves the amount of all \emph{complexes} unchanged.   But this sufficient condition is not necessary.  And this leads us to the concept of `deficiency zero':

\begin{definition}
  A reaction network has {\bf deficiency zero} if any linear combination of reactions that leaves the amount of every species unchanged also leaves the amount of every complex unchanged.
\end{definition}

In short, a reaction network has deficiency zero iff
$$   Y (\partial \rho) = 0 \; \Rightarrow \; \partial \rho = 0  $$
for every $\rho \in \mathbb{R}^T$.  In other words---using some basic concepts from linear algebra---a reaction network has deficiency zero iff $Y$ is one-to-one when restricted to the image of $\partial$.  Remember, the \href{http://en.wikipedia.org/wiki/Image_%28mathematics%29}{image} of $\partial$ is 
$$ \mathrm{im} \partial = \{ \partial \rho \; : \; \rho \in \mathbb{R}^T \}  $$
Roughly speaking, this consists of all changes in complexes that can occur due to reactions. 

In still other words, a reaction network has deficiency zero if 0 is the only vector in both the image of $\partial$ and the kernel of $Y$:
$$  \mathrm{im} \partial \cap \mathrm{ker} Y = \{ 0 \}  $$
Remember, the \href{http://en.wikipedia.org/wiki/Kernel_%28mathematics%29}{kernel} of $Y$ is
$$  \mathrm{ker} Y = \{ \phi \in \mathbb{R}^K \; : \; Y \phi = 0 \} $$
Roughly speaking, this consists of all changes in complexes that don't cause changes in species.  So, `deficiency zero' roughly says that if a reaction causes a change in complexes, it causes a change in species.  

(All this `roughly speaking' stuff is because in reality we should be talking about \emph{linear combinations} of transitions, complexes and species.  But it's a bit distracting to keep saying that when we're trying to get across the basic ideas!)

Now we're ready to understand deficiencies other than zero, at least a little.  They're defined like this:

\begin{definition}
 The {\bf deficiency} of a reaction network is the dimension of $\mathrm{im} \partial \cap \mathrm{ker} Y$.  
\end{definition}

\subsection{How to compute the deficiency}

You can compute the deficiency of a reaction network just by looking at it.  However, it takes a little training.  First, remember that a reaction network can be drawn as a graph with complexes as vertices and transitions as edges, like this:

\begin{center}
 \includegraphics[width=40mm]{chemical_reaction_network_part_17_II.png}
\end{center}

There are three important numbers we can extract from this graph:
\begin{itemize}
\item  We can count the number of vertices in this graph; let's call that $|K|$, since it's just the number of complexes.  

\item  We can count the number of pieces or `components' of this graph; let's call that $\# \mathrm{components} $ for now.  

\item  We can also count the dimension of the image of $Y \partial$.  This space, $\mathrm{im} Y \partial$, is called the {\bf stoichiometric subspace}: vectors in here are changes in species that can be accomplished by transitions in our reaction network, or linear combinations of transitions.  
\end{itemize}

These three numbers, all rather easy to compute, let us calculate the deficiency:

\begin{theorem}\index{deficiency}
The deficiency of a reaction network equals 
$$|K| - \# \mathrm{components} - \mathrm{dim} ( \mathrm{im} Y \partial) $$
\end{theorem}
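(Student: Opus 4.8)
The plan is to compute the dimension of $\mathrm{im}\,\partial \cap \mathrm{ker}\,Y$ by relating it to the dimensions of the spaces appearing in the sequence $\mathbb{R}^T \xrightarrow{\partial} \mathbb{R}^K \xrightarrow{Y} \mathbb{R}^S$ via the rank–nullity theorem. The key observation is that $\mathrm{im}\,\partial \cap \mathrm{ker}\,Y$ is precisely the kernel of the restricted map $Y|_{\mathrm{im}\,\partial} : \mathrm{im}\,\partial \to \mathbb{R}^S$. Applying rank–nullity to this restricted map gives
$$ \mathrm{dim}\left(\mathrm{im}\,\partial \cap \mathrm{ker}\,Y\right) = \mathrm{dim}\left(\mathrm{im}\,\partial\right) - \mathrm{dim}\left(\mathrm{im}\,Y|_{\mathrm{im}\,\partial}\right) = \mathrm{dim}\left(\mathrm{im}\,\partial\right) - \mathrm{dim}\left(\mathrm{im}\,Y\partial\right). $$
So the deficiency equals $\mathrm{dim}(\mathrm{im}\,\partial) - \mathrm{dim}(\mathrm{im}\,Y\partial)$, and the whole problem reduces to showing that $\mathrm{dim}(\mathrm{im}\,\partial) = |K| - \#\mathrm{components}$.

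To establish this, I would think of $\partial : \mathbb{R}^T \to \mathbb{R}^K$ as the boundary operator of the graph $G$ whose vertices are complexes and whose edges are transitions (choosing an orientation on each edge, which is already given since each $\tau$ has a source and target). This is exactly the transpose of the incidence matrix of $G$. A standard fact from algebraic graph theory is that the rank of the incidence matrix of a graph with $V$ vertices and $c$ connected components is $V - c$; equivalently, the cokernel of $\partial$ (the `zeroth homology' $H_0(G;\mathbb{R})$) has dimension equal to the number of connected components, since a function on vertices is in the image of $\partial$ (i.e. a `coboundary'... more precisely, $\mathrm{im}\,\partial$ consists of all $\mathbb{R}^K$-vectors whose entries sum to zero on each component) iff it integrates to zero over each component. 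I would prove this directly: pick a spanning forest of $G$; the edges of the forest give linearly independent vectors $\partial\tau$ (a standard inductive argument, peeling off leaves), and there are exactly $|K| - \#\mathrm{components}$ of them; conversely every $\partial\tau$ lies in the span of the forest edges because the forest edges plus $\tau$ contain a cycle, which gives a linear dependence. Hence $\mathrm{dim}(\mathrm{im}\,\partial) = |K| - \#\mathrm{components}$.

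Combining the two computations:
$$ \mathrm{deficiency} = \mathrm{dim}\left(\mathrm{im}\,\partial \cap \mathrm{ker}\,Y\right) = \mathrm{dim}\left(\mathrm{im}\,\partial\right) - \mathrm{dim}\left(\mathrm{im}\,Y\partial\right) = |K| - \#\mathrm{components} - \mathrm{dim}\left(\mathrm{im}\,Y\partial\right), $$
which is exactly the claimed formula. The main obstacle—though really a routine one—is the graph-theoretic lemma that $\mathrm{rank}\,\partial = |K| - \#\mathrm{components}$; everything else is a one-line application of rank–nullity. I would be careful to note that $\partial$ here is the boundary map on the underlying undirected graph (orientations chosen once and for all), so that `components' means connected components in the undirected sense, matching the definition of linkage class used earlier, and that reversing or adding reactions between already-linked complexes changes neither $\mathrm{im}\,\partial$ nor the component count. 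It is also worth remarking explicitly that this formula shows the deficiency is always $\geq 0$ and depends only on the reaction network, not on the rate constants.
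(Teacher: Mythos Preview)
Your proof is correct and follows essentially the same route as the paper: both identify $\mathrm{im}\,\partial \cap \ker Y$ with $\ker(Y|_{\mathrm{im}\,\partial})$, apply rank--nullity to get $\mathrm{deficiency} = \dim(\mathrm{im}\,\partial) - \dim(\mathrm{im}\,Y\partial)$, and then reduce everything to the graph-theoretic lemma $\dim(\mathrm{im}\,\partial) = |K| - \#\mathrm{components}$. The only difference is cosmetic: the paper proves that lemma by an edge-by-edge induction (start with no edges and add them one at a time, tracking how $\dim(\mathrm{im}\,\partial)$ and $\#\mathrm{components}$ change), whereas you propose the equivalent spanning-forest argument; both are standard and interchangeable.
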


\begin{proof}   By definition we have
$$ \mathrm{deficiency} = \dim ( \mathrm{im} \partial \cap \mathrm{ker} Y) $$
but another way to say this is
$$ \mathrm{deficiency} = \mathrm{dim} (\mathrm{ker} Y|_{\mathrm{im} \partial}) $$
where we are restricting $Y$ to the subspace $\mathrm{im} \partial$, and taking the dimension of the kernel of that.

The \href{http://en.wikipedia.org/wiki/Rank\%E2\%80\%93nullity_theorem}{\bf rank-nullity theorem} says that whenever you have a linear map $T\colon V \to W$ between finite-dimensional vector spaces, 
$$    \mathrm{dim} (\mathrm{ker} T) \; = \; \mathrm{dim}(\mathrm{dom} ) \; - \; \mathrm{dim} (\mathrm{im} T) $$  
\index{rank-nullity theorem} where $\mathrm{dom} T$ is the \href{http://en.wikipedia.org/wiki/Domain_of_a_function}{domain} of $T$, namely the vector space $V$.\index{vector space!rank-nullity theorem}  It follows that
$$ \mathrm{dim}(\mathrm{ker} Y|_{\mathrm{im} \partial}) = \mathrm{dim}(\mathrm{dom}  Y|_{\mathrm{im} \partial}) - \mathrm{dim}(\mathrm{im}  Y|_{\mathrm{im} \partial}) $$
The domain of $Y|_{\mathrm{im} \partial}$ is just $\mathrm{im} \partial$, while its image equals $\mathrm{im}(Y \partial)$, so
$$ \mathrm{deficiency} = \mathrm{dim}(\mathrm{im} \partial) - \mathrm{dim}(\mathrm{im} Y \partial) $$ 
The theorem then follows from the following lemma.  
\end{proof}

\begin{lemma}
  $ \mathrm{dim} (\mathrm{im} \partial) = |K| - \# \mathrm{components} $.
  \end{lemma} 

\begin{proof}
In fact this holds whenever we have a finite set of complexes and a finite set of transitions going between them.  We get a diagram
\[
\xymatrix{
 T \ar@<0.5ex>[r]^s \ar@<-0.5ex>[r]_t & K 
}
\]
We can extend the source and target functions to linear maps as usual:
\[
\xymatrix{
 \mathbb{R}^T 
\ar@<0.5ex>[r]^s 
\ar@<-0.5ex>[r]_t & \mathbb{R}^K  \\
}
\]
and then we can define $\partial = t - s$.  We claim that 
$$ \mathrm{dim} (\mathrm{im} \partial) = |K| - \# \mathrm{components} $$
where $\# \mathrm{components}$ is the number of connected components of the graph with $K$ as vertices and $T$ as edges.  

This is easiest to see using an inductive argument where we start by throwing out all the edges of our graph and then put them back in one at a time.  When our graph has no edges, $\partial = 0$ and the number of components is $|K|$, so our claim holds: both sides of the equation above are zero.  Then, each time we put in an edge, there are two choices: either it goes between two different components of the graph we have built so far, or it doesn't.  If it goes between two different components, it increases $\mathrm{dim} (\mathrm{im} \partial)$ by 1 and decreases the number of components by 1, so our equation continues to hold.   If it doesn't, neither of these quantities change, so our equation again continues to hold.    
\end{proof}

\subsection{Examples}

\begin{center}
 \includegraphics[width=40mm]{chemical_reaction_network_part_17_I.png}
\end{center}

This reaction network is not weakly reversible, since we can get from $B + E$ and $A + C$ to $D$ but not back.   It becomes weakly reversible if we throw in another transition:

\begin{center}
 \includegraphics[width=40mm]{chemical_reaction_network_part_17_II.png}
\end{center}

Taking a reaction network and throwing in the reverse of an existing transition never changes the number of complexes, the number of components, or the dimension of the stoichiometric subspace.  So, the deficiency of the reaction network remains unchanged.  We computed it back in Section~\ref{sec:17_theorem}, but let's do it again.  For either reaction network above:

\begin{itemize}
\item the number of complexes is 5:
$$  |K| = |\{A,B,D, A+C, B+E\}| = 5 $$
\item the number of components is 2:
$$ \# \mathrm{components} = 2 $$
\item the dimension of the stoichometric subspace is 3.  For this we go through each transition, see what change in species it causes, and take the vector space\index{vector!span of}\index{vector space!spanned by} spanned by these changes.  Then we find a basis for this space and count it:
$$ \begin{array}{ccl}  \mathrm{im} Y\partial &=& 
\langle B-A, A-B, D - A - C, D - (B+E) , (B+E)-(A+C) \rangle \\
&=& \langle B -A, D - A - C, D - A - E \rangle \end{array} $$ 
so 
$$  \mathrm{dim} \left(\mathrm{im} Y\partial \right) = 3 $$
\end{itemize}
As a result, the deficiency is zero:
$$  \begin{array}{ccl} \mathrm{deficiency} 
&=& |K| - \# \mathrm{components} - \mathrm{dim} \left( \mathrm{im} Y\partial \right) \\
&=& 5 - 2 - 3 \\
&=&  0 \end{array} $$

Now let's throw in another complex and some more transitions:

\begin{center}
 \includegraphics[width=40mm]{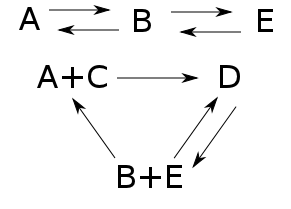}
\end{center}
\noindent
Now:
\begin{itemize}
\item the number of complexes increases by 1:
$$  |K| = |\{A,B,D,E, A+C, B+E\}| = 6 $$
\item the number of components is unchanged:
$$ \# \mathrm{components} = 2 $$
\item the dimension of the stoichometric subspace increases by 1.  We never need to include reverses of transitions when computing this:
$$ \begin{array}{ccl}  \mathrm{im} Y\partial &=& 
\langle B-A, E-B,  D - (A+C), D - (B+E) , (B+E)-(A+C) \rangle \\
&=& \langle B -A, E-B, D - A - C, D - B - E \rangle \end{array} $$ 
so 
$$  \mathrm{dim} \left(\mathrm{im} Y\partial \right) = 4 $$
\end{itemize}
As a result, the deficiency is still zero:
$$  \begin{array}{ccl} \mathrm{deficiency} 
&=& |K| - \# \mathrm{components} - \mathrm{dim} \left( \mathrm{im} Y\partial \right) \\
&=& 6 - 2 - 4 \\
&=&  0 \end{array} $$

Do \emph{all} reaction networks have deficiency zero?   That would be nice. Let's try one more example:

\begin{center}
 \includegraphics[width=40mm]{chemical_reaction_network_part_20_V.png}
\end{center}

\noindent
Now: 
\begin{itemize}
\item
The number of complexes is the same as in our last example:
$$  |K| = |\{A,B,D,E, A+C, B+E\}| = 6 $$
\item the number of components is also the same:
$$ \# \mathrm{components} = 2 $$
\item the dimension of the stoichometric subspace is also the same:
$$ \begin{array}{ccl}  \mathrm{im} Y\partial &=& 
\langle B-A,  D - (A+C), D - (B+E) , (B+E)-(A+C), (B+E) - B \rangle \\
&=& \langle B -A, D - A - C, D - B , E\rangle \end{array} $$ 
so 
$$  \mathrm{dim} \left(\mathrm{im} Y\partial \right) = 4 $$
\end{itemize}
So the deficiency is still zero:
$$  \begin{array}{ccl} \mathrm{deficiency} 
&=& |K| - \# \mathrm{components} - \mathrm{dim} \left( \mathrm{im} Y\partial \right) \\
&=& 6 - 2 - 4 \\
&=&  0 \end{array} $$
It's sure easy to find examples with deficiency zero!  

\begin{problem}\label{prob:35} 
Can you find a reaction network where the deficiency is \emph{not} zero?
\end{problem}

\begin{problem}\label{prob:36} 
If you can't find an example, prove the deficiency is always zero. If you can, find 1) the smallest example and 2) the smallest example that actually arises in chemistry.  
\end{problem}

Note that not all reaction networks can actually arise in chemistry.  For example, the transition $A \to A + A$ would violate conservation of mass.  Nonetheless a reaction network like this might be useful in a very simple model of amoeba reproduction, one that doesn't take limited resources into account.  As you might recall, we talked about amoebas in Section \ref{sec:6}.

\subsection{Different kinds of graphs}

We'll conclude with some technical remarks that only a mathematician could love; feel free to skip them if you're not in the mood.  As you've already seen, it's important that a reaction network can be drawn as a graph:

\begin{center}
 \includegraphics[width=40mm]{chemical_reaction_network_part_17_II.png}
\end{center}
\noindent
But there are many \emph{kinds} of graph.  What kind of graph is this, exactly?  As mentioned in Section~\ref{sec:17_reaction_networks}, it's a {\bf directed multigraph}, meaning that the edges have arrows on them, more than one edge can go from one vertex to another, and we can have any number of edges going from a vertex to itself.  Not all those features are present in this example, but they're certainly allowed by our definition of reaction network!  

After all, we've seen that a stochastic reaction network amounts to a little diagram like this:
\[
\xymatrix{
(0,\infty) & T \ar[l]_<<<<r 
\ar@<0.5ex>[r]^s 
\ar@<-0.5ex>[r]_t & K \ar[r]^Y & \mathbb{N}^S  \\
}
\]
If we throw out the rate constants, we're left with a reaction network.  So, a {\bf reaction network} is just a diagram like this:
\[
\xymatrix{
T 
\ar@<0.5ex>[r]^s 
\ar@<-0.5ex>[r]_t & K \ar[r]^Y & \mathbb{N}^S  \\
}
\]
If we throw out the information about how complexes are made of species, we're left with an even smaller diagram:
\[
\xymatrix{
 T \ar@<0.5ex>[r]^s \ar@<-0.5ex>[r]_t & K 
}
\]
And this precisely matches the slick definition of a {\bf directed multigraph}\index{graph theory!directed multigraph}: it's a set $E$ of {\bf edges}, a set $V$ of {\bf vertices}, and functions $s,t \colon E \to V$ saying where each edge starts and where it ends: its {\bf source} and {\bf target}.    \index{graph theory!edge}
\index{graph theory!vertex} \index{edge} \index{vertex}

Why don't we just call this a `graph'?  Why the fancy term `directed multigraph'?\index{graph theory!directed multigraph}  The problem is that there are many kinds of graphs.  While their diversity is a bit annoying at first, we must learn to love it, at least if we're mathematicians and want everything to be completely precise.   

Indeed, there are at least $2^3 = 8$ kinds of graphs, depending on our answers to three questions:

\begin{itemize}
\item  Do the edges have arrows on them?
\item  Can more than one edge go between a specific pair of vertices?
\end{itemize}
and
\begin{itemize}
\item  Can an edge go from a vertex to itself?
\end{itemize}
We get directed multigraphs if we answer YES, YES and YES to these questions.  Since they allow all three features, directed multigraphs are very important.  For example, a `category' is a directed multigraph equipped with some extra structure.\index{graph theory!directed multigraph!category as}\index{category theory!category as directed multigraph}  Also, what mathematicians call a `quiver' is just another name for a directed multigraph.\index{graph theory!directed multigraph}\index{graph theory!quiver}\index{quiver}

We've met two other kinds of graph so far:

\begin{itemize} 

\item  In Section~\ref{sec:14_laplacians} and Section Section~\ref{sec:15} we studied graph Laplacians and circuits made of resistors---or in other words, Dirichlet operators---using `simple graphs'.  We get simple graphs when we answer NO, NO and NO to the three questions above.   The slick definition of a {\bf simple graph} is that it's a set $V$ of vertices together with a subset $E$ of the collection of 2-element subsets of $V$.

\item   In Section~\ref{sec:16_matrices_to_graphs} we studied Markov processes on finite sets---or, in other words, infinitesimal stochastic operators---using `directed graphs'.  We get directed graphs when we answer YES, NO and YES to the three questions.  The slick definition of a {\bf directed graph} is that it's a set $V$ of vertices together with a subset $E$ of the ordered pairs of $V$:\index{graph theory!directed graph} 
$$ E \subseteq V \times V$$
\end{itemize} 

There is a lot to say about this business, but for now we'll just note that you can use directed multigraphs with edges labelled by positive numbers to describe Markov processes, just as we used directed graphs.  You don't get anything more general, though! After all, if we have multiple edges from one vertex to another, we can replace them by a single edge as long as we add up their rate constants.  And an edge from a vertex to itself has no effect at all.

In short, both for Markov processes and reaction networks, we can take `graph' to mean either `directed graph' or `directed multigraph', as long as we make some minor adjustments. In what follows, we'll use directed multigraphs for \emph{both} Markov processes and reaction networks.  One tiny advantage is that you can take a directed multigraph and add a new edge from one vertex to another without first checking to see if there's already one there.  Combining our work with category theory will also go easier if we use directed multigraphs.\index{graph theory!directed multigraph}

\subsection{Answers}

\vskip 1em \noindent {\bf Problem 35.} 
Can you find a reaction network where the deficiency is \emph{not} zero?

\begin{answer}
The reaction network
$$ A \to A + A \to A + A + A $$
has deficiency $1$.  If we go on this way adding multiples of $A$ we can keep on
increasing the deficiency, as only the number of complexes increases.
\end{answer}

\vskip 1em \noindent {\bf Problem 36.} 
Find 1) the smallest example of a reaction network where the deficiency is nonzero and 2) the smallest example that actually arises in chemistry.  

\begin{answer}
The answers here depend on the notion of `smallest'.  However, the reaction network in the answer above has the smallest possible number of components, namely one---and among those, it has the smallest possible number of complexes, namely 3, since a reaction network with just one component and one or two complexes has deficiency zero.

The reaction network above cannot occur in chemistry, except as a simplified model
that omits some of the species involved in the reactions, because it violates conservation
of mass.  Here is a rather small reaction network that can occur in chemistry:
$$  A \to B  \qquad A + C \to B + C $$
This arises when a molecule $A$ can turn into a molecule $B$ either with or without
the assistance of a catalyst $C$.  The number of complexes is $4$, the number of 
components is $2$, and the dimension of the stochiometric subspace is $1$, so the
deficiency is $4 - 2 - 1 = 1$.  
\end{answer}


\newpage
\section{Rewriting the rate equation}\label{sec:21} 

Okay, now let's dig deeper into the proof of the deficiency zero theorem.  We're only going to prove a baby version, at first.  Later we can enhance it:

\begin{theorem}[{\bf Deficiency Zero Theorem---Baby Version}]\index{Deficiency Zero Theorem!baby version}
   Suppose we have a weakly reversible reaction network with deficiency zero.  Then for any choice of rate constants there exists an equilibrium solution of the rate equation where all species are present in nonzero amounts.
\end{theorem}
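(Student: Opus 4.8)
The plan is to reduce the nonlinear rate equation to a linear problem about a Markov process on the set of complexes, exploiting the deficiency zero hypothesis to transfer an equilibrium back from complexes to species. First I would rewrite the rate equation using the maps in the diagram
\[
\xymatrix{
(0,\infty) & T \ar[l]_<<<<r
\ar@<0.5ex>[r]^s
\ar@<-0.5ex>[r]_t & K \ar[r]^Y & \mathbb{N}^S.  \\
}
\]
The key observation is that the rate equation
\[
\frac{dx}{dt} = \sum_{\tau \in T} r(\tau)\,(t(\tau) - s(\tau))\, x^{s(\tau)}
\]
factors as $\frac{dx}{dt} = Y \partial R \, x^{\bullet}$, where $x^{\bullet} \in \mathbb{R}^K$ is the vector with components $x^{\kappa}$ for each complex $\kappa \in K$ (this is the ``matrix exponentiation'' operation referred to in the introduction — the nonlinearity is entirely absorbed into the map $x \mapsto x^{\bullet}$), and where $\partial R : \mathbb{R}^K \to \mathbb{R}^K$ is a linear operator built from the transitions and their rate constants. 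I would check that $\partial R$, suitably signed, is an \emph{infinitesimal stochastic operator} on $\mathbb{R}^K$: its off-diagonal entries $r(\tau)$ for each transition $\tau: \kappa \to \kappa'$ are nonnegative, and its columns sum to zero because $\partial = t - s$ sends each transition to $\kappa' - \kappa$, whose entries sum to zero. So $\partial R$ is the Hamiltonian of a Markov process whose states are the complexes.

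Next I would invoke weak reversibility. Weak reversibility of the reaction network means exactly that the directed graph on $K$ underlying $\partial R$ is strongly connected on each of its connected components. Restricting to one component, $\partial R$ becomes an irreducible infinitesimal stochastic operator, so by the Perron--Frobenius machinery of Section~\ref{sec:16} — specifically Theorem~\ref{theorem_equilibrium} — it has a positive equilibrium vector in its kernel, unique up to scale. Assembling one such positive vector on each connected component, I get a vector $\beta \in (0,\infty)^K$ with $\partial R\, \beta = 0$. This $\beta$ is a ``complex balanced'' distribution on complexes: it is the stochastic analogue of a coherent state, living on the combinatorial graph of complexes rather than on species.

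The final step — and this is where deficiency zero does its essential work — is to realize $\beta$ as $x^{\bullet}$ for some genuine species vector $x \in (0,\infty)^S$. Taking logarithms, I want $\log \beta_\kappa = \langle Y^\kappa, \log x\rangle$ for all $\kappa$, i.e. $\log \beta$ must lie in the image of the transpose $Y^{\mathsf T} : \mathbb{R}^S \to \mathbb{R}^K$. Equivalently, $\log \beta$ must be orthogonal to $\ker Y$. Here deficiency zero enters: since $\partial R\, \beta = 0$ but $\beta$ need not itself be killed by $\partial$, one argues that $\log\beta$ can be chosen (adjusting the free scaling constant on each component) to be orthogonal to $\mathrm{im}\,\partial \cap \ker Y$, and deficiency zero says this intersection is all of $\ker Y$ that matters — more precisely, one shows $\mathrm{im}\,\partial \cap \ker Y = \{0\}$ forces the complex-balanced vector's logarithm into $(\ker Y)^\perp = \mathrm{im}\, Y^{\mathsf T}$. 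Then $x = \exp(Y^{\mathsf T}\text{-preimage of }\log\beta)$ is the desired positive equilibrium, since $Y\partial R\, x^{\bullet} = Y\partial R\,\beta = 0$.

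I expect the main obstacle to be precisely this last transfer step: verifying that the logarithm of the Perron--Frobenius vector $\beta$ lands in the right subspace, and handling the interplay between the arbitrary scaling constants on the several connected components and the deficiency-zero condition. The Perron--Frobenius input gives $\beta$ only up to one positive scalar per component, and one must show these scalars can be tuned so that the combined vector's logarithm is orthogonal to $\ker Y$ — this is the delicate linear-algebra heart of the argument, and it is exactly what fails when the deficiency is positive. Everything else (the factorization of the rate equation, checking the infinitesimal stochastic property, applying Theorem~\ref{theorem_equilibrium}) is routine once the diagram is set up.
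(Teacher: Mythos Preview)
Your overall strategy matches the paper's proof closely: factor the rate equation through an infinitesimal stochastic operator $H$ on $\mathbb{R}^K$, use weak reversibility together with Perron--Frobenius to obtain a positive $\psi \in \ker H$ with one free positive scalar per connected component, then use deficiency zero to realize a rescaling of $\psi$ as $x^Y$ for some $x \in (0,\infty)^S$.

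The one place where your reasoning does not go through is exactly the step you flag as the main obstacle, and the mechanism you state there is not the right one. Saying that $\log\psi$ can be adjusted to be orthogonal to $\mathrm{im}\,\partial \cap \ker Y$ is vacuous once deficiency zero makes that intersection $\{0\}$; it cannot by itself force $\log\psi$ into $(\ker Y)^\perp = \mathrm{im}\,Y^\dagger$. The correct use of deficiency zero is \emph{dual}: taking orthogonal complements of $\mathrm{im}\,\partial \cap \ker Y = \{0\}$ yields $\ker\partial^\dagger + \mathrm{im}\,Y^\dagger = \mathbb{R}^K$. Now observe that $\ker\partial^\dagger$ is exactly the space of functions on $K$ that are constant on each connected component --- in other words, your ``free scaling constants'' in logarithmic coordinates. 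The decomposition therefore says precisely that one can write $\ln\psi = \alpha + Y^\dagger(\ln x)$ with $\alpha \in \ker\partial^\dagger$ and $x \in (0,\infty)^S$, whence $\exp(-\alpha)\,\psi = x^Y$. Because multiplication by $\exp(-\alpha)$ commutes with $H$ (this is the Noether-type statement that functions constant on components are conserved quantities for the Markov process, or equivalently that $H$ is block-diagonal over components), the adjusted vector $x^Y$ still lies in $\ker H$, and hence $Y H x^Y = 0$. With this correction, your argument is the paper's.
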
 

The first step is to write down the rate equation in a new, more conceptual way.  It's incredibly cool.   We've mentioned Schr\"{o}dinger's equation\index{quantum mechanics!Schr\"{o}dinger's equation}, which describes the motion of a quantum particle:
$$ \displaystyle { \frac{d \psi}{d t} = -i H \psi } $$
We've spent a lot of time on Markov processes, which describe the motion of a `stochastic' particle:
$$ \displaystyle { \frac{d \psi}{d t} = H \psi } $$
A `stochastic' particle is one that's carrying out a random walk\index{random walk!definition of}, and now $\psi$ describes its \emph{probability} to be somewhere, instead of its amplitude.  But now comes a surprise: soon we'll see that the rate equation for a reaction network looks somewhat similar:
$$ \displaystyle { \frac{d x}{d t} =  Y H x^Y } $$
where $Y$ is some matrix, and $x^Y$ is defined using a new thing called `matrix exponentiation', which makes the equation nonlinear!\index{nonlinearity!matrix exponentiation}\index{matrix!exponentiation} 

If you're reading this you probably know how to multiply a vector by a matrix.  But if you're like us, you've never seen anyone take a vector and raise it to the power of some matrix!   We'll explain it, don't worry... right now we're just trying to get you intrigued.  It's not complicated, but it's exciting how this unusual operation shows up naturally in chemistry.  

Since we're looking for an \emph{equilibrium} solution of the rate equation, we actually want to solve
$$ \displaystyle { \frac{d x}{d t} =  0 } $$
or in other words
$$ Y H x^Y = 0 $$
In fact we will do better: we will find a solution of
$$ H x^Y = 0 $$
And we'll do this in two stages:

\begin{itemize}
\item  First we'll find all solutions of
$$ H \psi = 0 $$
This equation is \emph{linear}, so it's easy to understand.

\item  Then, among these solutions $\psi$, we'll find one that also obeys
$$ \psi = x^Y $$
This is a \emph{nonlinear} problem involving matrix exponentiation, but still, we can do it, using a clever trick called `logarithms'.\index{nonlinearity!equation}
\end{itemize} 
Putting the pieces together, we get our solution of 
$$ H x^Y = 0 $$
and thus our equilibrium solution of the rate equation:
$$ \displaystyle { \frac{d x}{d t} = Y H x^Y = 0 } $$

That's a rough outline of the plan.  But now let's get started, because the details are actually fascinating.  In this section we'll just show you how to rewrite the rate equation in this new way.

\subsection{The rate equation and matrix exponentiation}\index{matrix!exponentiation}\index{nonlinearity!matrix exponentiation|(} 
\label{sec:21_rate_equation}

Remember how the rate equation goes.  We start with a {\bf stochastic reaction network}, meaning a little diagram like this:
\[
\xymatrix{
(0,\infty) & T \ar[l]_<<<<r 
\ar@<0.5ex>[r]^s 
\ar@<-0.5ex>[r]_t & K \ar[r]^Y & \mathbb{N}^S  \\
}
\]
This contains quite a bit of information:
\begin{itemize} 
\item a finite set $T$ of {\bf transitions},   
\item a finite set $K$ of {\bf complexes},
\item a finite set $S$ of {\bf species}, 
\item a map $r \colon T \to (0,\infty)$ giving a {\bf rate constant} for each transition,
\item {\bf source} and {\bf target} maps $s,t \colon T \to K$ saying where each transition starts and ends,
\item a one-to-one map $Y \colon K \to \mathbb{N}^S$ saying how each complex is made of species.
\end{itemize} 

Given all this, the rate equation says how the amount of each species changes with time.  We describe these amounts with a vector $x \in [0,\infty)^S$.  So, we want a differential equation filling in the question marks here:
$$ \displaystyle{ \frac{d x}{d t} = ??? } $$
Now in Section~\ref{sec:20_reaction_networks}, we thought of $K$ as a subset of $\mathbb{N}^S$, and thus of the vector space $\mathbb{R}^S$.  Back then, we wrote the rate equation as follows:
$$   \displaystyle{ \frac{d x}{d t} = \sum_{\tau \in T} r(\tau) \left(t(\tau) - s(\tau)\right) x^{s(\tau)} } $$
where vector exponentiation is defined by
$$ x^s = x_1^{s_1} \cdots x_k^{s_k} $$
when $x$ and $s$ are vectors in $\mathbb{R}^S$.

However, we've now switched to thinking of our set of complexes $K$ as a set in its own right that is mapped into $\mathbb{N}^S$ by $Y$.  This is good for lots of reasons, like defining the concept of `deficiency', which we did last time.   But it means the rate equation above doesn't quite parse anymore!   Things like $s(\tau)$ and $t(\tau)$ live in $K$; we need to explicitly convert them into elements of $\mathbb{R}^S$ using $Y$ for our equation to make sense!

So now we have to write the {\bf rate equation} like this:
$$   \displaystyle{ \frac{d x}{d t} = Y \sum_{\tau \in T} r(\tau) \left(t(\tau) - s(\tau)\right) x^{Y s(\tau)} } $$
This looks more ugly, but if you've got even one mathematical bone in your body, you can already see vague glimmerings of how we'll rewrite this the way we want:
$$ \displaystyle { \frac{d x}{d t} =  Y H x^Y } $$
Here's how.

First, we extend our maps $s, t$ and $Y$ to linear maps between vector spaces:
\[
\xymatrix{
 \mathbb{R}^T 
\ar@<0.5ex>[r]^s 
\ar@<-0.5ex>[r]_t & \mathbb{R}^K \ar[r]^Y & \mathbb{R}^S  \\
}
\]

Then, we put an inner product on the vector spaces $\mathbb{R}^T$, $\mathbb{R}^K$ and $\mathbb{R}^S$.  For $\mathbb{R}^K$ we do this in the most obvious way, by letting the complexes be an orthonormal basis.  So, given two complexes $\kappa, \kappa'$, we define their inner product by
$$ \langle \kappa, \kappa' \rangle = \delta_{\kappa, \kappa'} $$
We do the same for $\mathbb{R}^S$.  But for $\mathbb{R}^T$ we define the inner product in a more clever way involving the rate constants.  If $\tau, \tau' \in T$ are two transitions, we define their inner product by:
$$ \langle \tau, \tau' \rangle = \frac{1}{r(\tau)} \delta_{\tau, \tau'} $$

Having put inner products on these three vector spaces, we can take the adjoints of the linear maps between them, to get linear maps going back the other way:
\[
\xymatrix{
 \mathbb{R}^T 
\ar@<0.5ex>[r]^\partial
& \mathbb{R}^K 
\ar@<0.5ex>[l]^{s^\dagger} 
\ar@<0.5 ex>[r]^Y
& \mathbb{R}^S  
\ar@<0.5 ex>[l]^{Y^\dagger}
\\
}
\]
These are defined in the usual way---though we're using daggers here the way physicists do, where mathematicians would prefer to see stars.   For example, $s^\dagger \colon \mathbb{R}^K \to \mathbb{R}^T$ is defined by the relation
$$ \langle s^\dagger \phi, \psi \rangle = \langle \phi, s \psi \rangle $$
and so on.

Next, we set up a random walk\index{random walk!on reaction network}\index{reaction network!random walk on complexes} on the set of complexes.  Remember, our reaction network is a graph with complexes as vertices and transitions as edges, like this:

\begin{center}
 \includegraphics[width=40mm]{chemical_reaction_network_part_20_V.png}
\end{center}

Each transition $\tau$ has a number attached to it: the rate constant $r(\tau)$.  So, we can randomly hop from complex to complex along these transitions, with probabilities per unit time described by these numbers.  The probability of being at some particular complex will then be described by a function 
$$ \psi \colon K \to \mathbb{R} $$
which also depends on time, and changes according to the equation
$$ \displaystyle { \frac{d \psi}{d t} = H \psi } $$
for some {\bf Hamiltonian} 
$$ H \colon \mathbb{R}^K \to \mathbb{R}^K $$
This Hamiltonian can be described as a matrix with off-diagonal entries with
$$ i \ne j \quad \Rightarrow \quad H_{i j} =  \sum_{\tau \colon j \to i} r(\tau) $$
where we write $\tau \colon j \to i$ when $\tau$ is a transition with source $j$ and target $i$.  Since $H$ is infinitesimal stochastic, the diagonal entries are determined by the requirement that the columns sum to zero.  

In fact, there's a very slick formula for this Hamiltonian:
$$ H = \partial s^\dagger $$
We'll prove this in Section \ref{sec:22_hamiltonian}.  For now, the main point is that with this Hamiltonian, the rate equation is equivalent to this:
$$ \displaystyle{ \frac{d x}{d t} = Y H x^Y } $$
The only thing not defined yet is the funny exponential $x^Y$.  That's what makes the equation nonlinear.  We're taking a \emph{vector} to the power of a \emph{matrix} and getting a \emph{vector}.  This sounds weird---but it actually makes sense!

It only makes sense because we have chosen bases for our vector spaces.  To understand it, let's number our species $1, \dots, k$ as we've been doing all along, and number our complexes $1, \dots, \ell$.   Our linear map $Y \colon \mathbb{R}^K \to \mathbb{R}^S$ then becomes a $k \times \ell$ matrix of natural numbers.  Its entries say how many times each species shows up in each complex:
$$ Y = \left( \begin{array}{cccc} 
Y_{11} & Y_{12}  & \cdots & Y_{1 \ell} \\
Y_{21} & Y_{22}  & \cdots & Y_{2 \ell} \\
\vdots  & \vdots   & \ddots & \vdots \\
Y_{k1} & Y_{k2}  & \cdots & Y_{k \ell} \end{array} \right) $$
The entry $Y_{i j}$ says how many times the $i$th species shows up in the $j$th complex.  

Now, let's be a bit daring and think of the vector $x \in \mathbb{R}^S$ as a row vector with $k$ entries:
$$ x = \left(x_1 , x_2 ,  \dots ,  x_k \right) $$
Then we can multiply $x$ on the \emph{right} by the matrix $Y$ and get a vector in $\mathbb{R}^K$:
$$ \begin{array}{ccl} x Y &=& (x_1, x_2, \dots, x_k) \;
 \left( \begin{array}{cccc} 
Y_{11} & Y_{12}  & \cdots & Y_{1 \ell} \\
Y_{21} & Y_{22}  & \cdots & Y_{2 \ell} \\
\vdots  & \vdots   & \ddots & \vdots \\
Y_{k1} & Y_{k2}  & \cdots & Y_{k \ell} \end{array} \right) 
\\
&&\\
&=& \left( x_1 Y_{11} + \cdots + x_k Y_{k1}, \; \dots, \; x_1 Y_{1 \ell} + \cdots + x_k Y_{k \ell} \right) 
\end{array}
$$
So far, no big deal.  But now you're ready to see the definition of $x^Y$, which is very similar:
$$ \begin{array}{ccl} x^Y &=& (x_1, x_2, \dots, x_k)^{
 \left( \begin{array}{cccc} 
Y_{11} & Y_{12}  & \cdots & Y_{1 \ell} \\
Y_{21} & Y_{22}  & \cdots & Y_{2 \ell} \\
\vdots & \vdots  & \ddots & \vdots \\
Y_{k1} & Y_{k2}  & \cdots & Y_{k \ell} \end{array} \right)} 
\\
&&\\
&=& ( x_1^{Y_{11}} \cdots  x_k^{Y_{k1}} ,\; \dots, \; x_1^{Y_{1 \ell}} \cdots x_k^{Y_{k \ell}} )
\end{array}
$$
It's exactly the same, but with \emph{multiplication} replacing \emph{addition}, and \emph{exponentiation} replacing \emph{multiplication!}   Apparently our class on matrices stopped too soon: we learned about matrix multiplication, but matrix exponentiation is also worthwhile. 

What's the point of it?  Well, suppose you have a certain number of hydrogen molecules, a certain number of oxygen molecules, a certain number of water molecules, and so on---a certain number of things of each species.  You can list these numbers and get a vector $x \in \mathbb{R}^S$.  Then the components of $x^Y$ describe how many ways you can build up each complex from the things you have.  For example, 
$$  x_1^{Y_{11}} x_2^{Y_{21}} \cdots  x_k^{Y_{k1}} $$
say roughly how many ways you can build complex 1 by picking $Y_{11}$ things of species 1, $Y_{21}$ things of species 2, and so on.  

Why `roughly'?   Well, we're pretending we can pick the same thing twice.  So if we have 4 water molecules and we need to pick 3, this formula gives $4^3$.  The right answer is $4 \times 3 \times 2$.   To get this answer we'd need to use the `falling power' $4^{\underline{3}} = 4 \times 3 \times 2$, as explained in Section~\ref{sec:3_general_rule}.  But the rate equation describes chemistry in the limit where we have lots of things of each species.  In this limit, the ordinary power becomes a good approximation.

\begin{problem}\label{prob:37} 
In this section we've seen a vector raised to a matrix power, which is a vector, and also a vector raised to a vector power, which is a number.  How are they related?
\end{problem} 

There's more to say about this.  But let's get to the punchline:

\begin{theorem} 
\label{theorem_rate_equation}\index{rate equation!in terms of matrix exponentiation} 
The rate equation:
$$   \displaystyle{ \frac{d x}{d t} = Y \sum_{\tau \in T} r(\tau) \; \left(t(\tau) - s(\tau)\right) \; x^{Y s(\tau)} } $$
is equivalent to this equation:
$$ \displaystyle{ \frac{d x}{d t} = Y \partial s^\dagger x^Y } $$
or in other words:
$$ \displaystyle{ \frac{d x}{d t} = Y H x^Y } $$
where $H = \partial s^\dagger$.
\end{theorem}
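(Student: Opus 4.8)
The plan is to verify the claimed identity

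$$ Y \sum_{\tau \in T} r(\tau) \left(t(\tau) - s(\tau)\right) x^{Y s(\tau)} \;=\; Y (t-s) s^\dagger x^Y $$

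by pulling the common linear map $Y(t-s)$ out front and showing that the two vectors it acts on in $\mathbb{R}^K$ actually agree. So first I would reduce the theorem to the single claim

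$$ \sum_{\tau \in T} r(\tau) \, s(\tau) \, x^{Y s(\tau)} \;=\; s^\dagger x^Y $$

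(and the analogous statement with $t$ in place of $s$, though only $s$ enters here since the exponent is always $Ys(\tau)$). Everything hinges on computing the coordinates of $s^\dagger x^Y$ in the orthonormal-up-to-weights basis of $\mathbb{R}^K$.

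Next I would unwind the definition of the adjoint $s^\dagger : \mathbb{R}^K \to \mathbb{R}^T$. Using the inner product on $\mathbb{R}^T$ with $\langle \tau, \tau' \rangle = \tfrac{1}{r(\tau)}\delta_{\tau,\tau'}$ and the standard inner product on $\mathbb{R}^K$, the defining relation $\langle s^\dagger \phi, \psi\rangle_{\mathbb{R}^T} = \langle \phi, s\psi\rangle_{\mathbb{R}^K}$ forces, for each transition $\tau$, the coefficient of $\tau$ in $s^\dagger\phi$ to be $r(\tau)\,\phi_{s(\tau)}$ — the weighting by $r(\tau)$ in the inner product exactly cancels the $1/r(\tau)$ so as to produce the rate constant. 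Then $s^\dagger$ followed by re-expressing as an element of $\mathbb{R}^K$ via $s$ (as it is used inside $H = (t-s)s^\dagger$) sends $\phi \in \mathbb{R}^K$ to $\sum_{\tau} r(\tau)\,\phi_{s(\tau)}\, s(\tau)$. Setting $\phi = x^Y$, whose $\kappa$-component is by definition of matrix exponentiation $x^{Y\kappa} = \prod_i x_i^{Y_{i\kappa}}$, I get $s^\dagger x^Y$ (read as the corresponding element of $\mathbb{R}^K$) equal to $\sum_\tau r(\tau)\, x^{Y s(\tau)}\, s(\tau)$, which is precisely the vector appearing in the rewritten rate equation. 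Applying $Y$ and $(t-s)$ — wait, more carefully: $H = (t-s)s^\dagger$, so $Hx^Y = (t-s)s^\dagger x^Y = \sum_\tau r(\tau) x^{Ys(\tau)}(t(\tau)-s(\tau))$, and then $YHx^Y$ is exactly the original right-hand side. That closes the loop.

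I would present this as two short lemmas: (1) the coordinate formula for $s^\dagger$, proved by testing against basis vectors $\delta_\kappa$ and $\delta_\tau$; and (2) the identification of the $\kappa$-component of $x^Y$ with $x^{Y\kappa}$, which is just unpacking the matrix-exponentiation definition given in Section~\ref{sec:21_rate_equation}. Then the theorem follows by assembling $H = (t-s)s^\dagger$ (which itself should be checked, or cited from the promised justification in the next section) and left-multiplying by $Y$. The main obstacle — really the only subtle point — is bookkeeping the nonstandard inner product on $\mathbb{R}^T$: one must be careful that $s^\dagger$ produces $r(\tau)\phi_{s(\tau)}$ and not $\phi_{s(\tau)}$ or $\phi_{s(\tau)}/r(\tau)$, since getting this factor wrong changes the equation. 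A secondary bit of care is needed because $s$ is not injective (several transitions can share a source complex), so $s^\dagger$ is genuinely a sum over the fiber $s^{-1}(\kappa)$; I would make sure the formula $\sum_\tau r(\tau)\phi_{s(\tau)}s(\tau)$ correctly encodes "for each complex $\kappa$, sum $r(\tau)$ over all $\tau$ with $s(\tau)=\kappa$." Once the inner-product constant is pinned down, the rest is a routine unfolding of definitions.
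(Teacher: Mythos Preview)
Your approach is essentially the paper's: both arguments reduce to computing $s^\dagger$ explicitly via the weighted inner product on $\mathbb{R}^T$ (the paper shows $s^\dagger\kappa = \sum_{s(\tau)=\kappa} r(\tau)\,\tau$, you show the $\tau$-coefficient of $s^\dagger\phi$ is $r(\tau)\,\phi_{s(\tau)}$, which is the same fact), then apply $(t-s)$ and $Y$. One caveat: your initial ``reduction'' $\sum_\tau r(\tau)\,s(\tau)\,x^{Ys(\tau)} = s^\dagger x^Y$ is type-mismatched (the left side lies in $\mathbb{R}^K$, the right in $\mathbb{R}^T$), and the parenthetical about an ``analogous statement with $t$'' is a dead end---but your ``wait, more carefully'' paragraph abandons this and gives the correct computation $Hx^Y = (t-s)s^\dagger x^Y = \sum_\tau r(\tau)\,x^{Ys(\tau)}\,(t(\tau)-s(\tau))$, so just drop the faulty reduction and lead with that.
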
 

\begin{proof}
Since $\partial = t - s$, it is enough to show 
$$  \displaystyle{ (t - s) s^\dagger x^Y = \sum_{\tau \in T} r(\tau) \; \left(t(\tau) - s(\tau)\right) \; x^{Y s(\tau)} } $$
So, we'll compute $(t - s) s^\dagger x^Y$, and think about the meaning of each quantity we get \emph{en route}.  

We start with $x \in \mathbb{R}^S$.  This is a list of numbers saying how many things of each species we have: our raw ingredients, as it were.  Then we compute
$$ x^Y = (x_1^{Y_{11}} \cdots  x_k^{Y_{k1}} ,  \dots,  x_1^{Y_{1 \ell}} \cdots x_k^{Y_{k \ell}} ) $$
This is a vector in $\mathbb{R}^K$.  It's a list of numbers saying how many ways we can build each complex starting from our raw ingredients.  

Alternatively, we can write this vector $x^Y$ as a sum over basis vectors:
$$ x^Y = \sum_{\kappa \in K} x_1^{Y_{1\kappa}} \cdots  x_k^{Y_{k\kappa}} \; \kappa $$
Next let's apply $s^\dagger$ to this.  We claim that
$$ \displaystyle{ s^\dagger \kappa = \sum_{\tau \; : \;  s(\tau) = \kappa} r(\tau) \; \tau } $$
In other words, we claim $s^\dagger \kappa$ is the sum of all the transitions having $\kappa$ as their source, \emph{weighted by their rate constants!}   To prove this claim, it's enough to take the inner product of each side with any transition $\tau'$, and check that we get the same answer.  For the left side we get
$$  \langle s^\dagger \kappa, \tau' \rangle = \langle \kappa, s(\tau') \rangle = \delta_{\kappa, s (\tau') }$$  

To compute the right side, we need to use the cleverly chosen inner product on $\mathbb{R}^T$.  Here we get
$$ \displaystyle{ \left\langle \sum_{\tau \; : \; s(\tau) = \kappa} r(\tau) \tau, \; \tau' \right\rangle =  \sum_{\tau \; : \; s(\tau) = \kappa} \delta_{\tau, \tau'} = \delta_{\kappa, s(\tau')}  } $$
In the first step here, the factor of $1 /r(\tau)$ in the cleverly chosen inner product canceled the visible factor of $r(\tau)$.  For the second step, you just need to think for half a minute---or ten, depending on how much coffee you've had.   

Either way, we conclude that indeed
$$ \displaystyle{ s^\dagger \kappa = \sum_{\tau \; : \; s(\tau) = \kappa} r(\tau) \tau } $$ 

Next let's combine this with our formula for $x^Y$:
$$ x^Y = \sum_{\kappa \in K} x_1^{Y_{1\kappa}} \cdots  x_k^{Y_{k\kappa}} \; \kappa $$
We get this:
$$  s^\dagger x^Y = \sum_{\kappa, \tau \; : \; s(\tau) = \kappa} r(\tau) \; x_1^{Y_{1\kappa}} \cdots  x_k^{Y_{k\kappa}} \; \tau $$
In other words, $s^\dagger x^Y$ is a linear combination of transitions, where each one is weighted both by \emph{the rate it happens} and \emph{how many ways it can happen} starting with our raw ingredients.

Our goal is to compute $(t - s)s^\dagger x^Y$.  We're almost there.  Remember, $s$ says which complex is the input of a given transition, and $t$ says which complex is the output.  So, $(t - s) s^\dagger x^Y$ says the total rate at which complexes are created and/or destroyed starting with the species in $x$ as our raw ingredients.  

That sounds good.   But let's just pedantically check that everything works.  Applying $t - s$ to both sides of our last equation, we get
$$ \displaystyle{ (t - s) s^\dagger x^Y = \sum_{\kappa, \tau \colon s(\tau) = \kappa} r(\tau) \; x_1^{Y_{1\kappa}} \cdots  x_k^{Y_{k\kappa}} \; \left( t(\tau) - s(\tau)\right) } $$
Remember, our goal was to prove that this equals
$$ \displaystyle{ \sum_{\tau \in T} r(\tau) \; \left(t(\tau) - s(\tau)\right) \; x^{Y s(\tau)} } $$
But if you stare at these a while and think, you'll see they're equal.      
\end{proof}

All this seems peculiar at first, but ultimately it all makes sense.  The interesting subtlety is that we use the linear map called `multiplying by $Y$':
$$  \begin{array}{ccc} \mathbb{R}^K &\to& \mathbb{R}^S \\  
                           \psi     &\mapsto&   Y \psi \end{array} $$
to take a bunch of complexes and work out the species they contain, while we use the \emph{nonlinear} map called `raising to the $Y$th power':
$$  \begin{array}{ccc} \mathbb{R}^S &\to& \mathbb{R}^K \\  
                           x     &\mapsto&   x^Y \end{array} $$
to take a bunch of species and work out how many ways we can build each complex from them.  There is much more to say about this: for example, these maps arise from a pair of what category theorists call `adjoint functors'.\index{category theory!adjoint functors}  But we're worn out and you probably are too.

\index{nonlinearity!matrix exponentiation|)}

\subsection{References}

We found this thesis to be the most helpful reference when trying to understand the proof of the deficiency zero theorem:

\begin{enumerate} 
\item[\cite{Gub03}] Jonathan M. Guberman, \href{http://www4.utsouthwestern.edu/altschulerwulab/theses/GubermanJ_Thesis.pdf}{\emph{Mass Action Reaction Networks and the Deficiency Zero Theorem}}, B.A. thesis, Department of Mathematics, Harvard University, 2003.
\end{enumerate} 
\noindent
We urge you to check it out.  In particular, Section 2 and Appendix A discuss matrix exponentiation.

Here's another good modern treatment of the deficiency zero theorem:

\begin{enumerate} 
\item[\cite{Gun03}] Jeremy Gunawardena, \href{http://vcp.med.harvard.edu/papers/crnt.pdf}{Chemical reaction network theory for \emph{in-silico} biologists}, 2003.
\end{enumerate} 

\noindent
Some of the theorem was first proved here:

\begin{enumerate} 
\item[\cite{Fei87}] Martin Feinberg, \href{http://www.seas.upenn.edu/~jadbabai/ESE680/Fei87a.pdf}{Chemical reaction network structure and the stability of complex isothermal reactors: I. The deficiency zero and deficiency one theorems}, \emph{Chemical Engineering Science} {\bf 42} (1987), 2229-2268. 
\end{enumerate} 

\noindent
However, Feinberg's treatment here relies heavily on this paper:

\begin{enumerate} 
\item[\cite{HJ72}] F. Horn and R. Jackson, General mass action kinetics, \href{http://www.springerlink.com/content/p345k578348107tj/}{\emph{Archive for Rational Mechanics and Analysis}} {\bf 47} (1972), 81-116.
\end{enumerate} 

\noindent
These lectures are also very helpful:
\begin{enumerate} 
\item[\cite{Fei79}] Martin Feinberg, \href{http://www.che.eng.ohio-state.edu/~FEINBERG/LecturesOnReactionNetworks/}{\sl Lectures on Reaction Networks}, 1979.
\end{enumerate} 


\newpage
\section[Markov processes]{The rate equation and Markov processes}
\label{sec:22} 

We've been looking at reaction networks, and we're getting ready to
find equilibrium solutions of the equations they give.  To do this,
we'll need to connect them to another kind of network we've studied.
A reaction network is something like this:

\begin{center}
 \includegraphics[width=40mm]{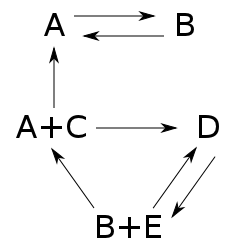}
\end{center}

\noindent
It's a bunch of {\bf complexes}, which are sums of basic building-blocks called {\bf species}, together with arrows called {\bf transitions} going between the complexes.  \index{complex} \index{species} \index{transition}  If we know a number  for each transition describing the rate at which it occurs, we get an equation called the `rate equation'.  This describes how the amount of each species changes with time.   We've been talking about this equation ever since the start of this course!   In Section~\ref{sec:21}, we wrote it down in a new very compact form:
$$  \displaystyle{ \frac{d x}{d t} = Y H x^Y  } $$
\noindent Here $ x$ is a vector whose components are the amounts of each species, while $ H$ and $ Y$ are certain matrices.

But now suppose we forget how each complex is made of species!  Suppose we just think of them as abstract things in their own right, like numbered boxes:

\begin{center}
 \includegraphics[width=50mm]{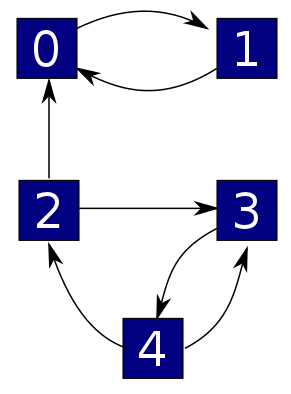}
\end{center}

\noindent
We can use these boxes to describe {\bf states} of some system.  The arrows still describe {\bf transitions}, but now we think of these as ways for the system to hop from one state to another.   Say we know a number for each transition describing the probability per time at which it occurs:

\begin{center}
 \includegraphics[width=50mm]{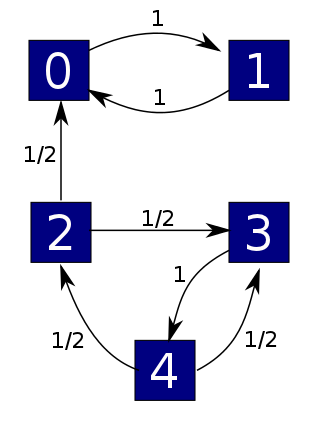}
\end{center}

\noindent
Then we get a `Markov process'---or in other words, a random walk\index{random walk!as Markov processes} where our system hops from one state to another.  If $\psi$ is the probability distribution saying how likely the system is to be in each state, this Markov process is described by this equation:
$$  \displaystyle{ \frac{d \psi}{d t} = H \psi  } $$
This is simpler than the rate equation, because it's linear.  But the matrix $H$ is the same---we'll see that explicitly later on.

What's the point?  Well, our ultimate goal is to prove the deficiency zero theorem, which gives equilibrium solutions of the rate equation.  That means finding $x$ with 
$$    Y H x^Y = 0 $$
Now we'll find all equilibria for the Markov process, meaning all $\psi$ with
$$     H \psi = 0 $$
In the next section we'll show some of these have the form
$$   \psi = x^Y $$
So, we'll get 
$$  H x^Y = 0 $$
and thus
$$  Y H x^Y = 0 $$
as desired!   So, let's get to to work.

\subsection{The Markov process of a weighted graph}
\label{sec:22_weighted_graph}

We've been looking at stochastic reaction networks, which are things like this:
\[
\xymatrix{
(0,\infty) & T \ar[l]_<<<<r 
\ar@<0.5ex>[r]^s 
\ar@<-0.5ex>[r]_t & K \ar[r]^Y & \mathbb{N}^S  \\
}
\]
However, we can build a Markov process starting from just part of this information:
\[
\xymatrix{
(0,\infty)  & T \ar[l]_<<<<r 
\ar@<0.5ex>[r]^s \ar@<-0.5ex>[r]_t & K 
}
\]
Let's call this thing a {\bf weighted graph}.  We've been calling the things in $K$ `complexes', but now we'll think of them as `states'.  So:

\begin{definition}  \label{def:weighted_graph} A {\bf weighted graph} consists of:\index{graph theory!weighted graph}\index{weighted graph} 
\begin{itemize}
\item a finite set of {\bf states} $K$, \index{state} 
\item a finite set of {\bf transitions} $T$, \index{transition} 
\item a map $r \colon T \to (0,\infty)$ giving a {\bf rate constant} for each transition, \index{rate constant} \index{rate constant}
\item {\bf source} and {\bf target} maps $s,t \colon T \to K$ saying where each transition starts and ends.  \index{source} \index{target}
\end{itemize}
\end{definition} 
Graph theorists would call the states {\bf vertices}, the transitions {\bf edges} 
and the rate constants {\bf weights}.  However, graph theorists might let their weights
be other kinds of numbers, not necessarily positive real numbers.

Anyway, starting from a weighted graph, we can get a Markov process describing how a probability distribution $\psi$ on our set of states will change with time.  As usual, this Markov process is described by a master equation:
$$  \displaystyle{ \frac{d \psi}{d t} = H \psi } $$
for some Hamiltonian:
$$ H \colon \mathbb{R}^K \to \mathbb{R}^K $$
What is this Hamiltonian, exactly?  Let's think of it as a matrix where $H_{i j}$ is the probability per time for our system to hop from the state $j$ to the state $i$.   This looks backwards, but don't blame us---blame the guys who invented the usual conventions for matrix algebra.\index{matrix!matrix algebra}  Clearly if $i \ne j$ this probability per time should be the sum of the rate constants of all transitions going from $j$ to $i$:
$$ i \ne j \quad \Rightarrow \quad H_{i j} =  \sum_{\tau \colon j \to i} r(\tau) $$
\noindent where we write $\tau \colon j \to i$ when $\tau$ is a transition with source $j$ and target $i$.

Now, we saw way back in Section~\ref{sec:4_hamiltonians} that for a probability distribution to remain a probability distribution as it evolves in time according to the master equation, we need $H$ to be {\bf infinitesimal stochastic}: its off-diagonal entries must be nonnegative, and the sum of the entries in each column must be zero.\index{operator!infinitesimal stochastic} \index{infinitesimal stochastic operator}

The first condition holds already, and the second one tells us what the diagonal entries must be.  So, we're basically done describing $H$.  But we can summarize it this way:

\begin{problem}\label{prob:38} 
\label{problem_hamiltonian}
\index{Hamiltonian!of weighted graph}
 Think of $\mathbb{R}^K$ as the vector space consisting of finite linear combinations of elements $\kappa \in K$.  Then show
$$   H \kappa = \sum_{s(\tau) = \kappa} r(\tau) (t(\tau) - s(\tau)) $$ 
Check directly that $H$ is infinitesimal stochastic.
\end{problem} 

Not only can we get an infinitesimal stochastic operator and thus a Markov process
from any weighted graph, the process can also be reversed!

\begin{problem}\label{prob:39}
Suppose $K$ is any finite set and suppose $H$ is an infinitesimal stochastic operator
on $\mathbb{R}^K$.  Show that $H$ comes from some weighted graph having
$K$ as its set of states.  Furthermore, show that this graph is unique (up to 
isomorphism) if we require that it has most one transition from any state $i \in K$ 
to any state $j \in K$.
\end{problem}

\subsection{Equilibrium solutions of the master equation}
\label{sec:22_equilibrium}

Now we'll classify {\bf equilibrium solutions} of the master equation, meaning $\psi \in \mathbb{R}^K$ with 
$$  H \psi = 0 $$
We'll do only do this when our weighted graph is `weakly reversible'.  This concept doesn't actually depend on the rate constants, so let's be general and say:

\begin{definition}\index{weakly reversible!graph}\index{graph theory!weakly reversible graph} A graph is {\bf weakly reversible} if for every edge $\tau \colon i \to j$, there is {\bf directed path} going back from $j$ to $i$, meaning that we have edges
$$ \tau_1 \colon j \to j_1 , \quad \tau_2 \colon j_1 \to j_2 , \quad \dots, \quad \tau_n \colon j_{n-1} \to i $$
\end{definition} 

This weighted graph is \emph{not} weakly reversible:

\begin{center}
 \includegraphics[width=47mm]{markov_process_vs_reaction_network_2.png}
\end{center}

\noindent
but this one is:

\begin{center}
 \includegraphics[width=50mm]{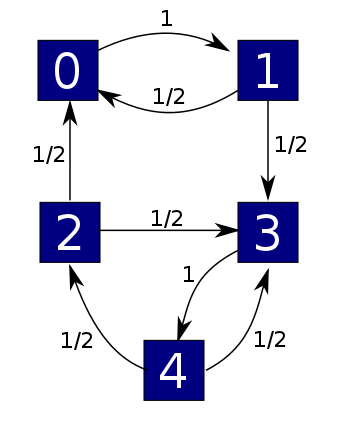}
\end{center}

\noindent
The good thing about the weakly reversible case is that we get one equilibrium solution of the master equation for each component of our graph, and all equilibrium solutions are linear combinations of these.   This is \emph{not} true in general!  For example, this guy is not weakly reversible:

\begin{center}
 \includegraphics[width=70mm]{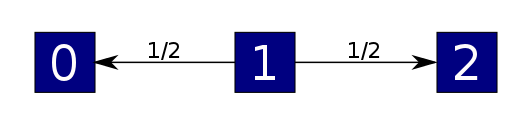}
\end{center}

\noindent
It has only one component, but the master equation has two linearly independent equilibrium solutions: one that vanishes except at the state 0, and one that vanishes except at the state 2.  

\index{connected component|(} \index{graph theory!connected component|(}
The idea of a `component' is supposed to be fairly intuitive---our graph falls apart into pieces called components---but we should make it precise.  As explained in Section~\ref{sec:12}, the graphs we're using here are directed multigraphs, meaning things like
$$ s, t \colon E \to V  $$
where $E$ is the set of edges (our transitions) and $V$ is the set of vertices (our states).    There are actually two famous concepts of `component' for graphs of this sort: `strongly connected' components and `connected' components.   We only need connected components, but let us explain both concepts, in a doubtless futile attempt to slake your insatiable thirst for knowledge. \index{directed multigraph}

Two vertices $i$ and $j$ of a graph lie in the same {\bf strongly connected component} iff you can find a directed path of edges from $i$ to $j$ and also one from $j$ back to $i$.\index{strongly connected component|(}\index{graph theory!strongly connected component|(}    

Remember, a directed path from $i$ to $j$ looks like this:
$$ i \to a \to b \to c \to j $$
Here's a path from $x$ to $y$ that is not directed:
$$ i \to a \leftarrow b \to c \to j $$
and we hope you can write down the obvious but tedious definition of an `undirected path', meaning a path made of edges that don't necessarily point in the correct direction.   Given that, we say two vertices $i$ and $j$ lie in the same {\bf connected component} iff you can find an {\bf undirected} path going from $i$ to $j$.  In this case, there will automatically also be an undirected path going from $j$ to $i$. \index{graph theory!connected component|(} \index{connected component|(}

For example, $i$ and $j$ lie in the same connected component here, but not the same strongly connected component:
$$ i \to a \leftarrow b \to c \to j $$
Here's a graph with one connected component and 3 strongly connected components, which are marked in blue:

\begin{center}
 \includegraphics[width=60mm]{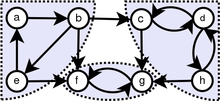}
\end{center}

For the theory we're looking at now, \emph{we only care about connected components, not strongly connected components!}   However:

\begin{problem}\label{prob:40} 
Show that for weakly reversible graphs, the connected components are the same as the strongly connected components.  
\end{problem} 
\index{strongly connected component|)} \index{graph theory!strongly connected component|)} 

With these definitions out of the way, we can state this theorem on equilibria for Markov processes:

\begin{theorem} 
\label{theorem_equilibria}\index{Markov process!equilibrium theorem} 
Suppose $H$ is the Hamiltonian of a weakly reversible weighted graph:
\[
\xymatrix{
(0,\infty)  & T \ar[l]_<<<<r 
\ar@<0.5ex>[r]^s \ar@<-0.5ex>[r]_t & K 
}
\]
Then for each connected component\index{connected component!and probability distributions}  $C \subseteq K$, there exists a unique probability distribution $\psi_C \in \mathbb{R}^K$ that is positive on that component, zero elsewhere, and is an equilibrium solution of the master equation:
$$   H \psi_C = 0 $$
Moreover, these probability distributions $\psi_C$ form a basis for the space of equilibrium solutions of the master equation.  So, the dimension of this space is the number of components of $K$.
\end{theorem}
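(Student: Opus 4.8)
The plan is to reduce the problem to a single connected component, apply the Perron--Frobenius theorem (Section~\ref{sec:16_perron_frobenius}) on that component, and then patch the pieces together. First I would observe that if $K$ decomposes into connected components $C_1, \dots, C_m$, then the Hamiltonian $H$ is block-diagonal with respect to the direct sum decomposition $\mathbb{R}^K = \bigoplus_\alpha \mathbb{R}^{C_\alpha}$. Indeed, by Problem~\ref{prob:38} we have $H\kappa = \sum_{s(\tau)=\kappa} r(\tau)(t(\tau)-s(\tau))$, and every transition $\tau$ with source $\kappa \in C_\alpha$ has its target $t(\tau)$ in the same connected component $C_\alpha$. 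So $H$ maps $\mathbb{R}^{C_\alpha}$ into itself, and it suffices to prove: for a weakly reversible graph with rates that has a \emph{single} connected component $C$, there is a unique probability distribution $\psi_C$ on $C$ with $H\psi_C = 0$, and this spans the kernel of $H$.

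For the single-component case, I would invoke the machinery of Section~\ref{sec:16}. The restricted Hamiltonian $H|_{\mathbb{R}^C}$ is infinitesimal stochastic (its off-diagonal entries are nonnegative and its columns sum to zero, as discussed just before Problem~\ref{prob:38}). Following the recipe in Section~\ref{sec:16_perron_frobenius}, I would form $T = H|_{\mathbb{R}^C} + cI$ for $c$ large enough that $T$ is a nonnegative matrix. The key point is that $T$ is \emph{irreducible}: by Problem~\ref{prob:25}, irreducibility of a nonnegative matrix is equivalent to the associated directed graph $G_T$ being strongly connected, and the edges of $G_T$ (for off-diagonal entries) are exactly the transitions of our graph on $C$. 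Since our graph is weakly reversible with one connected component, Problem~\ref{prob:39} tells us it is in fact \emph{strongly} connected, so $T$ is irreducible. Now the Perron--Frobenius theorem applies: $T$ has a Perron--Frobenius eigenvalue $r > 0$ with a positive eigenvector $\psi$, unique up to scalar, and any nonnegative eigenvector of $T$ is a scalar multiple of $\psi$. Translating back, $H|_{\mathbb{R}^C}\psi = (r-c)\psi$. To pin down $r - c = 0$, I would run the argument from the proof of Theorem~\ref{theorem_equilibrium}: normalize $\psi$ to a probability distribution, note that $\exp(tH|_{\mathbb{R}^C})$ is stochastic so preserves $\sum_i \psi_i = 1$, but $\sum_i (\exp(tH|_{\mathbb{R}^C})\psi)_i = e^{t(r-c)}$, forcing $r - c = 0$. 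Thus $H|_{\mathbb{R}^C}\psi = 0$, and $\psi_C := \psi$ is the desired unique positive-on-$C$ equilibrium; moreover $\ker(H|_{\mathbb{R}^C})$ is one-dimensional since the Perron--Frobenius eigenvalue is simple (the eigenspace is spanned by $\psi$).

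Finally I would assemble the global statement. Since $H$ is block-diagonal, $\ker H = \bigoplus_\alpha \ker(H|_{\mathbb{R}^{C_\alpha}})$, and each summand is the one-dimensional span of $\psi_{C_\alpha}$. Hence the $\psi_{C_\alpha}$ form a basis for $\ker H$, and $\dim \ker H$ equals the number of connected components, as claimed. The main obstacle, I expect, is the careful verification that weak reversibility plus a single connected component yields strong connectivity (Problem~\ref{prob:39})---this is what licenses the use of the full Perron--Frobenius theorem rather than only its conclusions about nonnegative but possibly reducible matrices, and it is exactly the place where the hypothesis ``weakly reversible'' does its work. Everything else is either block-diagonality bookkeeping or a direct transcription of arguments already given in Sections~\ref{sec:16} and the proof of Theorem~\ref{theorem_equilibrium}.
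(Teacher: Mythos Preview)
Your proposal is correct and follows essentially the same route as the paper: reduce to a single connected component, shift $H$ by $cI$ to get a nonnegative matrix, use weak reversibility to obtain irreducibility, apply Perron--Frobenius, and then identify the Perron--Frobenius eigenvalue as zero via the stochasticity argument of Theorem~\ref{theorem_equilibrium}. The only organisational difference is that you establish the block-diagonal decomposition first and then treat a single block, whereas the paper handles the single-component case first and then remarks on block-diagonality at the end; the content is the same.
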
 

\begin{proof} \index{Perron–Frobenius theorem!and the deficiency zero theorem|(}
We start by assuming our graph has one connected component.\index{connected component!single}   We use the Perron--Frobenius theorem, as explained in Section~\ref{sec:16_perron_frobenius}.  This applies to `nonnegative' matrices, meaning those whose entries are all nonnegative.  That is not true of $H$ itself, but only its diagonal entries can be negative, so if we choose a large enough number $c > 0$, $H + c I$ will be nonnegative.  

Since our graph is weakly reversible and has one connected component, it follows straight from the definitions that the operator $H + c I$ will also be `irreducible' in the sense of Section~\ref{sec:16_matrices_to_graphs}.  The Perron--Frobenius theorem then swings into action, and we instantly conclude several things.  \label{irreducible operator}

First, $H + c I$ has a positive real eigenvalue $r$ such that any other eigenvalue, possibly complex, has absolute value $\le r$.  Second, there is an eigenvector $\psi$ with eigenvalue $r$ and all positive components.  Third, any other eigenvector with eigenvalue $r$ is a scalar multiple of $\psi$.  

Subtracting $c$, it follows that $\lambda = r - c$ is the eigenvalue of $H$ with the largest real part.  We have $H \psi = \lambda \psi$, and any other vector with this property is a scalar multiple of $\psi$.   

We can show that in fact $\lambda = 0$.  To do this we copy the argument in Theorem~\ref{theorem_equilibrium}.   First, since $\psi$ is positive we can normalize it to be a probability distribution:
$$ \displaystyle{ \sum_{i \in K} \psi_i = 1 } $$
Since $H$ is infinitesimal stochastic, $\exp(t H)$ sends probability distributions to probability distributions:
$$ \displaystyle{ \sum_{i \in K} (\exp(t H) \psi)_i = 1 } $$
for all $t \ge 0$.  On the other hand,
$$ \sum_{i \in K} (\exp(t H)\psi)_i = \sum_{i \in K} e^{t \lambda} \psi_i = e^{t \lambda} $$
so we must have $\lambda = 0$.  

We conclude that when our graph has one connected component, there is a probability distribution $\psi \in \mathbb{R}^K$ that is positive everywhere and has $H \psi = 0$.  Moreover, any $\phi \in \mathbb{R}^K$ with $H \phi = 0$ is a scalar multiple of $\psi$.

When our graph is weakly reversible but has several components, the matrix $H$ is 
block diagonal, with one block for each component.  So, we can run the above 
argument on each component $C \subseteq K$ and get a probability distribution 
$\psi_C \in \mathbb{R}^K$ that is positive on $C$.  We can then check that $H \psi_C 
= 0$ and that every $\phi \in \mathbb{R}^K$ with $H \phi = 0$ can be expressed as a 
linear combination of these probability distributions $\psi_C$ in a unique way.    
\index{connected component|)} \index{graph theory!connected component|)}
\end{proof} 

By the way: even in the fully general case, where our weighted graph may not be 
weakly reversible, we can completely understand all equilibrium solutions of the master 
equation. We do not need this to prove the deficiency zero theorem.  So, only read the 
rest of this section if you want to become an expert on Markov processes.

Suppose we have any weighted graph.  Write its set $K$ of vertices as a disjoint union of strongly connected components:
$$  K =\bigcup_{i=1}^n C_i. $$
Say $i \le j$ if there is a directed path from some vertex in $C_i$ to some vertex in $C_j$.  (This is equivalent to there being a directed path from \emph{any} vertex in $C_i$ to \emph{any} vertex in $C_j$.)  If $i \le j$, you can show that probability distributions that are nonzero only on vertices in $C_i$ will become nonzero on $C_j$ as you evolve them using the master equation.  In other words, probability will flow 
from the $i$th strongly connected component to the $j$th.  

We say a strongly connected component $C_i$ is {\bf terminal} \index{graph theory!strongly connected component!terminal} \index{strongly connected component!terminal} if $i \le j$ implies $i = j$.  The idea is that probability cannot flow out of a terminal component. Using this idea, you can classify equilibrium solutions of the master equation for any weighted graph:

\begin{problem}\label{prob:41} 
Suppose $H$ is the Hamiltonian of a weighted graph:
\[
\xymatrix{
(0,\infty)  & T \ar[l]_<<<<r 
\ar@<0.5ex>[r]^s \ar@<-0.5ex>[r]_t & K 
}
\]
Prove that for each terminal strongly connected component of this graph, there is a unique probability distribution $\psi \in \mathbb{R}^K$ that is positive on that component, zero elsewhere, and is an equilibrium solution of the master equation:
$$   H \psi = 0. $$ 
Prove that probability distributions of this sort are a basis for the space of equilibrium solutions of the master equation.  So, the dimension of this space is the number of terminal strongly connected components of $K$.
\end{problem}

\subsection{The Hamiltonian, revisited}
\label{sec:22_hamiltonian}

One last small piece of business: in Section~\ref{sec:21} we mentioned very slick formula for the Hamiltonian of a weighted graph:
$$  H = \partial s^\dagger $$
We'd like to prove it agrees with the formula we derived in Problem \ref{problem_hamiltonian}:
$$ H \kappa = \sum_{s(\tau) = \kappa} r(\tau) (t(\tau) - s(\tau)) $$ 
for any complex $\kappa$, regarded as one of th standard basis vectors 
in $\mathbb{R}^K$.  

Recall the setup.  We start with any weighted graph:
\[
\xymatrix{
(0,\infty)  & T \ar[l]_<<<<r 
\ar@<0.5ex>[r]^s \ar@<-0.5ex>[r]_t & K 
}
\]
We extend $s$ and $t$ to linear maps between vector spaces:
\[
\xymatrix{
 \mathbb{R}^T 
\ar@<0.5ex>[r]^s 
\ar@<-0.5ex>[r]_t & \mathbb{R}^K  \\
}
\]
We define the {\bf boundary operator} by:
\index{boundary operator}
$$ \partial = t - s $$
Then we put an inner product on the vector spaces $\mathbb{R}^T$ and $\mathbb{R}^K$.   For $\mathbb{R}^K$ we let the elements of $K$ be an orthonormal basis, but for $\mathbb{R}^T$ we define the inner product in a more clever way involving the rate constants:
$$ \displaystyle{ \langle \tau, \tau' \rangle = \frac{1}{r(\tau)} \delta_{\tau, \tau'} } $$
where $\tau, \tau' \in T$.  These inner products let us define adjoints of the maps $s, t$ and $\partial$, via formulas like this:
$$ \langle s^\dagger \phi, \psi \rangle = \langle \phi, s \psi \rangle $$
Then we claim:

\begin{theorem}
\label{theorem_hamiltonian}
The operator
$$   H = \partial s^\dagger $$
\index{Hamiltonian!of weighted graph} has
$$   H \kappa = \sum_{s(\tau) = \kappa} r(\tau) (t(\tau) - s(\tau)) $$ 
for any $\kappa \in K$.
\end{theorem} 

\begin{proof} 
First we claim that
$$  s^\dagger \kappa = \sum_{\tau\; : \; s(\tau) = \kappa} r(\tau) \, \tau $$
where we write $\tau \in T$ for the corresponding standard basis vector of
$\mathbb{R}^T$.  To prove this, it's enough to check that taking the inner products of either side with any basis vector $\tau'$, we get results that agree.  On the one hand:
$$ \begin{array}{ccl}  \langle \tau' , s^\dagger \kappa \rangle &=& 
\langle s \tau', \kappa \rangle \\
\\
&=& \delta_{s(\tau'), \kappa}  
\end{array} $$
On the other hand, using our clever inner product on $\mathbb{R}^T$ we see that:
$$ \begin{array}{ccl} \displaystyle{ \langle \tau', \sum_{\tau \; : \; s(\tau) = \kappa} r(\tau) \, \tau \rangle } &=& \displaystyle{ \sum_{\tau \; : \; s(\tau) = \kappa} r(\tau) \, \langle \tau', \tau \rangle } \\   \\
&=& \displaystyle{ \sum_{\tau \; : \; s(\tau) = \kappa} \delta_{\tau', \tau} }\\  \\
&=& 
\delta_{s(\tau'), \kappa} 
\end{array} $$
where the factor of $1/r(\tau)$ in the inner product on $\mathbb{R}^T$ cancels the visible factor of $r(\tau)$.    So indeed the results match.

Using this formula for $s^\dagger \kappa $ we now see that
$$  \begin{array}{ccl}  H \kappa &=& \partial s^\dagger \kappa 
\\   \\
&=& \partial \displaystyle{ \sum_{\tau \; : \; s(\tau) = \kappa} r(\tau) \, \tau }  \\  \\
&=& \displaystyle{ \sum_{\tau \; : \; s(\tau) = \kappa} r(\tau) \, (t(\tau) - s(\tau)) }
\end{array} $$
which is precisely what we want.   
\end{proof} 

We hope you see through the formulas to their intuitive meaning.  As usual, the formulas are just a way of precisely saying something that makes plenty of sense.  If $\kappa$ is some state of our Markov process, $ s^\dagger \kappa$ is the sum of all transitions starting at this state, weighted by their rates.  Applying $ \partial$ to a transition tells us what change in state it causes.  So $ \partial s^\dagger \kappa$ tells us the rate at which things change when we start in the state $ \kappa$.  That's why $\partial s^\dagger$ is the Hamiltonian for our Markov process.  After all, the Hamiltonian tells us how things change:
$$ \displaystyle{ \frac{d \psi}{d t} = H \psi } $$
Okay, we've got all the machinery in place.  Next we'll prove the deficiency zero theorem!


\newpage
\section[Proof of the deficiency zero theorem]{Proof of the deficiency zero theorem}\label{sec:23}
\index{Deficiency Zero Theorem!proof of|(} 

Now we've reached the climax of our story so far: we're ready to prove the deficiency zero theorem.  First let's talk about it informally a bit.  Then we'll review the notation, and then---hang on to your seat!---we'll give the proof.

The crucial trick is to relate a bunch of chemical reactions, described by a `reaction network' like this:

\begin{center}
 \includegraphics[width=40mm]{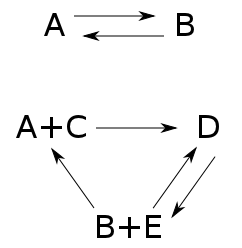}
\end{center}

\noindent
to a simpler problem where a system randomly hops between states arranged in the same pattern:

\begin{center}
 \includegraphics[width=50mm]{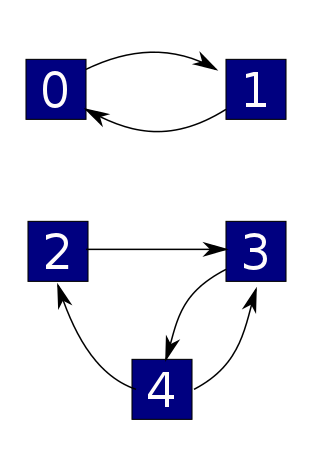}
\end{center}

\noindent
Tihs is sort of amazing, because we've thrown out lots of detail.  It's also amazing because this simpler problem is \emph{linear}.  In the original problem, the chance that a reaction turns a $B + E$ into a $D$ is proportional to the number of $B$'s \emph{times} the number of $E$'s.  That's nonlinear!  But in the simplified problem, the chance that your system hops from state 4 to state 3 is just proportional to the probability that it's in state 4 to begin with.  That's linear.  

The wonderful thing is that, at least under some conditions, we can an find an \emph{equilibrium} for our chemical reactions starting from an equilibrium solution of the simpler problem. 

Let's roughly sketch how it works, and where we are so far.  Our simplified problem is described by an equation like this:
$$ \displaystyle{ \frac{d}{d t} \psi = H \psi } $$
where $\psi$ is a function that the probability of being in each state, and $H$ describes the probability per time of hopping from one state to another.   We can easily understand quite a lot about the equilibrium solutions, where $\psi$ doesn't change at all:
$$   H \psi = 0 $$
because this is a linear equation.  We did this in Theorem \ref{theorem_equilibria}.  Of course, when we say `easily', that's a relative thing: we needed to use the Perron--Frobenius theorem, which we introduced in Section~\ref{sec:16}.  But that's a well-known theorem in linear algebra, and it's easy to apply here.
\index{Perron–Frobenius theorem!and the deficiency zero theorem|)}

In Theorem \ref{theorem_rate_equation}, we saw that the original problem was described by an equation like this, called the `rate equation':
$$  \displaystyle{ \frac{d x}{d t} = Y H x^Y  } $$
Here $x$ is a vector whose entries describe the amount of each kind of chemical: the amount of A's, the amount of B's, and so on.  The matrix $H$ is the same as in the simplified problem, but $Y$ is a matrix that says how many times each chemical shows up in each spot in our reaction network:

\begin{center}
 \includegraphics[width=40mm]{chemical_reaction_network_part_20_VII.png}
\end{center}

The key thing to notice is $x^Y$, where we take a vector and raise it to the power of a matrix.   We explained this operation back in Section~\ref{sec:21_rate_equation}.  It's this operation that says how many B + E pairs we have, for example, given the number of B's and the number of E's.  It's this that makes the rate equation nonlinear.\index{nonlinearity!rate equation}  

Now, we're looking for equilibrium solutions of the rate equation, where the rate of change is zero:
$$  Y H x^Y = 0  $$
But in fact we'll do even better!  We'll find solutions of this:
$$ H x^Y = 0$$
And we'll get these by taking our solutions of this:
$$ H \psi = 0 $$
and adjusting them so that 
$$ \psi = x^Y $$
while $\psi$ remains a solution of $H \psi = 0$.  

But: how do we do this `adjusting'?  That's the crux of the whole business!  That's what we'll do now.   Remember, $\psi$ is a function that gives a probability for each `state', or numbered box here:

\begin{center}
 \includegraphics[width=50mm]{markov_process_vs_reaction_network_4.png}
\end{center}

\noindent
The picture here consists of two pieces, called `connected components':\index{connected component!functions on}
 the piece containing boxes 0 and 1, and the piece containing boxes 2, 3 and 4.  It turns out that we can multiply $\psi$ by a function that's constant on each connected component, and if we had $H \psi = 0$ to begin with, that will still be true afterward.  The reason is that there's no way for $\psi$ to `leak across' from one component to another.  It's like having water in separate buckets.  You can increase the amount of water in one bucket, and decrease it in another, and as long as the water's surface remains flat in each bucket, the whole situation remains in equilibrium.

That's sort of obvious.  What's not obvious is that we can adjust $\psi$ this way so as to ensure
$$ \psi = x^Y $$
\noindent for some $x$.  

And indeed, it's not always true!  It's only true if our reaction network obeys a special condition.  It needs to have `deficiency zero'.  We defined this concept back in Section~\ref{sec:20}, but now we'll finally use it for something.  It turns out to be precisely the right condition to guarantee we can tweak any function on our set of states,  multiplying it by a function that's constant on each connected component, and get a new function $\psi$ with
$$ \psi = x^Y $$
When all is said and done, that is the key to the deficiency zero theorem.

\subsection[Review]{Review}

The battle is almost upon us---we've got one last chance to review our notation.  We start with a stochastic reaction network:
\[
\xymatrix{
(0,\infty)  & T \ar[l]_<<<<r 
\ar@<0.5ex>[r]^s \ar@<-0.5ex>[r]_t & K \ar[r]^Y & \mathbb{N}^S  \\
}
\]

This consists of:

\begin{itemize} 
\item finite sets of {\bf transitions} $T$, {\bf complexes} $K$ and {\bf species} $S$,
\index{complex} \index{species} \index{transition}
\item a map $r\colon T \to (0,\infty)$ giving a {\bf rate constant} for each transition,
\index{rate constant}
\item {\bf source} and {\bf target} maps $s,t \colon T \to K$ saying where each transition starts and ends,
\index{source} \index{target}
\item a one-to-one map $Y \colon K \to \mathbb{N}^S$ saying how each complex is made of species.
\end{itemize} 

Then we extend $s, t$ and $Y$ to linear maps:
\[
\xymatrix{
 \mathbb{R}^T 
\ar@<0.5ex>[r]^s 
\ar@<-0.5ex>[r]_t & \mathbb{R}^K \ar[r]^Y & \mathbb{R}^S  \\
}
\]

Then we put inner products on these vector spaces as described in Section~\ref{sec:21_rate_equation}, which lets us `turn around' the maps $s$ and $t$ by taking their adjoints:
$$ s^\dagger, t^\dagger \colon \mathbb{R}^K \to \mathbb{R}^T $$
More surprisingly, we can `turn around' $Y$ and get a \emph{nonlinear} map using `matrix exponentiation':\index{matrix!exponentiation}\index{nonlinearity!matrix exponentiation}
$$  \begin{array}{ccc} \mathbb{R}^S &\to& \mathbb{R}^K \\  
                           x     &\mapsto&   x^Y \end{array} $$
This is most easily understood by thinking of $x$ as a row vector and $Y$ as a matrix:
$$ \begin{array}{ccl} x^Y &=& {\left( \begin{array}{cccc} 
x_1 , & x_2 , & \dots, & x_k \end{array} \right)}^{
 \left( \begin{array}{cccc} 
Y_{11} & Y_{12}  & \cdots & Y_{1 \ell} \\
Y_{21} & Y_{22}  & \cdots & Y_{2 \ell} \\
\vdots & \vdots  & \ddots & \vdots \\
Y_{k1} & Y_{k2}  & \cdots & Y_{k \ell} \end{array} \right)} 
\\
\\ 
&=& \left( x_1^{Y_{11}} \cdots  x_k^{Y_{k1}} ,\; \dots, \; x_1^{Y_{1 \ell}} \cdots x_k^{Y_{k \ell}} \right)
\end{array}
$$
Remember, complexes are made out of species.   The matrix entry $Y_{i j}$ says how many things of the $j$th species there are in a complex of the $i$th kind.   If $\psi \in \mathbb{R}^K$ says how many complexes there are of each kind, $Y \psi \in \mathbb{R}^S$ says how many things there are of each species.  Conversely, if $x \in \mathbb{R}^S$ says how many things there are of each species, $x^Y \in \mathbb{R}^K$ says how many ways we can build each kind of complex from them.

So, we get these maps:
\[
\xymatrix{
 \mathbb{R}^T  \ar@<0.5ex>[rr]^s \ar@<-0.5ex>[rr]_t
&& \mathbb{R}^K \ar@/_0.8pc/@<-1ex>[ll]_{s^\dagger} \ar@/^0.8pc/@<1ex>[ll]^{t^\dagger}
\ar[rr]_Y
&& \mathbb{R}^S
\ar@/_0.8pc/@<-0.5ex>[ll]_{(\cdot)^Y}
 \\
}
\]

Next, the {\bf boundary operator}
$$  \partial \colon \mathbb{R}^T \to \mathbb{R}^K $$ 
describes how each transition causes a change in complexes:
$$ \partial = t - s $$
As we saw in Theorem~\ref{theorem_hamiltonian}, there is a {\bf Hamiltonian}
$$  H  \colon \mathbb{R}^K \to \mathbb{R}^K $$
describing a Markov processes on the set of complexes, given by
$$  H = \partial s^\dagger $$ 
But the star of the show is the rate equation.  This describes how the number of things of each species changes with time.  We write these numbers in a list and get a vector $x \in \mathbb{R}^S$ with nonnegative components.  The {\bf rate equation} says:
$$ \displaystyle{ \frac{d x}{d t} = Y H x^Y } $$
We can read this as follows:

\begin{itemize} 
\item $x$ says how many things of each species we have now.
\item $x^Y$ says how many complexes of each kind we can build from these species.
\item $s^\dagger x^Y$ says how many transitions of each kind can originate starting from these complexes, with each transition weighted by its rate.
\item $ H x^Y = \partial s^\dagger x^Y$ is the rate of change of the number of complexes of each kind, due to these transitions.
\item $Y H x^Y$ is the rate of change of the number of things of each species.
\end{itemize} 

\subsection[The proof]{The proof}

We are looking for {\bf equilibrium solutions} of the rate equation, where the number of things of each species doesn't change at all:
$$   Y H x^Y = 0 $$
In fact we will find {\bf complex balanced} equilibrium solutions, where even the number of complexes of each kind doesn't change:
$$  H x^Y = 0 $$
\index{equilibrium} \index{complex balanced equilibrium} \index{equilibrium!complex
balanced}
More precisely, we have:

\begin{theorem}[{\bf Deficiency Zero Theorem---Child's Version}]\index{Deficiency Zero Theorem!child's version} 
Suppose we have a reaction network obeying these two conditions:
 \begin{itemize} 
 \item It is {\bf weakly reversible}, meaning that whenever there's a transition from one complex $\kappa$ to another $\kappa'$, there's a directed path of transitions going back from $\kappa'$ to $\kappa$.
\item  It has {\bf deficiency zero}, meaning $  \mathrm{im} \partial  \cap \mathrm{ker} Y = \{ 0 \} $.
\end{itemize} 
Then for any choice of rate constants there exists a complex balanced equilibrium solution of the rate equation where all species are present in nonzero amounts.  In other words, there exists $x \in \mathbb{R}^S$ with all components positive and such that:
$$H x^Y = 0$$
\end{theorem}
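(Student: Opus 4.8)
The plan is to exploit the machinery assembled in Sections~\ref{sec:21} and \ref{sec:22}, reducing everything to linear algebra plus one clever logarithmic change of variables. First I would invoke Theorem~\ref{theorem_equilibria}: since the reaction network is weakly reversible, the associated graph with rates is weakly reversible, so for each connected component $C$ of the set of complexes $K$ there is a probability distribution $\psi_C \in \mathbb{R}^K$, positive on $C$ and zero elsewhere, with $H \psi_C = 0$; and these span the kernel of $H$. Pick your favorite such $\psi_C$ for each component, and form $\psi_0 = \sum_C \psi_C$, which is strictly positive on all of $K$ and satisfies $H \psi_0 = 0$. This is our candidate, except it need not be of the form $x^Y$. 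The whole game is to modify it so that it is, without leaving $\ker H$.

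The key observation is that multiplying $\psi_0$ pointwise by any vector that is constant on each connected component keeps us in $\ker H$: on each component $H$ restricts to the Hamiltonian of a connected weakly reversible graph, and rescaling an equilibrium on one component by a positive constant gives another equilibrium there, while the components do not interact (the probability cannot leak across). Precisely, if $c \in \mathbb{R}^K$ is constant on components and positive, then $H(c \cdot \psi_0) = 0$, where $c\cdot\psi_0$ denotes the pointwise product. So I would look for $x \in \mathbb{R}^S$ with all components positive such that $x^Y = c \cdot \psi_0$ for some such $c$. Taking logarithms componentwise, writing $u_i = \log x_i$ and $v_\kappa = \log (\psi_0)_\kappa$, the equation $x^Y = c\cdot \psi_0$ becomes the \emph{linear} equation $Y^{\mathrm{T}} u = v + w$, where $w = \log c$ ranges over the subspace $W \subseteq \mathbb{R}^K$ of vectors constant on each connected component, and $Y^{\mathrm{T}}$ denotes the matrix of the map $\kappa \mapsto x^\kappa$ at the log level—that is, the transpose of $Y : \mathbb{R}^K \to \mathbb{R}^S$. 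So the question reduces to: is $v$ in the image of the linear map $\mathbb{R}^S \oplus W \to \mathbb{R}^K$ sending $(u,w) \mapsto Y^{\mathrm{T}} u - w$? Equivalently, is $v$ in $\mathrm{im}(Y^{\mathrm{T}}) + W$?

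This is where the deficiency-zero hypothesis does its work, and I expect this to be the main obstacle—identifying the right orthogonal complement and matching it to the deficiency. The plan is to show $\mathrm{im}(Y^{\mathrm{T}}) + W = \mathbb{R}^K$, which is immediate once we check that the orthogonal complement is trivial. Now $(\mathrm{im} Y^{\mathrm{T}})^\perp = \ker Y$, and $W^\perp$ is the span of vectors of the form $\kappa - \kappa'$ where $\kappa,\kappa'$ lie in the same connected component; since every such difference is a sum of differences $t(\tau) - s(\tau) = \partial\tau$ along a path (using weak reversibility to connect the complexes), $W^\perp \subseteq \mathrm{im}\,\partial$, and in fact $W^\perp = \mathrm{im}\,\partial$. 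Hence $(\mathrm{im} Y^{\mathrm{T}} + W)^\perp = \ker Y \cap \mathrm{im}\,\partial = \{0\}$ by deficiency zero. Therefore $\mathrm{im}(Y^{\mathrm{T}}) + W = \mathbb{R}^K$, so $v$ is hit, we solve for $u$, set $x_i = e^{u_i} > 0$, and obtain $x^Y = c \cdot \psi_0$ with $c$ positive and constant on components. Then $H x^Y = H(c\cdot\psi_0) = 0$, so $Y H x^Y = 0$: a complex balanced equilibrium with every species present in positive amount. The loose ends to tidy up are the precise claim that $W^\perp = \mathrm{im}\,\partial$ (both inclusions, with weak reversibility used for $\supseteq$ on each component and for connectedness the other way), the bookkeeping that rescaling $\psi_C$ by a positive constant preserves $H\psi_C = 0$, and the harmless remark that $c$ can be taken positive since we are free to choose $w \in W$ arbitrarily and then exponentiate.
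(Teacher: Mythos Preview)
Your proposal is correct and is essentially the paper's proof: your subspace $W$ of functions constant on connected components is exactly the paper's $\ker\partial^\dagger$ (Lemma~\ref{lemma:decomposition} together with the characterization of $\ker\partial^\dagger$ as conserved quantities), so your decomposition $\mathrm{im}(Y^{\mathrm T}) + W = \mathbb{R}^K$ is precisely Lemma~\ref{lemma:decomposition}, and both arguments then rescale $\psi$ by a positive factor constant on components to land in the image of $x \mapsto x^Y$. One minor remark: the identity $W^\perp = \mathrm{im}\,\partial$ does not actually require weak reversibility---undirected paths suffice since $\mathrm{im}\,\partial$ is a linear subspace---so weak reversibility enters only in producing the positive equilibrium $\psi_0$ via Perron--Frobenius.
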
 

\begin{proof}
Because our reaction network is weakly reversible, Theorem \ref{theorem_equilibria} there exists $\psi \in (0,\infty)^K$ with 
$$ H \psi = 0 $$
This $\psi$ may not be of the form $x^Y$, but we shall adjust $\psi$ so that it becomes of this form, while still remaining a solution of $H \psi = 0 $.   To do this, we need a couple of lemmas:

\begin{lemma} 
\label{lemma:decomposition}
$\mathrm{ker} \partial^\dagger + \mathrm{im} Y^\dagger = \mathbb{R}^K$. 
\end{lemma} 

\begin{proof} 
We need to use a few facts from linear algebra.  If $V$ is a finite-dimensional vector space with inner product, the \href{http://en.wikipedia.org/wiki/Orthogonal_complement}{\bf orthogonal complement} $L^\perp$ of a subspace $L \subseteq V$ consists of vectors that are orthogonal to everything in $L$:
$$  L^\perp = \{ v \in V \; : \; \quad \forall w \in L \quad \langle v, w \rangle = 0 \} $$
We have
$$  (L \cap M)^\perp = L^\perp + M^\perp  $$
where $L$ and $M$ are subspaces of $V$ and $+$ denotes the sum of two subspaces: that is, the smallest subspace containing both.  Also, if $T \colon V \to W$ is a linear map between finite-dimensional vector spaces with inner product, we have
$$   (\mathrm{ker} T)^\perp = \mathrm{im} T^\dagger $$
and
$$  (\mathrm{im} T)^\perp = \mathrm{ker} T^\dagger $$
Now, because our reaction network has deficiency zero, we know that
$$ \mathrm{im} \partial \cap \mathrm{ker} Y = \{ 0 \} $$
Taking the orthogonal complement of both sides, we get
$$ (\mathrm{im} \partial \cap \mathrm{ker} Y)^\perp = \mathbb{R}^K $$
and using the rules we mentioned, we obtain
$$  \mathrm{ker} \partial^\dagger + \mathrm{im} Y^\dagger = \mathbb{R}^K $$
as desired.    
\end{proof} 

Now, given a vector $\phi$ in $\mathbb{R}^K$ or $\mathbb{R}^S$ with all positive components, we can define the logarithm of such a vector, componentwise:
$$    (\ln \phi)_i = \ln (\phi_i)   $$
Similarly, for any vector $\phi$ in either of these spaces, we can define its exponential in a componentwise way:
$$    (\exp \phi)_i = \exp(\phi_i)  $$
These operations are inverse to each other.  Moreover:

\begin{lemma} 
\label{lemma:exponential}
The nonlinear operator\index{nonlinearity!relation to linear operator} 
$$  \begin{array}{ccc} \mathbb{R}^S &\to& \mathbb{R}^K \\  
                           x     &\mapsto&   x^Y \end{array} $$
is related to the linear operator
$$  \begin{array}{ccc} \mathbb{R}^S &\to& \mathbb{R}^K \\  
                           x     &\mapsto&   Y^\dagger x \end{array} $$
by the formula 
$$         x^Y = \exp(Y^\dagger \ln x ) $$
which holds for all $x \in (0,\infty)^S$.  
\end{lemma} 

\begin{proof} 
A straightforward calculation.  By the way, this formula would look a bit nicer if we treated $\ln x$ as a row vector and multiplied it on the right by $Y$: then we would have
$$          x^Y = \exp((\ln x) Y) $$
The problem is that we are following the usual convention of multiplying vectors by matrices on the left, yet writing the matrix on the right in $x^Y$.  Taking the transpose $Y^\dagger$ of the matrix $Y$ serves to compensate for this.   

Now, given our vector $\psi \in (0,\infty)^K$ with $H \psi = 0$, we can take its logarithm and get $\ln \psi \in \mathbb{R}^K$.   Lemma \ref{lemma:decomposition} says that
$$ \mathbb{R}^K = \mathrm{ker} \partial^\dagger + \mathrm{im} Y^\dagger $$
so we can write
$$ \ln \psi =  \alpha + Y^\dagger \beta $$
where $\alpha \in \mathrm{ker} \partial^\dagger$ and $\beta \in \mathbb{R}^S$.  Moreover, we can write
$$ \beta = \ln x $$
for some $x \in (0,\infty)^S$, so that
$$ \ln \psi = \alpha + Y^\dagger (\ln x) $$
Exponentiating both sides componentwise, we get
$$   \psi  =   \exp(\alpha) \; \exp(Y^\dagger (\ln x)) $$
where at right we are taking the componentwise product of vectors.  Thanks to Lemma \ref{lemma:exponential}, we conclude that
$$   \psi = \exp(\alpha) x^Y   $$
as desired.  \end{proof}

So, we have taken $\psi$ and \emph{almost} written it in the form $x^Y$---but not quite!  We can adjust $\psi$ to make it be of this form:
$$   \exp(-\alpha) \psi = x^Y $$
Clearly all the components of $\exp(-\alpha) \psi$ are positive, since the same is true for both $\psi$ and $\exp(-\alpha)$.  So, the only remaining task is to check that 
$$  H(\exp(-\alpha) \psi) = 0 $$
We do this using two lemmas:

\begin{lemma} 
 If $H \psi = 0$ and $\alpha \in \mathrm{ker} \partial^\dagger$, then $H(\exp(-\alpha) \psi) = 0$. 
\end{lemma} 
 
 \begin{proof} 
 It is enough to check that multiplication by $\exp(-\alpha)$ commutes with the Hamiltonian $H$, since then
$$ H(\exp(-\alpha) \psi) = \exp(-\alpha) H \psi = 0 $$
Recall from Section~\ref{sec:22_weighted_graph} that $H$ is the Hamiltonian of a Markov process associated to this `weighted graph':
\[
\xymatrix{
(0,\infty)  & T \ar[l]_<<<<r 
\ar@<0.5ex>[r]^s \ar@<-0.5ex>[r]_t & K 
}
\]
As noted here:

\begin{enumerate}
\item[\cite{BF12}]  John Baez and Brendan Fong, A Noether theorem for Markov processes, {\sl J.\ Math.\ Phys.} {\bf 54}, 013301, 2013.  Also available as \href{http://arxiv.org/abs/1203.2035}{arXiv:1203.2035}. 
\end{enumerate}

\noindent
multiplication by some function on $K$ commutes with $H$ if and only if that function is constant on each connected component of this graph.   Such functions are called {\bf conserved quantities}.  

So, it suffices to show that $\exp(-\alpha)$ is constant on each connected component.   For this, it is enough to show that $\alpha$ itself is constant on each connected component.  But this will follow from the next lemma, since $\alpha \in \mathrm{ker} \partial^\dagger$.   
\end{proof} 

\begin{lemma}
A function $\alpha \in \mathbb{R}^K$ is a conserved quantity iff $\partial^\dagger \alpha = 0 $.   In other words, $\alpha$ is constant on each connected component of the graph $s, t \colon T \to K$ iff $\partial^\dagger \alpha = 0 $.\index{connected component!and conserved quantities} 
\end{lemma} 

\begin{proof} 
Suppose $\partial^\dagger \alpha = 0$, or in other words, $\alpha \in \mathrm{ker} \partial^\dagger$, or in still other words, $\alpha \in (\mathrm{im} \partial)^\perp$.  To show that $\alpha$ is constant on each connected component, it suffices to show that whenever we have two complexes connected by a transition, like this:
$$  \tau \colon \kappa \to \kappa' $$
then $\alpha$ takes the same value at both these complexes:
$$ \alpha_\kappa = \alpha_{\kappa'} $$
To see this, note
$$  \partial \tau = t(\tau) - s(\tau) = \kappa' - \kappa $$
and since $\alpha \in (\mathrm{im} \partial)^\perp$, we conclude
$$  \langle \alpha, \kappa' - \kappa \rangle = 0 $$
But calculating this inner product, we see
$$ \alpha_{\kappa'} - \alpha_{\kappa} = 0  $$
as desired.  

For the converse, we simply turn the argument around: if $\alpha$ is constant on each connected component,\index{connected component!constant} we see $\langle \alpha, \kappa' - \kappa \rangle = 0$ whenever there is a transition $\tau \colon \kappa \to \kappa'$.  It follows that $\langle \alpha, \partial \tau \rangle = 0$ for every transition $\tau$, so $\alpha \in (\mathrm{im} \partial)^\perp $.

And thus concludes the proof of the lemma!   
\end{proof} 
\index{Deficiency Zero Theorem!proof of|)}

And thus concludes the proof of the theorem!    \end{proof}

\index{Deficiency Zero Theorem|)}

Of course there is much more to say, but we'll stop here---except for a few
final tidbits.

\newpage

\part*{FURTHER DIRECTIONS}

There's a lot we didn't get off our chests yet.  And we thought: perhaps you want to say something too? Because we're so nice, we're going to tell you about a few special tidbits we've been saving.  We think they represent good starting points for further research.  The entire subject of stochastic mechanics is really just getting started!  So please let us know if you take it further in any direction.

In Chapter \ref{sec:24} we present some further thoughts on Noether's theorem
relating symmetries and conserved quantities.  In Chapter \ref{sec:10} we saw that 
Noether's theorem looks a bit more complicated in stochastic mechanics than in
quantum mechanics.   But in Chapters \ref{sec:15} and \ref{sec:16} we saw that
Dirichlet operators can serve as Hamiltonians for \emph{both} quantum and stochastic mechanics.  This suggests that the stochastic version of Noether's theorem may
become more beautiful when the Hamiltonian is a Dirichlet operator.  And indeed
this is true!  Our presentation centers around a result of this sort, proved with the
help of Ville Bergholm. \index{Bergholm, Ville}  We find it tantalizing that this 
result looks so much like the quantum version of Noether's theorem, yet its proof 
is different.  Hopefully this will shed more light on the differences and connections between stochastic and quantum mechanics.  We conclude Chapter \ref{sec:24} 
with a series of problems for further study.

Chapter \ref{sec:25} discusses various connections between Petri nets and computation.  Because of their versatility and simplicity, Petri nets have long been used as models of computation.  In Sections \ref{sec:25_1} and \ref{sec:25_2}
we discuss the `reachability problem': the question of which collections of molecules can turn into which other collections, given a fixed set of chemical reactions.  In Section \ref{sec:25_2} we explain how the reachability problem for Petri nets is really a problem in category theory.   More recently, molecular biologists have used \emph{stochastic} Petri nets as a 
model of `chemical computers'.  Section \ref{sec:25_3}, based on a talk by David Soloveichik \cite{Sol14}, reviews a bit of this recent work.  Finally, Section \ref{sec:25_5} lists some free software for working with Petri nets and
reaction networks.  We thank Jim Stuttard \index{Stuttard, Jim} for making
this possible.
 
\newpage 

\section[Noether's theorem for Dirichlet operators]{Noether's theorem for Dirichlet operators}
\label{sec:24}

In 1915, Emmy Noether showed that the symmetries of a physical system give conserved quantities, and vice versa.  Her result has become a staple of modern physics, so it is often known as \emph{Noether’s theorem}, though 
in fact she proved more than one important theorem---even on this subject!

\begin{center}
 \includegraphics[width=.3\textwidth]{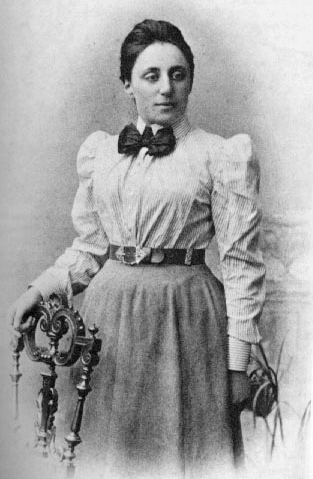}
\end{center}

Though she introduced it in classical mechanics, this theorem and its generalizations
have found particularly wide use in quantum mechanics.  In Chapters \ref{sec:10} and \ref{sec:12} we introduced a version for stochastic mechanics and compared it to
the quantum version.  The stochastic version was different than what would be 
expected from the quantum one, so it is interesting to try to figure out why.

Together with Ville Bergholm,\index{Bergholm, Ville} we've been trying to get to the bottom of these differences.  This chapter is based on an article by Bergholm\index{Bergholm, Ville} on the Azimuth blog\index{Azimuth!blog}. We 
thank him\index{Bergholm, Ville} for writing this article and working with us on the 
result presented here: a Noether's theorem applicable to stochastic mechanics when
the Hamiltonian is a Dirichlet operator.

As you may recall from Chapters \ref{sec:15} and \ref{sec:16}, Dirichlet operators 
lie in the intersection of stochastic and quantum mechanics.  That is, they can be used to generate both Markov semigroups and 1-parameter unitary groups.
So, they're a subclass of the infinitesimal stochastic operators considered in the stochastic version of Noether's theorem.

The extra structure of Dirichlet operators---compared with the wider class of infinitesimal stochastic operators---provides a handle for us to dig a little deeper 
into understanding the intersection of these two theories.  We hope the problems at
the end of this chapter hint at some directions for future work.  

Before we get into the details, let’s recall how conservation laws work in quantum mechanics, and then contrast this with what was shown for stochastic systems in Chapter \ref{sec:10}. (For a more detailed comparison between 
the stochastic and quantum versions of Noether's theorem, see Chapter \ref{sec:12}.)

\subsection{The quantum case} 
\label{sec:24_1}
\index{Noether's theorem!quantum vs stochastic|(}

Let's review the basics of quantum theory, to ensure that everything we need is 
right here in front of us.   In a quantum system with $n$ states, a state
$\psi$ is a unit vector in $\mathbb{C}^n$.  The unitary time evolution of a state $\psi(t) \in \mathbb{C}^n$ is generated by a self-adjoint $n \times n$ matrix $H$
called the Hamiltonian.  In other words, $\psi(t)$ satisfies Schr\"{o}dinger's equation
$$ i \hbar \frac{d}{d t} \psi(t) = H \psi(t). $$
Thus, the state of a system starting off at time zero in the state $\psi_0$ and 
evolving for a time $t$ is given by
$$ \psi(t) = e^{-i t H} \psi_0. $$
The observable properties of a quantum system are associated with self-adjoint operators.  In the state $\psi \in \mathbb{C}^n$, the expected value of the observable associated to a self-adjoint operator $O$ is
$$ \langle \psi , O \psi \rangle $$
This expected value is constant in time for all states if and only if $O$ commutes with the Hamiltonian $H$:
$$ [O, H] = 0 \quad \iff \quad \frac{d}{d t} \langle O \rangle_{\psi(t)} = 0 \quad \forall \psi_0 , \forall t $$
In this case we say $O$ is a conserved quantity. The equivalence of the two
conditions above is exactly the quantum version of Noether’s theorem, as you'll recall from Theorem \ref{theorem:quantum-noether} in Chapter \ref{sec:12}.  
\index{Noether's theorem!quantum version}

\subsection{The stochastic case} 
\label{sec:24_2}

In stochastic mechanics, the story changes. Now a state $\psi$ is a probability distribution: a vector with $n$ nonnegative components that sum to 1. Schr\"{o}dinger's equation gets replaced by the master equation:
$$ \frac{d}{d t} \psi(t) = H \psi(t) $$
If we start with a probability distribution $\psi_0$ at time zero and evolve it according to this equation, at any later time we have
$$ \psi(t) = e^{t H} \psi_0$$
We want this always to be a probability distribution. To ensure this, the Hamiltonian $H$ should be infinitesimal stochastic: that is, a real-valued $n \times n$ matrix where the off-diagonal entries are nonnegative and the entries of each column sum to zero. It no longer needs to be self-adjoint!

When $H$ is infinitesimal stochastic, the operators $e^{t H}$ map the set of probability distributions to itself whenever $t \ge 0$, and we call this family of operators a continuous-time Markov process, or more precisely a Markov semigroup.

In stochastic mechanics, we say an observable $O$ is a diagonal real $n \times n$
matrix.  This lets us make sense of the commutator $[O,H]$, which is crucial in the stochastic version of Noether's theorem.  However, we can also create a vector $\widehat{O} \in \mathbb{R}^n$ whose components are the diagonal entries of
$O$:
$$  \widehat{O}_i = O_{ii} $$
This lets us write the expected value of the observable $I$ in the state $\psi$ as
$$  \langle \widehat{O}, \psi \rangle $$
We saw this formula long ago, at the very end of Section \ref{sec:11_observables}.  Soon we will finally use it to do something interesting!

Here is the stochastic version of Noether's theorem.  We will state it in a slightly different way than before, but it is really just Theorem \ref{theorem:stochastic-noether}:

\begin{theorem}[Noether’s Theorem for Markov Processes]
\label{theorem:stochastic-noether-2} \index{Noether's theorem!stochastic
version}
Suppose $H$ is an infinitesimal stochastic matrix and $O$ is an observable 
(a real diagonal matrix). Then
$$ [O,H] =0 $$
if and only if 
$$ \frac{d}{d t} \langle \hat{O}, \psi(t) \rangle = 0 $$
 and
$$ \frac{d}{d t}\langle \widehat{O^2}, \psi(t) \rangle = 0 $$
for all $t \ge 0$ and all $\psi(t)$ obeying the master equation.   
\end{theorem}
\noindent Here $\widehat{O^2}$ is the vector whose components are the diagonal
entries of $O^2$.

So, just as in quantum mechanics, whenever $[O,H]=0$ the expected value of $O$ will 
be conserved:
$$ \frac{d}{d t} \langle O, \psi(t) \rangle = 0 $$
for any $\psi_0$ and all $t \ge 0$. However, in Chapter \ref{sec:10} we already saw 
that---unlike in quantum mechanics---you need more than just the expectation value of 
the observable $O$ to be constant to obtain the equation $[O,H]=0$. You really need 
both
$$\frac{d}{d t} \langle \hat{O}, \psi(t) \rangle = 0$$
together with
$$\frac{d}{d t} \langle \widehat{O^2}, \psi(t)\rangle = 0$$
for all initial states $\psi_0$ to be sure that $[O,H]=0$.  So, symmetries and conserved quantities have a rather different relationship than they do in quantum theory.
\index{Noether's theorem!quantum vs stochastic|)}

\subsection{A Noether theorem for Dirichlet operators} 
\label{sec:24_3}
\index{Dirichlet operator!and Noether's theorem|(}\index{Noether's theorem!for Dirichlet operators|(} 

Remember in Chapter \ref{sec:15} where we studied when the infinitesimal generator of our Markov semigroup is also self-adjoint? We're going to think about that scenario here too.  In other words, what if $H$ is both an infinitesimal stochastic matrix but also its own transpose.  Then it’s called a Dirichlet operator... and in this case, we get a stochastic version of Noether’s theorem that more closely resembles the usual quantum one.

\begin{theorem}[Noether’s Theorem for Dirichlet Operators]\index{Noether's theorem!for Dirichlet operators} Suppose $H$ is a Dirichlet operator and $O$ is an observable (a real diagonal matrix).  Then
$$ [O, H] = 0 \quad \iff \quad \frac{d}{d t} \langle \widehat{O} , \psi(t) \rangle 
= 0 $$ 
for all $t \ge 0$ and all $\psi(t)$ obeying the master equation.   
\end{theorem} 

\begin{proof}
Reminiscent of the general proof appearing in Chapter \ref{sec:10}, the $\Rightarrow$ direction is straightforward, and it follows from Theorem \ref{theorem:stochastic-noether-2}. The point is to show the $\Leftarrow$ direction. Since $H$ is self-adjoint, we're allowed to use a spectral decomposition:
$$ H = \sum_k E_k p_k$$
where we choose an orthonormal basis of eigenvectors for $H$, say $\phi_k$,
with 
$$  H \phi_k = E_k \phi_k $$
and let $p_k$ be the projection operator onto the eigenvector $\phi_k$:
$$  p_k \psi = \langle \phi_k , \psi \rangle \, \phi_k $$
For any stochastic state $\psi_0$ let 
$$\psi(t) = e^{tH} \psi_0 $$
be the result of evolving it according to the master equation.  Then
$$
\begin{array}{ll} 
\displaystyle{ \frac{d}{d t} \langle \hat{O} , \psi(t) \rangle } = 0  
& \forall  \psi_0, \forall t \ge 0 \\  \\
 \iff \quad \langle \hat{O} , H e^{t H} \psi_0 \rangle = 0  
& \forall  \psi_0, \forall t \ge 0 \\  \\
\iff \quad \langle H e^{t H} \hat{O} , \psi_0 \rangle = 0  
& \forall  \psi_0, \forall t \ge 0 \\  \\
 \iff \quad  H e^{t H} \hat{O} = 0 
& \forall t \ge 0 \\  \\
 \iff \quad \displaystyle{ \sum_k  E_k e^{t E_k} \langle \phi_k, \hat{O} \rangle \phi_k = 0} 
& \forall t \ge 0   \\  \\
\iff \quad  E_k e^{t E_k} \langle \phi_k, \hat{O} \rangle = 0 
& \forall  k, \forall t \ge 0 \\ \\
\iff \quad \hat{O} \in \mathrm{Span}\{\phi_k \; : \; E_k = 0\} = \ker  H 
\end{array} 
$$
where the next to last equivalence is due to the vectors $\phi_k$ being linearly 
independent. 

For any infinitesimal stochastic operator $H$ the corresponding weighted graph consists of $m$ connected components for some number $m$.  Thus, we can reorder (permute) the vertices of this graph such that $H$ becomes block-diagonal with $m$ blocks.  We saw in Theorem \ref{theorem_equilibria} that the kernel of $H$ is spanned by $m$ eigenvectors $\phi_k$, one for each block.  Since $H$ is also symmetric, the components of each such $\phi_k$ can be chosen to be ones within the block and zeros outside it. Consequently
$$ \hat{O} \in \ker  H $$
implies that we can choose the basis of eigenvectors of $O$ to be the vectors 
$\phi_k$, which implies
$$ [O, H] = 0 \qedhere $$
\end{proof}

In summary, by restricting ourselves to the intersection of quantum and stochastic generators, we have found a version of Noether’s theorem for stochastic
mechanics that looks formally just like the quantum version! 

We conclude with a few problems about stochastic processes.

\subsection{Problems} 

In this section, we've considered a stochastic variant of Noether’s theorem for Dirchlet operators.  Dirchlet operators turn out to be a subclass of a wider class of operators, which generate so-called `doubly stochastic processes'.  We introduce these in the 
next four problems.  

We then leave the class of doubly stochastic processes and consider general stochastic process and the equations of motion for possibly time-dependent observables.  From this we arrive at a stochastic variant of {\bf Ehrenfest's theorem} from quantum mechanics.  Readers are asked to prove this stochastic variant by completing two final problems.   

First, recall from Section \ref{sec:22_weighted_graph} how any weighted graph
gives an infinitesimal stochastic operator called its Hamiltonian, $H$.  In Problem \ref{prob:39} you saw that conversely, any infinitesimal stochastic operator comes from a weighted graph in this way. \index{weighted graph} \index{graph theory!weighted graph}

\begin{problem}\label{prob:42}
Fix a weighted graph and let $H$ be its Hamiltonian.\index{Hamiltonian!of weighted graph} Define the {\bf indegree} of a vertex to be the sum of the rate constants of
edges having that vertex as their target.  Define the {\bf outdegree} to be the sum of the rate constants of edges having that vertex as their source.  \index{graph theory!vertex!indegree and outdegree of} \index{indegree} \index{outdegree}
 
Show that if $H$ is a Dirichlet operator, the indegree and outdegree of any vertex of our graph must be equal.  Weighted graphs with this property are called {\bf balanced}.\index{graph theory!balanced graph}\index{balanced graph}    
\end{problem} 

\begin{problem}\label{prob:43}
\index{stochastic operator!doubly stochastic}\index{doubly stochastic operator}\index{operator!doubly stochastic} An $n \times n$ matrix $U$ is
a {\bf doubly stochastic operator} if it is stochastic and its transpose is also 
stochastic.  In other words, each column and each row of $U$ sums to 1:
$$ \sum_i U_{i j} = \sum_j U_{i j} = 1 $$
and all its matrix entries are nonnegative. 

Prove that $U$ is doubly stochastic if and only if is stochastic and $U \psi = \psi$
where $\psi$ is the `flat' probability distribution, with $ \psi_i = \frac{1}{n}$
for all $i$.
\end{problem}

\begin{problem}\label{prob:44} 
\index{semigroup!doubly stochastic} \index{operator!infinitesimal doubly 
stochastic}  \index{doubly stochastic semigroup} \index{infinitesimal doubly
stochastic operator}
Suppose that $U(t) = e^{t H}$ is a {\bf doubly stochastic semigroup},
meaning that $U(t)$ is doubly stochastic for all $t \ge 0$.
Prove that the Hamiltonian $H$ is {\bf infinitesimal doubly stochastic}, 
meaning that
$$ \sum_i H_{i j} = \sum_j H_{i j} = 0 $$
and all the off-diagonal entries of $H$ are nonnegative.  Also prove the converse:
if $H$ is infinitesimal doubly stochastic, $e^{t H}$ is a doubly stochastic 
semigroup.
\end{problem} 

\begin{problem}\label{prob:45} 
Prove that any infinitesimal doubly stochastic operator $H$ is the Hamiltonian of a balanced graph.\index{stochastic operator!doubly stochastic}\index{doubly stochastic operator}\index{semigroup!doubly stochastic}\index{graph theory!balanced graph}\index{balanced graph} \index{infinitesimal doubly stochastic operator}
\end{problem}  

\begin{problem}\label{prob:46}
Show that every Dirichlet operator is doubly infinitesimal stochastic.
\end{problem}

\begin{problem}\label{prob:47}
Find a doubly infinitesimal stochastic operator that is not a Dirichlet operator.
\end{problem}

\begin{problem}\label{prob:48}
Find necessary and sufficient conditions for the Hamiltonian of a weighted
graph to be a Dirichlet operator.   \index{Hamiltonian!of weighted graph}
\end{problem}

\index{Ehrenfest theorem|(}
The next two problems concern the \href{http://en.wikipedia.org/wiki/Ehrenfest_theorem}{Ehrenfest theorem} for time-dependent observables.   We ask the reader to first 
prove the usual quantum version, and then a stochastic analogue:

\begin{problem}\label{prob:46}
Let $O(t)$ be a possibly time-dependent quantum observable, and
let $\psi(t)$ be a quantum state evolving according to Schr\"odinger's equation
with some self-adjoint Hamiltonian $H$.  Prove the {\bf Ehrenfest theorem}:  \index{quantum mechanics!Ehrenfest theorem}
$$ \frac{d}{d t}\langle O(t) \rangle_{\psi(t)} = -i\left\langle 
[O(t), H] \right\rangle_{\psi(t)} + \left\langle \frac{\partial O(t)}{\partial t}
 \right\rangle_{\psi(t)} $$
Here we are using the abbreviation 
$$\langle A \rangle_{\psi} = \langle \psi, A \psi \rangle $$
for the expected value of any operator $A$ in the quantum state $\psi$.
\end{problem}

\begin{problem}\label{prob:47}
 Let $O(t)$ be a possibly time-dependent stochastic observable, and let $\psi(t)$ be a stochastuc state evolving according to the master equation. Show that
$$ \frac{d}{d t}\langle O(t)\rangle_{\psi(t)} = \left\langle [O(t), H] \right\rangle_{\psi(t)} + \left\langle \frac{\partial O(t)}{\partial t} \right\rangle_{\psi(t)} $$
Here we are using the abbreviation 
$$\langle A \rangle_{\psi} = \langle \hat{A}, \psi \rangle $$
for the expected value of any operator $A$ in the stochastic state $\psi$.
\end{problem} 
\index{Ehrenfest theorem|)}

\index{Dirichlet operator!and Noether's theorem|)}
\index{Noether's theorem!for Dirichlet operators|)} 

\newpage  


\section{Computation and Petri nets}
\label{sec:25}
\index{computer science|(}

As we've seen, a stochastic Petri net can be used to describe a bunch of chemical reactions with certain reaction rates.  We could try to use these reactions to build a `chemical computer'.\index{chemical computer}   But \textit{how powerful can such a computer be?}

People know quite a lot about the answer to this question.  But before people got interested in \textit{stochastic} Petri nets, computer scientists spent quite some time studying plain old Petri nets, which don't include the information about reaction rates.  They used these as simple models of computation.  And since computer scientists like to know which questions are \href{http://en.wikipedia.org/wiki/Decidability_(logic)}{decidable} by means of an algorithm and which aren't, they proved some interesting theorems about decidability for Petri nets.\index{decidability|(}\index{computer science!decidability|(} 

Let's talk about `reachability':\index{reachability} the question of which collections of molecules can turn into which other collections, given a fixed set of chemical reactions.   For example, suppose you have these chemical reactions:
\[ \text{C} + \text{O}_2 \longrightarrow \text{CO}_2 \]
\[ \text{CO}_2 + \text{NaOH} \longrightarrow \text{NaHCO}_3 \]
\[ \text{NaHCO}_3 + \text{HCl} \longrightarrow \text{H}_2 \text{O} + \text{NaCl} + \text{CO}_2  \]
Can you use these to turn
\[ \text{C} + \text{O}_2 + \text{NaOH} + \text{HCl}  \]
into 
\[ \text{CO}_2 + \text{H}_2\text{O} + \text{NaCl}  \text{?} \]
It's not hard to settle this particular question---we'll do it soon.  But settling \textit{all possible} questions of this type turns out to be very hard. 

\subsection{The reachability problem}
\label{sec:25_1}
\index{reachability|(}  \index{Petri net!reachability|(}  
\index{reachability!problem|(} 
\index{computer science!reachability problem|(}

Remember: 
\begin{definition} A {\bf Petri net}\index{Petri net!definition of} consists of a set $S$ of {\bf species} and a set $T$ of {\bf transitions}, together with a function 
\index{species} \index{transition}
\[ i \colon S \times T \to \mathbb{N} \]
saying how many copies of each state shows up as {\bf input} for each transition, and a function   \index{input}
\[ o\ \colon S \times T \to \mathbb{N} \]
saying how many times it shows up as {\bf output}.   \index{output}
\end{definition}
\noindent
Let's assume both $S$ and $T$ are finite.

We hope you remember that we draw the species as yellow circles and the transitions as aqua boxes, so the chemical reactions mentioned before give this Petri net:

\begin{center}
 \includegraphics[width=80mm]{chemistryNetBasicA.png}
\end{center}

\noindent A `complex' amounts to a way of putting dots in the yellow circles.   In chemistry this says how many molecules we have of each kind.  Here's an example:

\begin{center}
 \includegraphics[width=80mm]{chemistryNetDot1A.png}
\end{center}

This complex happens to have just zero or one dot in each circle, but that's not required: we could have any number of dots in each circle.  So, mathematically, a complex is a finite linear combination of species, with natural numbers as coefficients: an element of $\mathbb{N}^S.$  \index{complex} In this particular example it's
\[ \text{C} + \text{O}_2 + \text{NaOH} + \text{HCl} \]

Given two complexes, we say the second is {\bf reachable}\index{reachability!definition of} from the first if, loosely speaking, we can get from the first from the second by a finite sequence of transitions.  In chemistry, this means that we are using the reactions described by our Petri net to turn one collection of molecules into another, with no byproducts or leftover raw ingredients.  

For example, earlier we asked if you can get from the complex we just mentioned to this one:
\[ \text{CO}_2 + \text{H}_2\text{O} + \text{NaCl}  \]
And the answer is yes!  We can do it with this sequence of transitions:

\begin{center}
 \includegraphics[width=80mm]{chemistryNetDot1A.png}
\end{center}

\begin{center}
 \includegraphics[width=80mm]{chemistryNetDot2A.png}
\end{center}

\begin{center}
 \includegraphics[width=80mm]{chemistryNetDot3.png}
\end{center}

\begin{center}
 \includegraphics[width=80mm]{chemistryNetDot4.png}
\end{center}

Now, the {\bf reachability problem}\index{reachability!problem!definition of} asks: given a Petri net and two complexes, is one reachable from the other?  

If the answer is `yes', of course you can show that by an exhaustive search of all possibilities.  But if the answer is `no', how can you be sure?  It is not obvious, in general.  Back in the 1970's, computer scientists felt this problem should be decidable by some algorithm... but they had a lot of trouble finding such an algorithm.  

In 1976, Roger J.~Lipton showed that if such an algorithm existed, it would need to take at least an exponential amount of memory space and consequently an exponential amount of time to run:
\begin{itemize}
\item [\cite{Lip76}] Roger J.~Lipton, \href{http://www.cs.yale.edu/publications/techreports/tr63.pdf}{The reachability problem requires exponential space}, Technical Report 62, Yale University, 1976.
\end{itemize}
This means that most computer scientists would consider any algorithm to solve the reachability problem `infeasible', since they prefer \href{http://en.wikipedia.org/wiki/Time_complexity\#Polynomial_time}{polynomial time} algorithms.   

On the bright side, it means that Petri nets might be fairly powerful when viewed as computers themselves! After all, for a \href{http://en.wikipedia.org/wiki/Universal_Turing_machine}{universal Turing machine}, the analogue of the reachability problem is undecidable.  So if the reachability problem for Petri nets were decidable, they couldn't serve as universal computers.  But if it were decidable but hard, Petri nets might be fairly powerful---though still not universal---computers.

At an important computer science conference in 1977, two researchers presented a proof that the reachability problem was decidable:
\begin{itemize}
\item [\cite{ST77}] S.\ Sacerdote and R.\ Tenney, The decidability of the reachability problem for vector addition systems, {\sl Conference Record of the Ninth Annual ACM Symposium on Theory of Computing, 2-4 May 1977, Boulder, Colorado, USA}, ACM, 1977, pp.\ 61--76.
\end{itemize}
However, their proof turned out to be flawed!  You can read the story here:
\begin{itemize}
\item [\cite{Pet81}] James L.\ Peterson, \textit{Petri Net Theory and the Modeling of Systems}, Prentice--Hall, New Jersey, 1981.
\end{itemize}
This is a very nice introduction to early work on Petri nets and decidability.   Peterson had an interesting idea, too: he thought there might be a connection between
Petri nets and something called `Presburger arithmetic'.  He gave some evidence, and suggested using this to settle the decidability of the reachability problem.  

Let's explain why this is so interesting. \href{http://en.wikipedia.org/wiki/Presburger_arithmetic}{Presburger arithmetic} is a simple set of axioms for the arithmetic of natural numbers, much weaker than \href{http://en.wikipedia.org/wiki/Peano_arithmetic}{Peano arithmetic}.  
Peano arithmetic is a set of axioms involving $0, 1, \times$ and $+$ and the
principle of mathematical induction.  Presburger arithmetic is similar, but it doesn't mention multiplication.  The only axioms are these:
\begin{itemize}
\item $x + 0 = x$.
\item $x + (y+1) = (x + y) + 1$.
\item There is no $x$ such that $0 = x + 1$.
\item If $x+1 = y+1$ then $x = y$.
\item The principle of mathematical induction.  This is an `axiom schema': for any formula $P$ written in the language of Presburger arithmetic, we have an axiom
saying that $P(0)$ and $\forall x (P(x) \Rightarrow P(x+1))$ imply $\forall x P(x)$.
\end{itemize}
Unlike Peano arithmetic, Presburger arithmetic is {\bf complete}: \index{computer science!completeness} for every statement, either it or its negation is provable.  And unlike Peano arithmetic, Presburger arithmetic is {\bf decidable}: \index{computer science!decidability} you can write an algorithm that decides for any statement in Presburger arithmetic whether it or
its negation is provable.

However, any such algorithm must be very slow!   In 1974, Fischer and Rabin showed that any decision algorithm for Presburger arithmetic has a worst-case runtime of at least 
\[  2^{2^{p n}} \]
for some constant $p$, where $n$ is the length of the statement.   So we say the complexity is at least {\bf doubly exponential}.  That's much worse than exponential!    On the other hand, an algorithm with a worst-case runtime no more than \textit{triply} exponential was found by Oppen in 1978:
\begin{itemize}
\item [\cite{FR74}] M.\ J.\ Fischer and Michael O.\ Rabin, Super-exponential complexity of Presburger arithmetic, {\sl Proceedings of the SIAM-AMS Symposium in Applied Mathematics} {\bf 7} (1974), 27--41.
\item [\cite{Opp78}] Derek C.\ Oppen, A $2^{2^{2^{pn}}}$ upper bound on the complexity of Presburger arithmetic, \textsl{J.\ Comput.\ Syst.\ Sci.\ }{\bf 16} (1978), 323--332.
\end{itemize}

How is all this connected to the reachability problem for Petri nets?  First, provability is a lot like reachability, since in a proof you're trying to reach the conclusion starting from the assumptions using certain rules.  Second, like Presburger arithmetic, Petri nets are all about addition, since they consists of transitions going between linear combinations like this:
\[ 6 \text{CO}_2 + 6 \text{H}_2 \longrightarrow \text{C}_6\text{H}_{12}\text{O}_6 + 6 \textrm{O}_2 \]
That's why the old literature calls Petri nets {\bf vector addition systems}\index{vector addition system}.\index{vector addition system}\index{vector addition system|seealso{Petri net}}\index{Petri net!vector addition system}  And third, the difficulty of deciding provability in Presburger arithmetic smells a bit like the difficulty of deciding reachability in Petri nets.

So, it is interesting to know what happened after Peterson wrote his book.  For starters, in 1981, the very year Peterson's book came out, Ernst Mayr showed that the reachability problem for Petri nets \textit{is} decidable:
\begin{itemize}
\item [\cite{May81}] Ernst Mayr, Persistence of vector replacement systems is decidable, {\sl Acta Informatica} {\bf 15} (1981), 309--318.
\end{itemize}
As you can see from the title, Mayr actually proved some other property was decidable.  However, it follows that reachability is decidable, and Mayr pointed this out in his paper.  In fact the decidability of reachability for Petri nets is equivalent to lots of other interesting questions.  You can see a bunch here:
\begin{itemize}
\item [\cite{EN94}] Javier Esparza and Mogens Nielsen, \href{http://citeseerx.ist.psu.edu/viewdoc/download;jsessionid=5EBA38B02179E955EB84DCA21345710C?doi=10.1.1.2.3965\&rep=rep1\&type=pdf}{Decidability issues for Petri nets---a survey}, \textit{Bulletin of the European Association for Theoretical Computer Science} {\bf 52} (1994), 245--262.
\end{itemize}
\index{decidability|)}\index{computer science!decidability|)} 

\index{primitive recursive function|(} \index{computer science!primitive recursive
function|(}
Though Mayr found an algorithm to decide the reachability problem, this algorithm is complicated.  Worse, it seems to take a hugely long time to run!  The best known upper bound on its runtime is a function that's not even `primitive recursive'.  Here is the basic idea.  We can define multiplication by iterating addition:
\[ n \times m = n + n + n + \cdots + n \]
where add $n$ to itself $m$ times.  Then we can define exponentiation by iterating multiplication:
\[ n \uparrow m = n \times n \times n \times \cdots \times n \]
where we multiply $n$ by itself $m$ times.  Here we're using Knuth's \href{http://en.wikipedia.org/wiki/Knuth's_up-arrow_notation}{up-arrow notation}.  Then we can define \href{http://en.wikipedia.org/wiki/Tetration}{tetration} by iterating exponentiation:
\[  n \uparrow^2 m = n \uparrow (n \uparrow (n \uparrow \cdots \uparrow n))) \]
Then we can define an operation $\uparrow^3$ by iterating tetration, and so on.  All these functions are said to be `\href{http://en.wikipedia.org/wiki/Primitive_recursive_function}{primitive recursive}': very roughly, this means that each one can be computed from a previously defined one using a program with an extra loop.  But the $n$th {\bf Ackermann number} is not a primitive function of $n$; it's defined to be
\[ n \uparrow^n n\]
The difference is the $\uparrow^n$: now the number of nested loops in our program
must itself grow with $n$.    \index{Ackermann numbers}

The Ackermann numbers grow at an \textit{insanely} rapid rate: they eventually surpass any primitive recursive function!  Here are the first three:
\[ \begin{array}{ccl} 1 \uparrow 1 &=& 1 \\
 2 \uparrow^2 2 &=& 4 \\
 3 \uparrow^3 3 &=& 3^{3^{3^{.^{.^{.}}}}} 
\end{array}\]
where we have a stack of $3^{3^3}$ threes---or in other words, $3^{7625597484987}$ threes!  When we get to $4 \uparrow^4 4,$ our minds boggle.  We wish it didn't, but it does.  
\index{primitive recursive function|)} \index{computer science!primitive recursive
function|)}

In 1998 someone claimed to have a faster algorithm for deciding the reachability problem:
\begin{itemize}
\item [\cite{Bou98}] Zakaria Bouziane, \href{http://hal.inria.fr/inria-00073286/PDF/RR-3404.pdf}{A primitive recursive algorithm for the general Petri net reachability problem}, in \textit{39th Annual Symposium on Foundations of Computer Science}, IEEE, 1998, pp.\ 130--136.
\end{itemize}
\noindent
Unfortunately, it seems that Bouzaine made a mistake:
\begin{itemize}
\item [\cite{Jan08}] Petr Jan\v car, Bouziane’s transformation of the Petri net reachability problem and incorrectness of the related algorithm, \textsl{Information and Computation}, {\bf 206} (2008), 1259--1263.
\end{itemize}
\noindent
So at present, if we list a bunch of chemicals and reactions involving these chemicals, you can decide when some combination of these chemicals can turn into another combination.  But it might take a very long time to decide this.  

What about the connection to Presburger arithmetic?  That is explained here:

\begin{itemize}
\item [\cite{Ler08}] J\'er\^ome Leroux, \href{http://hal.inria.fr/docs/00/31/93/93/PDF/main.pdf}{The general vector addition system reachability problem by Presburger inductive separators}, 2008.
\end{itemize}

If we have a Petri net and one complex \emph{is} reachable from another, we can eventually discover this by a brute-force search.  So, the hard 
part of Mayr's algorithm concerns the case when one complex is \emph{not}
reachable from another.  In this case, Mayr's algorithm will eventually prove
it is impossible---thus deciding the reachability problem. 

What Leroux did was
find another kind of proof, using Presburger arithmetic.  Suppose we have two 
complexes $\kappa, \kappa' \in \mathbb{N}^S$ such that $\kappa'$ is not reachable
from $\kappa$.  Then Leroux's algorithm constructs sets $X, X' \subseteq \mathbb{N}^S$, definable using Presburger arithmetic, such that $\kappa \in X$, $\kappa \in X'$, and it is \emph{provable using Presburger arithmetic}
that no complex in $X'$ is reachable from one in $X$.   

In 2015, Leroux and Schmitz found the first explicit upper bound on the runtime of
an algorithm to decide Petri net reachability.  This bound is not primitive recursive, 
and it grows much faster than the Ackermann numbers; it's a `cubic Ackermann function'.

\begin{itemize}
\item [\cite{LS15}]
J\'er\^ome Leroux and Sylvain Schmitz, Demystifying reachability in vector addition systems,
in \textsl{LICS '15: 30th Annual ACM/IEEE Symposium on Logic in Computer Science}, IEEE, 2015, pp.\ 56--67.  Also available as \href{https://arxiv.org/abs/1503.00745}{arXiv:1503.00745}.
\end{itemize}

\noindent
Finally, in 2018 a team of authors including Leroux found a \emph{lower} bound on the runtime for \emph{any} algorithm to decide Petri net reachability.   This lower bound is also not primitive recursive:

\begin{itemize}
\item[\cite{CLLLM18}]
Wojciech Czerwinski, Slawomir Lasota, Ranko Lazic, J\'er\^ome Leroux and Filip Mazowiecki,
The reachability problem for Petri nets is not elementary.  Available as 
\href{https://arxiv.org/abs/1809.07115}{arXiv:1809.07115}.
\end{itemize}

\noindent
In short, problems about what chemical reactions can do lead naturally to
quite profound issues in computation, which are even connected to the foundations of arithmetic!

\index{Presburger arithmetic|)}

\subsection{Symmetric monoidal categories}
\label{sec:25_2}
\index{category theory|(}

Next, let us explain yet another reason why the reachability problem is so fascinating.  This little section is only for people who know about 
\href{http://en.wikipedia.org/wiki/Symmetric\_monoidal\_category}{symmetric monoidal categories}. 

As we mentioned long ago, a Petri net is actually nothing but a presentation of a symmetric monoidal category that is freely generated by some set $S$ of objects and some set $T$ of morphisms going between tensor products of those objects.  The objects in $S$ are our \emph{species}.  The tensor products of these objects are our \emph{complexes}.  The morphisms in $T$ are our \emph{transitions} or \emph{reactions}.  The details are explained here:

\begin{itemize}
\item [\cite{MM90}] Jos\'e Meseguer and Ugo Montanari, Petri nets are monoids, \textsl{Information and Computation} \textbf{88} (1990), 105--155.
\item[\cite{BMMS01}] Roberto Bruni, Jos\'e Meseguer, Ugo Montanari and Vladimiro
Sassone,
\href{http://eprints.soton.ac.uk/264742/1/prenetsIandCOff.pdf}{Functorial models
for Petri nets}, \textsl{Information and Computation} \textbf{170} (2001), 207--236.
\end{itemize}

In chemistry we write the tensor product as $+$, but category theorists usually write it as $\otimes$.  Then the reachability problem consists of questions like this:\index{tensor product}\index{category theory!tensor product}  

\begin{problem}\label{prob:47}
Suppose we have a symmetric monoidal category freely generated by objects $A, B, C, D, E, F, G, H$ and morphisms
$$\begin{array}{rrcl}
f \colon &A \otimes B &\to& C \\ 
g \colon &C \otimes D &\to& E \\
h \colon &E \otimes F &\to& C \otimes G \otimes H \\
\end{array}$$
Is there a morphism from $A \otimes B \otimes D \otimes F$ to $C \otimes G \otimes H$?  
\end{problem}

Problems of this sort are reminiscent of the \href{http://en.wikipedia.org/wiki/Word_problem_for_groups}{word problem for groups}\index{group theory!word problem} and other problems where we are given a presentation of an algebraic structure and have to decide if two elements are equal... but now, instead of asking whether two elements are equal we are asking if there is a morphism from one object to another.  So, it is fascinating that this problem is decidable---unlike the word problem for groups---but still very hard to decide, in general.

Just in case you want to see a more formal statement, let's finish off by giving you that:
\vskip 1em

\noindent{\bf Reachability Problem.}  Given a symmetric monoidal category $C$ freely generated by a finite set of objects and a finite set of morphisms between tensor products of these objects, and given two objects $x,y \in C,$ is there a morphism $f \colon x \to y$?\index{reachability!problem}   

\begin{theorem}[{\bf Czerwinski, Lasota, Lazic, Leroux, Mazowiecki, Schmitz}]  There is an algorithm that decides the reachability problem.  However, for any such algorithm, the worst-case runtime exceeds $f(n)$, where $f$ is some function that grows faster than any primitive recursive function and $n$ is the size of the problem: the sum of the number of generating objects, the number of factors in the sources and targets of all the generating morphisms, and the number of factors in the objects $x,y \in C$ for which the reachability problem is posed.  There is an algorithm whose runtime is bounded by a cubic Ackermann function of $n$.
\end{theorem}

\index{reachability|)} \index{Petri net!reachability|)}
\index{category theory|)}
\index{reachability!problem|)} 
\index{computer science!reachability problem|)}

\subsection{Computation with Petri nets}
\label{sec:25_3}

We've been talking about using computers to figure out what
a Petri net can do.  Now let's turn the question around and talk about 
using a Petri net as a computer.   We could do this for ordinary Petri nets, but
in keeping with the main theme of this book let us consider stochastic Petri nets.  
Remember, these are Petri nets equipped with a rate constant for each reaction.
This allows us to formulate both the \emph{rate equation}, describing how the 
expected number of things of each species changes with time, and the \emph{master 
equation}, describing how the probability that we have a given number of things of each species changes with time.   

\index{Soloveichik, David|(}
We could try to use either the rate equation or the master equation for computation, and people have studied both.   The rate equation lets us build analog
computers:
\begin{itemize}
\item[\cite{CDS14}]
Ho-Lin Chen, David Doty and David Soloveichik, \href{http://www.dna.caltech.edu/~ddoty/papers/riccrn.pdf}{Rate-independent computation in continuous chemical reaction networks}, in \textsl{ITCS 2014: Proceedings of the 5th Innovations in Theoretical Computer Science Conference}, ACM, New York, pp.\ 313--326. 
\end{itemize}
\noindent
In this particular formalism, the authors want the concentration of a particular
species to converge in the limit as $t \to \infty$ to a desired function of the 
`input' concentrations, \emph{independent of the choice of rate constants}.
They show that this is possible precisely for continuous piecewise-linear functions. 

The master equation lets us build digital computers, as discussed here:
\begin{itemize}
\item[\cite{CSWB09}] Matt Cook, David Soloveichik, Erik Winfree and Jehoshua Bruck, \href{http://www.dna.caltech.edu/Papers/programmability_of_CRNs_preprint2008.pdf}{Programmability of chemical reaction networks}, in \textsl{Algorithmic Bioprocesses}, eds.\ Condon, Harel, Kok, Salomaa and Winfree, Springer, Berlin, 2009, pp.\ 543--584. 
\item[\cite{Sol14}] David Soloveichik, \href{http://crn.thachuk.com/images/d/d5/Soloveichik-Banff_Tutorial.pdf}{The computational power of chemical reaction
networks}, lecture at Programming with Chemical Reaction Networks: Mathematical Foundations, Banff, June 2014.
\item[\cite{SCWB08}] David Soloveichik, Matt Cook, Erik Winfree and Jehoshua Bruck, \href{http://dna.caltech.edu/Papers/sCRN_computation_TR2007.pdf}{Computation with finite stochastic chemical reaction networks}, 
\textsl{Natural Computing} \textbf{7} (2008), 615--633. 
\item[\cite{ZC08}] Gianluigi Zavattaro and Luca Cardelli, \href{http://lucacardelli.name/Papers/Termination\%20Problems\%20in\%20Chemical\%20Kinetics.pdf}
{Termination problems in chemical kinetics}, in \textsl{CONCUR 2008--Concurrency Theory}, eds.\ Zavattaro and Cardelli, \textsl{Lecture Notes in Computer Science} \textbf{5201}, Springer, Berlin, 2008, pp.\ 477--491.
\end{itemize}    \index{Soloveichik, David|)}
\noindent
The input to such a computer
will always be a vector of natural numbers, encoded as the numbers of 
items of various species.   There are some choices as to how the computation
proceeds:
\begin{itemize}
\item {\bf decision problems vs more general computational problems}:
in a decision problem the output is either `yes', indicated by the presence of a species we'll call $Y$, or `no', indicated by the presence of a species 
called $N$.   In a more general computational problem the answer could be a natural
number, or a vector of natural numbers.  
\index{decision problem|(} \index{computer science!decision problem|(}
\item {\bf uniform vs non-uniform}: in the uniform case a single stochastic Petri net is supposed to handle all inputs. In the non-uniform case we can add extra transitions for larger inputs.
\item {\bf deterministic vs probabilistic}: in the deterministic case we are guaranteed to
get the correct output.  In the probabilistic case, it is merely likely.
\item {\bf halting vs stabilizing}: does the Petri net `tell us' when it has
 finished, or not?  In the halting case it irreversibly produces some items 
of a given species that signal that the computation is done.  In the stabilizing case 
it eventually stabilizes to the right answer, but we may not know how long to wait.
\end{itemize}

These choices dramatically affect the computational power of our computer.  For example, consider decision problems and uniform computation.  Then there are four cases:
\begin{itemize}
\item {\bf deterministic and halting}: this has finite computational power.  
\item {\bf deterministic and stabilizing}: this can decide any `semilinear predicate',
as explained in the next section.
\item {\bf probabilistic and halting}: this can decide any `recursive predicate'.  In other words, this class of reactions can solve any decision problem that any Turing machine can solve.
\item {\bf probabilistic and stabilizing}: this can decide so-called `$\Delta_2^0$ predicates', which are \href{http://en.wikipedia.org/wiki/Borel_hierarchy#Lightface_hierarchy}{more general than recursive ones}.  This may sound odd, but if we use Turing machines but don't require them to signal when they've halted, the resulting infinitely long computations can decide predicates that are not recursive.
\end{itemize}
\noindent
Instead of discussing all four cases in detail, let's look at one: deterministic 
stabilizing computations.  These give a nice variant on the earlier theme
of `reachability'.  In the reachability problem we asked if starting from some initial complex we \emph{can} reach some final complex.  In deterministic stabilizing
computations we ask if starting from some initial complex we \emph{must eventually}
reach a final complex \emph{in some set}, and then \emph{stay in that set}.   This set encodes a YES answer to our decision problem.   

In deterministic stabilizing computations, the rate constants of the transitions don't
matter. The reason is that we're demanding that an outcome occur with certainty,
and we don't care how long it takes.  So, even though we've been talking about
stochastic Petri nets, the concept of deterministic stabilizing computation makes sense
for mere Petri nets.  

\subsubsection*{Deterministic stabilizing computations}
\index{computer science!deterministic stabilizing computation|(}
\index{deterministic stabilizing computation|(}

Let us be a bit more precise.
Suppose we have a subset $X \subseteq \mathbb{N}^d,$ and we want to answer this question: is the vector $\kappa \in \mathbb{N}^d$ in the set $X$?  This is a decision problem.  What does it mean to solve this problem with a Petri net using a deterministic stabilizing computation?

To do this, we represent our vector $\kappa$ as a bunch of molecules: $\kappa_1$ of the first kind, $\kappa_2$ of the second kind, and so on.    We may also include a fixed collection of additional molecules to help the reactions run.  Mathematically, $\kappa$ is a complex called the {\bf input}, and the extra molecules are some fixed complex $\lambda \in \mathbb{N}^d$.

Then we choose a Petri net and let it run, starting with the complex $\kappa + \lambda$.  The answer to our question will be encoded in some molecules called $Y$ and $N$.  If $\kappa$ is in $X$, we want our Petri net to eventually produce at least one $Y$ molecule.  If it's not, we want it to produce at least one $N$.

To make this more precise, we need to define what counts as an {\bf output}.  If we've got a complex that contains $Y$ but not $N$, then the output is YES.
If it contains $N$ but not $Y$, then the output is NO. If it contains both, then
the output is undefined.

{\bf Output-stable states} are complexes with YES or NO output such that all complexes reachable from them via transitions in our Petri net give the same output.  We say an output-stable-state is {\bf correct} if this output is the correct answer to the question: is $\kappa$ in $X$?  \index{output-stable state}

Our Petri net gives a {\bf deterministic stabilizing computation} 
\index{deterministic stabilizing computation!definition of}
if for any input, and choosing any state reachable from that input, we can do further chemical reactions to reach a correct output-stable state.  In other words: starting from our input, and using \textit{any transitions we want}, we will eventually reach an output-stable state that gives the right answer to the question ``is $\kappa$ in $X$?"

This sounds a bit complicated, but it's really not.   Let's look at two examples.

\subsubsection*{Checking an inequality}

Suppose you want to check two numbers and see if one is greater than or equal to another.  Here 
$$ X = \{(\kappa_1,\kappa_2) \; : \; \kappa_1 \le \kappa_2 \}$$
How can you decide if a pair of numbers $ (\kappa_1,\kappa_2)$ is in this set? 

You start with $\kappa_1$ molecules of type $A,$ $\kappa_2$ molecules of type $ B,$ and one molecule of type $ Y$.  Then you use a Petri net with these transitions:
$$ A + N \to Y$$
$$ B + Y \to N $$
If you let these run, the $Y$ switches to a $N$ each time the transition destroys an $A$.  But the $N$ switches back to a $Y$ each time the transition destroys a $B.$
When no more transitions are possible, you are left with either one $Y$ or one $N$, which is the correct answer to your question!

\subsubsection*{Checking equality}

Suppose you want to check two numbers and see if they are equal.  Here 
$$ S = \{(\kappa_1,\kappa_2) \; : \; \kappa_1 = \kappa_2 \} $$
How can you decide if a pair of numbers $ (\kappa_1,\kappa_2)$ is in here?  This is a bit harder!  As before, you start with $ \kappa_1$ molecules of type $ A,$ $ \kappa_2$ molecules of type $ B,$ and one molecule of type $ Y.$  Then you use a stochastic
Petri net with these reactions:
$$ A + B \to Y$$
$$ Y + N \to Y$$
$$ A + Y \to A + N$$
$$ B + Y \to B + N$$
The first transition lets an $A$ and a $B$ cancel out, producing a $Y$. If you only run this transition, you'll eventually have some $A\mathrm{s}$ left and no $B\mathrm{s}$, or have some $ B\mathrm{s}$ left and no $A\mathrm{s}$, or you'll have nothing left but $Y\mathrm{s}$.  

If you have nothing but $Y\mathrm{s},$ no more transitions are possible and your numbers were equal.  The other reactions deal with the cases where you have some $A\mathrm{s}$ or $ B\mathrm{s}$ left over.   You can check that no matter what order we run the reactions, we'll eventually get the right answer!  In the end, you'll have either $ Y\mathrm{s}$ or $ N\mathrm{s},$ not both, and this will provide the
correct yes-or-no answer to the question of whether $ \kappa_1 = \kappa_2.$

\subsubsection*{The power of deterministic stabilizing computations}

Now you've seen some examples of deterministic stabilizing computations.  The big question is: what kind of questions can they answer?   More precisely, for what subsets $X \subseteq \mathbb{N}^d$ can we build a deterministic stabilizing computation that ends with output YES if the input $\kappa$ lies in $X$ and with output NO otherwise?  

The answer is: the `semilinear' subsets!
\begin{itemize}
\item[\cite{AAE06}] Dana Angluin, James Aspnes and David Eistenstat, \href{http://www.cs.yale.edu/homes/aspnes/papers/podc2006-proceedings.pdf}{Stably computable predicates are semilinear}, in \textsl{Twenty-Fifth ACM Symposium on Principles of Distributed Computing, July 2006}, ACM Press, New York, 2006, 
pp.\ 292--299. 
\end{itemize}
A subset $ S\subseteq \mathbb{N}^d$ is {\bf linear} if it's of the form
$$ \{u_0 + n_1 u_1 + \cdots + n_p u_p \; : \; n_i \in \mathbb{N}  \} $$
for some fixed vectors $u_i \in \mathbb{N}^d.$  A subset $S \subseteq \mathbb{N}^d$ {\bf semilinear} if it's a finite union of linear sets.   \index{linear subset} \index{semilinear subset}

How did Angluin, Aspnes and Eisenstat prove that deterministic stabilizing computations can decide membership in precisely the semilinear subsets?  The easy part is showing that membership in any semilinear subset can be decided by some Petri net.  The hard part is the converse.  This uses a nice fact which is worth knowing for its own sake:

\begin{theorem}[{\bf \href{http://en.wikipedia.org/wiki/Dickson\%27s_lemma}{Dickson's Lemma}}] Any subset of $\mathbb{N}^d$ has a finite set of minimal elements, where we define $ x \le y$ if $ x_i \le y_i$ for all $ i$. \index{Dickson's lemma}
\end{theorem}
\noindent
For example, the region above and to the right of the hyperbola here has five minimal elements:

\begin{center}
\includegraphics[width=40mm]{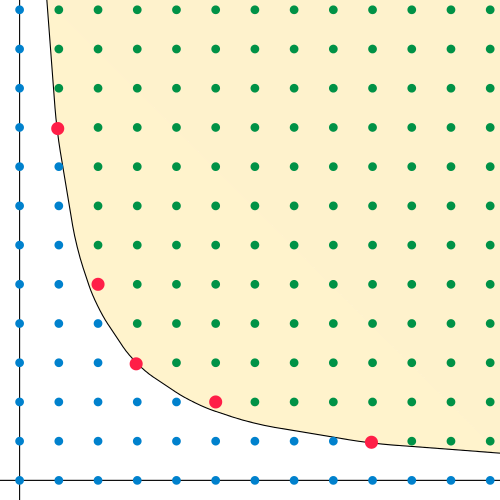}
\end{center}

\noindent It's very neat to see how Dickson's lemma helps!  You can see how it works in Angluin, Aspnes and Eistenstat's paper---but they call it `Higman's lemma'.
\index{computer science|)}
\index{computer science!deterministic stabilizing computation|)}
\index{deterministic stabilizing computation|)}
\index{decision problem|)} \index{computer science!decision problem|)} 

\subsection{Software}
\label{sec:25_5}

Finally, no overview of the connection between Petri nets and computation would
be complete without a word about free software for working with Petri nets,
chemical reaction networks and the like.  Jim Stuttard \index{Stuttard, Jim}
found quite a lot of such software for us, but we will only mention three tools.  
We shall not explain all the terminology 
in what follows: unfortunately, there is a lot about Petri nets that we have not 
discussed!

\subsubsection*{Cain}

\href{http://cain.sourceforge.net/}{Cain} performs stochastic and deterministic simulations of chemical reactions. It can spawn multiple simulation processes to utilize multi-core computers. It stores models, simulation parameters, and simulation results in XML, and it lets you import and export models in Systems Biology Markup Language.   It comes equipped with examples from this nice book:

\begin{itemize}
\item[\cite{Wil06}]
Darren~James Wilkinson,
\textsl{Stochastic Modelling for Systems Biology},
Taylor and Francis, New York, 2006.
\end{itemize}

\subsubsection*{CPN Tools}

\href{http://cpntools.org/}{CPN Tools} 
is probably the most popular tool for editing, simulating, and analyzing colored Petri nets---a generalization of Petri nets where the tokens can have different `colors'. \index{token} \index{Petri net!token} \index{Petri net!colored} \index{colored
Petri net}  Colored Petri nets are explained here:
\begin{itemize}
\item[\cite{AS11}] 
Wil van der Aalst and Christian Stahl, \textsl{Modeling Business Processes: 
A Petri Net-Oriented Approach}, MIT Press, Cambridge, 2011.
\item[\cite{JK09}] Kurt Jensen and Lars M.\ Kristensen, \textsl{Coloured Petri Nets: Modelling and Validation of Concurrent Systems}, Springer, Berlin, 2009.
\end{itemize}
\noindent
Ordinary Petri nets are a special case.  CPN Tools has been developed over the past 25 years.  It features incremental syntax checking and code generation, which take place while a net is being constructed. A fast simulator efficiently handles both untimed and timed nets.  Full and partial state spaces can be generated and analyzed, and a standard state space report contains information, such as boundedness properties and liveness properties.

\subsubsection*{GreatSPN}

\href{http://www.di.unito.it/~greatspn/index.html}{GreatSPN} stands for
`GRaphical Editor and Analyzer for Timed and Stochastic Petri Nets'. 
It is a software package for the modeling, validation, and performance evaluation of distributed systems using generalized stochastic Petri nets.   The tool provides a framework to experiment with timed Petri net based modeling techniques.  It is composed of many separate programs that cooperate in the construction and analysis of Petri net models by sharing files. Using network file system capabilities, different analysis modules can be run on different machines in a distributed computing environment.     

The creators of this software have also written a book explaining generalized stochastic Petri nets:  \index{Petri net!stochastic!generalized} \index{generalized stochastic
Petri net}

\begin{itemize}
\item[\cite{MBCDF95}] 
M.\ Ajmone Marsan, G.\ Balbo, G.\ Conte, S.\ Donatelli and G.\ Franceschinis,
\textsl{\href{http://www.di.unito.it/~greatspn/bookdownloadform.html}{Modelling with Generalized Stochastic Petri Nets}}, Wiley, New York, 1995.
\end{itemize}
\noindent
If you enter some personal data, you can download a gzipped tar file containing
an electronic version of this book.

\subsection{Answers}
\label{sec:25_5}

\vskip 1em \noindent {\bf Problem 51.} 
Suppose we have a symmetric monoidal category freely generated by objects $A, B, C, D, E, F, G, H$ and morphisms
$$\begin{array}{rrcl}
f \colon &A \otimes B &\to& C \\ 
g \colon &C \otimes D &\to& E \\
h \colon &E \otimes F &\to& C \otimes G \otimes H \\
\end{array}$$
Is there a morphism from $A \otimes B \otimes D \otimes F$ to $C \otimes G \otimes H$?  

\vskip 1em
\begin{answer} Yes.  If you draw the Petri net corresponding to this symmetric
monoidal category, you will see it looks exactly like the one we saw in Section
\ref{sec:25_1}:

\begin{center}
 \includegraphics[width=80mm]{chemistryNetBasicA.png}
\end{center}

\noindent
after we rename the objects:
$$  A = \textrm{O}_2, \; B = \textrm{C}, \; C = \textrm{CO}_2 , \; D = \textrm{NaOH} , $$
$$ E = \textrm{NaHCO}_2 ,\; F = \textrm{HCl} ,\;
     G = \textrm{H}_2\textrm{O}, \; H = \textrm{NaCl}  $$
After this renaming the reachability problem here is revealed to be the one we 
already solved n Section
\ref{sec:25_1}, with the answer being `yes'.  What seemed like category theory is
secretly chemistry---or perhaps the other way around!
\end{answer}

\newpage 

\section[Summary table]{Summary table}

\begin{table*}[h]
\renewcommand{\arraystretch}{1.5}
\resizebox{0.8\textwidth}{!}{\begin{minipage}{\textwidth}
\begin{tabular}{ p{3cm}|p{5cm}|p{5cm} }
         & {\bf quantum mechanics} & {\bf stochastic mechanics} \\\hline 
  {\bf state}  
& vector $ \psi \in \mathbb{R}^n$ with
$$ \sum_i |\psi_i|^2 = 1 $$
  &
vector $\psi \in \C^n$ with $$ \sum_i \psi_i = 1 $$ 
and $$ \psi_i \ge 0 $$ 
\\\hline 
{\bf observable} & $n \times n$ matrix $O$ with 
$$  O^\dagger = O$$
where $(O^\dagger)_{i j} := \overline{O}_{j i}$
  & vector $O \in \mathbb{R}^n$  \\\hline 
{\bf expected value} &
$$ \langle \psi , O \psi \rangle := \sum_{i,j} \overline{\psi}_i O_{i j} \psi_j $$
&  $$ \langle O \psi \rangle := \sum_i O_i \psi_i $$ 
\\\hline
  {\bf symmetry}\break (linear map sending states to states) & unitary $n \times n$ matrix: $$ U U^\dagger = U^\dagger U =1 $$
 & stochastic $n \times n$ matrix: $$ \sum_i U_{i j} = 1 , \quad U_{i j} \ge 0 $$
\\\hline  
  {\bf symmetry \break generator} & self-adjoint $n \times n$ matrix: $$H=H^\dagger$$ 
 & infinitesimal stochastic $n \times n$ \break matrix:
$$ \sum_i H_{i j}=0 , \quad  i\neq j \; \Rightarrow \; H_{i j} \ge 0 $$ 
\\\hline
{\bf symmetries from symmetry \break generators} &
$$ U(t) = \exp(-itH) $$ &
$$ U(t) = \exp(tH) $$
\\\hline
{\bf equation of \hfill \break motion} & $$i\frac{d}{dt} \psi(t) = H \psi(t)$$ with solution $$\psi(t) = \exp(-itH)\psi(0)$$ & $$\frac{d}{dt} \psi(t) = H \psi(t)$$ with solution $$\psi(t) = \exp(tH)\psi(0)$$ 
 \end{tabular} 
\end{minipage} }
\end{table*}

\newpage 

\noindent In what follows, the pages that refer to a given citation are indicated by numbers in blue.  Free online references are in red.  These colored items are clickable
links.

\newpage 

\printindex

\end{document}